\titlespacing*{\section}{0pt}{10pt}{5pt}
\titlespacing*{\subsection}{0pt}{6pt}{3pt}
\renewcommand{\algocf@captiontext}[2]{#1\algocf@typo. \AlCapFnt{}#2}
\def\@algocf@capt@plain{top}
\renewcommand{\algocf@makecaption}[2]{%
  \addtolength{\hsize}{\algomargin}%
  \sbox\@tempboxa{\algocf@captiontext{#1}{#2}}%
  \ifdim\wd\@tempboxa >\hsize%
    \hskip .5\algomargin%
    \parbox[t]{\hsize}{\algocf@captiontext{#1}{#2}}%
  \else%
    \global\@minipagefalse%
    \hbox to\hsize{\box\@tempboxa}%
  \fi%
  \addtolength{\hsize}{-\algomargin}%
}
\newcommand{\indsubset}{\subset\hspace{-0.23cm}\subset}
\def\pr{\mathrm{pr}}
\def\mE{E}
\def\Cov{\mathrm{cov}}
\def\wh{\widehat}
\def\eE{\mathcal{E}}
\def\var{\mathrm{var}}
\newtheorem{theorem}{Theorem}[section]
\newtheorem{lemma}[theorem]{Lemma}
\newtheorem{proposition}[theorem]{Proposition}
\newtheorem{definition}[theorem]{Definition}
\newtheorem{assumption}[theorem]{Assumption}
\NewDocumentCommand{\xtriangle}{O{.65}}{%
    \tikz[baseline = -0.4ex,scale=#1]{%
        \foreach \x in{90,-30,210}%
        {%
            \fill[](\x:0.2)circle(1.5pt);%
        }%
        \draw[thin] (-30:0.2) -- (90:0.2) -- (210:0.2) -- cycle;%
    }%
}%
\NewDocumentCommand{\xwheelfour}{O{.6}}{%
    \tikz[baseline=-0.8ex,scale=#1]{%
        % radius of the rim
        \def\r{0.25}
        % center
        \fill[] (0,0) circle (1.5pt);
        % rim nodes
        \foreach \a in {45,135,225,315} {%
            \fill[] (\a:\r) circle (1.5pt);
        }
        % rim edges (cycle)
        \draw[thin]
            (45:\r) -- (135:\r) -- (225:\r) -- (315:\r) -- cycle;
        % spokes
        \foreach \a in {45,135,225,315} {%
            \draw[thin] (0,0) -- (\a:\r);
        }
    }%
}
\NewDocumentCommand{\xline}{O{.5}}{%
    \tikz[baseline = -0.8ex,scale=#1]{%
        \foreach \x in{-90,30}%
        {%
            \fill[](\x:0.2)circle(1.5pt);%
        }%
        \draw[thin] (-90:0.2) -- (30:0.2);%
    }%
}%
\NewDocumentCommand{\xsquare}{O{.5}}{%
    \tikz[baseline=-0.5ex,scale=#1]{%
        \foreach \x in {45, 135, -135, -45}%
        {%
            \fill[] (\x:0.2828) circle (1.5pt); % 0.2828 = sqrt(2)/2 * 0.2, for a square with diagonal of 0.4 units
        }%
        \draw[thin] (45:0.2828) -- (135:0.2828) -- (-135:0.2828) -- (-45:0.2828) -- cycle;%
    }%
}%
\NewDocumentCommand{\xthreestar}{O{.6}}{%
    \tikz[baseline=-0.8ex,scale=#1]{%
        % Calculate the length of the arms based on the desired total height of the shape (0.4 units)
        \pgfmathsetmacro{\armLength}{0.4 / sqrt(3)}
        % Define the arm angles
        \pgfmathsetmacro{\angleA}{30}
        \pgfmathsetmacro{\angleB}{-90}
        \pgfmathsetmacro{\angleC}{150}
        % Draw the "Y" shaped Threestar motif
        \draw[thin] (0,-.03) -- (\angleA:\armLength)   (0,-.03) -- (\angleB:\armLength)   (0,-.03) -- (\angleC:\armLength);
        % Draw the points at the ends of each arm
        \fill[] (\angleA:\armLength) circle (1.5pt);
        \fill[] (\angleB:\armLength) circle (1.5pt);
        \fill[] (\angleC:\armLength) circle (1.5pt);
        % Draw the central point
        \fill[] (0,-.03) circle (1.5pt);
    }%
}
\NewDocumentCommand{\xtwoline}{O{.6}}{%
    \tikz[baseline = -0.8ex,scale=#1]{%
        \foreach \x in{30,-90,150}%
        {%
            \fill[](\x:0.2)circle(1.5pt);%
        }%
        \draw[thin] (150:0.2) -- (-90:0.2) -- (30:0.2);%
    }%
}%
\NewDocumentCommand{\xrectangle}{O{.8}}{%
    \tikz[baseline = -0.5ex,scale=#1]{%
        \foreach \x in{45,-45,135,225}%
        {%
            \fill[](\x:0.2)circle(1.5pt);%
        }%
        \draw[thin] (225:.2) rectangle (45:.2);%
    }%
}%
\NewDocumentCommand{\xreltriangle}{O{1.5}}{%
    \tikz[baseline = -0.6ex,scale=#1]{%
        \foreach \x in{30,-30,150,210}%
        {%
            \fill[](\x:0.2)circle(1.3pt);%
        }%
        \fill[](0,0)circle(1.3pt);%
        \draw[thin] (-30:0.2) -- (30:0.2) -- (210:0.2) -- (150:.2) -- cycle;%
    }%
}%
\NewDocumentCommand{\xdiagrectangle}{O{1}}{%
    \tikz[baseline = -0.5ex,scale=#1]{%
        \foreach \x in{45,-45,135,225}%
        {%
            \fill[](\x:0.2)circle(1.5pt);%
        }%
        \draw[thin] (225:.2) rectangle (45:.2);%
        \draw[thin]  (225:.2) -- (45:.2);%
    }%
}%
\NewDocumentCommand{\xreldiamond}{O{.8}}{%
    \tikz[baseline = -0.5ex,scale=#1]{%
        \foreach \x in{45,-45,135,225}%
        {%
            \fill[](\x:0.2)circle(1.5pt);%
        }%
        \fill[](0,0)circle(1.5pt);%
        \fill[]({sqrt(2)*0.2},0)circle(1.5pt);%
        \fill[](-{sqrt(2)*0.2},0)circle(1.5pt);%
        \draw[thin] (-45:0.2) -- (0:{sqrt(2)*0.2}) -- (45:0.2) -- (225:.2) -- (0:-{sqrt(2)*0.2}) -- (135:.2) -- cycle;%
    }%
}%
\NewDocumentCommand{\xtwodiamond}{O{.8}}{%
    \tikz[baseline = -0.5ex,scale=#1]{%
        \foreach \x in
        {(0:.12),
         (0:.3),
         (90:.2),
         (0:-.12),
         (0:-.3),
         (90:-.2)
        }%
        {%
            \fill[] \x circle(1.5pt);%
        }%
        \draw[thin] (0:0.3) -- (90:.2) -- (0:-0.3) -- (90:-.2) -- cycle;%
        \draw[thin] (0:0.12) -- (90:.2) -- (0:-0.12) -- (90:-.2) -- cycle;;%
    }%
}%
\NewDocumentCommand{\xrecanddia}{O{.8}}{%
    \tikz[baseline = -0.5ex,scale=#1]{%
        \foreach \x in
        {(0:0),
         (45:.2),
         (135:.2),
         (225:.2),
         (315:.2),
         (0:{sqrt(2)*0.2})
        }%
        {%
            \fill[] \x circle(1.5pt);%
        }%
        \draw[thin] (225:.2) rectangle (45:0.2);%
        \draw[thin] (0:0) -- (45:.2) -- (0:{sqrt(2)*0.2}) -- (315:.2) -- cycle;%
    }%
}%
\NewDocumentCommand{\xtworectangle}{O{.8}}{%
    \tikz[baseline = -0.5ex,scale=#1]{%
        \foreach \x in
        {(45:.2),
         (135:.2),
         (225:.2),
         (315:.2),
         ([shift={(0:{sqrt(2)*0.2})}]45:.2),
         ([shift={(0:{sqrt(2)*0.2})}]-45:.2)
        }%
        {%
            \fill[] \x circle(1.5pt);%
        }%
        \draw[thin] (225:.2) rectangle ([shift={(0:{sqrt(2)*0.2})}]45:.2);%
        \draw[thin] (45:.2) -- (315:.2);%
    }%
}%
\NewDocumentCommand{\xdiagdiamond}{O{.8}}{%
    \tikz[baseline = -0.5ex,scale=#1]{%
        \foreach \x in
        {(0:.2),
         (90:.2),
         (180:.2),
         (270:.2),
         (0,0)
        }%
        {%
            \fill[] \x circle(1.5pt);%
        }%
        \draw[thin] (0:.2) -- (90:.2) -- (180:.2) -- (270:.2) -- cycle;%
        \draw[thin] (90:.2) -- (270:.2);%
    }%
}%
\NewDocumentCommand{\xtwodiagrectangle}{O{.8}}{%
    \tikz[baseline = -0.5ex,scale=#1]{%
        \foreach \x in{45,-45,135,225}%
        {%
            \fill[](\x:0.2)circle(1.5pt);%
        }%
        \draw[thin] (225:.2) rectangle (45:.2);%
        \draw[thin]  (225:.2) -- (45:.2);%
        \draw[thin]  (-45:.2) -- (135:.2);%
    }%
}%
\NewDocumentCommand{\xthreetrangle}{O{.8}}{%
    \tikz[baseline = 0ex,scale=#1]{%
        \foreach \x in
        {(0:0),
         (60:.2),
         (120:.2),
         (0:.2),
         (0:-0.2)
        }%
        {%
            \fill[] \x circle(1.5pt);%
        }%
        \draw[thin] (0:.2) -- (60:.2) -- (120:.2) -- (0:-.2) -- cycle;%
        \draw[thin] (0:0) -- (60:.2);%
        \draw[thin] (0:0) -- (120:.2);%
    }%
}%
\NewDocumentCommand{\xlutriangle}{O{.8}}{%
    \tikz[baseline = 0ex,scale=#1]{%
        \foreach \x in
        {(0,0.3),
         (0.3,.3),
         (0,0)
        }%
        {%
            \fill[] \x circle(1.5pt);%
        }%
        \draw[thin] (0,0) -- (0,.3) -- (0.3,.3) -- cycle;%
    }%
}%
\NewDocumentCommand{\xrutriangle}{O{.8}}{%
    \tikz[baseline = 0ex,scale=#1]{%
        \foreach \x in
        {(0,0.3),
         (0.3,.3),
         (0.3,0)
        }%
        {%
            \fill[] \x circle(1.5pt);%
        }%
        \draw[thin] (0.3,0) -- (0,.3) -- (0.3,.3) -- cycle;%
    }%
}%
\NewDocumentCommand{\xlineuptwo}{O{.8}}{%
    \tikz[baseline = 0ex,scale=#1]{%
        \foreach \x in
        {(0,0.3),
         (0.3,0.3),
         (0.6,0.3)
        }%
        {%
            \fill[] \x circle(1.5pt);%
        }%
        \draw[thin] (0,0.3) -- (0.3,.3) -- (0.6,.3) -- cycle;%
    }%
}%
\NewDocumentCommand{\xlineupone}{O{.8}}{%
    \tikz[baseline = 0ex,scale=#1]{%
        \foreach \x in
        {(0,0.3),
         (0.3,0.3),
         (0.6,0.3)
        }%
        {%
            \fill[] \x circle(1.5pt);%
        }%
        \draw[thin] (0,0.3) -- (0.3,.3);%
    }%
}%
\NewDocumentCommand{\xrdtriangle}{O{.8}}{%
    \tikz[baseline = 0ex,scale=#1]{%
        \foreach \x in
        {(0,0),
         (0.3,.3),
         (0.3,0)
        }%
        {%
            \fill[] \x circle(1.5pt);%
        }%
        \draw[thin] (0,0) -- (.3,0) -- (0.3,.3) -- cycle;%
    }%
}%
\NewDocumentCommand{\xlinet}{O{.8}}{%
    \tikz[baseline = 0ex,scale=#1]{%
        \foreach \x in
        {(0,0),
         (0.3,.3),
         (0.6,0.3)
        }%
        {%
            \fill[] \x circle(1.5pt);%
        }%
        \draw[thin] (0,0) -- (.3,.3) -- (0.6,.3);%
    }%
}%
\NewDocumentCommand{\xlined}{O{.8}}{%
    \tikz[baseline = 0ex,scale=#1]{%
        \foreach \x in
        {(0,0),
         (0.3,.3),
         (0.6,0.3)
        }%
        {%
            \fill[] \x circle(1.5pt);%
        }%
        \draw[thin] (0,0) -- (.3,.3);%
    }%
}%
\NewDocumentCommand{\xlinef}{O{.8}}{%
    \tikz[baseline = 0ex,scale=#1]{%
        \foreach \x in
        {(0,0),
         (0,.3),
         (0.3,0.3)
        }%
        {%
            \fill[] \x circle(1.5pt);%
        }%
        \draw[thin] (0,0) -- (0,.3);%
    }%
}%
\newcommand{\bG}{\mathbb{G}}
\newcommand{\bGb}{\mathbb{G}_b}
\newcommand{\bGn}{\mathbb{G}_n}
\begin{document}
% ============================================================

\begin{center}
\textbf{\large Multivariate Inference of Network Moments by Subsampling}\\
\medskip
Mingyu Qi\\
Department of Statistics, University of Virginia\\
\medskip
Chen-Wei Hua\\
School of Statistics, University of Minnesota\\
\medskip
Tianxi Li\\
School of Statistics, University of Minnesota\\
\medskip
Wen Zhou\\
Department of Biostatistics, School of Global Public Health, New York University\\
\end{center}

\begin{abstract}
Network moments--rescaled counts of motifs such as stars and triangles--are fundamental summaries of network structure, widely used in goodness-of-fit testing, model selection, and network comparison. While the univariate distribution of a single network moment can be approximated by subsampling, the consistency of subsampling for their {\it joint} distribution has remained unestablished. In this paper, we prove that node subsampling provides an asymptotically accurate approximation of the joint distribution of multiple network moments under a general sparse graphon model. The theoretical analysis requires a careful characterization of the dependence structure among network moments and the corresponding multivariate asymptotic convergence, going substantially beyond existing univariate results. Building on this foundation, we address a practically important open problem: two-sample testing between unmatchable networks with unequal edge densities. We propose a novel subsampling-based procedure that combines {\it sparsification} with a {\it sample-splitting} strategy. This yields the first subsampling-based inferential procedure valid for this setting, to our knowledge. We demonstrate the utility of multivariate subsampling inference through simulation studies and a real data application comparing coexpression networks of core and non-core genes in a study of parallel adaptation in Trinidadian guppies, where joint and conditional moment distributions reveal a structural difference that no marginal test can detect.
\end{abstract}

\section{Introduction}\label{sec:intro}

Networks, spanning diverse fields such as social sciences, biology, and computer science, are widely used as data structures for exploring complex systems. Statistical network analysis serves as a powerful toolset for uncovering patterns, structures, and dynamics within these networks, providing insights into phenomena
ranging from social interactions to biological processes \citep{barabasi2013network,newman2018networks}. Here, we are interested in characterizing a population of networks based on a single observed network, allowing for a broader understanding of the underlying structure and dynamics of complex systems.

Network motif counts, such as the number of triangles or stars, are crucial for understanding local structure and connectivity patterns within networks. By quantifying the prevalence of these motifs across a population of networks, we can discern common structural patterns and infer underlying mechanisms governing network formation and function \citep{borgs2010moments,bickel2011method}. For example, a high number of triangles in a social network may indicate the presence of tightly-knit communities or cliques, while an abundance of stars could suggest influential hubs connecting disparate network parts \citep{wasserman1994social}.
Moreover, local motif counts reveal global properties and facilitate inference across networks of varying sizes but within the same population. Motif counts therefore play a central role in goodness-of-fit testing and model selection \citep{gao2017testing,klusowski2020estimating,yuan2022testing}, as well as network
comparison tasks such as two-sample tests and correlation analysis \citep{ghoshdastidar2017two,maugis2020testing,mao2021testing,shao2022higher}.

At a high level, our statistical task is as follows. Given a network $G$, assumed to be generated from a population graphon model \citep{bickel2009nonparametric}, and a set of motifs $R_1, \ldots, R_m$ of interest, we seek to characterize the joint distribution of properly rescaled motif counts, or network moments, for random networks generated from the same graphon model.

One seemingly natural approach is to estimate the true graphon model and then directly derive or sample the desired distribution. However, accurately identifying the graphon function is challenging without making restrictive assumptions \citep{chan2014consistent, yang2014nonparametric} or resorting to computationally infeasible methods \citep{choi2014co,olhede2014network,gao2015rate}. While there are computationally feasible and accurate methods for estimating the connection probabilities of the given network $G$ \citep{chatterjee2015matrix,zhang2017estimating,li2023network}, they are not suitable for studying population distributions at the graphon level. Additionally, these estimation approaches still depend on certain structural regularity assumptions. Therefore, we turn to resampling strategies, which are generally considered flexible and versatile for approximating population distributions. Many studies have explored resampling inference methods in network problems, including cross-validation \citep{chen2018network, li2020network}, bootstrap \citep{levin2019bootstrapping,green2022bootstrapping}, subsampling \citep{bhattacharyya2015subsampling,zhang2022edgeworth,lunde2023subsampling}, and conformal inference \citep{lunde2023conformal}.

Among these methods, those most closely related to our study are the subsampling approaches of \cite{zhang2022edgeworth} and \cite{lunde2023subsampling} and the
bootstrap approaches of \cite{levin2019bootstrapping} and \cite{green2022bootstrapping}, all of which focus on the distribution of motif counts from the population model.
However, these studies address the resampling approximation of the \emph{marginal} distribution for a single motif, offering a necessarily limited view of network
structure. Marginal distributions characterize each motif in isolation and cannot capture the dependence structure among motifs, potentially leading to less informative
or even misleading inferences. Consider, for instance, comparing two gene coexpression networks from distinct gene sets. Examined separately, the 2-star ($\xtwoline$) and
3-star ($\xthreestar$) counts may appear indistinguishable between the two networks. Yet since a 2-star is a subgraph of a 3-star, the two counts are inherently correlated. Conditioning on the 2-star level, one network may exhibit a significantly elevated 3-star frequency, revealing a structural difference that no marginal test can detect. This issue is central to our real data application in Section~\ref{sec:RealData}, which underscores the need to analyze the \emph{joint} distribution of motif counts as a fundamental tool for multivariate network inference.

Beyond distributional characterization, a second important problem is \emph{two-sample testing}: given two networks $G$ and $G'$ on different node sets and potentially of different sizes, test whether they arise from the same underlying graphon. This problem, known as testing between \emph{unmatchable} networks, has received considerable attention. Methods based on the random dot product graph model \citep{young2007random} have been developed \citep{tang2017nonparametric,
agterberg2020nonparametric,alyakin2024correcting}, but their scope is restricted to that parametric model. In the more general graphon setting, \cite{ghoshdastidar2017two} and \cite{shao2022higher} have proposed testing procedures based on network moments.

A critical and practically important complication arises when the two networks have \emph{unequal} edge densities. In applications, networks derived from different groups, gene sets, time points, or experimental conditions naturally differ in both size and connectivity level --- there is generally no reason to expect their overall densities to match. Handling this setting is therefore not a minor technical
generalization but a practical necessity for real-world network comparison. \cite{shao2022higher} is the only existing work that handles the unequal-density case within the graphon model; their method constructs a second-order-correct statistic from the full network and achieves valid inference for a single moment when comparing two networks with unequal densities. However, it is restricted to univariate testing and is not applicable to subsampling procedures, so it cannot leverage the joint distribution of multiple moments. Thus, subsampling-based multivariate inference for unmatchable networks under unequal densities remains an open problem, despite its clear practical importance.

In this paper, we address both problems. Our contributions are twofold. First, we prove that node subsampling provides an asymptotically accurate approximation of the \emph{joint} distribution of multiple network moments under a general graphon model, extending the known effectiveness of subsampling from single to multiple motifs. Although using subsampled network moments to approximate their joint distribution is a natural extension of the marginal approach, the theoretical analysis is highly
nontrivial. It requires a careful characterization of the dependence structure among network motifs and the corresponding multivariate asymptotic convergence. Second, building on this foundation, we propose a novel subsampling-based two-sample testing procedure for comparing unmatchable networks with unequal densities using more advanced subsampling inference techniques. The key idea is \emph{sparsification}: by uniformly sparsifying each network to a common target density via random edge removal, we reduce the unequal-density problem to the equal-density case, enabling valid comparisons. To handle the unknown densities in practice, we further introduce a
\emph{sample-splitting} subsampling strategy, where part of each network is used to estimate the sparisification probabilities and the remainder is used to construct the test statistic. This yields the first subsampling-based inferential procedure, to our knowledge, that is valid for this practically important setting.

\section{Notations, motif counts and network moments}\label{subsec:motifs}
%We first introduce a few notations we used in this paper.
Throughout this paper, we denote the set $\{1, \ldots, n\} $ for any positive integer $n$ by $[n]$, and denote the cardinality of a set by $|\cdot|$.  More generally, given a sequence of quantities $\{a_1, \ldots, a_m\}$, we will denote it by $[a_m]$, when it is clear in context.  Let $G $ be an undirected unweighted graph whose
node set is  $V(G) = \{v_1, \ldots, v_n\}$ and edge set is $\eE(G) = \{(v_i,v_j): v_i,v_j \in V(G)\}$. %Let $A(G)$ be the $n\times n$ adjacency matrix of $G$, such that $A(G)_{ij}=1$ if and only if $(v_i,v_j)\in \eE(G)$. When it is clear in context, we may suppress $G$ and write the adjacency matrix as $A$. 
%\tianxi{I somehow feel that we don't need $A$. Mingyu, doublecheck if we use $A$ anywhere in the \textbf{main text} of the paper (not appendix). If we don't need it, we can remove this sentence.}
Furthermore, denote the density of $G$ by $\wh{\rho}_G = |\eE(G)|/[n(n-1)]$.

A graph $S$ is a subgraph of $G$, written as $S  \subset  G$, if $V(S) \subset V(G) $ and  $\eE(S) \subset \eE(G)$.  In particular, a subgraph $S \subset G$ is called an induced subgraph of $G$, denoted by $S \indsubset G$, if for any $v_i, v_j \in V(S)$, $(v_i,v_j)\in \eE(S)$ whenever $(v_i,v_j)\in \eE(G)$. Lastly, two graphs $S$ and $G$ are isomorphic, denoted by $S \cong G$, when there exists a bijective function $\phi$: $V(S) \to V(G)$ such that $(v_i,v_j)\in \eE(S)$ if and only if edge $[\phi(v_i), \phi(v_j)] \in \eE(G)$.

A motif refers to a (usually simple) graph, such as an edge ($\xline$), a 2-star/V-shape ($\xtwoline$), a triangle ($\xtriangle$), or a 3-star ($\xthreestar$),  which forms the building blocks of larger graphs. In this study, we denote a motif by $R$, with $|V(R)| = r$ representing the number of nodes and $|\eE(R)| = \mathfrak{r}$ representing the number of edges. We focus exclusively on connected motifs, aligning with previous research \citep{bickel2011method,bhattacharyya2015subsampling,lunde2023subsampling}. For network $G$ and motif $R$, the motif count of $R$ in $G$ is defined as the number of subgraphs of $G$ that are isomorphic to $R$: 
\begin{equation}
\label{eq:MotifCountp} 
    X_R(G)=\big|\{S: S \subset G, S \cong  R\}\big|.
\end{equation}

This functional has received considerable attention in network analysis \citep{cook1971complexity,milo2002network,maugis2020testing,bhattacharya2022fluctuations}. Note that the subgraph $S$ need not be an induced subgraph of $G$. In contrast, the induced motif count is defined as
\begin{equation}
 \label{eq:ExactMotifCountp}
    \tilde{X}_R(G)=\big|\{S: S \indsubset G, S \cong  R\}\big|,
\end{equation}
which requires that the subgraph $S$ in the calculation must be an induced subgraph. These two definitions are essentially equivalent due to their linear mapping relations \citep{bickel2011method,maugis2020testing}. However, the non-induced counts \eqref{eq:MotifCountp} offer a more streamlined theoretical analysis \citep{zhang2022edgeworth}. Thus, following \cite{bickel2011method}, \cite{bhattacharyya2015subsampling} and \cite{zhang2022edgeworth}, we focus on the non-induced motif count \eqref{eq:MotifCountp} for our theoretical studies, but the distributional properties also hold for the induced motifs. 
% due to their equivalence. 
Our data analyses in Section~\ref{sec:RealData} employ induced counts for a better interpretability.

The scale of motif counts is influenced by the size of both network and motif, making direct comparisons across networks of different sizes less informative. To avoid this, it is common to rescale the motif count. For a given motif $R$, the (sample) network moment of $R$ in a graph $G$ is defined as
\begin{equation*}
%    \label{eq:SampleNetworkMoment}
    U_R(G) = \binom{n}{r}^{-1}X_{R}(G).
\end{equation*} Several efficient computation strategies for network moments are outlined in \cite{ribeiro2010g}, \cite{gonen2011counting}, and \cite{maugis2020testing}.

\section{Node subsampling and its properties} \label{sec:subsampling}

% Our main goal for network inference is to characterize the properties of a population of networks based on a single network observation $G$. To achieve this, 

\subsection{Subsampling under the sparse graphon model}
Before presenting our multivariate inference of network moments, we first outline the probabilistic framework that defines the network population and facilitates our analysis, namely the graphon framework adopted from \citep{hoover1979relations,aldous1981representations,bickel2009nonparametric}.

\begin{definition}[Sparse graphon model]
Let the graphon function $w : [0,1]^2 \to[0,1]$ be a nonnegative Lebesgue measurable function, such that $w(u,v)=w(v,u)$ for any $u, v \in [0,1]$, such that $\int_0^1 \int_0^1 w(u,v) \,du  \,dv=1$. Define a sequence of scalars $\rho_n \in [0,1]$. A random network is denoted as $\mathbb{G}_{n}\sim \rho_n w(u,v)$ if it is generated as follows.
\begin{enumerate}
    \item Generate $\{\xi_i\}_{i = 1}^n$ independently with
    \begin{equation}\label{eq:uniform-xi}
   \xi_i \sim \mathrm{Uniform}(0,1)
    \end{equation}
    \item For each node pair $(i, j)$, connect them independently with probability $\rho_n w(u,v)\mathbb{1}_{\{\rho_n w(u,v) \leq 1\}}$.
\end{enumerate}
\label{defi:graphon}
\end{definition}
The parameter $\rho_n$, governing network sparsity, typically tends towards $0$ at a specific rate. Similar to \cite{bickel2011method}, we always assume that $\rho_n w(u,v) \leq 1$ and ignore the constraint $\rho_nw(u,v)\le 1$.

\renewcommand{\thefootnote}{\fnsymbol{footnote}}

We assume that the observed network $G$ follows the sparse graphon model $\rho_nw(u,v)$. From $G$, our objective is to infer the distributional properties of network moments derived from this graphon model. Specifically, given a set of motifs $R_j$ for $j \in [m]$ and a sample size $b$ where $b < n$\footnote{In practical scenarios, $n$ is typically large, rendering the computation of network moments for $b \ge n$ infeasible, even without considering advanced inference tasks. Hence, we focus on the case where $b < n$.}, we aim to characterize the distribution of network moments $U_{R_j}(\bGb)$, for $\bGb$ drawn from $\rho_b w$. Our primary focus, as previously discussed, is on the joint distribution of $U_{R_j}(\bGb)$ for $j \in [m]$, rather than their marginal distributions.

It should be noted that though \citet{bickel2011method} established asymptotic distribution of full-network motif counts, this result cannot be directly used for practical inference as the correspondingly parameters are not available from a single network observation under the graphon model. Practical inference in this context thus has to rely on certain type of computation-intensive resampling procedures \citep{bhattacharyya2015subsampling, green2022bootstrapping, zhang2022edgeworth, lunde2023subsampling}, such as subsampling, bootstrap, and jacknife. Among all options, we focus on the subsampling, motivated by its flexibility and computational advantage for the network comparison problem (Section~\ref{sec:RealData}).

% We assume that the observed network $G$ follows the sparse graphon model $\mathbb{G}_{n}$. From $G$, our aim is to infer the distributional properties of network moments derived from the graphon model. Specifically, given a set of motifs $R_j, j\in [m]$, and a sample size $b$ where $b < n$\footnote{In practical scenarios, $n$ tends to be large in observations, making computation of network moments on $b\ge n$ infeasible, even without advanced inference processes. Hence, we focus on $b<n$.}, we seek to characterize the distribution of network moments $U_{R_j}(\bGb)$, for $\bGb$ drawn from $\rho_b w$. Our primary emphasis, as discussed earlier, lies in the joint distribution of $U_{R_j}(\bGb), j \in [m]$, rather than their marginal ones.

Consider an ideal scenario where the true graphon model $\rho_n w$ is known. In this context, we could approximate the distribution of $U_{R_j}(\bGb)$ for $j \in [m]$ directly using the Monte Carlo method: sampling $\bGb$ from the model,  computing the corresponding network moments, which give the empirical cumulative distribution functions. However, in our context, the graphon model is unknown, rendering the above procedure inapplicable. Nonetheless, if $n$ is sufficiently large, we can consider the graph $G$ as a discretized approximation of the true graphon, which allows for a feasible sampling procedure based on $G$ that resembles the Monte Carlo strategy. This insight forms the basis for the subsequent subsampling algorithm.

% Consider an ideal situation where we have the knowledge of true graphon model $\rho_n w$. We can then approximate the distribution of $U_{R_j}(\bGb), j \in [m]$ directly using the Monte Carlo method: by sampling $\bGb$ from the model directly and computing their corresponding network moments to construct empirical CDFs or employing more advanced techniques for approximation. However, in our context, the graphon model is unknown thus the above procedure is not applicable. Nevertheless, if $n$ is sufficiently large, we may consider the graph $G$ as a discretized approximation of the true graphon, making a suitable sampling procedure based on $G$ still feasible to mimic the Monte Carlo strategy. This insight motivates the subsequent subsampling algorithm.

% In our context, the graphon model is unknown. However, observe that in \eqref{eq:uniform-xi} of the sparse graphon model (Definition~\ref{defi:graphon}), if we randomly take a subset of $b$ from the latent variables $\xi_i, i\in [n]$, and then focus on the size-$b$ graph generated from this subset of latent variables in Step 2, the resulting graph has exactly the same distribution (after integrating out $\bGn$) as $\bGb$. This observation, combined with the previous Monte Carlo strategy,  motivates the following subsampling algorithm.

\begin{algorithm}
\caption{Uniform node subsampling for multivariate network moments}
\label{algo:subsampling}
\KwIn{Network $G$ of size $n$; motifs $[R_m] = \{R_1,\ldots, R_m\}$; replication number $N_{\mathrm{sub}}$; subsampling size $b$.}
\KwOut{$\wh{\rho}_G $; $\{U_{[R_m]}(G_b^{*(i)})\}_{i=1}^{N_{\mathrm{sub}}}$ for downstream inference tasks.}
\BlankLine
Calculate $\wh{\rho}_G = |\eE(G)|/[n(n-1)]$\;
\For{$i = 1, \dots, N_{\mathrm{sub}}$}{
    Randomly sample $b$ nodes (without replacement) from $[n]$ to be the subsampled set $\mathcal{S}$\;
    Set $G^{*(i)}_{b} \indsubset G$ to be the induced subgraph by $\mathcal{S}$\;
    Calculate the network moments of the subsampled graph $U_{R_j}(G^{*(i)}_{b})$ for $j\in [m]$\;
    Set the $m$-dimensional vector $U_{[R_m]}(G_b^{*(i)}) = (U_{R_1}(G^{*(i)}_{b}),\ldots, U_{R_m}(G^{*(i)}_{b}))$\;
}
\end{algorithm}

A crucial aspect of the subsampling approach is its emphasis on computing network moments within networks of size $b$ rather than $n$ during the generation of $U_{[R_m]}(G_b^{*(i)})$. Given that motif counting complexity typically increases superlinearly with network size \citep{ribeiro2010g}, this subsampling method emerges as a pivotal technique for addressing scalability in network inference tasks, allowing for the analysis of a large network $G$ by computing motif counts in a much smaller one of size $b$. Additionally, it is important to note that we keep the specific inference method for downstream tasks open in the algorithm, subsequent to obtaining the sample $\{U_{[R_m]}(G_b^{*(i)})\}_{i=1}^{N_{\mathrm{sub}}}$. This flexibility ensures that the process can accommodate any inference method the user prefers, ranging from intuitive visualization to more sophisticated testing procedures.

Subsampling procedures similar to ours have been explored by \cite{zhang2022edgeworth} and \cite{lunde2023subsampling}. However, as noted in Section~\ref{sec:intro}, those studies primarily focused on inferring individual network moments, particularly concerning the marginal distribution of single motifs. In contrast, we will examine the validity of our method on the joint distribution of network moments, laying the foundation for flexible multivariate inference on network moments. This generalization requires a precise characterization of the dependence between network moments, which is nontrivial when extending beyond marginal cases.
% Yet, relying solely on marginal network moment distributions often falls short of providing insightful inference results for practical problems. 

% Therefore, we proceed to introduce our study of the subsampling correctness of the joint distribution, laying the groundwork for flexible multivariate inference regarding joint or conditional distributions.  Such a generalization involves precisely characterizing the dependence between network moments, which is nontrivial from the marginal cases.
%\subsection{Statistical properties of node subsampling} \label{secsec:quantysubsampling}

\subsection{The subsampling approximation to the joint distribution of network motifs}
Recall that $\mathbb{G}_{b}$ denotes a random graph generated from the graphon model with $b$ nodes, and $G$ denotes an {\it observed} graph with $n$ nodes. Moreover, we use $\mathbb{G}^{*}_{b}$ to denote a random subsample subgraph of $G$.  Our study will be based on a pre-defined set of motifs $\{R_1,\ldots,R_m\}$. Recall we use $[R_m]$ to denote $\{R_1,\ldots,R_m\}$ and $[t_m]$ to denote an $m$-dimensional vector of scalars $(t_1,\ldots,t_m)$. Define 
$$\Psi(x,y) = \left[\frac{y_1}{x^{\mathfrak{r}_1}}, \frac{y_2}{x^{\mathfrak{r}_2}}, \ldots, \frac{y_m}{x^{\mathfrak{r}_m}}\right]$$
for a scalar $x$ and $m$-dimensional $y$. Denote the $m$-dimensional vector ${U}_{[R_m]}(G)$ as the vector of motif counts for $[R_m]$ in $G$. In particular, we study the distribution of the normalized motif vector
$$\Psi(\wh{\rho}_G,{U}_{[R_m]}(\mathbb{G}^*_b)) =  \big[\wh{\rho}^{-\mathfrak{r}_1}_{G}U_{R_1}(\mathbb{G}^*_b), \ldots, \wh{\rho}^{-\mathfrak{r}_m}_{G}U_{R_m}(\mathbb{G}^*_b)\big]$$
given $G$.

Formally, conditioning on $\bGn=G$, we focus on the CDF of $\Psi(\wh{\rho}_G,{U}_{[R_m]}(\mathbb{G}^*_b))$ under the proper centering and scaling:
\begin{align}
    J^{[R_m]}_{*,n,b}([t_m]) &= \pr_*\left\{\sqrt{b}\big(\Psi(\wh{\rho}_G,{U}_{[R_m]}(\mathbb{G}^*_b)) - \Psi(\wh{\rho}_G,{U}_{[R_m]}(G)) \big)\le [t_m]\right\}\label{eq:jointcdfs}. %\\
 %&= ~\pr_*\big\{\sqrt{b}\big[\wh{\rho}^{-\mathfrak{r}_1}_{G}U_{R_1}(\mathbb{G}^*_b)-\wh{\rho}^{-\mathfrak{r}_1}_{G}U_{R_1}(G)\big] \leq t_1, \ldots, \sqrt{b}\big[\wh{\rho}^{-\mathfrak{r}_m}_{G}U_{R_m}(\mathbb{G}^*_b)-\wh{\rho}^{-\mathfrak{r}_m}_{G}U_{R_m}(G)\big] \leq t_m\big\}.\notag 
\end{align} Our goal is to show that the subsampling distribution, viewed as a random probability measure (with respect to the randomness of $\bG_n$), effectively approximates the multivariate network moments distribution of $\bGb$ from the graphon model. For this purpose, we define the graphon sampling distribution as
\begin{align}
J^{[R_m]}_{b,c}([t_m])  &= ~\pr\left\{\sqrt{bc}\big(\Psi(\rho_b, {U}_{[R_m]}(\mathbb{G}_{b}))-\mE[\Psi(\rho_b, {U}_{[R_m]}(\mathbb{G}_{b}))]\big) \leq  [t_m] \right\},
\label{eq:true-jointcdfs}
\end{align}
where the scaling factor $c$ is introduced to correct the sample size difference, whose form will be provided later. Similar correction was also used by \citet{zhang2022edgeworth}. 

\begin{assumption}[Sparsity level]
\label{ass:rho_n_h_n}
Define $r = \max\{r_1,\ldots,r_m\} $ and $\mathfrak{r} = \max\{\mathfrak{r}_1,\ldots,\mathfrak{r}_m\}$. There exists a constant $c_1>1$ such that $n\rho^{4\mathfrak{r}}_{n} \geqslant c_1\log(n) $ for sufficiently large $n$. Furthermore, $b\rho^{r/2}_{n}  \to \infty$, and  $b\rho_{n}^{2\mathfrak{r}} \to \infty$ as $n\to \infty$.
\end{assumption}

\begin{assumption}[Subsampling size]
\label{ass:b}
The subsample size  $b\to \infty$ as $n\to \infty$ and  $\lim_{n\to \infty}b/n = c_2$ for a constant $c_2\in [0,1)$. 
\end{assumption}

\begin{assumption}[Non-degenerate moment]
\label{ass:non_degenerate}
As $n\to \infty$, the covariance matrix of $\sqrt{n}\Psi(\rho_n, {U}_{[R_m]}(\mathbb{G}_{n}))$ converges to a positive definite matrix.
\end{assumption}

Assumptions~\ref{ass:rho_n_h_n}--\ref{ass:non_degenerate} establish a transparent foundation for our multivariate framework. To clarify the explicit interplay between sampling and sparsity, we compare our conditions with the univariate frameworks of \citet{green2022bootstrapping}, \citet{zhang2022edgeworth}, and \citet{lunde2023subsampling}. First, by assuming $\lim_{n \to \infty} b/n \in [0, 1)$ (Assumption~\ref{ass:b}), our framework unifies the vanishing subsample regime ($b = o(n)$) of \citet{lunde2023subsampling} and the macroscopic regime ($b \asymp n$) of \citet{zhang2022edgeworth}. Second, our sparsity conditions (Assumption~\ref{ass:rho_n_h_n}) impose precise structural boundaries for both the full network and the subsample. For the full network, the threshold $n\rho_n^{4\mathfrak{r}} \ge c_1\log n$ ensures the uniform concentration of complex subgraph overlaps. Because we establish joint convergence \textit{with probability one} (Theorem~\ref{theo:consistent}), bounding the variance of cross-covariance estimators, which involves 4th-order moments and generates overlapping subgraphs with up to $4\mathfrak{r}$ edges, requires the $\log n$ factor to guarantee polynomial tail decay via the Borel-Cantelli lemma. The necessity of bounding these exact $4\mathfrak{r}$-edge overlaps is also recognized by \citet{green2022bootstrapping}, who impose a $b\rho_n^{4\mathfrak{r}} \to \infty$ requirement on their empirical graphon resample size. Notably, regarding the required subsample size, our dynamic bound ($b\rho_n^{2\mathfrak{r}} \to \infty$) is milder than the $b\rho_n^{4\mathfrak{r}} \to \infty$ requirement imposed by the empirical graphon bootstrap in \citet{green2022bootstrapping} (e.g., requiring $b\rho_n^6 \to \infty$ rather than $b\rho_n^{12} \to \infty$ for triangles). This explicit interplay guarantees that even sublinear subsamples are dense enough to prevent the joint covariance matrix from degenerating, whereas \citet{zhang2022edgeworth} express sparsity solely via $n$ by fixing $b \asymp n$, and \citet{lunde2023subsampling} absorb this dynamic into an implicit assumption. Finally, by deriving first-order joint consistency directly from primitive non-degeneracy conditions (Assumption~\ref{ass:non_degenerate}), our framework safely avoids the restrictive continuous Cram\'er condition required by \citet{zhang2022edgeworth}. This ensures our theory supports discrete graphons, such as the stochastic block model, without resorting to the high-level convergence assumptions of \citet{lunde2023subsampling}. We have the following property for our node subsampling distribution.

\begin{theorem}
\label{theo:consistent}
Under Assumptions~\ref{ass:rho_n_h_n}--\ref{ass:non_degenerate},
with probability one (with respect to the random sequence $\{\mathbb{G}_n\}$),
\begin{equation}
\label{eq:multi_consistent}
\begin{aligned}
    \sup\limits_{[t_m]\in \mathbb{R}^m}\Big| J^{[R_m]}_{*,n,b}([t_m]) -  J^{[R_m]}_{b,(1 - b/n)}([t_m]) \Big|   \to 0.
\end{aligned}
\end{equation}
\end{theorem}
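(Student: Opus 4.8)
The plan is to show that both cumulative distribution functions appearing in \eqref{eq:multi_consistent} converge to one and the same nondegenerate multivariate Gaussian law $N(\mathbf{0},\Sigma)$, and then to invoke the multivariate P\'olya theorem: because Assumption~\ref{ass:non_degenerate} makes the limiting covariance $\Sigma$ positive definite, its distribution function is continuous, so weak convergence upgrades to convergence in the Kolmogorov (sup-norm) distance. A triangle inequality then yields \eqref{eq:multi_consistent}. The entire argument thus reduces to two weak-convergence statements: one for the graphon sampling distribution $J^{\{R_1,\cdots,R_m\}}_{b,(1-b/n)}$ and one for the subsampling distribution $J^{\{R_1,\cdots,R_m\}}_{*,n,b}$ (the latter holding for almost every sequence $\{\mathbb{G}_n\}$), together with the verification that the two limiting covariances coincide.

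First I would expand each network moment through its Hoeffding/ANOVA decomposition. Viewing $\rho_n^{-\mathfrak{r}_j}U_{R_j}$ as a U-statistic of order $r_j$ in the latent positions $\{\xi_i\}$, I write $\rho_n^{-\mathfrak{r}_j}U_{R_j}-\mathbb{E}[\rho_n^{-\mathfrak{r}_j}U_{R_j}]=\sum_{k=1}^{r_j}\binom{r_j}{k}g_k^{(j)}$ as a sum of orthogonal projections. Under Assumption~\ref{ass:non_degenerate} the first-order (linear) projections $g_1^{(j)}$ dominate, and \emph{all} of the dependence among the $m$ motifs is carried by the cross-covariances $\Cov(g_1^{(j)},g_1^{(k)})$, whose scaled limit is exactly the matrix $\Sigma$ supplied by Assumption~\ref{ass:non_degenerate}. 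A preliminary lemma shows that, after the scaling $\sqrt{n}$ and the sparsity normalization, the degenerate parts $g_k^{(j)}$ with $k\ge2$ are \emph{jointly} negligible; this is where the conditions $b\rho_n^{2\mathfrak{r}}\to\infty$ and $b\rho_n^{r/2}\to\infty$ of Assumption~\ref{ass:rho_n_h_n} are used to bound the variance contributions of the higher-order terms once the $\rho$-rescaling is applied.

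For the graphon sampling distribution I reduce the multivariate CLT to the scalar case by the Cram\'er--Wold device: for each fixed $\lambda\in\mathbb{R}^m$ the linear combination $\sum_j\lambda_j\sqrt{b}\,\rho_b^{-\mathfrak{r}_j}[U_{R_j}(\mathbb{G}_b)-\mathbb{E}U_{R_j}(\mathbb{G}_b)]$ is asymptotically $N(0,\lambda^\top\Sigma\lambda)$ by a triangular-array CLT applied to its leading linear part, the data-driven normalization $\wh{\rho}_{\mathbb{G}_b}^{-\mathfrak{r}_j}$ being replaced by $\rho_b^{-\mathfrak{r}_j}$ at an $o_{\mathbb{P}}(1)$ cost via concentration of $\wh\rho$. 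For the subsampling distribution I use the exact identity $\mathbb{E}_{*}[U_{R_j}(\mathbb{G}_b^{*})]=U_{R_j}(G)$ (each $r_j$-subset of $[n]$ is equally likely to be retained), which makes $U_{R_j}(G)$ the correct centering. Conditioned on $G$, each $U_{R_j}(\mathbb{G}_b^{*})$ is a finite-population U-statistic obtained by sampling $b$ of $n$ nodes without replacement, and its first-order projection variance carries the finite-population correction $(1-b/n)$; this factor is precisely what the scaling $\sqrt{bc}$ with $c=1-b/n$ reinstates on the graphon side, so the two limiting covariances agree. The almost-sure conditional multivariate CLT then follows from (i) the almost-sure convergence of the empirical projection (co)variances computed on $G$ to their population counterparts $\Cov(g_1^{(j)},g_1^{(k)})$, and (ii) a conditional Lindeberg/H\'ajek condition for the finite-population CLT. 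Both steps rest on the a.s. control $\wh\rho_n/\rho_n\to1$, which I obtain from the condition $n\rho_n^{4\mathfrak{r}}\ge c_1\log n$ of Assumption~\ref{ass:rho_n_h_n} through a Bernstein bound and Borel--Cantelli (this condition makes the tail probabilities summable).

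The main obstacle is the combination of steps (i)--(ii) with the joint negligibility of the degenerate terms. In the sparse regime the factors $\rho_n^{-\mathfrak{r}_j}$ amplify the fluctuations, so one must track the variances of all projection orders \emph{simultaneously} across the $m$ motifs and show that only the matched first-order contributions survive, almost surely. Establishing that the conditional covariance matrix concentrates almost surely at the correct limit $\Sigma$ \emph{jointly} over $\{R_1,\cdots,R_m\}$, rather than merely marginally for each $R_j$, is the technical heart that separates this multivariate statement from the known marginal results of \cite{zhang2022edgeworth} and \cite{lunde2023subsampling}.
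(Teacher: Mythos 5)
Your overall architecture coincides with the paper's own proof: both arguments pass through a common Gaussian limit plus a triangle inequality, obtain the conditional CLT from finite-population U-statistic theory (Hoeffding decomposition conditional on $G$, dominance of the first-order projection, a Lindeberg condition) combined with the Cram\'er--Wold device, match the two limiting covariances through the finite-population correction $1-b/n$, and control $\wh{\rho}$ almost surely by exponential concentration plus Borel--Cantelli under the $n\rho_n^{4\mathfrak{r}}\ge c_1\log n$ condition. This is precisely the route taken in the paper via Theorem~\ref{theo:subsample_distribution}, Lemma~\ref{prop:lim_var_subsample}, and Lemma~\ref{lem: lim_netmoment_graphcon}.

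However, one step of your plan fails as written: the claim that, on the graphon-sampling side, the data-driven normalization $\wh{\rho}_{\mathbb{G}_b}^{-\mathfrak{r}_j}$ can be replaced by $\rho_b^{-\mathfrak{r}_j}$ at an $o_{\mathbb{P}}(1)$ cost. In the statistic of \eqref{eq:true-jointcdfs} the replacement error is
\begin{equation*}
\sqrt{b}\,U_{R}(\mathbb{G}_b)\bigl[\wh{\rho}_{\mathbb{G}_b}^{-\mathfrak{r}}-\rho_b^{-\mathfrak{r}}\bigr]
\approx -\mathfrak{r}\,\bigl[\rho_b^{-\mathfrak{r}}U_{R}(\mathbb{G}_b)\bigr]\,\sqrt{b}\Bigl(\frac{\wh{\rho}_{\mathbb{G}_b}}{\rho_b}-1\Bigr),
\end{equation*}
and by Lemma~\ref{lem: samplingdistribution_bickel}(a) the factor $\sqrt{b}(\wh{\rho}_{\mathbb{G}_b}/\rho_b-1)$ is asymptotically normal with variance $\sigma^2>0$, while $\rho_b^{-\mathfrak{r}}U_{R}(\mathbb{G}_b)$ converges to a strictly positive constant. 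The error is therefore $O_{\mathbb{P}}(1)$, not $o_{\mathbb{P}}(1)$; it is correlated with the fluctuation of $U_{R}(\mathbb{G}_b)$ and shifts the limiting covariance (in the extreme case where $R$ is an edge, $\wh{\rho}$-normalization renders the moment exactly constant, whereas the $\rho_b$-normalized moment is non-degenerate). The crucial asymmetry you exploit correctly on the subsampling side --- that $\wh{\rho}_G$ is a constant conditional on $G$, so Slutsky applies --- is exactly what is unavailable on the graphon side, where $\wh{\rho}_{\mathbb{G}_b}$ fluctuates at the same $\sqrt{b}$ scale as the statistic itself. The paper never performs this substitution: it invokes the CLT of Bickel et al.\ for the $\wh{\rho}$-normalized moments directly (Lemma~\ref{lem: samplingdistribution_bickel}, parts (b)--(c)), which builds the joint fluctuation of $\wh{\rho}_{\mathbb{G}_b}$ and $U_{R}(\mathbb{G}_b)$ into the limit. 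To repair your argument you must keep the correction term and derive the CLT for the ratio statistic itself rather than perturb off the $\rho_b$-normalized one. A second, smaller imprecision: your step (i) --- almost sure convergence of the conditional projection covariances --- does not follow from $\wh{\rho}_n/\rho_n\to1$ alone; it requires almost sure convergence of $\rho_n^{-\mathfrak{s}}U_S(\mathbb{G}_n)$ for every merged pattern $S\in\mathcal{S}_{R_j,R_k}$, which the paper establishes separately (by the same concentration-plus-Borel--Cantelli technique, applied motif by motif) in Lemma~\ref{lem: lim_netmoment_graphcon}.
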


Theorem~\ref{theo:consistent} bridges a critical theoretical gap in the statistical network literature. While \citet{bickel2011method} first established the weak convergence of full-network motif counts, the resulting asymptotic distribution is intractable for practical inference. Subsequent studies \citep{green2022bootstrapping, zhang2022edgeworth, lunde2023subsampling} bypassed this issue using subsampling or bootstrap approximations, but their theoretical guarantees were strictly limited to the \textit{univariate marginal distribution} $J^{\{R_1\}}_{b,1}(t_1)$. Theorem~\ref{theo:consistent} provides, to our knowledge, the first proof that node subsampling yields an asymptotically valid approximation for the {\it joint multivariate distribution} of motif counts under a general graphon model. Theoretically, extending these results to a multivariate setting fundamentally advances the literature in three directions:

\begin{enumerate}
    \item {\it Generalizing network finite-population U-statistics for multivariate moments}: Subsampling without replacement induces non-negligible dependence among sampled nodes, rendering standard infinite-population U-statistic theory \citep[e.g.,][]{serfling2009approximation} inapplicable. While \citet{zhang2022edgeworth} addressed this for univariate motifs using the finite-population framework of \citet{bloznelis2001orthogonal}, establishing joint multivariate convergence requires a substantial theoretical expansion. We bridge this gap by formally linking multivariate network moment estimators to finite-population U-statistics, enabling a rigorous characterization of their exact joint subsampling distribution.
    \item {\it Explicit characterization of heterogeneous covariances}: Evaluating the joint distribution requires characterizing the cross-covariance between distinct network motifs—a combinatorially demanding task compared to analyzing a single motif's variance. This covariance depends intricately on a vast array of partially overlapping subgraph structures (see details in Section~\ref{subsec:moreTheroemNMgraphon1}). Due to this structural complexity, explicit covariance expressions, let alone their asymptotic limits, are remarkably sparse in the literature \citep{bhattacharyya2015subsampling, maugis2020testing}. Leveraging graph limit theory \citep{lovasz2006limits, lovasz2012large}, we systematically characterize the asymptotic behavior of these cross-motif covariances, formally capturing how complex structural dependencies manifest as the network grows.
    \item {\it Joint distribution convergence from primitive conditions}: Recent work by \citet{lunde2023subsampling} provides a versatile and comprehensive subsampling framework for general univariate network statistics. To elegantly accommodate complex global functionals (such as eigenvalues), their methodology relies on a high-level condition, essentially assuming \textit{a priori} that both the full-sample and subsampled statistics converge to a well-behaved limiting distribution (Assumption 1 in their paper). While such high-level conditions are highly effective for broad classes of network functionals, imposing this assumption for network motifs would bypass the core theoretical challenge: mathematically characterizing the intricate cross-motif dependencies and joint asymptotic behaviors, especially when extending the analysis to multivariate cases. Furthermore, as \citet{lunde2023subsampling} note, relying on such an assumption imposes implicit, opaque restrictions on allowable network sparsity and subsampling rates. A primary contribution of our analysis is tackling this underlying structural challenge directly. Rather than treating convergence as a starting assumption, Theorem~\ref{theo:consistent} explicitly \textit{derives} the exact joint limiting distribution from first principles. By relying strictly on primitive, easily interpretable network properties, explicit sparsity bounds, subsample sizes, and non-degeneracy, our analysis executes the rigorous mathematical lifting required to prove multivariate consistency from the ground up.
\end{enumerate}

    For purely acyclic motifs, univariate frameworks achieve consistency under the sharper condition $b\rho_n \to \infty$ \citep{bickel2011method, zhang2022edgeworth}. While our unified bound (e.g., $b\rho_n^{2\mathfrak{r}} \to \infty$) is structurally conservative for individual trees, it is a mathematical necessity for establishing multivariate joint convergence with probability one. Although the cross-covariances of acyclic motifs converge under the milder $b\rho_n \to \infty$ rate, our proof of the joint convergence requires the Lindeberg-Feller condition for arbitrary combinations of heterogeneous motifs. Propogating the benefits of single acyclic motif to arbitrary combinations of motifs becomes intractable. Our proof thus relies on a deterministic, worst-case bound on the local motif counts. This uniform bounding overrides the topological benefits of acyclic structures, mathematically restricting the required global sparsity rate to $b\rho_n^{2\mathfrak{r}} \to \infty$.

\section{Unmatchable Network Comparison with Unequal Densities}

Network comparison involves determining whether two or more networks originate from the
same underlying population, a question that has gained significant attention recently.
For example, studies like \cite{ghoshdastidar2018practical,maugis2020testing,yuan2023practical}
have focused on comparing two groups of networks, where each group contains a large number
of individual networks. In contrast, research such as
\cite{tang2017semiparametric,li2018two,liu2021meta,chatterjee2023two,du2023hypothesis}
has explored comparisons between two individual networks that share the same set of nodes,
often known as ``matchable networks''.

Comparing ``unmatchable'' networks---those differing in both size and node composition---introduces
further complications. Various methods \citep{tang2017nonparametric,agterberg2020nonparametric,alyakin2024correcting}
have been developed to address these challenges, particularly under the random dot product graph
model \citep{young2007random}. In the context of the more general graphon model,
\cite{ghoshdastidar2017two} and \cite{shao2022higher} have introduced hypothesis testing
procedures based on network moments, which can be incorporated into resampling methods.
However, these methods test the marginal distribution of each network moment separately,
forgoing the power gains and richer inference afforded by their joint distribution.
As one of its important applications, the subsampling approach studied in this paper
paves the way for comparing unmatchable networks through the lens of multivariate inference
on network moments.

Consider two unmatchable networks, $G$ and $G'$, with sizes $n$ and $b$, respectively.
Assume they are realizations of two graphon models $\bG \sim \rho_n w$ and $\bG' \sim \rho_{b} w'$. We aim to test $H_0: w = w'$ using multiple network moments jointly.
Given the motifs of interest, $R_1, \ldots, R_m$, we consider comparing the two networks according to the subsampling distribution of the motifs. 

A natural approach is as follows. When $b$ is sufficiently large, we subsample from $G$ using Algorithm~\ref{algo:subsampling} to approximate the true distribution of network moments \eqref{eq:true-jointcdfs} for a network of size $b$ drawn from graphon $w$, and then assess whether the observed network
moments of $G'$ are consistent with this distribution.

Unfortunately, the above strategy fails when $\rho_n \neq \rho_{b}$. To see why, Theorem~\ref{theo:consistent} requires that
$\Psi(\widehat{\rho}_{G},\, {U}_{[R_m]}(G^*_{b}))$ asymptotically match the distribution of $\Psi(\rho_{b},\, {U}_{[R_m]}(G'))$ under the null, up to the correction factor $\sqrt{1-b/n}$. Since $\rho_{b}$ is unknown in practice, one might replace it with the empirical density $\widehat{\rho}_{G'}$. However, this self-normalization distorts the null distribution. Moreover, naively
applying self-normalization to both $\Psi(\widehat{\rho}_{G^*_b},\, {U}_{[R_m]}(G^*_{b}))$
and $\Psi(\widehat{\rho}_{G'},\, {U}_{[R_m]}(G'))$ is equally invalid, as we demonstrate in Section~\ref{sec:necessary-sparsification}.
To the best of our knowledge, \cite{shao2022higher} is the only existing work that handles hypothesis testing of $H_0$ under unequal densities in the graphon model. However, their method constructs a second-order-correct statistic from the full network and is restricted to univariate testing; furthermore it is not applicable to subsampling procedures and cannot leverage their computational advantages. Thus, subsampling-based inference under unequal densities
remains a completely open problem.

Here, we propose a novel subsampling-based test to resolve this problem via a simple but powerful idea: \emph{sparsification}. Since the fundamental obstacle is the density mismatch between $G$ and $G'$, we convert both networks to a common target density $\rho^\dagger$ via uniform random edge removal,
while leaving their underlying graphons unchanged. Specifically, for a fixed $\rho^\dagger$ smaller than both $\rho_n$ and $\rho_{b}$, we independently retain each edge in $G$ with probability $\rho^\dagger \rho_n^{-1}$ and each edge in $G'$ with probability $\rho^\dagger \rho_{b}^{-1}$. The resulting networks then follow $\rho^\dagger \cdot w$ and $\rho^\dagger \cdot w'$, respectively, reducing the comparison to the equal-density case and enabling the direct application of our subsampling framework.

While the above idea is simple, a practical challenge remains: $\rho_n$ and $\rho_{b}$ are unknown, so the sparsification probabilities cannot be computed directly. We address this via a sample-splitting procedure: each network is randomly split into two subnetworks, $G^1$ and $G^2$, where $G^1$ is used to estimate the sparsification probabilities and $G^2$ is then sparsified to construct the network moments as the test statistic. For simplicity of exposition, we use an equal-size split, though this is not required. The full procedure is summarized in Algorithm~\ref{algo:sparsification_stat}.

\begin{algorithm}
\caption{Externally sparsified network moments $\bar{\Psi}_{\rho^\dagger}(G^1, G^2)$}
\label{algo:sparsification_stat}
\KwIn{Two networks $G^1$ and $G^2$; motifs $[R_m] = \{R_1,\ldots,R_m\}$; target density $\rho^\dagger$.}
\KwOut{A normalized $m$-dimensional statistic $\bar{\Psi}_{\rho^\dagger}(G^1, G^2)$.}
\BlankLine
Estimate the sparsification probability: $\widehat{p} = \min\!\bigl(1,\, \rho^\dagger / \widehat{\rho}_{G^1}\bigr)$.\;
Sparsify $G^2$ by independently removing each edge with probability $1 - \widehat{p}$,
yielding $\widetilde{G}^2$.\;
\Return $\bar{\Psi}_{\rho^\dagger}(G^1, G^2) = \Psi\!\bigl(\widehat{\rho}_{\widetilde{G}^2},\, U_{[R_m]}(\widetilde{G}^2)\bigr)$.\;
\end{algorithm}

The sparsified moments $\bar{{\Psi}}$ in Algorithm~\ref{algo:sparsification_stat} can be viewed as a stochastically manipulated version of the standard network moments used earlier. It is natural to conjecture that the subsampling inference remains valid for these generalized moments. Motivated by this, we propose the following subsampling comparison algorithm. In the final step of Algorithm~\ref{algo:sparsification_test}, any valid multivariate test for $H_0$ may be applied.

\begin{algorithm}
\caption{Subsampling comparison between networks with unequal densities}
\label{algo:sparsification_test}
\KwIn{Networks $G$ (size $n$) and $G'$ (size $b$); motifs $R_1,\ldots,R_m$; target density
$\rho^\dagger$; subsampling size $N_{\mathrm{sub}}$.}
\BlankLine
Randomly split the nodes of $G'$ into two equal subsets, inducing subgraphs
$G^{\prime 1}$ and $G^{\prime 2}$; compute
$\bar{\Psi}_{\rho^\dagger}(G^{\prime 1}, G^{\prime 2})$ via
Algorithm~\ref{algo:sparsification_stat}.\;
\For{$i = 1, \dots, N_{\mathrm{sub}}$}{
    Independently subsample two sets of $\lfloor b/2 \rfloor$ nodes from $[n]$, inducing
    subgraphs $G^{*(i1)}_{b/2}$ and $G^{*(i2)}_{b/2}$.\;
    Compute $\bar{\Psi}_{\rho^\dagger}(G^{*(i1)}_{b/2}, G^{*(i2)}_{b/2})$ via
    Algorithm~\ref{algo:sparsification_stat}.\;
}
Compare $\bar{\Psi}_{\rho^\dagger}(G^{\prime 1}, G^{\prime 2})$ against the
empirical distribution of
$\bigl\{\bar{\Psi}_{\rho^\dagger}(G^{*(i1)}_{b/2}, G^{*(i2)}_{b/2})\bigr\}_{i=1}^{N_{\mathrm{sub}}}$
via a multivariate statistical test.
\end{algorithm}

Denote $\eta_w = {E}[\Psi(\rho_b, U_{[R_m]}(\mathbb{G}'))]$, which is a vector that depends only on $w$, but not on $\rho_b$ (Section~\ref{subsec:moreTheroemNMgraphon1}). 
The following theorem establishes the validity of the above procedure. It shows that, under $H_0: w = w'$, the subsampling distribution of $\bar{\Psi}_{\rho^\dagger}(\mathbb{G}^{*(i1)}_{b/2}, \mathbb{G}^{*(i2)}_{b/2})$
consistently approximates the sampling distribution of $\bar{\Psi}_{\rho^\dagger}(\mathbb{G}^{\prime 1}, \mathbb{G}^{\prime 2})$.

\begin{theorem}
\label{them:sparsification_validity}
Suppose Assumptions~\ref{ass:rho_n_h_n}--\ref{ass:non_degenerate} hold. Under $H_0$,
let $G \sim \rho_n w$ and $G' \sim \rho_{b} w$. Assume $\rho^\dagger = \kappa\min(\rho_n, \rho_{b})$
for some constant $\kappa > 0$. 
Denote by $\bar{J}^{[R_m]}_{b,c'}$ the CDF of
$\sqrt{c'b/2}\bigl(\bar{\Psi}_{\rho^\dagger}(\mathbb{G}^{\prime 1}, \mathbb{G}^{\prime 2}) - \eta_w\bigr)$,
and by $\bar{J}^{[R_m]}_{*,n,b}$ the CDF of
$\sqrt{b/2}\bigl(\bar{\Psi}_{\rho^\dagger}(\mathbb{G}^{*(i1)}_{b/2}, \mathbb{G}^{*(i2)}_{b/2}) - \eta_w\bigr)$
conditional on $G$.
Then, with probability one (with respect to the random sequence $\{\mathbb{G}_n\}$),
\begin{equation}
\label{eq:sparsified_consistency}
    \sup_{[t_m]\in \mathbb{R}^m}\Bigl|
      \bar{J}^{[R_m]}_{*,n,b}([t_m])
      - \bar{J}^{[R_m]}_{b,1-b/n}([t_m])
    \Bigr| \to 0, \quad \text{as } n, b \to \infty.
\end{equation}
\end{theorem}

 \section{Simulation} \label{sec:simulation}
\subsection{Evaluation of subsampling approximation accuracy}\label{subsec:pre}

We now employ numerical studies to assess the accuracy of approximating subsampling distributions by evaluating the finite sample approximation error given by the righthand side of \eqref{eq:multi_consistent} under Assumptions \ref{ass:rho_n_h_n}-\ref{ass:non_degenerate}. Specifically, using networks generated from graphon models, we calculate the empirical Kolmogorov-Smirnov distance between $\widehat{J}^{[R_m]}_{*,n,b}$ and $\widehat{J}^{[R_m]}_{b,(1 -b/n)}$, which are the empirical cumulative distribution functions corresponding to \eqref{eq:jointcdfs} and \eqref{eq:true-jointcdfs}, respectively. We focus on the performance for $m=1$ (marginal distribution) and $2$ (bi-variate joint distribution), considering three basic motifs: $\xtwoline$ (2-star), $\xtriangle$ (triangle), and $\xthreestar$ (3-star). The experimental setups are detailed below:
\begin{itemize}
\item The true network models: two graphons from previous studies \citep{green2022bootstrapping,zhang2022edgeworth,lunde2023subsampling} are used.
\begin{enumerate}
    \item Graphon 1 (smooth): $ w(u, v) \propto \exp\{-25(u - v)^2/2\}$.
    \item Graphon 2 (nonsmooth): $w(u, v) \propto  0.5 \cos[0.1 \{\left(u -0.5\right)^2+ \left(v -0.5\right)^2\} + 0.01]$ $\cdot \max(u, v)^{2/3} + 0.4$.
\end{enumerate}
\item The network and subsampling sizes: $n$ varies from $2,000$ to $16,000$ and $b = \lceil n^{2/3} \rceil$.
\item Sparsity levels: Two sparsity levels are considered $\rho_n = 0.25n^{-0.1}$ and $\rho_n=0.25n^{-0.25}$.
\end{itemize}

Notice that the simulation configurations are designed to align with our theoretical assumptions. In particular, the subsampling size and the graphons are selected to satisfy Assumptions~\ref{ass:rho_n_h_n} and \ref{ass:non_degenerate}.  Moreover, $\rho_n = 0.25n^{-0.1}$ satisfies the density requirement in Assumption~\ref{ass:b}, whereas $\rho_n = 0.25n^{-0.25}$ falls outside that regime for the triangle and 3-star motifs. Examining both cases allows us to assess the extent to which the theoretical conclusions remain informative in a slightly sparser setting than covered by the assumptions.
 % Testing these two cases give us a sense of the validity of theory even for a slightly sparser scenario than the theoretical requirements.

For each configuration, the true cumulative distribution function is approximated by the empirical cumulative distribution function from network moments of size-$b$ networks sampled from the true model. To assess the approximation error, measured by the Kolmogorov-Smirnov distance, we generate a size-$n$ network from the true model and use the empirical cumulative distribution function of the subsampled $\{U_{[R_m]}(G_b^{*(i)})\}_{i=1}^{N_{\mathrm{sub}}}$ from Algorithm~\ref{algo:subsampling} with $N_{\mathrm{sub}}=2,000$. This process is replicated $50$ times, and we report the average approximation errors from these replications as the performance metric.

% Figure~\ref{fig:simub06p01} displays the log-scale approximation errors for both the marginal and pairwise joint distributions under two graphon models at a sparsity level of $\rho_n = 0.25n^{-0.1}$. Across all evaluated cumulative distribution functions, there is a clear decreasing trend in errors. With both axis ticks labeled in the log scale, the decreasing trend is close to linear. The error-decreasing rate of the marginal distributions roughly aligns with the findings of \cite{zhang2022edgeworth}. The joint distributions seem to have a slightly slower decrease, but the pattern remains the same. Both graphons (smooth vs. nonsmooth) demonstrate consistent decreasing patterns, highlighting the subsampling method's potential robustness to graphon smoothness.

Figure~\ref{fig:simub06p01} displays the log-scale approximation errors for both the marginal and pairwise joint distributions under two graphon models at a sparsity level of $\rho_n = 0.25n^{-0.1}$. The errors across all evaluated cumulative distribution functions exhibit a clear decreasing trend. With both axes labeled on a log scale, this decreasing trend appears nearly linear as expected. The rate at which the marginal distribution errors decrease roughly aligns with the findings of \cite{zhang2022edgeworth}. Although the joint distributions show a slightly slower decrease in errors, the overall pattern remains the same. Both graphons, smooth or nonsmooth, demonstrate similar decreasing patterns, suggesting the subsampling method's robustness to graphon smoothness.

% \begin{figure}[ht]
% \centering
% \begin{subfigure}[t]{0.22\textwidth}
% \centering
% \includegraphics[width=\textwidth]{figures/simu/g2/b06p01/uni.pdf}
% \captionsetup{justification=centering}
% \caption{\scriptsize Graphon 1: marginal CDFs}
% \label{subfig:uni_g2}
% \end{subfigure}
% %
% \hfill
% \begin{subfigure}[t]{0.22\textwidth}
% \centering
% \includegraphics[width=\textwidth]{figures/simu/g2/b06p01/bi.pdf}
% \captionsetup{justification=centering}
% \caption{\scriptsize Graphon 1: joint CDFs}
% \label{subfig:bi_g2}
% \end{subfigure}
% \hfill
% \begin{subfigure}[t]{0.22\textwidth}
% \centering
% \includegraphics[width=\textwidth]{figures/simu/g1/b06p01/uni.pdf}
% \captionsetup{justification=centering}
% \caption{\scriptsize Graphon 2: marginal CDFs}
% \label{subfig:uni_g1}
% \end{subfigure}
% %
% \hfill
% \begin{subfigure}[t]{0.22\textwidth}
% \centering
% \includegraphics[width=\textwidth]{figures/simu/g1/b06p01/bi.pdf}
% \captionsetup{justification=centering}
% \caption{\scriptsize Graphon 2: joint CDFs}
% \label{subfig:bi_g1}
% \end{subfigure}
% \caption{Empirical approximation errors of the CDFs under the sparsity level $\rho_n = 0.25n^{-0.1}$.}
% \label{fig:simub06p01}
% \end{figure}

\begin{figure}[ht]
\centering
\begin{subfigure}[t]{0.22\textwidth}
\centering
\includegraphics[width=\textwidth]{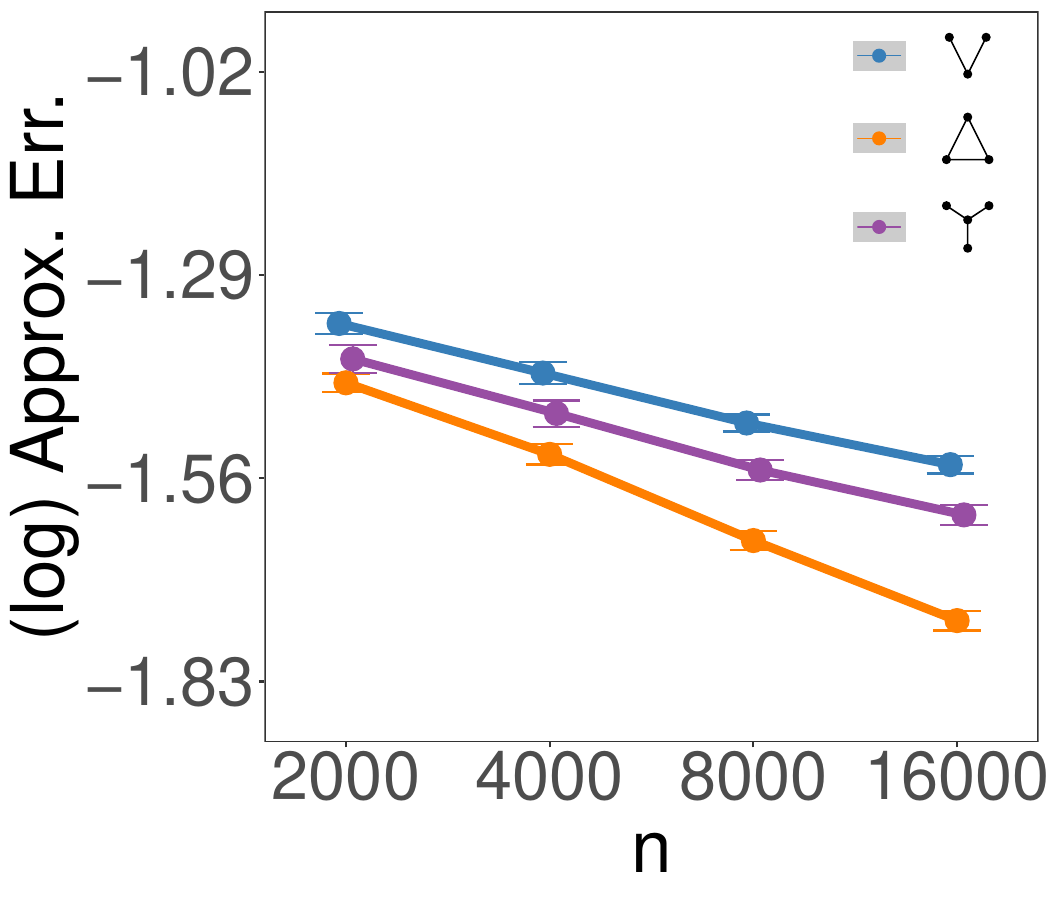}
\captionsetup{justification=centering}
\caption{\scriptsize Graphon 1: marginal cumulative distribution functions}
\label{subfig:loguni_g2}
\end{subfigure}
\hfill
\begin{subfigure}[t]{0.22\textwidth}
\centering
\includegraphics[width=\textwidth]{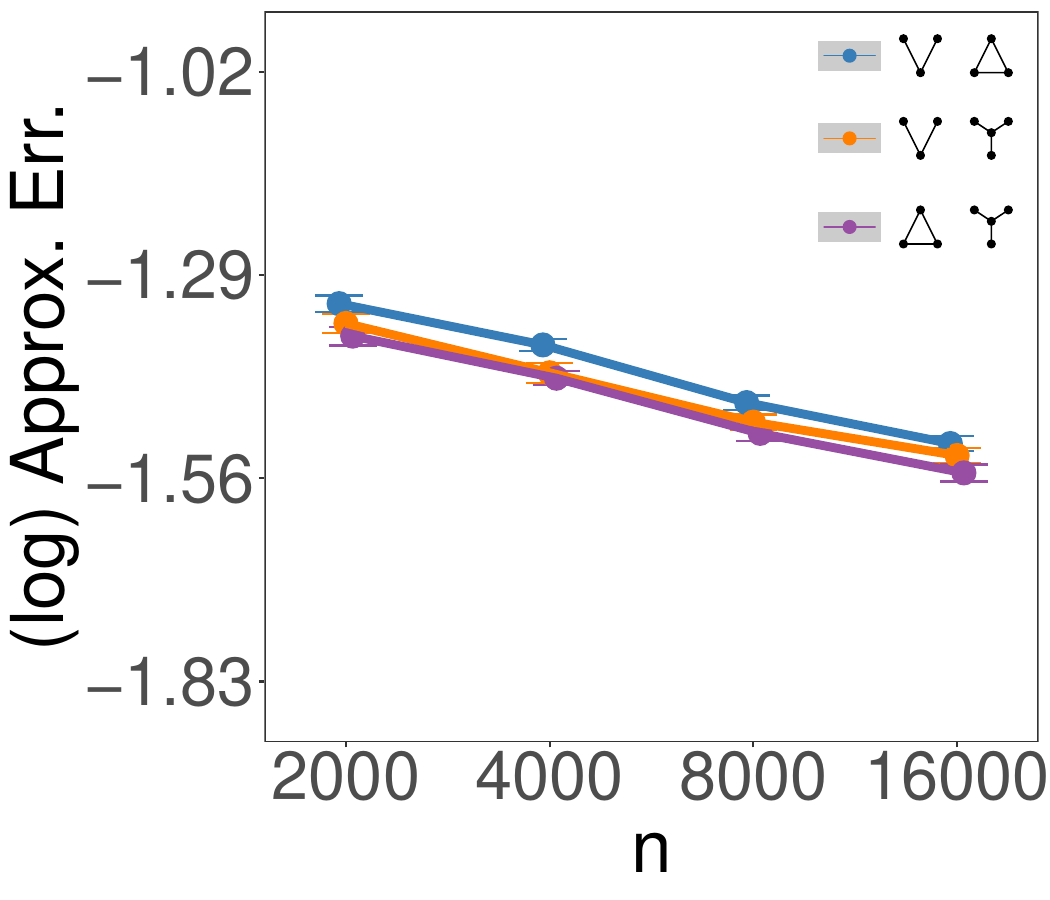}
\captionsetup{justification=centering}
\caption{\scriptsize Graphon 1: joint cumulative distribution functions}
\label{subfig:logbi_g2}
\end{subfigure}
\hfill
\begin{subfigure}[t]{0.22\textwidth}
\centering
\includegraphics[width=\textwidth]{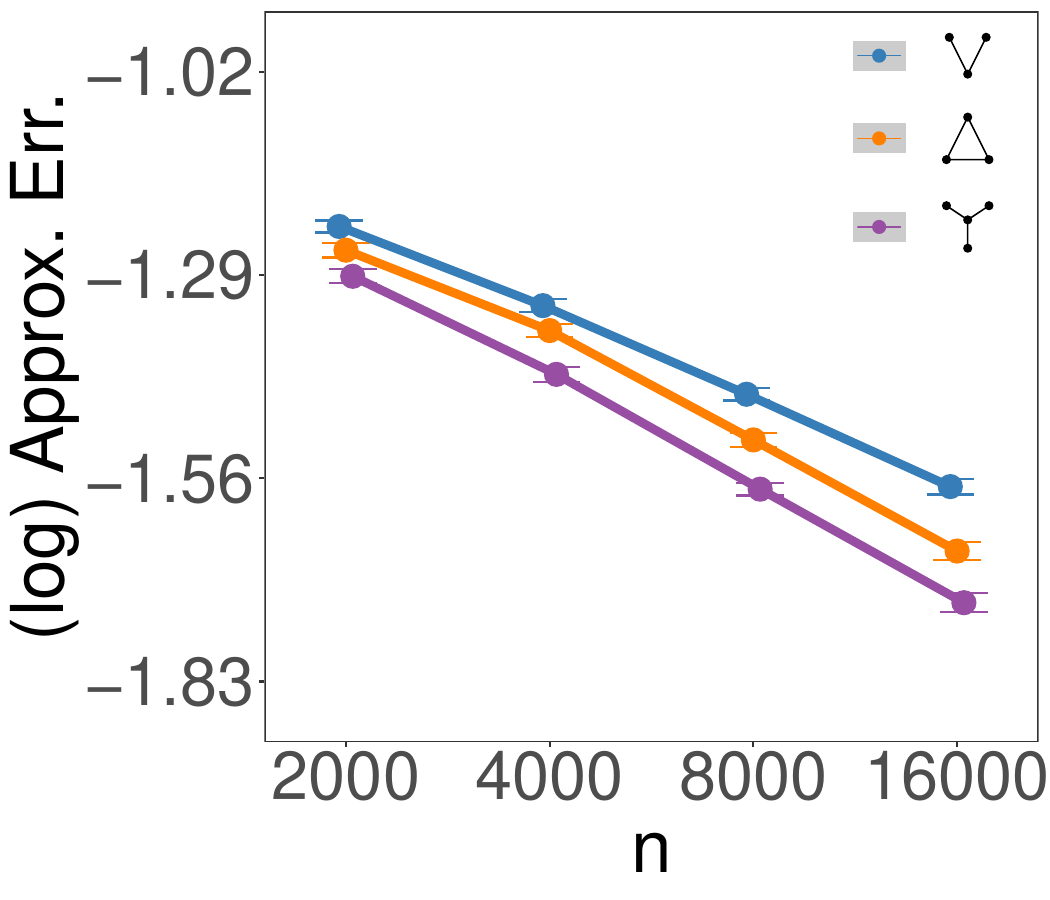}
\captionsetup{justification=centering}
\caption{\scriptsize Graphon 2: marginal cumulative distribution functions}
\label{subfig:loguni_g1}
\end{subfigure}
\hfill
\begin{subfigure}[t]{0.22\textwidth}
\centering
\includegraphics[width=\textwidth]{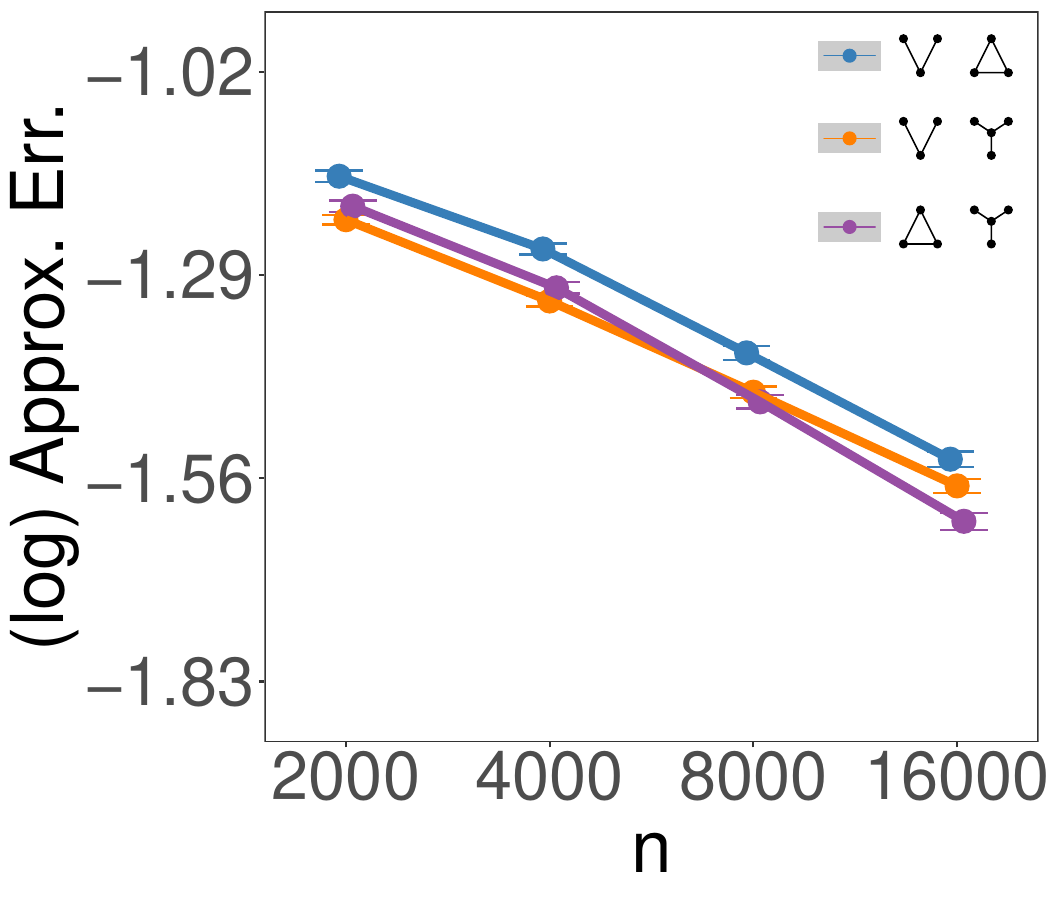}
\captionsetup{justification=centering}
\caption{\scriptsize Graphon 2: joint cumulative distribution functions}
\label{subfig:logbi_g1}
\end{subfigure}
\caption{Empirical approximation errors of the cumulative distribution functions under the sparsity level $\rho_n = 0.25n^{-0.1}$.}
\label{fig:simub06p01}
\end{figure}

Figure~\ref{fig:simub06p02} presents results under a sparser setting with $\rho_n=0.25n^{-0.25}$. The pattern remains consistent with previous results, though the errors are slightly higher due to the increased sparsity. The variation in numerical values across different motifs is more pronounced, yet the overall trend remains the same. It is important to note that excessive sparsity can weaken the signal-to-noise ratio to the extent that the approximation may fail, a known issue in network resampling methods \citep{zhang2022edgeworth,green2022bootstrapping,lunde2023subsampling}. We explore such an overly sparse scenario in Section~\ref{subsec:addsimu}. Additional results for experiments with a subsampling size of $b=\lceil 2n^{1/2} \rceil$ are also available in Section~\ref{subsec:addsimu}.

% \begin{figure}[ht]
% \centering
% \begin{subfigure}[t]{0.22\textwidth}
% \centering
% \includegraphics[width=\textwidth]{figures/simu/g2/b06p02/uni.pdf}
% \captionsetup{justification=centering}
% \caption{\scriptsize Graphon 1: marginal CDFs}
% \label{subfig:uni_g2_b06p02}
% \end{subfigure}
% %
% \hfill
% \begin{subfigure}[t]{0.22\textwidth}
% \centering
% \includegraphics[width=\textwidth]{figures/simu/g2/b06p02/bi.pdf}
% \captionsetup{justification=centering}
% \caption{\scriptsize Graphon 1: joint CDFs}
% \label{subfig:bi_g2_b06p02}
% \end{subfigure}
% \hfill
% \begin{subfigure}[t]{0.22\textwidth}
% \centering
% \includegraphics[width=\textwidth]{figures/simu/g1/b06p02/uni.pdf}
% \captionsetup{justification=centering}
% \caption{\scriptsize Graphon 2: marginal CDFs}
% \label{subfig:uni_g1_b06p02}
% \end{subfigure}
% %
% \hfill
% \begin{subfigure}[t]{0.22\textwidth}
% \centering
% \includegraphics[width=\textwidth]{figures/simu/g1/b06p02/bi.pdf}
% \captionsetup{justification=centering}
% \caption{\scriptsize Graphon 2: joint CDFs}
% \label{subfig:bi_g1_b06p02}
% \end{subfigure}
% %
% \caption{Empirical approximation errors of the CDFs under the sparsity level $\rho_n = 0.25n^{-0.25}$.}
% \label{fig:simub06p02}
% \end{figure}

\begin{figure}[ht]
\centering
\begin{subfigure}[t]{0.22\textwidth}
\centering
\includegraphics[width=\textwidth]{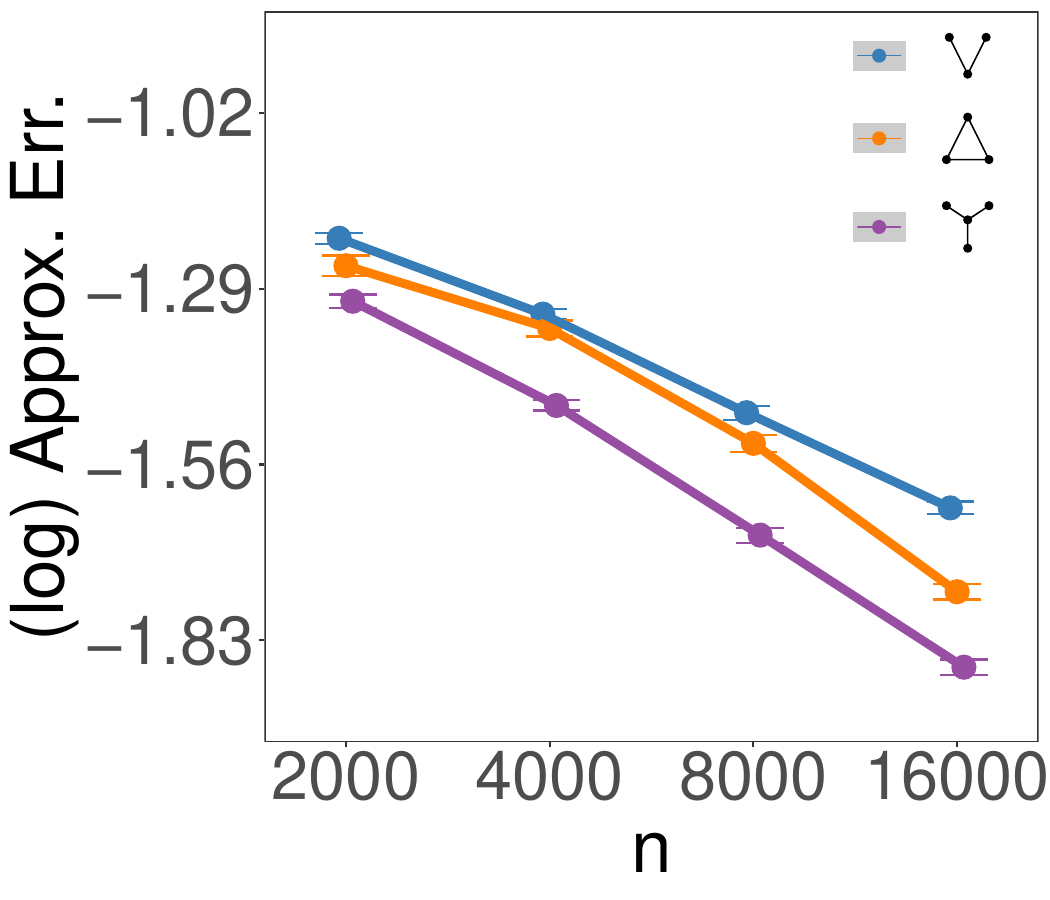}
\captionsetup{justification=centering}
\caption{\scriptsize Graphon 1: marginal cumulative distribution functions}
\label{subfig:loguni_g2_b06p02}
\end{subfigure}
\hfill
\begin{subfigure}[t]{0.22\textwidth}
\centering
\includegraphics[width=\textwidth]{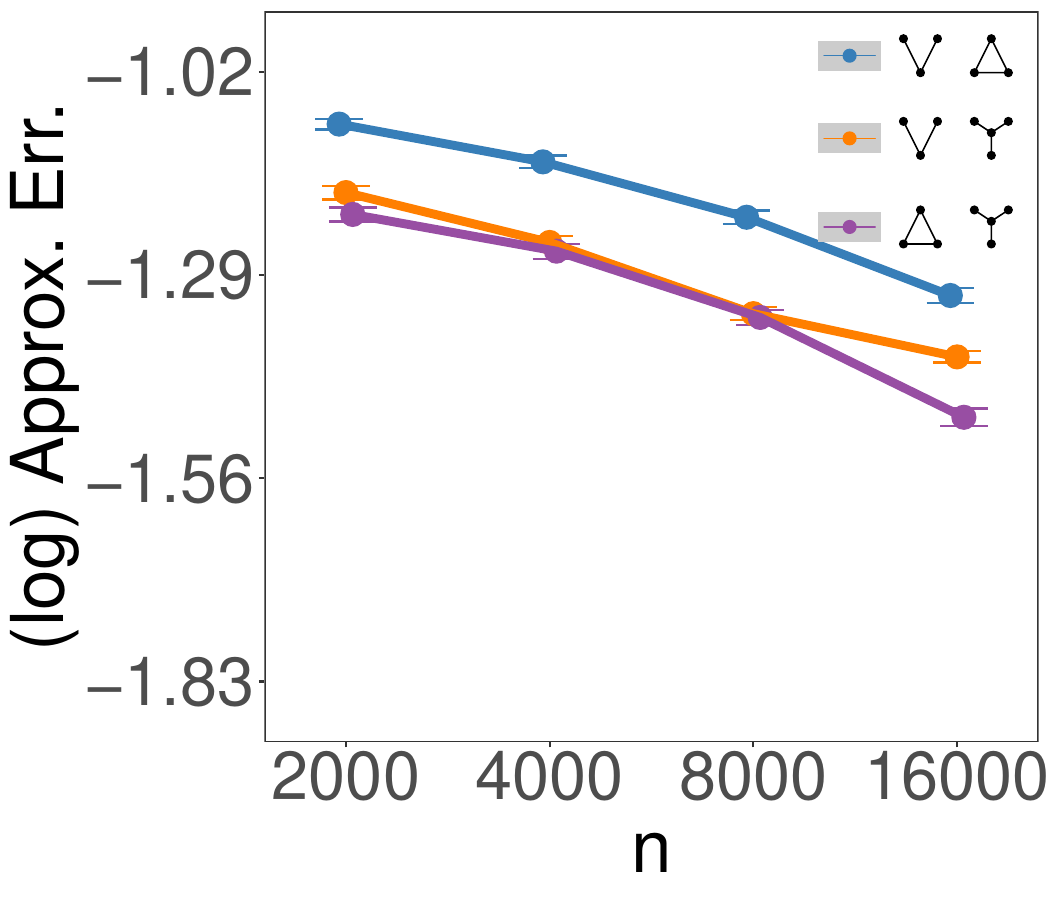}
\captionsetup{justification=centering}
\caption{\scriptsize Graphon 1: joint cumulative distribution functions}
\label{subfig:logbi_g2_b06p02}
\end{subfigure}
\hfill
\begin{subfigure}[t]{0.22\textwidth}
\centering
\includegraphics[width=\textwidth]{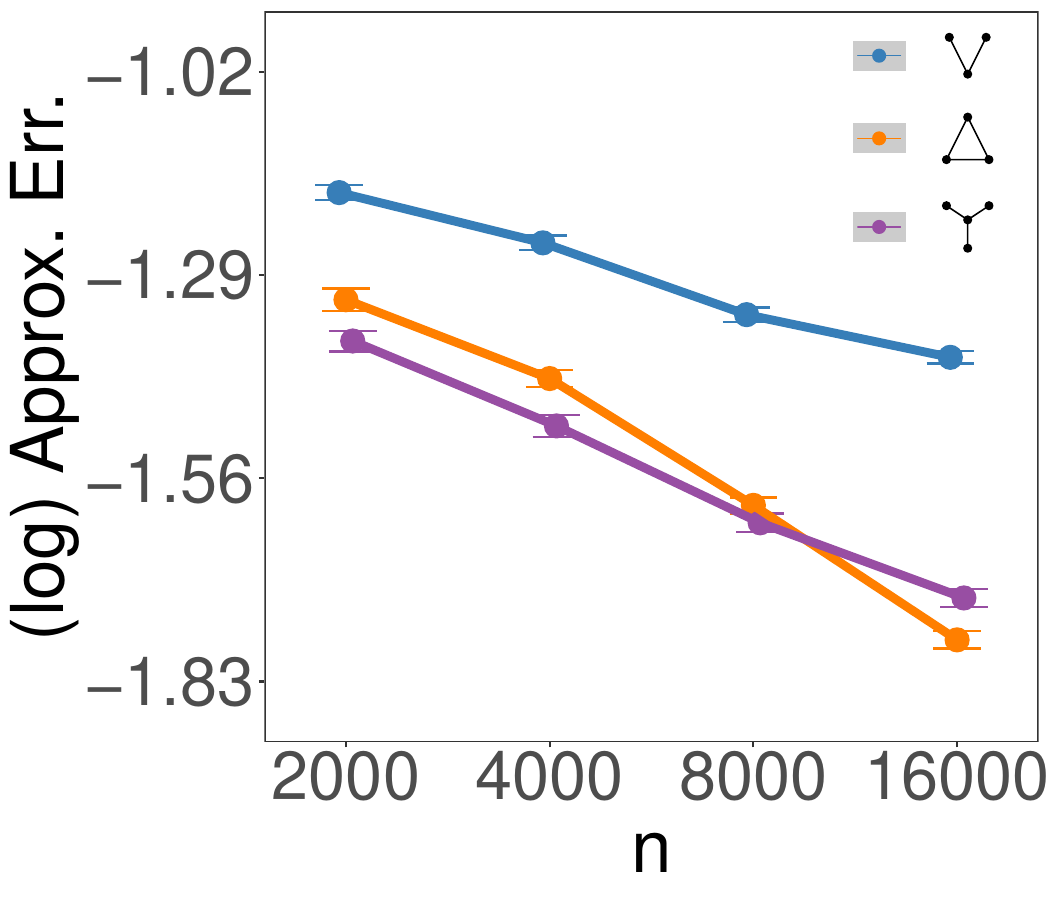}
\captionsetup{justification=centering}
\caption{\scriptsize Graphon 2: marginal cumulative distribution functions}
\label{subfig:loguni_g1_b06p02}
\end{subfigure}
\hfill
\begin{subfigure}[t]{0.22\textwidth}
\centering
\includegraphics[width=\textwidth]{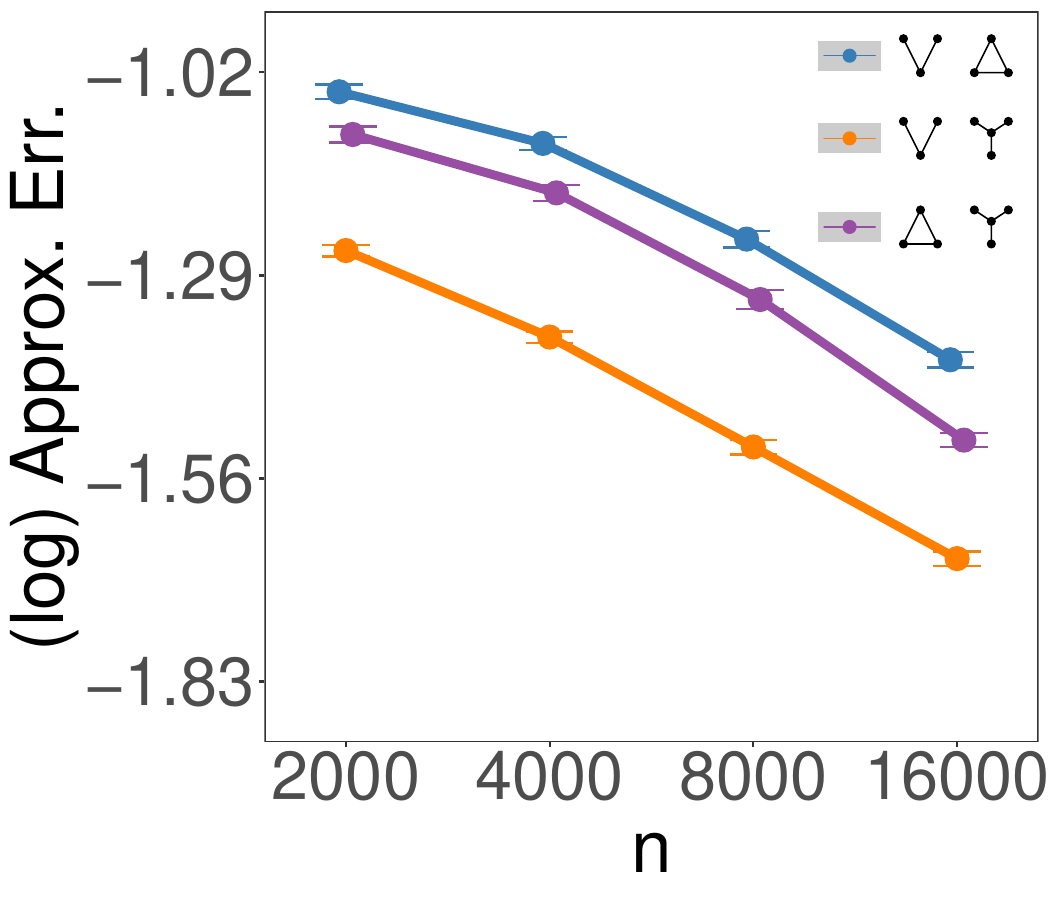}
\captionsetup{justification=centering}
\caption{\scriptsize Graphon 2: joint cumulative distribution functions}
\label{subfig:bi_logg1_b06p02}
\end{subfigure}
\caption{Empirical approximation errors of the cumulative distribution functions under the sparsity level $\rho_n = 0.25n^{-0.25}$.}
\label{fig:simub06p02}
\end{figure}

\subsection{Evaluation of the density-matching two-sample test}\label{subsec:two-sample}
Next, we evaluate Algorithm~\ref{algo:sparsification_test} in terms of type~I error control and
power. We set $n = 8,000$, $b = 500$, $\rho^\dagger = \kappa\min(\rho_n,\rho_b)$ with $\kappa = 0.7$,
and $N_{\mathrm{sub}} = 4,000$. Algorithm~\ref{algo:sparsification_test} is agnostic to the choice of final test. We
consider two options:
\begin{enumerate}
    \item {Mahalanobis test.} Let $\hat{{\eta}}_w$ and $\widehat{\Sigma}$ denote
    the sample mean and covariance of
    $\{\bar{\Psi}_{\rho^\dagger}(G^{*(i1)}_{b/2}, G^{*(i2)}_{b/2})\}_{i=1}^{N_{\mathrm{sub}}}$.
    Define the Mahalanobis distances
    \[
      D_i^* = \bigl(\bar{\Psi}_{\rho^\dagger}(G^{*(i1)}_{b/2}, G^{*(i2)}_{b/2})
               - \hat{\eta}_w\bigr)^\top
              \widehat{\Sigma}^{-1}
              \bigl(\bar{\Psi}_{\rho^\dagger}(G^{*(i1)}_{b/2}, G^{*(i2)}_{b/2})
               - \hat{\eta}_w\bigr),
    \]
    and $D_0$ analogously for $\bar{\Psi}_{\rho^\dagger}(G^{\prime 1}, G^{\prime 2})$.
    The p-value is $N_{\mathrm{sub}}^{-1}\sum_{i=1}^{N_{\mathrm{sub}}}\mathbf{1}(D_i^* > D_0)$.
    This test fully exploits the joint distribution of the $m$ motifs.
    \item {Cauchy combination test} \citep{liu2020cauchy}. This combines the $m$
    marginal p-values
    \[
      p_k = \frac{1}{N_{\mathrm{sub}}}\sum_{i=1}^{N_{\mathrm{sub}}}
            {1}\!(
              |\bar{\Psi}_{\rho^\dagger}(G^{*(i1)}_{b/2}, G^{*(i2)}_{b/2})_k
               - \bar{\eta}_{wk}|
              > |\bar{\Psi}_{\rho^\dagger}(G^{\prime 1}, G^{\prime 2})_k
               - \bar{\eta}_{wk}|
            ), \quad k = 1,\ldots,m.
    \]
    This test controls type~I error but does not exploit dependence among motifs.
\end{enumerate}

To evaluate our method, we consider the following simulation setup. Let $w_1$ and $w_2$ be two graphons.
% Suppose we have two graphons, $w_1$ and $w_2$. 
The large network $G$ is generated from $\rho_n \cdot w_1$, whereas the small network $G'$ is generated from
$\rho_b \cdot \{(1-t)w_1 + t w_2\}$, with mixing parameter $t \in \{0, 0.5, 1\}$. The null hypothesis corresponds to $t=0$. 
% $t = 0$ corresponds to $H_0$. 
We report rejection proportions based on $1,000$ replications at level $\alpha = 0.05$; under
the null, these estimate the type~I error rate, and under the alternatives $t>0$, they estimate power. Two settings are considered below.

\begin{table}[h]
\centering
\resizebox{\textwidth}{!}{%
\begin{tabular}{lcccccc}
 & \multicolumn{2}{c}{$t = 0$}
 & \multicolumn{2}{c}{$t = 0.5$}
 & \multicolumn{2}{c}{$t = 1$} \\ 
 & Mahalanobis & Cauchy
 & Mahalanobis & Cauchy
 & Mahalanobis & Cauchy \\ 
2-star, 3-star
& 0.042 & 0.049 & 0.223 & 0.196 & 0.876 & 0.820 \\ 
3-star, triangle
& 0.051 & 0.053 & 0.888 & 0.873 & 1     & 1     \\ 
2-star, triangle
& 0.051 & 0.057 & 0.885 & 0.876 & 1     & 1     \\ 
2-star, 3-star, triangle
& 0.046 & 0.049 & 0.855 & 0.850 & 1     & 1     \\ 
\end{tabular}
}
\caption{Rejection rates under the mixed model of Graphon~1 and Graphon~2.}
\label{tab:simu3gLgS}
\end{table}

{Setting 1: Synthetic graphons.} We take $w_1$ and $w_2$ to be Graphons~1 and ~2 from Section~\ref{subsec:pre}, with
$\rho_n = 0.25n^{-0.1}$ and $\rho_b = 0.25b^{-0.1}$. Results are reported in
Table~\ref{tab:simu3gLgS}. Both tests control type~I error at the nominal level, supporting the validity of Algorithm~\ref{algo:sparsification_test}.  Power increases with $t$, and the Mahalanobis
test is slightly more powerful than the Cauchy test. This is expected: since the two graphons differ in the marginal distributions of all three motifs, so little efficiency is lost by not fully exploiting their joint dependence.

Setting 2: {\it Data-driven} graphons. We next take $w_1$ and $w_2$ to be the graphons estimated from the two coexpression networks in
Section~\ref{sec:RealData} using the network mixing algorithm of \citet{li2023network}. The corresponding results are shown in Table~\ref{tab:simu3PCG}. At $t = 0.5$, the Mahalanobis test substantially outperforms the Cauchy test. A possible explanation is that the two graphons differ mainly through the \emph{joint} distribution of the 2-star and 3-star counts, rather than through their marginal distributions, as suggested in Section~\ref{sec:RealData}. Thus, a test that ignores
dependence information may suffer a substantial loss of power.

\begin{table}[h]
\centering
\resizebox{\textwidth}{!}{%
\begin{tabular}{lcccccc}
 & \multicolumn{2}{c}{$t = 0$}
 & \multicolumn{2}{c}{$t = 0.5$}
 & \multicolumn{2}{c}{$t = 1$} \\ 
 & Mahalanobis & Cauchy
 & Mahalanobis & Cauchy
 & Mahalanobis & Cauchy \\ 
2-star, 3-star
& 0.059 & 0.043 & 0.316 & 0.008 & 1 & 1 \\ 
3-star, triangle
& 0.052 & 0.045 & 0.165 & 0.076 & 1 & 1 \\ 
2-star, triangle
& 0.051 & 0.048 & 0.599 & 0.074 & 1 & 1 \\ 
2-star, 3-star, triangle
& 0.060 & 0.046 & 0.806 & 0.054 & 1 & 1 \\ 
\end{tabular}
}
\caption{Rejection rates under the mixed model of graphons learned from the coexpression networks in Section~\ref{sec:RealData}.}
\label{tab:simu3PCG}
\end{table}

Both experiments confirm that Algorithm~\ref{algo:sparsification_test}
controls type~I error at the nominal level under unequal network densities, supporting the proposed
density-matching approach for the two-sample problem. Across both settings, the
Mahalanobis test is at least as powerful as the Cauchy combination test, and
substantially more powerful when the signal lies in the joint distribution of motifs rather
than in their marginal distributions. This underscores a key advantage of our multivariate
subsampling framework: by testing the joint distribution of network moments, it can exploit
dependence among motifs that univariate or marginal  procedures inevitably fail to exploit.

\section{Comparison of Coexpression Networks of Core versus Non-Core Genes for Evolutionary Adaptation}
\label{sec:RealData}

\cite{fischer2021nonparallel} investigated the predictability of gene expression evolution during parallel adaptation across independent lineages of Trinidadian guppies
(\textit{Poecilia reticulata}), a model system in evolutionary biology in which multiple river drainages have independently colonized low-predation environments, giving rise to parallel phenotypic changes in life history, morphology, and behaviour. By comparing transcriptional mechanisms within and across lineages, they found that parallel phenotypic adaptation is associated with largely nonparallel gene expression changes: the vast majority of differentially expressed genes are lineage-specific, while a small number of genes are differentially expressed in the same direction across independent drainages.
These shared genes, referred to as \emph{core genes}, may represent a set of transcriptional targets repeatedly recruited during early-stage adaptation, distinct from the larger, lineage-specific \emph{non-core} gene set. Whether the coexpression networks of core and non-core genes share the same connection pattern, or whether core genes form a more tightly connected regulatory subnetwork reflecting their role as a shared adaptive hub, is an open biological question that our method is well-positioned to address.

We emphasize that the coexpression network considered here is an estimated summary of population-level gene-gene dependence, rather than a directly observed biological interaction graph. Such networks are widely used for exploratory and functional analyses, but their scientific interpretation depends on the validity of the preprocessing and network-construction steps, as well as on structural assumptions of the underlying network \citep{magwene2004estimating,langfelder2008wgcna,marbach2012wisdom,hill2016inferring,wang2021network}. Our inferential target is therefore a feature of the dependence structure represented by the estimated network, conditional on the chosen preprocessing and network-learning procedure. This interpretation is most meaningful when the measured units are sufficiently comparable to support the dependence structure and the network-learning procedure yields a stable approximation to the population coexpression pattern rather than sample-specific noise. We also note that formally characterizing the uncertainty from network estimation into the downstream inferential step would be valuable, but is beyond the scope of the present paper.

\begin{figure}[h]
    \centering
  \begin{subfigure}[b]{0.4\textwidth}
\centering
\includegraphics[width=\textwidth]{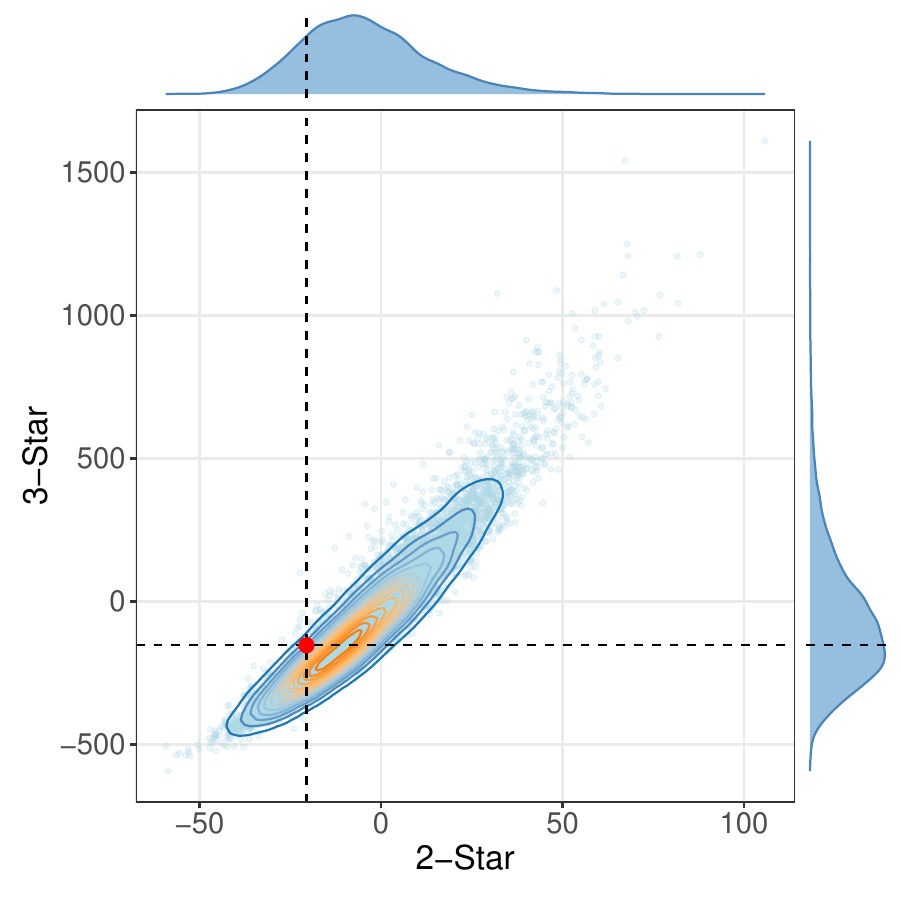}
\caption{Joint distribution of  $\xtwoline$ and $\xthreestar$. }
\label{fig:guppy-3s-2s}
\end{subfigure}
\hspace{-.15cm}
\begin{subfigure}[b]{0.4\textwidth}
\centering
\includegraphics[width=\textwidth]{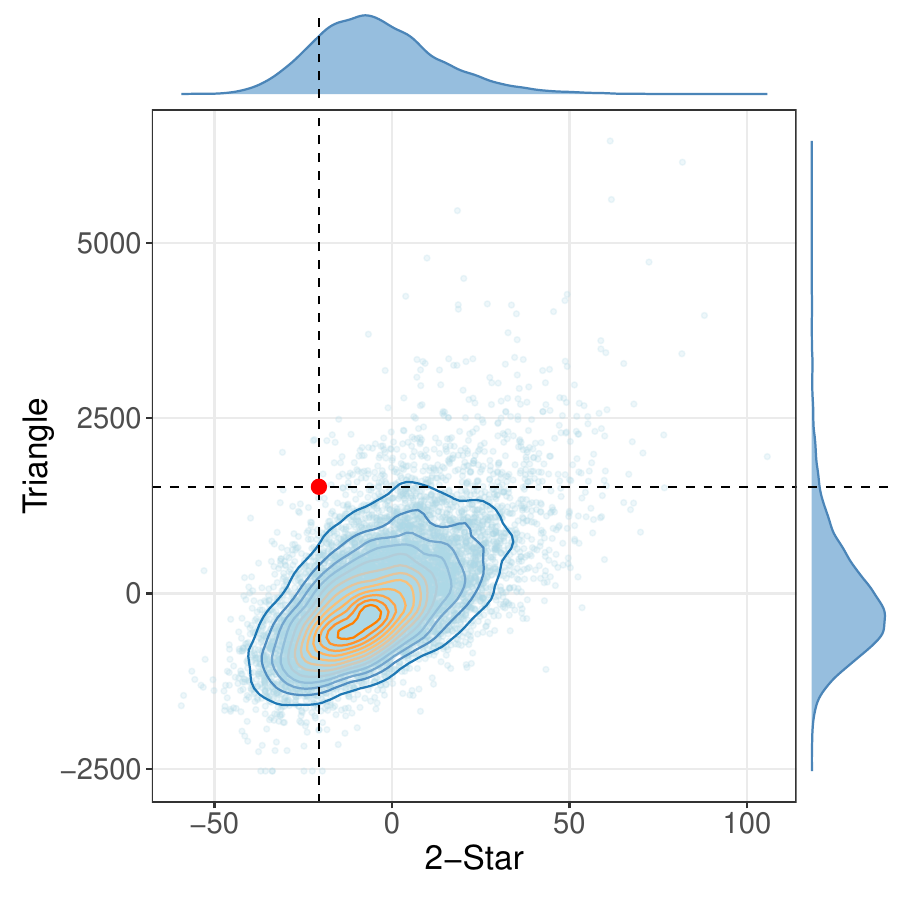}
\caption{Joint distribution of  $\xtwoline$ and $\xtriangle$. }
\label{fig:guppy-tri-2s}
\end{subfigure}
\vspace{0.1cm}
  \begin{subfigure}[b]{0.4\textwidth}
\centering
\includegraphics[width=\textwidth]{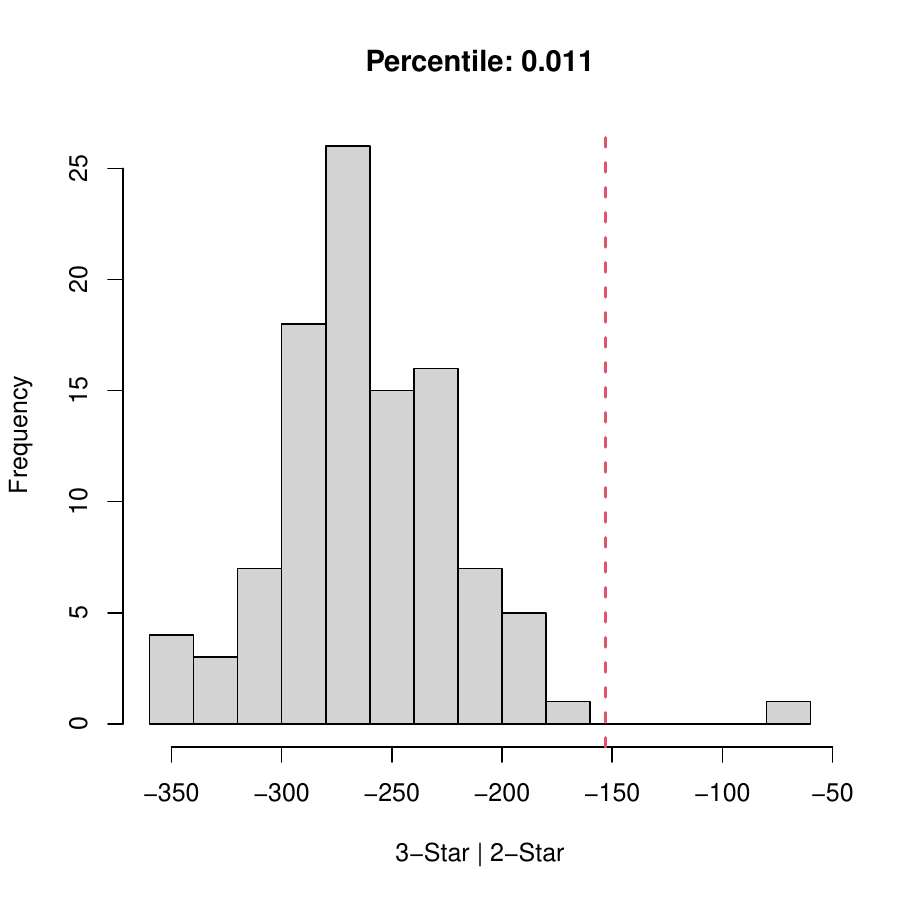}
\caption{Conditional distribution of  $\xthreestar | \xtwoline$. }
\label{fig:conditio-guppy-3s-2s}
\end{subfigure}
\hspace{-.15cm}
\begin{subfigure}[b]{0.4\textwidth}
\centering
\includegraphics[width=\textwidth]{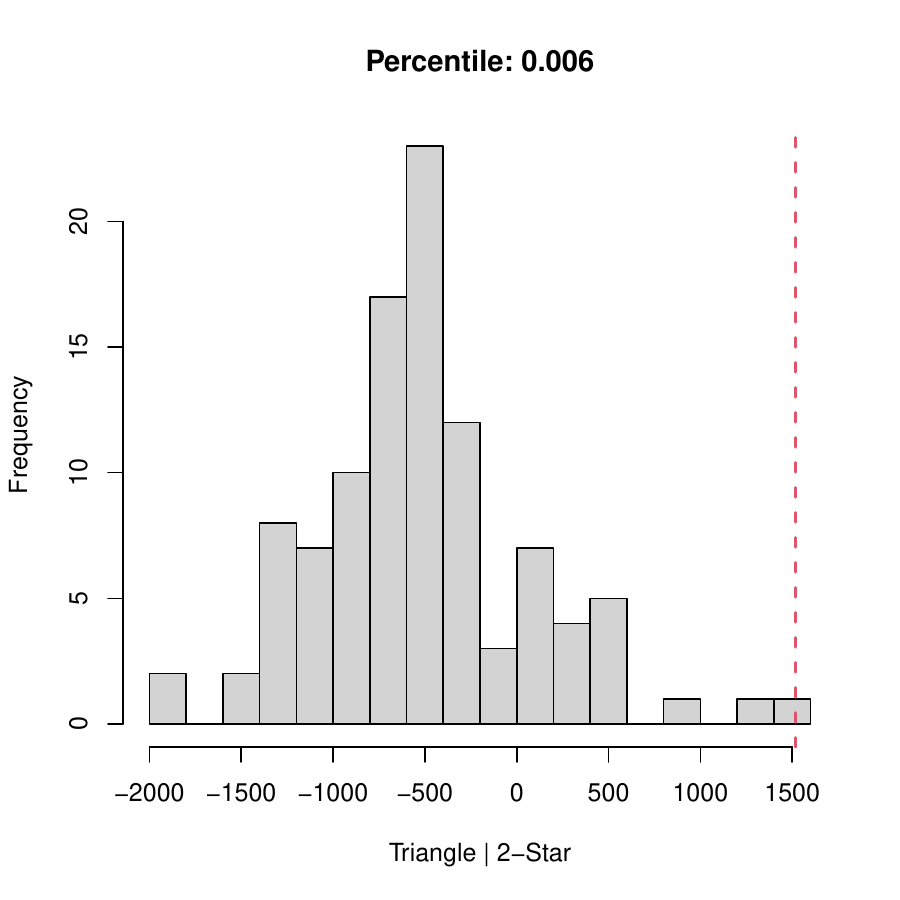}
\caption{Conditional distribution of  $\xtriangle | \xtwoline$. }
\label{fig:condition-guppy-tri-2s}
\end{subfigure}

    \caption{The blue points in panels (a) and (b) depict the subsampling distributions of network moments from the coexpression network of non-core genes. The red points represent the observed network moments in the core gene coexpression network. Panels (c) and (d) give the conditional subsampling distributions of 3-star and triangle, conditioning on the level of 2-stars in the core gene network. The red dotted line indicates the values of 3-star and triangle in the core gene network.}
    \label{fig:guppy}
\end{figure}

With this interpretation in mind, we apply our method to compare the coexpression networks of core and non-core genes. We construct the networks using the method of
\citet{cai2016large}, which provides a principled approach to recovering a sparse correlation-based dependence graph in high dimensions. Specifically, we form gene-wise
adjacency matrices by testing whether pairwise correlations are zero, controlling the false discovery rate at $0.05$, and the resulting binary adjacency matrices represent the coexpression networks \citep{magwene2004estimating}. The $16{,}485$ non-core genes form the larger network $G$, and the $618$ core genes form the smaller network $G'$. In addition to the large size imbalance, the two networks have substantially different densities: $0.0062$ for the non-core network and $0.0039$ for the core network. Since our primary interest is in comparing their connection patterns rather than their overall densities, we apply Algorithm~\ref{algo:sparsification_test} with $\rho^\dagger = 0.0035$ and $N_{\mathrm{sub}} = 10{,}000$, testing the 2-star ($\xtwoline$), 3-star ($\xthreestar$), and triangle ($\xtriangle$) motifs.

The Mahalanobis test and Cauchy combination test yield p-values of $0.020$ and $0.260$, respectively. The Cauchy test fails to detect any signal, as it cannot leverage the dependence among motifs. The Mahalanobis test, in contrast, provides clear evidence of a structural difference between the two networks. To understand this finding further, Figure~\ref{fig:guppy} displays the two-dimensional joint distributions of $(\xtwoline, \xthreestar)$ and $(\xtwoline, \xtriangle)$, as well as the corresponding
conditional distributions.

From a marginal perspective, the core network does not differ from the non-core network in either the 2-star or 3-star moments (Figure~\ref{fig:guppy-3s-2s}). This explains why the Cauchy combination test, which aggregates marginal evidence, finds no signal. However, since the 2-star is an induced subgraph of the 3-star, the two moments are inherently correlated, as confirmed by the shape of the subsampling cloud in Figure~\ref{fig:guppy-3s-2s}. Viewed jointly, the core network point lies on the boundary of the non-core subsampling distribution, suggesting a subtle but coherent departure in the joint structure of these two moments.

We further examine the conditional distribution of the 3-star moment given the 2-star level (Figure~\ref{fig:conditio-guppy-3s-2s}), restricting attention to subsampled networks whose 2-star count matches that of the core network. The approximate conditional p-value is $0.011$: conditioning on the 2-star level, the core gene network has a significantly elevated 3-star frequency relative to the non-core network. An analogous analysis for triangles, based on Figures~\ref{fig:guppy-tri-2s}
and~\ref{fig:condition-guppy-tri-2s}.

In conclusion, after accounting for density differences, the two networks exhibit comparable marginal 2-star levels. Yet the core gene network has a significantly higher
density of 3-stars and triangles conditional on 2-stars, indicating a more intensive and tightly connected interaction pattern among core genes. This is consistent with the biological hypothesis that core genes, as a repeatedly recruited set of adaptive targets, form a more cohesive regulatory subnetwork than the broader, lineage-specific non-core gene pool. This example also vividly illustrates the value of multivariate inference: the signal here is invisible to any marginal test, but is clearly revealed by the joint and conditional distributions of network moments.

\section{Discussion} \label{sec:discussion}

We have demonstrated that network node subsampling provides asymptotically valid inference for the joint distribution of multiple network moments. Building on this,
we proposed a subsampling-based two-sample testing procedure, based on network splitting and sparsification, to compare unmatchable networks with unequal densities; to our knowledge, this is the first inferential procedure applicable to this setting. As illustrated in the real data application, comparing the joint subsampling
distributions of network moments yields richer inference than the marginal testing studied in prior work.

Several directions could extend this work. A natural next step is to investigate whether higher-order accuracy of the joint subsampling distribution is achievable,
analogous to the Edgeworth corrections developed for the univariate case. From a computational perspective, evaluating network moments exactly is expensive
for large networks; developing scalable approximations and understanding their effect on downstream inference would substantially broaden the practical reach of
our framework.

% ============================================================
\appendix
% ============================================================

\section*{Appendix}

\renewcommand\thesection{S\Alph{section}}
\renewcommand\thesubsection{\thesection.\arabic{subsection}}
\counterwithout{figure}{section}
\counterwithout{table}{section}
\setcounter{table}{0}
\renewcommand\thefigure{S.\arabic{figure}}
\renewcommand\thetable{S.\arabic{table}}

\counterwithout{equation}{section}
\setcounter{equation}{0}
\makeatletter
\renewcommand{\theequation}{S.\@arabic\c@equation}
\makeatother

The appendix is organized as follows. Section~\ref{subsec:addresMC} collects additional properties of motif counts
used in subsequent proofs. Section~\ref{subsec:moreTheroemNMgraphon1} studies the statistical properties
of $J^{\{R_1,\ldots,R_m\}}_{b,c}$ as defined in Equation~\eqref{eq:true-jointcdfs} in the main paper.
While related results have been established by \citet{bickel2011method} and
\citet{maugis2020testing}, our main contribution here is the derivation of the
analytic form of the asymptotic covariance
\[
\lim_{n \to \infty} \operatorname{Cov}\!\left[
  \sqrt{n}\,\rho^{-\mathfrak{r}}_{n} U_{R}(\mathbb{G}_{n}),\;
  \sqrt{n}\,\rho^{-\mathfrak{r}'}_{n} U_{R'}(\mathbb{G}_{n})
\right].
\]
These results characterize the variance structure of the joint moment distribution
and establish conditions under which $J_{b,c}$ is non-degenerate.

Section~\ref{subsec:addresQuantify1} investigates statistical properties of
$J^{\{R_1,\ldots,R_m\}}_{*,n,b}$ as defined in Equation~\eqref{eq:jointcdfs}.
Specifically, we analyze quantities such as
${\mE}_*\!\left[U_{R}(\mathbb{G}^{*}_{b})\right]$ and
$\operatorname{Cov}_*\!\left[U_{R}(\mathbb{G}^{*}_{b}),\, U_{R'}(\mathbb{G}^{*}_{b})\right]$,
as well as their scaled limits:
\[
{\mE}_*\!\left[\rho^{-\mathfrak{r}}_{b}\, U_{R}(\mathbb{G}_{b})\right]
\quad\text{and}\quad
\lim_{b \to \infty} \rho^{-(\mathfrak{r}+\mathfrak{r}')}_{n}
\operatorname{Cov}_*\!\left[\sqrt{b}\, U_{R}(\mathbb{G}^{*}_{b}),\,
\sqrt{b}\, U_{R'}(\mathbb{G}^{*}_{b})\right].
\]
This analysis enables us to relax non-degeneracy assumptions commonly imposed
in earlier work \citep{zhang2022edgeworth,lunde2023subsampling}.

Section~\ref{subsec:asym_dis_UR} derives the asymptotic distribution of
$J^{\{R_1,\ldots,R_m\}}_{*,n,b}$. Following the approach of \citet{zhang2022edgeworth},
we adopt the finite-population U-statistic framework of \citet{bloznelis2001orthogonal}.
By modeling $\sqrt{b_n}\,\rho^{-\mathfrak{r}}_{n}U_{R}(\mathbb{G}^{*}_{b_n})$ as a
finite-population U-statistic and verifying a smoothness condition, a non-lattice
condition, and a Lindeberg--Feller-type condition, we establish the multivariate
asymptotic distribution of $J_{*,n,b}$ for multiple motifs $R_1, \ldots, R_m$.
To our knowledge, this is the first result to establish such asymptotic properties
in the multiple-motif setting.

Building on these results and the asymptotic theory for
$J^{\{R_1,\ldots,R_m\}}_{b,c}$ developed in \citet{bickel2011method}, we analyze
the Kolmogorov--Smirnov distance and prove Theorem~\ref{theo:consistent}. For
the reader's convenience, the overall proof workflow is illustrated in
Figure~\ref{fig:proof}.

\begin{center}
    \begin{tikzpicture}[line width = 1pt, x = .5cm, y = .48cm]
        \tikzset{main/.style = {draw,align = center,minimum width = 8em},every node/.style = {scale = 0.8}}
            \node[main] (a) at (0,0) {Section~\ref{subsec:consis}:$ \sup\limits_{t \in \mathbb{R}}\big| J^R_{*,n,b}(t) - J^R_{b,c}(t)
       \big| \rightarrow 0.$};

            \node[main] (b) at (-11,-4) {Section~\ref{subsec:asym_dis_UR}: asymptotic distribution of $J_{*,n,b}$};

            \node[main,align = center] (c) at (0,-4) {\citet{bickel2011method} studied the \\asymptotic distribution of $J_{b,c}$ . };

            \node[main,align = left] (d) at (11,-4) { $E_*\stackrel{P}{\longrightarrow}E$, $Cov_*\stackrel{P}{\longrightarrow}Cov$};
    
            \node[main,align = left, anchor = north] (e) at (-11,-8) {Section~\ref{subsec:asym_dis_UR}: model $\sqrt{b_n}\rho^{-\mathfrak{r}}_{n}U_{R}(\mathbb{G}^{*}_{b_n})$\\ as a symmetric finite population statistic. \\Verify smoothness condition, non-lattice,\\
            and a Lindeberg-Feller typed condition.};

            \node[main,align = left, anchor = north] (f) at (11,-8) {Section~\ref{subsec:moreTheroemNMgraphon1}: statistical properties of  $J_{b,c}$. \\ \( {\mE}[U_{R}(\mathbb{G}_{n})] \) and \( \operatorname{Cov}[U_{R}(\mathbb{G}_{n}), U_{R'}(\mathbb{G}_{n})] \) \\as well as their limits.\\
            Section~\ref{subsec:addresQuantify1}:statistical properties of $J_{*,n,b}$.\\ \( {\mE}_*\left[U_{R}(\mathbb{G}^{*}_{b}) \right] \) and \( \operatorname{Cov}_*\left[U_{R}(\mathbb{G}^{*}_{b}), U_{R'}(\mathbb{G}^{*}_{b}) \right] \)\\as well as their limits.};

                % \node[main,align = left] (g) at (0,3.4) {Generalize from the univariate case to the multivariate case by employing the Cramér-Wold device. };
    
            \node[main,align = left] (h) at (0,4) {Section~\ref{subsec:consis}:$\sup\limits_{(t_1,\ldots,t_m)\in \mathbb{R}^m}\Big| J^{\{R_1,\ldots,R_m\}}_{*,n,b}(t_1,\ldots,t_m) -  J^{\{R_1,\ldots,R_m\}}_{b,(1 - b/n)}(t_1,\ldots,t_m) \Big|   \to 0$ (Theorem~\ref{theo:consistent})\\
            Section~\ref{subsec:consis1}: empirical version of Theorem~\ref{theo:consistent} }; 
    
            \draw[-{Triangle},line width = 1pt] (b) -- (a);
            \draw[-{Triangle},line width = 1pt] (c) -- (a);
            \draw[-{Triangle},line width = 1pt] (d) -- (a);
            \draw[-{Triangle},line width = 1pt] (e.north) -- (b.south)node[midway,right] {Results in \cite{bloznelis2001orthogonal}.};
            \draw[-{Triangle},line width = 1pt] (f) -- (d);
            \draw[-{Triangle},line width = 1pt] (a) -- (h)node[midway,right,align = left]{By employing the Cramér-Wold device.};
                % \draw[-latex,line width = 1pt] (g) -- (h);
            \end{tikzpicture}    
    \captionof{figure}{\label{fig:proof}Proof sketch diagram for Theorem~\ref{theo:consistent}.}
\end{center}

The technical results in Sections~\ref{subsec:moreTheroemNMgraphon1},
\ref{subsec:addresQuantify1}, and~\ref{subsec:asym_dis_UR} are proved in
Sections~\ref{sec:proof for nmg1}, \ref{sec:proof for q1},
and~\ref{sec:proof for asymdis}, respectively. The proof of
Theorem~\ref{them:sparsification_validity} is given in
Section~\ref{sec:proof-sparsification}. Section~\ref{sec:necessary-sparsification}
provides empirical experiments demonstrating that sparsification is necessary
to ensure inference validity when comparing two networks with unequal densities.
The main numerical results of the paper are presented in
Sections~\ref{subsec:consis} and~\ref{subsec:consis1}, with additional simulation
results collected in Section~\ref{subsec:addsimu}.

\section{Supporting propositions, lemmas, and additional theoretical results}\label{sec:additionTheorem}

% \tianxi{Moved to here from the main text.}

For the ease of notation, we define
\begin{equation}
\label{eq:sgraphon}
    h_n(u,v) = \rho_n w(u,v)\mathbb{1}_{\{\rho_n w(u,v) \leq 1\}}.
\end{equation}
If a network $\bGn \sim h_n$, we denote it by $\mathbb{G}_{n}^{h_{n}}$ for simplicity. The network moment $U_{R}(\mathbb{G}^{*}_{b})$ is a function of $\mathbb{G}^{*}_{b}$, and $\mathbb{G}^{*}_{b}$ can be viewed as a conditional random variable. 
% Since any network $G$ is mathematically equivalent to its adjacent matrix $A(G)$,  $\mathbb{G}^*_b$ can be represented as 
% \begin{equation*}
%     \mathbb{G}^*_b = A(\mathbb{G}^*_b) = A \mid G,
% \end{equation*}
% where $ A$ represents a random matrix. However, discussing the marginal distribution of $A$ without reference to $G$ is conceptually challenging, as a graph is necessary for the discussion of its subgraphs. Therefore, we turn to focus on
% \begin{equation*}
%     \mathbb{G}^{(*\mathbb{G}_n)}_b = A[\mathbb{G}^{(*\mathbb{G}_n)}_b] = A \mid \mathbb{G}_n.
% \end{equation*}

%Under model \eqref{eq:sgraphon},  $A(\mathbb{G}_n)_{ij}$ are identical and independent distributed. On the other hand, the Algorithm \ref{algo:subsampling} implies that  $A[\mathbb{G}^{(*\mathbb{G}_n)}_b]$ is essentially a subset of  $A(\mathbb{G}_n)$.  Thus, $A[\mathbb{G}^{(*\mathbb{G}_n)}_b]_{ij}$ are also identical and independent distributed, with $A[\mathbb{G}^{(*\mathbb{G}_n)}_b] \sim A(\mathbb{G}_b)$.

%Based on this point of view, we studied the statistical properties of $U_{R}(\mathbb{G}^{*}_{b})$ in Lemma \ref{lem: exp_cov_subsample}, which is left in Section \ref{subsec:addresQuantify1}. Our results resemble the law of iterated expectations and the law of total variance, and generalized the results in \cite{bhattacharyya2015subsampling}. 

Note that $U_R(\mathbb{G}^*_b)$ is a finite population U-statistic \citep{zhang2022edgeworth} and network $G$ can be treated as a finite population: $G= \left\{\mathfrak{v}_1, \cdots, \mathfrak{v}_n\right\}$,  where each unit $ \mathfrak{v}_i$ represent the adjacency information between the $i$th node and others. The finite population U-statistic has been studied in \cite{zhao1990normal,bloznelis2001orthogonal,bloznelis2002edgeworth}, which is defined as follows.

\begin{definition}[Finite population U-statistic]
\label{def:finiteU}
Let  $\mathcal{V} = (\mathfrak{v}_1,\cdots,\mathfrak{v}_n)$ be a finite population consisting of $n$ units. 
    Let $T = t(\mathbb{V}_1,\cdots,\mathbb{V}_b)$ denote a statistic based on simple random sample $\mathbb{V}_1,\cdots,\mathbb{V}_b$ drawn without replacement from $\mathcal{V}$. If the kernel function $t$ is invariant under permutations of its arguments, then $T$ is called a finite population U-statistic.
\end{definition}

\subsection{Properties of motif counts} \label{subsec:addresMC}

In this section, we introduce two useful features of motif counts. The first one is the relationship between motif counts and graph injective homomorphisms:
\begin{lemma}[Proposition 1 of \cite{amini2012counting}]
\label{countinjhom}
For any motif $R$ and graph $G$,
\begin{equation*}
    X_R(G) = \mathrm{inj}(R,G)/|\mathrm{Aut}(R)|,
\end{equation*}
\end{lemma}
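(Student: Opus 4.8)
The plan is to prove the identity by a double-counting argument that matches injective homomorphisms $R \to G$ against copies of $R$ sitting inside $G$. Recall that an injective homomorphism is an injective vertex map $\phi : V(R) \to V(G)$ sending every edge of $R$ to an edge of $G$, and that $\mathrm{Aut}(R)$ is the group of maps $R \to R$ that are bijective and edge-preserving in both directions; write $\mathrm{Inj}(R,G)$ for the set of injective homomorphisms, so that $\mathrm{inj}(R,G) = |\mathrm{Inj}(R,G)|$. The strategy is to build a surjection from $\mathrm{Inj}(R,G)$ onto the copies of $R$ counted by $X_R(G)$ and to show every fiber has size exactly $|\mathrm{Aut}(R)|$.

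First I would associate to each $\phi \in \mathrm{Inj}(R,G)$ its image subgraph $S_\phi \subset G$, defined by $V(S_\phi) = \phi(V(R))$ and $\eE(S_\phi) = \{(\phi(u),\phi(v)) : (u,v) \in \eE(R)\}$. The first step is to verify that $\phi$, viewed as a map $R \to S_\phi$, is a graph isomorphism: it is a vertex bijection by injectivity, it carries $\eE(R)$ onto $\eE(S_\phi)$ by construction, and the reverse implication $(\phi(u),\phi(v)) \in \eE(S_\phi) \Rightarrow (u,v) \in \eE(R)$ follows again from injectivity of $\phi$. Hence $S_\phi \cong R$, so the assignment $\phi \mapsto S_\phi$ defines a map $\Phi$ from $\mathrm{Inj}(R,G)$ into the set $\{S \subset G : S \cong R\}$ enumerated by $X_R(G)$.

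Next I would establish surjectivity and compute the fibers of $\Phi$. Surjectivity is immediate: any copy $S \cong R$ admits at least one isomorphism $R \to S$, and such an isomorphism is in particular an injective homomorphism into $G$ whose image is exactly $S$. For the fiber over a fixed copy $S$, the preimage $\Phi^{-1}(S)$ is precisely the set of graph isomorphisms $R \to S$. Fixing one such isomorphism $\phi_0$, every isomorphism $\phi : R \to S$ is uniquely $\phi = \phi_0 \circ \psi$ with $\psi = \phi_0^{-1} \circ \phi \in \mathrm{Aut}(R)$, and conversely each $\psi \in \mathrm{Aut}(R)$ produces such a $\phi$; this yields a bijection between $\Phi^{-1}(S)$ and $\mathrm{Aut}(R)$, so every nonempty fiber has size exactly $|\mathrm{Aut}(R)|$. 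Partitioning $\mathrm{Inj}(R,G)$ over the fibers then gives
$$\mathrm{inj}(R,G) = \sum_{S \subset G,\, S \cong R} |\Phi^{-1}(S)| = |\mathrm{Aut}(R)| \cdot X_R(G),$$
which rearranges to the claimed equality.

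There is no deep obstacle here; the argument is standard orbit--stabilizer bookkeeping. The only points requiring genuine care are (i) checking that $\phi \mapsto S_\phi$ lands in honest (non-induced) subgraphs genuinely \emph{isomorphic} to $R$ rather than merely homomorphic images --- this is exactly where injectivity of $\phi$ drives the backward edge implication --- and (ii) confirming that the fibers are uniformly of size $|\mathrm{Aut}(R)|$, equivalently that the number of isomorphisms between any two isomorphic graphs equals the order of the automorphism group of either. Both are elementary, so the bulk of the write-up is simply making the correspondence $\Phi$ precise.
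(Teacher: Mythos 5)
Your proof is correct. There is nothing in the paper to compare it against at the level of argument: the paper does not prove this identity but simply imports it as Proposition~1 of \cite{amini2012counting}, so your write-up supplies a self-contained proof of a fact the authors take as given. The route you take --- sending each injective homomorphism $\phi$ to its image subgraph $S_\phi$, whose edge set is the set of $\phi$-images of edges of $R$ (not the edge set induced by $\phi(V(R))$ in $G$), showing this map surjects onto the subgraphs counted by $X_R(G)$, and identifying each fiber with the set of isomorphisms $R \to S$, which has cardinality $|\mathrm{Aut}(R)|$ via the bijection $\phi \mapsto \phi_0^{-1}\circ\phi$ --- is the standard orbit-counting proof, and you handle the two genuinely delicate points correctly. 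First, injectivity of $\phi$ is exactly what gives the backward edge implication making $\phi : R \to S_\phi$ an isomorphism; this matters because $X_R(G)$ counts non-induced subgraphs, so had you instead taken the induced subgraph on $\phi(V(R))$, the image could pick up extra edges and fail to be a copy of $R$, breaking both well-definedness and the uniform fiber count. Second, your identification $\Phi^{-1}(S) = \mathrm{Iso}(R,S)$ needs both inclusions, and both hold: an injective homomorphism with image subgraph exactly $S$ is an isomorphism onto $S$ by the argument above, and conversely any isomorphism $R \to S$, composed with the inclusion $S \subset G$, is an injective homomorphism into $G$ whose image subgraph is $S$ because it carries $\mathcal{E}(R)$ onto $\mathcal{E}(S)$. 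With those two checks in place, the partition of $\mathrm{inj}(R,G)$ over fibers gives $\mathrm{inj}(R,G) = |\mathrm{Aut}(R)|\, X_R(G)$ as claimed, and your argument is complete.
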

where  $\mathrm{inj}(R,G)$ denotes the number of injective graph homomorphisms \citep{lovasz2006limits}, and $\mathrm{Aut}(R)$ denotes the set of all automorphisms of $R$. A mapping $\phi$: $V(R) \rightarrow V(G)$  is a graph homomorphism if $(v_i,v_j)\in \eE(R)$ implies $[\phi(v_i), \phi(v_j)] \in \eE(G)$, and it is an injective graph homomorphism if $\phi(v_i) = \phi(v_j)$ implies $v_i = v_j$. On the other hand, $\mathrm{Aut}(R)$ is the set of all permutations $\psi$ of the node set $V(R)$ such that $(x,y) \in \eE(R)$ if and only if $[\psi(x),\psi(y)] \in \eE(R)$.
More discussions on $\mathrm{Aut}(R)$  are provided in \cite{rodriguez2014automorphism}. 

Let $\mathcal{S}_{R, R'}$  denote the set of all unlabeled graphs that can be formed from $R$ and $R'$. That is, 
\begin{equation}
    \label{eq:defs}
    \mathcal{S}_{R, R'} = \Big\{S \subset K_{r + r'} :V(S) = V(R_1) \cup V(R_2), \eE(S) = \eE(R_1) \cup \eE(R_2), R_1 \cong R, R_2 \cong R'\Big\},
\end{equation}
where $K_n$ denotes the complete graph of size $n$. Furthermore, $\mathcal{S}_{R, R'}$ can be partitioned into disjoint sets $\mathcal{S}^{(q)}_{R, R'}$ based on the number of merged nodes $q$, where $
    \mathcal{S}^{(q)}_{R, R'} = \big\{S :S\subset  \mathcal{S}_{R, R'}, |V(S)| = r + r' - q \big\}$. Lastly, for each  $S \subset \mathcal{S}_{R, R'}$, we define a constant $c_S$ as 
\begin{equation}
    \label{eq:defcs}
    c_S = \Big|\left\{ (R_1,R_2) \subset S :V(S) = V(R_1) \cup V(R_2), \eE(S) = \eE(R_1) \cup \eE(R_2), R_1 \cong R, R_2 \cong R'\right\}\Big|.
\end{equation} 
Following \cite{maugis2020testing}, we use two examples to explain above definitions. In the first example, let $R$ be a $\xtriangle$ and $R'$ also be a $\xtriangle$. Then the set $\mathcal{S}_{R, R'}$ can be constructed as $\left\{\xtriangle \,\xtriangle, \xreltriangle[1.05], \xdiagrectangle[.85], \xtriangle\right\}$. Each element in $\mathcal{S}_{R, R'}$  can be obtained by building blocks based on $R$ and $R'$. Let $R_1$ be a copy of $R$, and $R_2$ be a copy to $R'$. The pattern $\xtriangle \,\xtriangle$ can be built by either put $R_1$ in the left side or in the right side. Thus, $c_{\xtriangle[.5]\xtriangle[.5]} = 2$. Similarly,  $c_{\,\xreltriangle[.75]} = 2$, $c_{\,\xdiagrectangle[.75]} = 2$ and $c_{\xtriangle[.5]} = 1$. Generally speaking, $c_S$ denotes the number of ways $S$ can be built from copies of  $R$ and $R'$. Based on the number of merged nodes, we have $S_{R,R'}^{(0)} = \left\{\xtriangle \,\xtriangle\right\}$,$S_{R,R'}^{(1)} = \left\{\xreltriangle[.95]\right\}$, $S_{R,R'}^{(2)} = \left\{\xdiagrectangle\right\}$, and $S_{R,R'}^{(3)} = \left\{\xtriangle\right\}$. For the second example, let $R$ be a $\xrectangle$ and $R'$ be a $\xrectangle$. Then  $S_{R,R'} = \{S_{R,R'}^{(0)},S_{R,R'}^{(1)},S_{R,R'}^{(2)},S_{R,R'}^{(3)},S_{R,R'}^{(4)}\}$, with $S_{R,R'}^{(0)} = \left\{\xrectangle\,\xrectangle\right\}$, $S_{R,R'}^{(1)} = \left\{\xreldiamond\right\}$, $S_{R,R'}^{(2)} = \left\{\xtwodiamond,\, \xrecanddia,\, \xtworectangle\right\}$, $S_{R,R'}^{(3)} = \left\{\xthreetrangle[1],\, \xdiagdiamond[.75],\, \xtwodiagrectangle[.9]\right\}$, $S_{R,R'}^{(4)} = \left\{\xrectangle\right\}$. Correspondingly, $c_{\xrectangle[.5]\xrectangle[.5]} = 2$, $c_{\xreldiamond[.5]} = 2$, $c_{\xtwodiamond[.5]} = 6$, $c_{\xrecanddia[.5]} = 2$, $c_{\xtworectangle[.5]} = 2$, $c_{\xthreetrangle[.75]} = 2$, $c_{\xdiagdiamond[.5]} = 6$, $c_{\xtwodiagrectangle[.5]} = 6$ and $c_{\xrectangle[.5]} = 1$.

We are in position to introduce the second feature regarding the linearity of motif counts. 
\begin{lemma}[Lemma 1 in \cite{maugis2020testing}]
 \label{lem: linerityMC}
For any two motifs $R$ and $R'$. \begin{equation}
    \label{eq:linerityMC}
        X_R(G)X_{R'}(G) = \sum_{S\in \mathcal{S}_{R, R'}}c_{S}X_S(G) = \sum_{q=0}^{\min\{r,r'\}}\sum_{S\in \mathcal{S}_{R, R'}^{(q)}}c_{S}X_S(G).
    \end{equation}
\end{lemma}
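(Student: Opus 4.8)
The plan is to prove the identity by a double counting argument, interpreting each side as a count of configurations inside $G$. First I would expand the left-hand side directly from the definition \eqref{eq:MotifCountp}: the product $X_R(G)X_{R'}(G)$ equals the number of ordered pairs $(S_1,S_2)$ of subgraphs of $G$ with $S_1 \cong R$ and $S_2 \cong R'$. The central observation is that any such ordered pair determines a single ``union'' subgraph $T \subset G$ with $V(T) = V(S_1) \cup V(S_2)$ and $\eE(T) = \eE(S_1) \cup \eE(S_2)$; by the definition \eqref{eq:defs} of $\mathcal{S}_{R,R'}$, this $T$ is isomorphic to exactly one element $S \in \mathcal{S}_{R,R'}$. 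I would therefore partition the set of ordered pairs according to the isomorphism type $S$ of their union, giving
\begin{equation*}
X_R(G)X_{R'}(G) = \sum_{S \in \mathcal{S}_{R,R'}} \big|\{(S_1,S_2) : S_1 \cong R,\ S_2 \cong R',\ S_1 \cup S_2 \cong S\}\big|.
\end{equation*}

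Next I would evaluate each summand by counting in the opposite order: group the pairs whose union is a fixed labelled copy $T \subset G$ with $T \cong S$. The number of ordered pairs $(S_1,S_2)$ with $S_1 \cup S_2 = T$ as labelled subgraphs of $G$ is precisely the number of ways to write $T$ as a union of a copy of $R$ and a copy of $R'$. The key step is to argue that this decomposition count is an isomorphism invariant of $T$: any graph isomorphism $T \to S$ carries decompositions of $T$ bijectively to decompositions of $S$, so the count depends only on the type $S$ and equals $c_S$ as defined in \eqref{eq:defcs}. Summing over the $X_S(G)$ distinct copies $T$ of $S$ inside $G$ then yields $c_S X_S(G)$ for each summand, establishing the first equality.

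Finally, the second equality is a mere regrouping: since every $S \in \mathcal{S}_{R,R'}$ has $|V(S)| = r + r' - q$ for a unique number of merged nodes $q$, and $q$ ranges over $\{0, 1, \ldots, \min\{r,r'\}\}$, splitting $\mathcal{S}_{R,R'}$ into the disjoint blocks $\mathcal{S}^{(q)}_{R,R'}$ immediately produces the iterated sum. I expect the main obstacle to be the careful bookkeeping in the middle step --- in particular verifying the isomorphism invariance of the decomposition count, and confirming that \emph{ordered} rather than unordered pairs are counted consistently on both sides, so that a symmetric union type such as $\xtriangle\,\xtriangle$ receives its correct multiplicity $c_S$. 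Everything else reduces to unwinding the definitions \eqref{eq:defs} and \eqref{eq:defcs}.
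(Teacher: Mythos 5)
Your proposal is correct and takes essentially the same approach as the paper, which does not prove the lemma itself but cites it from \cite{maugis2020testing} and gives precisely this double-counting intuition: the product counts ordered pairs of copies of $R$ and $R'$, which are recovered by counting the union motifs they form inside $G$. Your formalization, including the isomorphism-invariance of the decomposition count and the use of \emph{ordered} pairs consistent with the definition of $c_S$ in \eqref{eq:defcs} (e.g.\ $c_{\xtriangle[.5]\xtriangle[.5]}=2$), fills in the details exactly as intended.
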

As noted in \cite{maugis2020testing}, $X_R(G)X_{R'}(G)$ involves counting pairs of motifs, and could be recovered by counting the number of the all motifs that are formed by using one copy of $R$ and one copy of $R'$ as building blocks. This is the intuition of Lemma \ref{lem: linerityMC}. Moreover, \eqref{eq:linerityMC} provides flexibility as it does not depend on the generation mechanism of $G$. 

% \noindent Section~\ref{sec:twoexamples} provides two examples to demonstate Definition~\ref{eq:defs}. The linearity property in Lemma \ref{eq:linerityMC} is crucial to the proofs in \cite{bhattacharyya2015subsampling,maugis2020testing}. 

\subsection{Statistical properties of network moments of graphs under the sparse graphon model } \label{subsec:moreTheroemNMgraphon1}
Following \cite{bickel2011method}, we define the following quantities:
\begin{equation}
    \label{eq:PR}
    \begin{aligned}
          P_{h_n}(R)  &= \int_{[0,1]^r} \prod_{(v_i,v_j) \in \eE(R)} h_n\left(\xi_i, \xi_j\right) \prod_{v_i \in V(R)} d \xi_i, \\
  P_{w}(R) &= \int_{[0,1]^r} \prod_{(v_i,v_j) \in \eE(R)} w\left(\xi_i, \xi_j\right) \prod_{v_i \in V(R)} d \xi_i.
    \end{aligned}
\end{equation} 
Lemma \ref{lem: Exp_Var_graphon} below documents some fundamental properties of network moment $U_{R}(\mathbb{G}_{n})$.
\begin{lemma}
\label{lem: Exp_Var_graphon}
For any motif $R$,
\begin{equation}
\label{eq:exp_graphon}
\mE\big[U_{R}(\mathbb{G}_{n})\big] = \frac{r!}{|\mathrm{Aut}(R)|} P_{h_n}(R).
\end{equation}
Moreover, for a pair of motifs $R$ and $R'$, using the definitions in \eqref{eq:defs} and \eqref{eq:defcs},
\begin{equation}
\label{eq:cov_graphon}
\begin{aligned}
\Cov\big[U_{R}(\mathbb{G}_{n}),U_{R'}(\mathbb{G}_{n})\big]  
=&~ \binom{n}{r}^{-1}\binom{n}{r'}^{-1}\sum_{q=1}^{\min\{r,r'\}}\sum_{S\in \mathcal{S}_{R, R'}^{(q)}}c_S\mE\big[X_{S}(\mathbb{G}_{n})\big]\quad \\&-  \Big[\dfrac{\binom{n}{r'}}{\binom{n-r}{r'}}-1 \Big]\binom{n}{r}^{-1}\binom{n}{r'}^{-1} \sum_{S\in \mathcal{S}_{R, R'}^{(0)}}c_S \mE\big[X_{S}(\mathbb{G}_{n})\big]
\end{aligned}
\end{equation}
in which recall that $r$ and $r'$ are the number of nodes in $R$ and $R'$, respectively.
\end{lemma}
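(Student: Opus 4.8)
The plan is to treat the two claims separately, deriving the expectation first and then leveraging it together with Lemma~\ref{lem: linerityMC} for the covariance. For the expectation, I would start from Lemma~\ref{countinjhom}, writing $X_R(\mathbb{G}_n) = \mathrm{inj}(R,\mathbb{G}_n)/|\mathrm{Aut}(R)|$ and expanding $\mathrm{inj}(R,\mathbb{G}_n) = \sum_{\phi} \prod_{(v_a,v_b)\in\eE(R)} A_{\phi(a)\phi(b)}$, the sum ranging over injective maps $\phi:V(R)\to[n]$, with $A$ the adjacency matrix of $\mathbb{G}_n$. Conditioning on the latent positions $\{\xi_i\}$, the edges are independent, so each summand has conditional expectation $\prod_{(v_a,v_b)\in\eE(R)} h_n(\xi_{\phi(a)},\xi_{\phi(b)})$; since the $\xi_i$ are i.i.d.\ uniform and $\phi$ is injective, integrating gives exactly $P_{h_n}(R)$ for every $\phi$. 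There are $n!/(n-r)!$ injective maps, so $\mE[\mathrm{inj}(R,\mathbb{G}_n)] = \frac{n!}{(n-r)!}P_{h_n}(R)$; dividing by $\binom{n}{r}|\mathrm{Aut}(R)|$ and using $\binom{n}{r}^{-1} = r!(n-r)!/n!$ collapses this to $\frac{r!}{|\mathrm{Aut}(R)|}P_{h_n}(R)$.

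For the covariance, I would write $\mE[U_R U_{R'}] = \binom{n}{r}^{-1}\binom{n}{r'}^{-1}\mE[X_R(\mathbb{G}_n)X_{R'}(\mathbb{G}_n)]$ and substitute the decomposition of Lemma~\ref{lem: linerityMC}, splitting the sum into its $q\geq 1$ part and its $q=0$ part. The terms with $q\geq 1$ already match the first sum in \eqref{eq:cov_graphon}, so the entire task reduces to reconciling the $q=0$ contribution against the product $\mE[U_R]\mE[U_{R'}]$. The key observation is that $\sum_{S\in\mathcal{S}^{(0)}_{R,R'}}c_S X_S(\mathbb{G}_n)$ is precisely the number of ordered pairs of vertex-disjoint copies of $R$ and $R'$ in $\mathbb{G}_n$; writing this as $\frac{1}{|\mathrm{Aut}(R)||\mathrm{Aut}(R')|}\sum_{(\phi,\psi)}\prod_{\eE(R)}A\prod_{\eE(R')}A$ over injective maps with disjoint images and taking expectations — again the edge indicators on disjoint vertex sets factorize given the $\xi$'s — yields $\mE[\sum_{S\in\mathcal{S}^{(0)}}c_S X_S] = \frac{n!}{(n-r-r')!}\frac{P_{h_n}(R)P_{h_n}(R')}{|\mathrm{Aut}(R)||\mathrm{Aut}(R')|}$.

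The remaining step is the combinatorial reconciliation. Using the expectation formula just proved, $\mE[U_R]\mE[U_{R'}] = \frac{r!\,r'!}{|\mathrm{Aut}(R)||\mathrm{Aut}(R')|}P_{h_n}(R)P_{h_n}(R')$, and I would verify the algebraic identity $\frac{\binom{n}{r'}}{\binom{n-r}{r'}}\binom{n}{r}^{-1}\binom{n}{r'}^{-1}\frac{n!}{(n-r-r')!} = r!\,r'!$, which shows $\mE[U_R]\mE[U_{R'}] = \frac{\binom{n}{r'}}{\binom{n-r}{r'}}\binom{n}{r}^{-1}\binom{n}{r'}^{-1}\mE[\sum_{S\in\mathcal{S}^{(0)}}c_S X_S]$. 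Subtracting this from the raw $q=0$ term $\binom{n}{r}^{-1}\binom{n}{r'}^{-1}\mE[\sum_{S\in\mathcal{S}^{(0)}}c_S X_S]$ produces exactly the correction $-\big[\frac{\binom{n}{r'}}{\binom{n-r}{r'}}-1\big]\binom{n}{r}^{-1}\binom{n}{r'}^{-1}\sum_{S\in\mathcal{S}^{(0)}}c_S\mE[X_S]$, completing the derivation.

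I expect the main obstacle to be conceptual bookkeeping of the $q=0$ term rather than any deep estimate: one must recognize that the $q=0$ part of $\mE[X_R X_{R'}]$ counts vertex-disjoint pairs using $n!/(n-r-r')!$ ordered node choices (the $r+r'$ nodes drawn without overlap), whereas $\mE[X_R]\mE[X_{R'}]$ implicitly permits the two copies to share nodes, which is precisely what contributes the extra factor $\binom{n}{r'}/\binom{n-r}{r'}$. Making this without-replacement versus independent-sampling distinction precise, and tracking the automorphism normalizations so that the factor $r!\,r'!$ cancels cleanly, is where the care is required.
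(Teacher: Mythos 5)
Your proposal is correct, and it takes a genuinely different route from the paper's own proof. For the expectation, the paper assembles \eqref{eq:exp_graphon} by quoting three literature facts (namely $\mE[U_R(\mathbb{G}_n)]=\binom{n}{r}^{-1}X_R(K_n)P_{h_n}(R)$ from Maugis et al., $X_R(K_n)=\binom{n}{r}X_R(K_r)$, and $X_R(K_r)=r!/|\mathrm{Aut}(R)|$), whereas you re-derive the same identity from first principles by expanding $\mathrm{inj}(R,\mathbb{G}_n)$ over the $n!/(n-r)!$ injective maps and conditioning on the latent positions; this is a valid, self-contained replacement for the quoted inputs. The real divergence is in the covariance. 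The paper never computes $\mE\big[\sum_{S\in\mathcal{S}^{(0)}_{R,R'}}c_S X_S(\mathbb{G}_n)\big]$ directly: it writes the covariance via Lemma~\ref{lem: linerityMC}, then invokes the closed-form covariance expression of Bhattacharyya and Bickel (their $\sigma(R_1,R_2;\rho)$ term), and equates the two expressions to \emph{back out} the key identity $\mE[U_R(\mathbb{G}_n)]\mE[U_{R'}(\mathbb{G}_n)] = \frac{\binom{n}{r'}}{\binom{n-r}{r'}}\binom{n}{r}^{-1}\binom{n}{r'}^{-1}\sum_{S\in\mathcal{S}^{(0)}_{R,R'}}c_S\mE[X_S(\mathbb{G}_n)]$, displayed as \eqref{eq:ref_from_bickel}, which is then substituted back to produce \eqref{eq:cov_graphon}. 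You instead prove that identity directly: you recognize the $q=0$ sum as the count of ordered vertex-disjoint pairs of copies, compute its expectation as $\frac{n!}{(n-r-r')!}\,P_{h_n}(R)P_{h_n}(R')/\big(|\mathrm{Aut}(R)||\mathrm{Aut}(R')|\big)$ using the factorization of conditionally independent edge indicators over disjoint vertex sets (and of the integrals over disjoint sets of latent variables), and then check the combinatorial identity $\frac{\binom{n}{r'}}{\binom{n-r}{r'}}\binom{n}{r}^{-1}\binom{n}{r'}^{-1}\frac{n!}{(n-r-r')!}=r!\,r'!$, which indeed holds. What your approach buys is independence from the external covariance formula: the paper's argument has a circular flavor, using a known covariance expression to derive the identity it needs to restate the covariance, while yours is a genuine derivation of both \eqref{eq:exp_graphon} and \eqref{eq:cov_graphon} from the graphon model itself. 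What the paper's approach buys is brevity, outsourcing the disjoint-pair counting to the literature. Both arguments arrive at exactly the stated formula.
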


We focus on the statistical properties of $\rho^{-\mathfrak{r}}_{n}U_{R}(\mathbb{G}_{n})$ rather than $U_{R}(\mathbb{G}_{n})$ because both the expectation and variance of $U_{R}(\mathbb{G}_{n})$ shrink to zero when $\rho_n$ converges to zero under the sparse graphon model. 
\begin{proposition}
   \label{prop: lim_var_graphon}
   For any motif $R$, 
\begin{equation}
\label{eq:lim_exp_graphon}
\begin{aligned}
      \mE\big[\rho^{-\mathfrak{r}}_{n}U_{R}(\mathbb{G}_{n})\big] = \eta_w(R) = \frac{r!}{|\mathrm{Aut}(R)|}P_{w}(R).
\end{aligned}
\end{equation}
Furthermore, consider motifs $R$ and $R'$ with sizes $r \leqslant r'$. Assume that $n\rho^{r/2}_n \rightarrow \infty$, then
     \begin{equation}
        \label{eq: lim_var_graphon}
       \begin{aligned}
      &\lim\limits_{n \to \infty}\Cov\big[\sqrt{n}\rho^{-\mathfrak{r}}_{n}U_{R}(\mathbb{G}_{n}), \sqrt{n}\rho^{-\mathfrak{r}'}_{n}U_{R'}(\mathbb{G}_{n})\big] = \sum_{S\in\mathcal{S}_{R,R'}^{(1)}}\frac{c_Sr!r'!}{|\mathrm{Aut}(S)|}P_{w}(S) - \sum_{S\in\mathcal{S}_{R,R'}^{(0)}} \frac{c_Sr!r'!rr'}{|\mathrm{Aut}(S)|}P_{w}(S).
\end{aligned}
     \end{equation}
\end{proposition}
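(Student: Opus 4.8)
The plan is to obtain both identities directly from Lemma~\ref{lem: Exp_Var_graphon}, using the elementary factorization $P_{h_n}(S) = \rho_n^{\mathfrak{s}} P_w(S)$ (with $\mathfrak{s} = |\eE(S)|$), which is immediate from the definitions \eqref{eq:sgraphon} and \eqref{eq:PR} once the truncation indicator is dropped, as stipulated after Definition~\ref{defi:graphon}. For the expectation \eqref{eq:lim_exp_graphon} I would simply substitute $P_{h_n}(R) = \rho_n^{\mathfrak{r}} P_w(R)$ into \eqref{eq:exp_graphon}; the factor $\rho_n^{\mathfrak{r}}$ cancels against the prefactor $\rho_n^{-\mathfrak{r}}$, leaving $\frac{r!}{|\mathrm{Aut}(R)|} P_w(R)$ with no asymptotics required.

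For the covariance \eqref{eq: lim_var_graphon}, I would start from \eqref{eq:cov_graphon} and insert $\mE[X_S(\mathbb{G}_n)] = \binom{n}{|V(S)|}\frac{|V(S)|!}{|\mathrm{Aut}(S)|}\rho_n^{\mathfrak{s}} P_w(S)$ for each $S \in \mathcal{S}_{R,R'}^{(q)}$, recalling that $|V(S)| = r + r' - q$. After multiplying by $n\rho_n^{-\mathfrak{r}-\mathfrak{r}'}$ and using the ratio expansion $\binom{n}{r}^{-1}\binom{n}{r'}^{-1}\binom{n}{r+r'-q} = \frac{r!\,r'!}{(r+r'-q)!}\,n^{-q}\,(1+O(n^{-1}))$, the generic summand indexed by $q$ scales like $n^{1-q}\rho_n^{\mathfrak{s}-\mathfrak{r}-\mathfrak{r}'}$ times a constant depending only on the fixed motifs. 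I would then treat the groups $q=1$, $q=0$, and $q\ge 2$ separately.

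The $q=1$ terms are the surviving contribution: a single merged node cannot identify an edge of the two copies (that would require sharing both endpoints), so $\mathfrak{s}=\mathfrak{r}+\mathfrak{r}'$ for every $S\in\mathcal{S}_{R,R'}^{(1)}$, whence $n^{1-q}\rho_n^{\mathfrak{s}-\mathfrak{r}-\mathfrak{r}'}=1$ and the coefficient algebra collapses to $\frac{c_S r!\,r'!}{|\mathrm{Aut}(S)|}P_w(S)$, reproducing the first sum. For the $q=0$ correction term I would expand the bracket via $\frac{\binom{n}{r'}}{\binom{n-r}{r'}}=\prod_{i=0}^{r'-1}\frac{n-i}{n-r-i}=1+\frac{rr'}{n}+O(n^{-2})$, so the bracket equals $\frac{rr'}{n}(1+o(1))$; since again $\mathfrak{s}=\mathfrak{r}+\mathfrak{r}'$, the $\rho_n$ powers cancel, the $n^{-1}$ cancels the overall $n$, the binomial ratio contributes $\frac{r!\,r'!}{(r+r')!}$, and one is left with $-\sum_{S\in\mathcal{S}_{R,R'}^{(0)}}\frac{c_S r!\,r'!\,rr'}{|\mathrm{Aut}(S)|}P_w(S)$, the second sum.

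The main obstacle is showing that every $q\ge 2$ term is asymptotically negligible, which is precisely where Assumption~\ref{ass:rho_n_h_n} (in the form $n\rho_n^{r/2}\to\infty$) enters. The key combinatorial input is that merging $q$ nodes can identify at most $\binom{q}{2}$ pairs, hence at most $\binom{q}{2}$ edges, giving the uniform bound $\mathfrak{r}+\mathfrak{r}'-\mathfrak{s}\le\binom{q}{2}$. Combined with $\rho_n\le 1$, this yields $n^{1-q}\rho_n^{\mathfrak{s}-\mathfrak{r}-\mathfrak{r}'}\le \big(n\rho_n^{q/2}\big)^{-(q-1)}$. Since $q\le\min\{r,r'\}=r$ forces $\rho_n^{q/2}\ge\rho_n^{r/2}$, the hypothesis $n\rho_n^{r/2}\to\infty$ gives $n\rho_n^{q/2}\to\infty$, and because $q-1\ge 1$ the whole quantity tends to $0$. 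Assembling the three contributions then gives \eqref{eq: lim_var_graphon}. I expect the only delicate points to be establishing the edge-identification bound uniformly over $\mathcal{S}_{R,R'}^{(q)}$ and verifying that the binding constraint is exactly $q=r$, which is what ties the vanishing of the higher-order terms to the stated sparsity condition.
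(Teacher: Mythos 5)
Your proposal is correct and follows essentially the same route as the paper's proof: both start from the covariance formula in Lemma~\ref{lem: Exp_Var_graphon}, substitute $P_{h_n}(S)=\rho_n^{\mathfrak{s}}P_w(S)$, group the terms by the number of merged nodes $q$, identify the $q=1$ terms as the surviving contribution (using $\mathfrak{s}=\mathfrak{r}+\mathfrak{r}'$ there), extract the $rr'$ coefficient from the $q=0$ correction via a binomial-ratio expansion, and kill the $q\ge 2$ terms with the bound of at most $q(q-1)/2$ merged edges together with $n\rho_n^{q/2}\to\infty$. The only (cosmetic) difference is that you handle all $q\ge 2$ uniformly via $n^{1-q}\rho_n^{\mathfrak{s}-\mathfrak{r}-\mathfrak{r}'}\le (n\rho_n^{q/2})^{-(q-1)}$, whereas the paper splits the $q=2$ and $2<q$ cases.
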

The right-hand side in \eqref{eq: lim_var_graphon} describes the limit of covariance for any pair of motifs, and it includes the variance of $\sqrt{n}\rho^{-\mathfrak{r}}_{n}U_{R}(\mathbb{G}_{n})$ as a special case. When the limit of variance is non-zero, we say $\rho^{-\mathfrak{r}}_{n}U_{R}(\mathbb{G}_{n})$ is \emph{non-degenerate}.

\subsection{Statistical properties of network moments of subsampled graphs}\label{subsec:addresQuantify1}

% We introduce some additional notations before presenting Lemma \ref{lem: usefuleqs}. 
Let $\mathcal{S}(\mathbb{G}^*_b)$ denote the collection of all possible  instantiations of $\mathbb{G}^*_b$. For a fixed node $v \in V(G)$, we use $\mathbb{G}^{v*}_{b}$ to denote a randomly induced subgraph of $G$ based on the fixed node $v$ and other $b-1$ nodes randomly drawn without replacement from $V(G)\setminus v$. Similarly, we use $\mathcal{S}(\mathbb{G}^{v*}_b)$ to denote the sample space of $\mathbb{G}^{v*}_b$. Let $G^{v*}_{b} \in  \mathcal{S}(\mathbb{G}^{v*}_b)$ be one instantiation. We use $\mathbb{G}^{v**}_{b,r}$ to denote a randomly induced subgraph of $G^{v*}_{b}$ based on node $v$ and other $r-1$ nodes randomly drawn without replacement from $V(\mathbb{G}^{*}_{b}) \setminus v$, and use $\mathcal{S}(\mathbb{G}^{v**}_{b,r})$ to denote the set contains all possible  $\mathbb{G}^{v**}_{b,r}$. The following lemma provides a few useful identities to be used in later proofs.

\begin{lemma}
\label{lem: usefuleqs}
For any network $\mathcal{G}$ and motif $R$, the following identities hold:
\begin{equation}
\label{eq:iden1}
\sum_{\mathcal{G} \in \mathcal{S}(\mathbb{G}^*_b)} X_R(\mathcal{G}) = \binom{n-r}{b-r} X_R(G),
\end{equation}
\begin{equation}
\label{eq:iden2}
\Big|\{S: S \subset G^{v*}_{b}, v \in V(S), S \cong R\}\Big| = \sum_{\mathcal{G} \in \mathcal{S}(\mathbb{G}^{v**}_{b,r})} X_R(\mathcal{G}),
\end{equation}
\begin{equation}
\label{eq:iden3}
\Big|\{S: S \subset G, v \in V(S), S \cong R\}\Big| = \sum_{\mathcal{G} \in \mathcal{S}(\mathbb{G}^{v*}_{r})} X_R(\mathcal{G}),
\end{equation}
\begin{equation}
\label{eq:iden4}
\sum_{G_b^* \in \mathcal{S}(\mathbb{G}_{b}^{v*})}\Big[\sum_{\mathcal{G} \in \mathcal{S}(\mathbb{G}^{v**}_{b,r})} X_R(\mathcal{G})\Big] = \binom{n-r}{b-r}\Big|\{S: S \subset G, v \in V(S), S \cong R\}\Big|,
\end{equation}
and 
\begin{equation}
\label{eq:iden5}
\sum_{i=1}^{n}\sum_{\mathcal{G} \in \mathcal{S}(\mathbb{G}^{v_i*}_{r})} X_R(\mathcal{G}) = \sum_{i=1}^{n}\Big|\{S: S \subset G, v_i \in V(S), S \cong R\}\Big| = rX_R(G).
\end{equation}

\end{lemma}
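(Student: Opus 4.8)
The plan is to establish all five identities by elementary double-counting and bijection arguments, arranged so that the later identities reuse the earlier ones. The single principle I would isolate at the outset is an \emph{induced-subgraph retention} fact: if $S \subset G$ is a (not necessarily induced) copy of $R$, then $S$ reappears as a subgraph of the induced subgraph on \emph{any} node set $A \subset V(G)$ with $V(S) \subset A$. This is immediate from the definition of an induced subgraph, since every edge of $G$ among the nodes of $A$—in particular every edge of $S$—is retained. I would state this once and invoke it repeatedly; it is precisely the bridge between the non-induced count \eqref{eq:MotifCountp} and the induced subsampled graphs $\mathbb{G}^*_b$.

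For \eqref{eq:iden1} I would read the left-hand side as the number of ordered pairs $(\mathcal{G}, S)$ with $\mathcal{G} \in \mathcal{S}(\mathbb{G}^*_b)$ and $S \subset \mathcal{G}$, $S \cong R$. Counting the same set in the other order—first fixing a copy $S$ of $R$ in $G$ (there are $X_R(G)$ of these, each spanning $r = |V(R)|$ nodes), then counting the $b$-node induced subgraphs containing it via the retention fact—one chooses the remaining $b-r$ nodes from the $n-r$ nodes outside $V(S)$, giving $\binom{n-r}{b-r}$ subgraphs per copy and hence the claim.

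For \eqref{eq:iden3} I would prove the slightly more general statement that, for any graph $H$ with a distinguished node $v$, the number of copies of $R$ in $H$ through $v$ equals $\sum_{\mathcal{G}} X_R(\mathcal{G})$, the sum ranging over the $r$-node induced subgraphs of $H$ containing $v$. The key point is that such a $\mathcal{G}$ has exactly $r = |V(R)|$ nodes, so any copy $S \cong R$ inside $\mathcal{G}$ must span all of $V(\mathcal{G})$, forcing $v \in V(S)$; conversely, each copy $S$ with $v \in V(S)$ determines the node set $V(S)$ uniquely and hence a unique $\mathcal{G}$. This bijection gives the identity, and \eqref{eq:iden2} is then exactly this statement applied to $H = G^{v*}_b$, requiring no separate work.

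The last two identities aggregate the above. For \eqref{eq:iden4} I would replace the inner sum by $|\{S \subset G^{v*}_b : v \in V(S), S \cong R\}|$ using \eqref{eq:iden2}, then double-count pairs $(G^{v*}_b, S)$ as in \eqref{eq:iden1}: fixing a copy $S$ of $R$ in $G$ through $v$, the $b$-node induced subgraphs through $v$ containing $S$ are obtained by choosing $b-r$ further nodes from the $n-r$ nodes outside $V(S)$, again $\binom{n-r}{b-r}$ of them (the constraint $v \in V(S)$ being automatically consistent with "through $v$"). For \eqref{eq:iden5}, applying \eqref{eq:iden3} node-by-node yields the first equality, and the second follows by counting pairs $(v_i, S)$ with $v_i \in V(S)$: each of the $X_R(G)$ copies contributes exactly its $r$ vertices, for a total of $r X_R(G)$. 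I do not anticipate a genuine obstacle, as the arguments are finite combinatorics; the only point demanding real care—and the one I would flag explicitly—is the retention principle, since conflating induced and non-induced copies would silently break every bijection above.
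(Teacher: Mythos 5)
Your proposal is correct and follows essentially the same route as the paper's proof: all five identities are obtained by double counting pairs (subsampled graph, motif copy), hinging on the fact that an induced subgraph retains every edge of $G$ among its nodes (the paper's indicator identity $\mathbb{1}_{\{R_c \subset G_b^*\}} = 1$ iff $V(R_c) \subset V(G_b^*)$), with the counts $\binom{n-r}{b-r}$ and the spanning argument ($|V(S)| = r$ forces $V(S) = V(\mathcal{G})$, hence $v \in V(S)$) appearing in both. Your only departure is organizational: you state one general lemma covering \eqref{eq:iden2} and \eqref{eq:iden3} simultaneously and phrase the arguments as explicit bijections rather than exchanged indicator sums, which is a modest streamlining of the same mathematics.
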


We now introduce the following extension of the results in \cite{bhattacharyya2015subsampling}. 
\begin{lemma}
        \label{lem: exp_cov_subsample}
       Given the network $G$, for any motif $R$,
        \begin{equation}        
\label{eq:exp_subsample}
     \mE_*\left[U_{R}(\mathbb{G}^{*}_{b}) \right] = U_{R}(G).
\end{equation}

And for any two motifs $R$ and $R'$ with $r + r' < b$, 
    \begin{equation}
\label{eq:cov_subsample}
\begin{aligned}
       \Cov_{*}\left[U_{R}(\mathbb{G}^{*}_{b}),U_{R'}(\mathbb{G}^{*}_{b}) \right]   =& \binom{b}{r}^{-1}\binom{b}{r'}^{-1} \sum_{q=0}^{\min\{r,r'\}}\sum_{S\in \mathcal{S}_{R, R'}^{(q)}}C_{S}\binom{b}{s}\binom{n}{s}^{-1}X_{S}(G)- U_{R}(G)U_{R'}(G),
\end{aligned}
\end{equation}
where $s = |V(S)| = r + r' - q$. Moreover, suppose that $G \sim \mathbb{G}_n$. Then   \begin{equation}
\label{eq:exp_subsample_conditional}
   E[U_R(\mathbb{G}^{h_n}_b)] = E \{\mE_*[U_{R}(\mathbb{G}^{(*\mathbb{G}_n = G)}_{b}) ]\} = E \{\mE_*[U_{R}(\mathbb{G}^{*}_{b}) ]\},
\end{equation}
\begin{equation}
 \label{eq:cov_subsample_conditional}
\begin{aligned}
   &\Cov[U_R(\mathbb{G}^{h_n}_b), U_{R'}(\mathbb{G}^{h_n}_b)] = \Cov\{E_*[U_R(\mathbb{G}^*_b)], E_*[U_{R'}(\mathbb{G}^*_b)]\}  +   E\{\Cov_{*}[U_{R}(\mathbb{G}^{*}_{b}),U_{R'}(\mathbb{G}^{*}_{b})]\}.
\end{aligned}
\end{equation}
    \end{lemma}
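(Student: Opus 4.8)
The plan is to prove the four displayed identities in sequence, handling the two conditional (subsampling) statements by direct combinatorics and then lifting them to the graphon level through a distributional equivalence together with the standard total-expectation and total-covariance decompositions. For the conditional expectation \eqref{eq:exp_subsample}, I would write $U_R(\mathbb{G}^*_b) = \binom{b}{r}^{-1}X_R(\mathbb{G}^*_b)$ and use that $\mathbb{G}^*_b$ is uniform over the $\binom{n}{b}$ induced subgraphs on $b$ nodes, so that $\mE_*[X_R(\mathbb{G}^*_b)] = \binom{n}{b}^{-1}\sum_{\mathcal{G}\in\mathcal{S}(\mathbb{G}^*_b)} X_R(\mathcal{G})$. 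Applying identity \eqref{eq:iden1} rewrites this sum as $\binom{n-r}{b-r}X_R(G)$, and the elementary identity $\binom{b}{r}^{-1}\binom{n}{b}^{-1}\binom{n-r}{b-r} = \binom{n}{r}^{-1}$ collapses the whole expression to $U_R(G)$.

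For the conditional covariance \eqref{eq:cov_subsample}, I would expand $\Cov_* = \mE_*[U_R(\mathbb{G}^*_b) U_{R'}(\mathbb{G}^*_b)] - \mE_*[U_R(\mathbb{G}^*_b)]\,\mE_*[U_{R'}(\mathbb{G}^*_b)]$, where the product of marginals equals $U_R(G)U_{R'}(G)$ by the first part. For the cross moment I invoke the motif-count linearity of Lemma \ref{lem: linerityMC} \emph{inside} the subsampled graph, $X_R(\mathbb{G}^*_b)X_{R'}(\mathbb{G}^*_b) = \sum_{q}\sum_{S\in\mathcal{S}_{R,R'}^{(q)}} c_S X_S(\mathbb{G}^*_b)$, and take $\mE_*$ term by term. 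Each term reduces by the same averaging-plus-identity argument as above (now with $s = |V(S)|$ nodes), yielding $\mE_*[X_S(\mathbb{G}^*_b)] = \binom{b}{s}\binom{n}{s}^{-1}X_S(G)$; the hypothesis $r+r' < b$ ensures $s \le b$ so every term is well-defined. Factoring out the prefactor $\binom{b}{r}^{-1}\binom{b}{r'}^{-1}$ produces the stated formula (reading the constant $C_S$ there as the $c_S$ of Lemma \ref{lem: linerityMC}).

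Finally, the unconditional identities \eqref{eq:exp_subsample_conditional} and \eqref{eq:cov_subsample_conditional} rest on a single structural observation: under the graphon model the latent positions are i.i.d.\ and the edges are conditionally independent given those positions, so by exchangeability the induced subgraph on a uniformly chosen $b$-subset of $\mathbb{G}_n \sim h_n$ has exactly the law of a freshly generated $\mathbb{G}^{h_n}_b$. Consequently $U_R(\mathbb{G}^{(*\mathbb{G}_n)}_b) \stackrel{d}{=} U_R(\mathbb{G}^{h_n}_b)$ once both the network randomness and the subsampling randomness are accounted for. Given this equivalence, \eqref{eq:exp_subsample_conditional} is the tower property under conditioning on $\mathbb{G}_n = G$, and \eqref{eq:cov_subsample_conditional} is precisely the law of total covariance under the same conditioning, with the within-conditioning term $\mE\{\Cov_*[\,\cdot\,]\}$ and the between term $\Cov\{\mE_*[\,\cdot\,],\mE_*[\,\cdot\,]\}$.

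I expect the main obstacle to be conceptual rather than computational, namely carefully justifying the distributional equivalence between a $b$-node induced subsample of $\mathbb{G}_n$ and a directly generated $\mathbb{G}^{h_n}_b$, since this is exactly what legitimizes trading the subsampling randomness for a fresh graphon draw in the last two identities. The remaining work---the binomial-coefficient simplifications and the bookkeeping across the partition $\{\mathcal{S}_{R,R'}^{(q)}\}$---is routine once identity \eqref{eq:iden1} and Lemma \ref{lem: linerityMC} are in hand.
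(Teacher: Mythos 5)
Your handling of the two conditional identities \eqref{eq:exp_subsample} and \eqref{eq:cov_subsample} is exactly the paper's argument: uniform averaging over the $\binom{n}{b}$ instantiations of $\mathbb{G}^*_b$, identity \eqref{eq:iden1}, Lemma \ref{lem: linerityMC} applied inside the subsampled graph, and the same binomial-coefficient simplifications (and yes, the $C_S$ in the statement is the $c_S$ of \eqref{eq:defcs}).

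Where you genuinely diverge is in the unconditional identities \eqref{eq:exp_subsample_conditional} and \eqref{eq:cov_subsample_conditional}, and your route is correct but different. The paper never invokes the distributional equivalence you describe. Instead it first proves the size-invariance identity \eqref{eq:U_Gbn_Gn}, namely $E[U_R(\mathbb{G}_n)] = \frac{r!}{|\mathrm{Aut}(R)|}P_{h_n}(R) = E[U_R(\mathbb{G}^{h_n}_b)]$ (the expected moment under a fixed graphon $h_n$ does not depend on network size), deduces \eqref{eq:exp_subsample_conditional} from this together with \eqref{eq:exp_subsample}, and then verifies \eqref{eq:cov_subsample_conditional} by brute force: it computes $\Cov\{E_*[U_R(\mathbb{G}^*_b)],E_*[U_{R'}(\mathbb{G}^*_b)]\} = \Cov[U_R(\mathbb{G}_n),U_{R'}(\mathbb{G}_n)]$ and $E\{\Cov_*[U_R(\mathbb{G}^*_b),U_{R'}(\mathbb{G}^*_b)]\}$ explicitly via \eqref{eq:cov_subsample}, then shows through a chain of applications of \eqref{eq:U_Gbn_Gn}, \eqref{eq:ref} and \eqref{eq:linerityMC} that their sum collapses to $\Cov[U_R(\mathbb{G}^{h_n}_b),U_{R'}(\mathbb{G}^{h_n}_b)]$. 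Your argument---a uniform $b$-subset of i.i.d.\ latent positions is again i.i.d.\ uniform, and edges among the selected nodes are generated exactly as under $h_n$, so the induced subsample of $\mathbb{G}_n$ has precisely the law of a fresh $\mathbb{G}^{h_n}_b$; then apply the tower property and the law of total covariance conditioning on $\mathbb{G}_n$---is cleaner and strictly stronger: it identifies the entire joint law of the subsampled moments with that of a fresh graphon draw, not merely the first two moments, and it makes transparent \emph{why} the identities hold. What the paper's computation buys in exchange is that it stays entirely at the level of moment formulas already in hand, avoids having to formalize the exchangeability argument, and produces \eqref{eq:U_Gbn_Gn} as an explicit standalone identity that its own algebra then reuses. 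If you keep your route, state the equivalence jointly over both sources of randomness (the graphon draw and the subsampling draw): \eqref{eq:cov_subsample_conditional} requires equality of the joint law of $\big(U_R(\cdot),U_{R'}(\cdot)\big)$, not just of each marginal.
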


The next result is about the impact one the variance and covariance scale due to the subsampling.
\begin{lemma}
\label{coro:lim_var_subsample}
Let $R$ and $R'$ be two motifs with $\max\{r,r'\} \leq r_1$ and  $ \max\{\mathfrak{r},\mathfrak{r}'\} \leq \mathfrak{r}_1$. Suppose that Assumption \ref{ass:rho_n_h_n} holds after replacing $r$ by $r_1$ and $\mathfrak{r}$ by $\mathfrak{r}_1$, and Assumption \ref{ass:b} holds. Then
\begin{equation*}
\begin{aligned}
    &\lim\limits_{ b\to \infty}\rho^{-(\mathfrak{r}+\mathfrak{r}')}_{n}\Cov_*\big[\sqrt{b}U_{R}(\mathbb{G}^{*}_{b}),\sqrt{b}U_{R'}(\mathbb{G}^{*}_{b})  \big]  = \big(1 - c_2\big)\lim\limits_{b \to \infty}\rho^{-(\mathfrak{r}+\mathfrak{r}')}_{b}\Cov\big[\sqrt{b}U_{R}(\mathbb{G}_{b}),\sqrt{b}U_{R'}(\mathbb{G}_{b})\big]
\end{aligned}
\end{equation*}
with probability one.
\end{lemma}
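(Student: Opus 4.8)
The plan is to reduce the subsampling covariance to the covariance of its Hoeffding linear part, extract the finite-population correction factor $(n-b)/(n-1)$, and match the surviving sum against the graphon covariance limit of Proposition~\ref{prop: lim_var_graphon}. Throughout I use that $b\to\infty$ iff $n\to\infty$ (Assumption~\ref{ass:b}), and that the hypotheses on $(r_1,\mathfrak{r}_1)$ guarantee the sparsity conditions hold simultaneously for both $R$ and $R'$.

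First I would invoke the orthogonality of the Hoeffding decomposition \eqref{eq:hoeffding_UR} for sampling without replacement: since components of different orders are uncorrelated under $\pr_*$,
\[
b\,\Cov_*\big[U_R(\mathbb{G}^*_b), U_{R'}(\mathbb{G}^*_b)\big] = b\,\Cov_*\Big[\textstyle\sum_{i} g_{1,R}(\mathbb{V}_i), \sum_{i} g_{1,R'}(\mathbb{V}_i)\Big] + b\,\Cov_*\big[\Delta_R, \Delta_{R'}\big],
\]
where $\Delta_R,\Delta_{R'}$ collect the quadratic and higher-order terms. The remainder is controlled by Cauchy--Schwarz: $|b\,\rho_n^{-(\mathfrak{r}+\mathfrak{r}')}\Cov_*[\Delta_R,\Delta_{R'}]|$ is bounded by the geometric mean of $E_*\Delta^2[\sqrt b\,\rho_n^{-\mathfrak{r}}U_R(\mathbb{G}^*_b)]$ and $E_*\Delta^2[\sqrt b\,\rho_n^{-\mathfrak{r}'}U_{R'}(\mathbb{G}^*_b)]$, both of which vanish with probability one by \eqref{eq:assump1} of Theorem~\ref{theo:subsample_distribution}\ref{theo:subsample_distribution_a} applied to each motif. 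Hence only the linear part survives the normalization.

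Next I would treat the linear part via \eqref{eq:var_sumg1}, which gives $b\,\Cov_*[\sum_i g_{1,R}(\mathbb{V}_i), \sum_i g_{1,R'}(\mathbb{V}_i)] = \frac{b^2(n-b)}{n-1}\,\Cov_*[g_{1,R}(\mathbb{V}_1), g_{1,R'}(\mathbb{V}_1)]$, and then substitute the explicit single-coordinate covariance \eqref{eq:varg1}. The prefactor there carries a $1/b^2$ that cancels the $b^2$, leaving the finite-population factor $(n-b)/(n-1)\to 1-c_2$ in front of a sum over $S\in\mathcal{S}^{(q)}_{R,R'}$ weighted by $c_S\,(nq-rr')/n^2\,X_S(G)$. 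A power-counting argument, tracking $n^{-(r+r'-2)}$ from the falling factorials, $n^{r+r'-q}$ from the $\binom{n}{s}$-scaling of $X_S$, and $\rho_n^{\mathfrak{s}}$ from $P_{h_n}(S)\sim\rho_n^{\mathfrak{s}}P_w(S)$, shows that after multiplying by $\rho_n^{-(\mathfrak{r}+\mathfrak{r}')}$ only $q=0$ (weight $-rr'/n^2$) and $q=1$ (weight $\sim 1/n$) survive, both at order $\rho_n^{\mathfrak{r}+\mathfrak{r}'}$, while every $q\ge 2$ term is smaller by a factor $n^{-1}\rho_n^{-(\text{shared edges})}\to 0$ since $n\rho_n\to\infty$ under Assumption~\ref{ass:rho_n_h_n}. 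Using the almost-sure convergence of the normalized densities $\rho_n^{-\mathfrak{s}}\binom{n}{s}^{-1}X_S(G)\to \tfrac{s!}{|\mathrm{Aut}(S)|}P_w(S)$, the surviving sum converges with probability one to $\sum_{S\in\mathcal{S}^{(1)}_{R,R'}}\tfrac{c_S r!r'!}{|\mathrm{Aut}(S)|}P_w(S) - \sum_{S\in\mathcal{S}^{(0)}_{R,R'}}\tfrac{c_S r!r'!rr'}{|\mathrm{Aut}(S)|}P_w(S)$, which is exactly the right-hand side of \eqref{eq: lim_var_graphon} with $n$ replaced by $b$. Since $P_w(S)$ is the common $\rho$-free limit of both the subsampled counts (normalized by $\rho_n$) and the size-$b$ graphon moments (normalized by $\rho_b$), combining the surviving sum with the $(1-c_2)$ prefactor yields the claimed identity.

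I expect the main obstacle to be the dominant-term identification in the third step: one must verify that the $q=0$ and $q=1$ contributions are both exactly of order $\rho_n^{\mathfrak{r}+\mathfrak{r}'}$ so that both survive normalization, that the coefficients $-rr'$ together with the automorphism factors assemble into precisely the graphon formula of Proposition~\ref{prop: lim_var_graphon}, and that the higher-$q$ terms are genuinely negligible under the stated sparsity. The almost-sure statement further requires concentration of $X_S(G)$ around its mean uniformly over the finitely many $S\in\mathcal{S}_{R,R'}$, which I would obtain from the earlier variance bounds together with Borel--Cantelli under $n\rho_n^{4\mathfrak{r}}\ge c_1\log n$.
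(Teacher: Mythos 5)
Your proposal is correct in substance but follows a genuinely different route from the paper's. The paper proves this lemma directly: it starts from the exact closed-form subsampling covariance \eqref{eq:more_cov_subsample} (obtained from Lemma~\ref{lem: exp_cov_subsample} and the linearity identity \eqref{eq:linerityMC}), expands the factorial ratios separately for each number of merged nodes $q$, shows the $q=0$ and $q=1$ ratios converge to $-rr'(1-c_2)$ and $(1-c_2)$ while each $q\ge 2$ term is of order $(b\rho_n^{q/2})^{-(q-1)}\to 0$, and then matches the limit with Proposition~\ref{prop: lim_var_graphon} via the almost-sure convergence of $\rho_n^{-\mathfrak{s}}U_S(\mathbb{G}_n)$ (Lemma~\ref{lem: lim_netmoment_graphcon}). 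You instead project onto the Hoeffding linear part, kill the remainder covariance by Cauchy--Schwarz and \eqref{eq:assump1}, and compute the linear part through \eqref{eq:var_sumg1} and \eqref{eq:varg1}; your power counting then lands on the same two surviving terms and the same matching step. Your cross-term orthogonality is justifiable from Lemma~\ref{lem: l2og} applied to $U_R$, $U_{R'}$, and $U_R+U_{R'}$ (whose linear kernel is $g_{1,R}+g_{1,R'}$ by \eqref{eq:sg1}), and there is no circularity in invoking \eqref{eq:assump1}: its proof in the paper proceeds directly from the exact variance formulas and never uses the present lemma. You should, however, note this explicitly, and also note that Theorem~\ref{theo:subsample_distribution}\ref{theo:subsample_distribution_a} is stated under Assumption~\ref{ass:non_degenerate}, which this lemma does not assume, so you must observe that the proof of \eqref{eq:assump1} does not use that assumption. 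What your route buys is conceptual economy --- the correction factor $(n-b)/(n-1)\to 1-c_2$ appears instantly from \eqref{eq:var_sumg1} rather than emerging from factorial expansions; what it costs is that the hard asymptotic work is only relocated, since the paper's proof of \eqref{eq:assump1} already contains the full-versus-linear variance comparison.

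Two sub-steps need repair. First, your closing claim that the almost-sure convergence of $\rho_n^{-\mathfrak{s}}\binom{n}{s}^{-1}X_S(G)$ follows from ``variance bounds together with Borel--Cantelli'' would fail: by Proposition~\ref{prop: lim_var_graphon} the variance of $\rho_n^{-\mathfrak{s}}U_S(\mathbb{G}_n)$ is of order $1/n$, so Chebyshev gives tail probabilities of order $1/n$, which are not summable, and Borel--Cantelli does not apply. The paper obtains this ingredient through exponential concentration --- a martingale/Azuma bounded-difference argument in Lemma~\ref{lem: lim_netmoment_graphcon}, with summability supplied by $n\rho_n^{4\mathfrak{r}}\ge c_1\log n$; you should cite that lemma (it covers every $S\in\mathcal{S}_{R,R'}$ since $\mathfrak{s}\le 2\mathfrak{r}_1$) rather than attempt a Chebyshev-based derivation. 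Second, your dismissal of the $q\ge 2$ terms cites ``$n\rho_n\to\infty$''; the correct requirement is $n\rho_n^{q/2}\to\infty$ for each $2\le q\le \min\{r,r'\}$, because the term is of order $(n\rho_n^{q/2})^{-(q-1)}$. This does follow from Assumption~\ref{ass:rho_n_h_n}, since $b\rho_n^{r_1/2}\to\infty$, $b\le n$, and $q\le r_1$, but it should be stated as such.
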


The following proposition extends the results on finite population statistics from \cite{bloznelis2001orthogonal, bloznelis2002edgeworth} to the context of network subsampling.

\begin{proposition}
\label{prop:stats_prop_finite_U_network}
\hfill

\begin{enumerate}[label=(\alph*)]
\item \label{eq:p2a} The Hoeffding's decomposition of $ U_{R}(\mathbb{G}^{*}_{b})$ is 
\begin{equation}
\label{eq:hoeffding_UR}
U_{R}(\mathbb{G}^{*}_{b}) =E_* [U_{R}(\mathbb{G}^{*}_{b})]+\sum_{1 \leqslant i \leqslant b} g_{1,R}\left(\mathbb{V}_i\right)+\sum_{1 \leqslant  i<j \leqslant b} g_{2,R}\left(\mathbb{V}_i, \mathbb{V}_j\right)+\cdots,
\end{equation}
where 
\begin{equation}
\label{eq:g1}
\begin{aligned}
      g_{1,R}(\mathbb{V}_1) =&\frac{r!(n-r-1)!}{b(n-2)!}\sum_{\mathcal{G} \in \mathcal{S}(\mathbb{G}^{\mathbb{V}_1*}_{r})}  X_{R}(\mathcal{G}) - \frac{r(n-1)}{b(n-r)}U_{R}(G) =  \frac{(n-1)}{b}[U_R(G) - U_R(G\setminus \mathbb{V}_1)],
\end{aligned}
\end{equation}
with 
\begin{equation}
\label{eq:varg1}
\begin{aligned}
    & \Cov_{\mathbb{V}_1*}[ g_{1,R}(\mathbb{V}_1) , g_{1,R'}(\mathbb{V}_1)]
   = \frac{r!(n-r-1)!}{b(n-2)!} \frac{r'!(n-r'-1)!}{b(n-2)!}\sum_{k=0}^{\min\{r,r'\}} \sum_{S\in \mathcal{S}_{R, R'}^{(q)}} c_S\dfrac{nq-rr'}{n^2}X_S(G).
\end{aligned}
\end{equation}
Furthermore, we have
\begin{equation}
\label{eq:var_sumg1}
      \Cov_*\left[\sum_{1\leqslant i \leqslant b} g_{1,R}(\mathbb{V}_i),
      \sum_{1\leqslant i \leqslant b} g_{1,R'}(\mathbb{V}_i)\right] = \frac{b(n-b)}{(n-1)}\Cov_*[g_{1,R}(\mathbb{V}_1),
    g_{1,R'}(\mathbb{V}_1)],
\end{equation}
and as $n,b \rightarrow \infty$
\begin{equation}
\label{eq:lim_var_g1}
     \lim\limits_{b,n\to\infty}  \var_*[\sum_{1 \leqslant i \leqslant b}g_{1,R}\left(\mathbb{V}_1\right)] = 0.
\end{equation}
\item  For two motifs  $R$ and $R'$, $U_{R}(\mathbb{G}^{*}_{b}) + U_{R'}(\mathbb{G}^{*}_{b})$ is also a  symmetric finite population statistic with the following Hoeffding's decomposition
\begin{equation*}
\begin{aligned}
  U_{R}(\mathbb{G}^{*}_{b}) + U_{R'}(\mathbb{G}^{*}_{b}) =&E_* [U_{R}(\mathbb{G}^{*}_{b}) + U_{R'}(\mathbb{G}^{*}_{b}) ]+\sum_{1 \leqslant i \leqslant b} g_{1,R,R'}\left(\mathbb{V}_i\right)+\sum_{1 \leqslant  i<j \leqslant b} g_{2,R,R'}\left(\mathbb{V}_i, \mathbb{V}_j\right)+\cdots,  
\end{aligned}
\end{equation*}
where 
\begin{equation}
\label{eq:sg1}
\begin{aligned}
      g_{1,R,R'}(\mathbb{V}_1) &=  g_{1,R}(\mathbb{V}_1) +  g_{1,R'}(\mathbb{V}_1).
\end{aligned}
\end{equation}
% and
% \begin{equation}
% \label{eq:svarg1}
%       \begin{aligned}
%         \var_*\left[g_{1,R,R'}\left(\mathbb{V}_1\right)\right] 
%         =& \var_*\left[g_{1,R}\left(\mathbb{V}_1\right)\right] + \var_*\left[g_{1,R'}\left(\mathbb{V}_1\right)\right] \\&+ 2\Cov_{*}[ g_{1,R}(\mathbb{V}_1) , g_{1,R'}(\mathbb{V}_1)].
% \end{aligned}
% \end{equation}

Moreover, the variance of linear parts satisfies:
\begin{equation}
\label{eq:svar_sum_g1}
 \var_*\sum_{1 \leqslant i \leqslant b} g_{1,R,R'}\left(\mathbb{V}_i\right) = \frac{b(n-b)}{(n-1)} \var_*\left[g_{1,R,R'}\left(\mathbb{V}_1\right)\right],
\end{equation}
\begin{equation}
\label{eq:lim_svar_sumg1}
     \lim\limits_{b,n\to\infty}  \var_*\left[\sum_{1 \leqslant i \leqslant b} g_{1,R,R'}\left(\mathbb{V}_i\right)\right] = 0.
\end{equation}
\end{enumerate}
\end{proposition}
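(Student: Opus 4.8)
The plan is to treat $U_R(\mathbb{G}^*_b)$ as a finite-population U-statistic of order $r$, whose symmetric kernel counts copies of $R$ on $r$ sampled nodes, and to invoke the orthogonal decomposition of \cite{bloznelis2001orthogonal,bloznelis2002edgeworth} to guarantee the existence of the Hoeffding expansion \eqref{eq:hoeffding_UR}. Everything then reduces to making the linear term explicit and controlling the variance of its sum. Throughout, all quantities are conditional on $G$, so $\mathbb{V}_1$ is a uniformly drawn node and I write $N_{i,R}=\big|\{S:S\subset G,\ v_i\in V(S),\ S\cong R\}\big|$ for the number of copies of $R$ passing through node $i$.

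First I would compute $g_{1,R}$. The finite-population H\'ajek projection is a rescaling of $\mE_*[U_R(\mathbb{G}^*_b)\mid\mathbb{V}_1]-U_R(G)$; evaluating the conditional expectation through identity \eqref{eq:iden3} rewrites it in terms of $N_{1,R}$ and yields the first expression in \eqref{eq:g1}. The clean second form $\tfrac{n-1}{b}[U_R(G)-U_R(G\setminus\mathbb{V}_1)]$ then follows from the elementary identities $\binom{n-1}{r}^{-1}=\tfrac{n}{n-r}\binom{n}{r}^{-1}$ and $X_R(G\setminus\mathbb{V}_1)=X_R(G)-N_{1,R}$, matching the coefficients of $N_{1,R}$ and of $X_R(G)$ separately. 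Averaging over $\mathbb{V}_1$ using \eqref{eq:iden5}, i.e. $\sum_i N_{i,R}=rX_R(G)$, shows $\mE_*[g_{1,R}(\mathbb{V}_1)]=0$, which is what makes the finite-population correction surface cleanly later.

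Next I would establish the covariance \eqref{eq:varg1}. The deterministic part of \eqref{eq:g1} cancels, so it suffices to compute $\Cov_{\mathbb{V}_1*}[N_{1,R},N_{1,R'}]=\tfrac1n\sum_i N_{i,R}N_{i,R'}-\tfrac1{n^2}\big(\sum_iN_{i,R}\big)\big(\sum_iN_{i,R'}\big)$. The cross moment is the key combinatorial step and is a node-weighted refinement of Lemma~\ref{lem: linerityMC}: summing the product $N_{i,R}N_{i,R'}$ over the common vertex $i$, each merged pattern $S\in\mathcal{S}^{(q)}_{R,R'}$ is counted once per overlapping node, so $\sum_i N_{i,R}N_{i,R'}=\sum_{q}\sum_{S\in\mathcal{S}^{(q)}_{R,R'}}q\,c_S X_S(G)$; combined with \eqref{eq:iden5} and Lemma~\ref{lem: linerityMC} for the product of the means, this produces exactly the factor $(nq-rr')/n^2$. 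I expect this weighted linearization to be the main obstacle, since it requires tracking how many copies of each combined pattern pass through a shared vertex, rather than merely enumerating the patterns as in Lemma~\ref{lem: linerityMC}.

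Then \eqref{eq:var_sumg1} is the classical variance of a sample total under simple random sampling without replacement applied to the centered coordinate function $g_{1,R}$, which produces the finite-population factor $\tfrac{b(n-b)}{n-1}$. For the limit \eqref{eq:lim_var_g1} I would substitute \eqref{eq:varg1} into \eqref{eq:var_sumg1}, use $\tfrac{r!(n-r-1)!}{b(n-2)!}\asymp r!/(b\,n^{r-1})$ and $\tfrac{b(n-b)}{n-1}=O(b)$, and bound each $X_S(G)$ deterministically by $X_S(G)\le s!\binom{n}{s}=O(n^{s})$ with $s=|V(S)|$; a term-by-term count then shows every summand is $O(1/b)$ or smaller, so the whole variance is $O(1/b)\to0$ under Assumption~\ref{ass:b}, with no concentration or sparsity input needed since the unnormalized moments already shrink. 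Finally, part (b) is essentially a corollary: $U_R(\mathbb{G}^*_b)+U_{R'}(\mathbb{G}^*_b)$ is permutation-invariant, so by linearity of the projection operator its linear term is $g_{1,R,R'}=g_{1,R}+g_{1,R'}$, giving \eqref{eq:sg1}; \eqref{eq:svar_sum_g1} is the same without-replacement variance identity applied to the centered function $g_{1,R,R'}$; and \eqref{eq:lim_svar_sumg1} follows from \eqref{eq:lim_var_g1} for $R$ and $R'$ together with a Cauchy--Schwarz bound on the cross term.
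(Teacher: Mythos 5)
Your proposal is correct, and at the architectural level it matches the paper's proof: both treat $U_R(\mathbb{G}^*_b)$ as a finite population U-statistic, invoke the Bloznelis--G\"otze orthogonal decomposition for the existence of \eqref{eq:hoeffding_UR}, and extract $g_{1,R}$ from the H\'ajek projection $g_{1,R}(\mathbb{V}_1)=\tfrac{n-1}{n-b}\{E_*[U_R(\mathbb{G}^*_b)\mid \mathbb{V}_1]-U_R(G)\}$ using \eqref{eq:iden3} and \eqref{eq:iden5}, arriving at the same two forms in \eqref{eq:g1} and the same mean-zero property. Where you genuinely diverge is in the three covariance/variance computations, and in each case your route is shorter. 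First, for \eqref{eq:varg1} the paper (via Lemma \ref{cor:cov_subsample_v1}) rewrites the through-vertex counts as $X_R(G)-X_R(G\setminus v_i)$, applies the product formula of Lemma \ref{lem: linerityMC} inside each deleted graph $G\setminus v_i$, and re-sums over $i$; you instead prove the vertex-weighted union-count identity $\sum_i N_{i,R}N_{i,R'}=\sum_q\sum_{S\in\mathcal{S}^{(q)}_{R,R'}}q\,c_S X_S(G)$ directly (each pair of copies with $q$ shared vertices is counted once per shared vertex), which yields the factor $(nq-rr')/n^2$ in one step; the two arguments rest on the same bijection behind Lemma \ref{lem: linerityMC}, but yours avoids the detour through $G\setminus v_i$. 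Second, for \eqref{eq:var_sumg1} the paper expands the double sum and recomputes the off-diagonal covariance from scratch (its terms $\mathrm{I}(i)$ and $\mathrm{II}(i)$), whereas you recognize the statement as the classical without-replacement covariance identity for population-mean-zero functions; this is exactly what the paper's expansion re-derives, and since you establish $\sum_i g_{1,R}(v_i)=0$ beforehand, the shortcut is legitimate and still delivers the exact finite-$n$ identity, which is what gets reused later in the proof of Theorem \ref{theo:subsample_distribution}. Third, for \eqref{eq:lim_svar_sumg1} the paper computes the cross-covariance term explicitly through another round of asymptotics, while your Cauchy--Schwarz bound reduces it immediately to two applications of \eqref{eq:lim_var_g1}; nothing downstream needs that exact cross term, so nothing is lost. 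Your order bound for \eqref{eq:lim_var_g1} (every summand $O(1/b)$ after bounding $X_S(G)=O(n^{s})$, no sparsity input) is also sound and matches the paper's conclusion. In short: same skeleton and same key lemmas, but three cleaner sub-arguments that would shorten the appendix without weakening any statement.
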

% \begin{corollary}
% \label{cor:cov_subsample_v1}
% For any motifs $R$ and $R'$,
% \begin{equation}
% \label{eq:cor_cov}
%     \begin{aligned}
%         \Cov_{\mathbb{V}_1*}\Big[\sum_{\mathcal{G}\in \mathcal{S}(\mathbb{G}^{\mathbb{V}_1*}_{r})}  X_R(\mathcal{G}),\sum_{\mathcal{G}\in \mathcal{S}(\mathbb{G}^{\mathbb{V}_1*}_{r'})}  X_{R'}(\mathcal{G})\Big] = \sum_{q=0}^{\min\{r,r'\}} \sum_{S\in \mathcal{S}_{R, R'}^{(q)}} c_S\dfrac{nq-rr'}{n^2}X_S(G)   
%     \end{aligned}
% \end{equation}
% \end{corollary}

\subsection{Asymptotic distribution of network moments of subsampled graphs} \label{subsec:asym_dis_UR}

Using the tools in \cite{bloznelis2001orthogonal}, we derive the following results for the subsampled moments. 
\begin{theorem}
    \label{theo:subsample_distribution}  Suppose that $\{G^{(n)}\}_{n=1}^{\infty}$ is a sequence of networks, where $G^{(n)} \sim \mathbb{G}_{n}$. 
    \hfill

    \begin{enumerate}[label=(\alph*)] 
        \item \label{theo:subsample_distribution_a}
       The Hoeffding's decomposition of $\sqrt{b_n}\rho^{-\mathfrak{r}}_{n}U_{R}(\mathbb{G}^{*}_{b_n})$ is
 \begin{equation}
\label{eq:hoeffding_U_R}
\begin{aligned}
\sqrt{b_n}\rho_{n}^{-\mathfrak{r}}  U_{R}(\mathbb{G}^{*}_{b_n}) = & \sqrt{b_n}\rho_{n}^{-\mathfrak{r}}  U_{R}[G^{(n)}]+\sum_{1 \leqslant i \leqslant b_n} \sqrt{b_n}\rho_{n}^{-\mathfrak{r}} g_{1,R}\left(\mathbb{V}_i\right)+ \Delta[\sqrt{b_n} \rho_{n}^{-\mathfrak{r}}U_{R}(\mathbb{G}^{*}_{b_n})].
\end{aligned}
\end{equation}

For any network sequence, the following conditions hold with probability one.
       \begin{enumerate}[label=(\roman*)] 
            \item Under Assumptions \ref{ass:rho_n_h_n}, \ref{ass:b} and \ref{ass:non_degenerate}, we have
            \begin{equation}
\label{eq:assump1}
     \lim\limits_{n \to \infty} E_*\Delta^2[\sqrt{b_n} \rho_{n}^{-\mathfrak{r}}U_{R}(\mathbb{G}^{*}_{b_n})] = 0, 
\end{equation}
\begin{equation}
\label{eq:assump2}
       0 < c_3 \leqslant \var_*\big[\sqrt{b_n}\rho_{n}^{-\mathfrak{r}} U_{R}(\mathbb{G}^{*}_{b_n})\big] \leqslant c_4 < \infty \hspace{0.2cm} \text{for some $c_3,c_4 > 0$}.
\end{equation}     
            \item  Under Assumption \ref{ass:rho_n_h_n}, for every $\epsilon > 0$,
            \begin{equation}
    \label{eq:assump3}
    \lim\limits_{n \to \infty}  b_nE_* 
    \big[b_n\rho_{n}^{-2\mathfrak{r}} g^2_{1,R}(\mathbb{V}_1)\mathbb{1}_{\{b_n\rho_{n}^{-2\mathfrak{r}} g^2_{1,R}(\mathbb{V}_1)> \epsilon\}}\big] = 0,
\end{equation}
        \end{enumerate}

Consequently, if Assumptions \ref{ass:rho_n_h_n}, \ref{ass:b} and \ref{ass:non_degenerate} hold,  with probability one, 
        \begin{equation}
\label{eq:unit_subsample_distribution_with_truepn}
\sqrt{b}\big[\rho_{n}^{-\mathfrak{r}}U_{R}(\mathbb{G}^{*}_{b}) -\rho_{n}^{-\mathfrak{r}}U_{R}(G)\big] \rightarrow  \mathcal{N}(0,\sigma^2_{*R})~\text{in distribution},
\end{equation}

\item \label{theo:subsample_distribution_b} Let  $\{R_1,\cdots,R_m\}$ be $m$ motifs with $\max\{r_1,\cdots,r_m\} \leq r$ and $ \max\{\mathfrak{r}_1,\cdots,\mathfrak{r}_m\} \leq \mathfrak{r}$.  Suppose Assumptions \ref{ass:rho_n_h_n}, \ref{ass:b} and \ref{ass:non_degenerate} hold. With probability one (with respect to the random sequence $\{\mathbb{G}_n\})$, 
\begin{equation}
\label{eq:joint_subsample_distribution_with_truepn}
\begin{aligned}
    &\sqrt{b}\Big\{\big[\rho^{-\mathfrak{r}_1}_{n}U_{R_1}(\mathbb{G}^{*}_{b}), \cdots, \rho^{-\mathfrak{r}_m}_{n}U_{R_m}(\mathbb{G}^{*}_{b})\big]- \big[\rho^{-\mathfrak{r}_1}_{n}U_{R_1}(G), \cdots, \rho^{-\mathfrak{r}_m}_{n}U_{R_m}(G)\big]\Big\} \\ \rightarrow &~ \mathcal{N}\big[0, \Sigma_{*[R_m]}\big] ~\text{in distribution},
\end{aligned}
\end{equation}
    \end{enumerate}
\end{theorem}

\subsection{Consistency of empirical distribution} \label{subsec:consistencyemperical}

% Similar to \cite{lunde2023subsampling}, we also 
Consider the following empirical cumulative distribution function
\begin{equation}
\begin{aligned}
      \widehat{J}^{\{R_1,\cdots,R_m\}}_{*,n,b}(t_1,\cdots,t_m)  &=  \frac{1}{N} \sum_{i=1}^{N} \mathbb{1} \Big\{\sqrt{b}\big[\wh{\rho}^{-\mathfrak{r}_1}_{G}U_{R_1}(G^{*(i)}_b)- \wh{\rho}^{-\mathfrak{r}_1}_{G}U_{R_1}(G)\big]\leq t_1, \\ &\qquad \qquad \cdots, \sqrt{b}\big[\wh{\rho}^{-\mathfrak{r}_m}_{G}U_{R_m}(G^{*(i)}_b)-\wh{\rho}^{-\mathfrak{r}_m}_{G}U_{R_m}(G)\big] \leq t_m\Big\}.
\end{aligned}
\end{equation}

The following consistency result is developed based on \cite{lunde2023subsampling}:
\begin{lemma}
\label{theo:consistent_empir}
For $\{R_1,\cdots,R_m\}$ with $\max\{r_1,\cdots,r_m\} \leq r$ and $ \max\{\mathfrak{r}_1,\cdots,\mathfrak{r}_m\} \leq \mathfrak{r}$. Under Assumptions  \ref{ass:rho_n_h_n}-\ref{ass:non_degenerate},  with probability one:
\begin{equation*}
    \begin{aligned}
     &\sup\limits_{[t_m] \in \mathbb{R}^m}\Big|  \widehat{J}^{\{R_1,\cdots,R_m\}}_{*,n,b}(t_1,\cdots,t_m)
       - J^{\{R_1,\cdots,R_m\}}_{*,n,b}(t_1,\cdots,t_m)\Big|   \to 0.
    \end{aligned}
\end{equation*}
\end{lemma}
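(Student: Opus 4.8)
The plan is to recognize Lemma~\ref{theo:consistent_empir} as a multivariate Glivenko--Cantelli statement for the Monte Carlo approximation built in Algorithm~\ref{algo:subsampling}, and to make the convergence uniform over the random network by invoking a distribution-free concentration bound. The first step is to observe that, once we condition on the event $\mathbb{G}_n = G$, the subsampling distribution $J^{\{R_1,\cdots,R_m\}}_{*,n,b}$ is a \emph{fixed} $m$-dimensional cumulative distribution function, and the $N$ subsampled moment vectors $Y_b^{(1)},\cdots,Y_b^{(N)}$ produced by the algorithm are independent and identically distributed draws from this law (the $N$ node subsets are drawn independently of one another, each uniformly without replacement). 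Consequently $\widehat{J}^{\{R_1,\cdots,R_m\}}_{*,n,b}$ is exactly the empirical cumulative distribution function of $N$ i.i.d.\ samples from $J^{\{R_1,\cdots,R_m\}}_{*,n,b}$, and the quantity to be controlled is the Kolmogorov--Smirnov distance between an empirical cumulative distribution function and its population counterpart.

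Second, I would apply a multivariate Dvoretzky--Kiefer--Wolfowitz inequality. The class of lower-left orthants $\{\prod_{j=1}^m(-\infty,t_j] : (t_1,\cdots,t_m) \in \mathbb{R}^m\}$ is a Vapnik--Chervonenkis class whose index depends only on $m$, so there is a constant $C(m)$, independent of the target law, with
\[
\pr_*\Big(\sup_{(t_1,\cdots,t_m)\in\mathbb{R}^m}\big|\widehat{J}^{\{R_1,\cdots,R_m\}}_{*,n,b}-J^{\{R_1,\cdots,R_m\}}_{*,n,b}\big| > \epsilon \,\Big|\, \mathbb{G}_n = G\Big) \leq C(m)\,e^{-2N\epsilon^2}
\]
for every $\epsilon>0$. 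The essential feature --- and the reason the argument extends cleanly from the marginal result of \cite{lunde2023subsampling} to the joint case --- is that $C(m)$ does not depend on $G$ or on the (random) distribution $J^{\{R_1,\cdots,R_m\}}_{*,n,b}$; hence the same exponential bound holds simultaneously for every realization in the sequence $\{\mathbb{G}_n\}$.

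Third, I would convert the uniform exponential bound into almost-sure convergence. Since the bound is free of $G$, it holds unconditionally, and because $N\to\infty$ the tail probabilities are summable along the sequence in $n$ for every fixed $\epsilon>0$. The Borel--Cantelli lemma then gives that, with probability one (jointly over the network randomness and the subsampling randomness), $\sup_t|\widehat{J}^{\{R_1,\cdots,R_m\}}_{*,n,b}-J^{\{R_1,\cdots,R_m\}}_{*,n,b}| \leq \epsilon$ for all sufficiently large $n$; letting $\epsilon$ run through a sequence decreasing to zero yields the claimed convergence. Alternatively, for each fixed typical realization $G$ one may simply invoke the multivariate Glivenko--Cantelli theorem in $N$, since $J^{\{R_1,\cdots,R_m\}}_{*,n,b}$ is then a fixed law.

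The main obstacle is not the concentration step, which is classical, but the bookkeeping of the two layers of randomness so that ``with probability one'' is unambiguous: the subsampling error must be controlled uniformly over the (random) conditional law $J^{\{R_1,\cdots,R_m\}}_{*,n,b}$, which is itself a function of the random network. The distribution-free nature of the DKW/VC bound is exactly what decouples these two sources --- it removes any dependence of the concentration rate on $G$ --- so that no separate argument about how $J^{\{R_1,\cdots,R_m\}}_{*,n,b}$ fluctuates with the network is needed; indeed, the stated Assumptions~\ref{ass:rho_n_h_n}--\ref{ass:non_degenerate} are inherited only for notational consistency and play no essential role here. The only mild care required is to confirm measurability of the supremum, which is immediate from the right-continuity and monotonicity of cumulative distribution functions, allowing reduction to a countable dense grid of thresholds.
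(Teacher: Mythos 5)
Your route is genuinely different from the paper's, and its core idea is sound. The paper never controls the Monte Carlo error with a concentration inequality: it shows (via Theorem~\ref{theo:subsample_distribution} and Lemma~\ref{prop:lim_var_subsample}, i.e.\ \eqref{eq:cem1} and \eqref{eq:cem2}) that both $J^{[R_m]}_{*,n,b}$ and $J^{[R_m]}_{b,c}$ converge to the same Gaussian limit $\Phi_{\Sigma_c(\mathbf{R})}$, then invokes Lemma~\ref{lem: lunde} (Theorem 1 of \cite{lunde2023subsampling}) to get $\widehat{J}^{[R_m]}_{*,n,b}\to \Phi_{\Sigma_c(\mathbf{R})}$ in probability, and finally chains the sup-distances through the continuous Gaussian CDF by a triangle inequality. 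Your observation that, conditionally on $G$, the vectors $Y_b^{(i)}$ are i.i.d.\ draws from $J^{[R_m]}_{*,n,b}$ --- so that the lemma is a pure multivariate Glivenko--Cantelli/DKW statement and the distribution-free bound decouples the two layers of randomness --- is correct, and it buys something real: Assumptions~\ref{ass:rho_n_h_n}--\ref{ass:non_degenerate} are indeed inessential for this particular lemma (the paper needs them only because its proof routes through asymptotic normality), and no Gaussian limit or external subsampling theorem is required. A minor quibble: the sharp multivariate DKW-type bound carries an extra distribution-free prefactor (polynomial in $N$, depending on $m$), not just $C(m)$, but this changes nothing in the argument.

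The genuine gap is in your third step. Summability of the tail bounds along $n$ does \emph{not} follow from $N\to\infty$: if, say, $N(n)\asymp \log\log n$, then $e^{-2N\epsilon^2}\asymp (\log n)^{-2\epsilon^2}$, which is not summable, so Borel--Cantelli does not apply. To get the almost-sure conclusion by your route you need a growth condition such as $N(n)\geq C\log n$, or any rate making $\sum_n \mathrm{poly}\{N(n)\}\,e^{-2N(n)\epsilon^2}<\infty$; the paper's Algorithm~\ref{algo:subsampling} leaves $N_{\mathrm{sub}}$ user-chosen, so this must be stated as a hypothesis. Your fallback --- fixing a ``typical'' realization and invoking Glivenko--Cantelli ``in $N$'' --- does not repair this, because the target law $J^{[R_m]}_{*,n,b}$ changes with $n$: you have a triangular array, not a single law sampled ever more densely, and coping with that is exactly what the distribution-free bound is for. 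Without the growth condition your argument delivers convergence in probability over the subsampling randomness (uniformly over network realizations). To be fair, that is also all the paper's own proof establishes, since Lemma~\ref{lem: lunde} concludes only convergence in probability; in both proofs the ``with probability one'' should be read as referring to the network sequence alone, with the Monte Carlo error vanishing in probability.
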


\makeatletter
\renewcommand{\thelemma}{SB.\@arabic\c@lemma}
\makeatother

\section{Two examples for Definition \ref{eq:defs}}
\label{sec:twoexamples}
Following \cite{maugis2020testing}, we use two examples to explain this definition. In the first example, let $R$ be a $\xtriangle$ and $R'$ also be a $\xtriangle$. Then the set $\mathcal{S}_{R, R'}$ can be constructed as $\left\{\xtriangle \,\xtriangle, \xreltriangle[1.05], \xdiagrectangle[.85], \xtriangle\right\}$. Each element in $\mathcal{S}_{R, R'}$  can be obtained by building blocks based on $R$ and $R'$. Let $R_1$ be a copy of $R$, and $R_2$ be a copy to $R'$. The pattern $\xtriangle \,\xtriangle$ can be built by either put $R_1$ in the left side or in the right side. Thus, $c_{\xtriangle[.5]\xtriangle[.5]} = 2$. Similarly,  $c_{\,\xreltriangle[.75]} = 2$, $c_{\,\xdiagrectangle[.75]} = 2$ and $c_{\xtriangle[.5]} = 1$. Generally speaking, $c_S$ denotes the number of ways $S$ can be built from copies of  $R$ and $R'$. Based on the number of merged nodes, we have $S_{R,R'}^{(0)} = \left\{\xtriangle \,\xtriangle\right\}$,$S_{R,R'}^{(1)} = \left\{\xreltriangle[.95]\right\}$, $S_{R,R'}^{(2)} = \left\{\xdiagrectangle\right\}$, and $S_{R,R'}^{(3)} = \left\{\xtriangle\right\}$. For the second example, let $R$ be a $\xrectangle$ and $R'$ be a $\xrectangle$. Then  $S_{R,R'} = \{S_{R,R'}^{(0)},S_{R,R'}^{(1)},S_{R,R'}^{(2)},S_{R,R'}^{(3)},S_{R,R'}^{(4)}\}$, with $S_{R,R'}^{(0)} = \left\{\xrectangle\,\xrectangle\right\}$, $S_{R,R'}^{(1)} = \left\{\xreldiamond\right\}$, $S_{R,R'}^{(2)} = \left\{\xtwodiamond,\, \xrecanddia,\, \xtworectangle\right\}$, $S_{R,R'}^{(3)} = \left\{\xthreetrangle[1],\, \xdiagdiamond[.75],\, \xtwodiagrectangle[.9]\right\}$, $S_{R,R'}^{(4)} = \left\{\xrectangle\right\}$. Correspondingly, $c_{\xrectangle[.5]\xrectangle[.5]} = 2$, $c_{\xreldiamond[.5]} = 2$, $c_{\xtwodiamond[.5]} = 6$, $c_{\xrecanddia[.5]} = 2$, $c_{\xtworectangle[.5]} = 2$, $c_{\xthreetrangle[.75]} = 2$, $c_{\xdiagdiamond[.5]} = 6$, $c_{\xtwodiagrectangle[.5]} = 6$ and $c_{\xrectangle[.5]} = 1$. 

As noted in \cite{maugis2020testing}, $X_R(G)X_{R'}(G)$ involves counting pairs of motifs, and could be recovered by counting the number of the all motifs that are formed by using one copy of $R$ and one copy of $R'$ as building blocks. This is the intuition of Lemma \ref{lem: linerityMC}. Moreover, the equation in \eqref{eq:linerityMC} provides flexibility as it does not depend on the generation mechanism of $G$.

\makeatletter
\renewcommand{\thelemma}{SD.\@arabic\c@lemma}
\makeatother
\allowdisplaybreaks

\section{Proofs for Section \ref{subsec:moreTheroemNMgraphon1}}
\label{sec:proof for nmg1}
\subsection{Proof of Lemma \ref{lem: Exp_Var_graphon}}

\begin{proof}
 Lemma \ref{lem: Exp_Var_graphon} mostly follows the results in \cite{bhattacharyya2015supplement,maugis2020testing,bhattacharya2022fluctuations}. We provide the proof here for completeness.

\begin{enumerate}[label=\roman*)]
    \item First, by \eqref{eq:ref_exp_U_graphon} in Lemma~\ref{lem:lemma-ass}, $E[U_{R}(\mathbb{G}_{n})]  = \binom{n}{r}^{-1} X_R(K_{n})P_{h_n}(R)$, such  that 
    \begin{equation}
\label{eq:aut}
X_R(K_{n}) \stackrel{\text{\eqref{eq:bb07} in Lemma~\ref{lem:lemma-ass}}}{=} \binom{n}{r}X_R(K_r) \stackrel{\text{\eqref{eq:bha27} in Lemma~\ref{lem:lemma-ass}}}{=}\binom{n}{r} \frac{r!}{|\mathrm{Aut}(R)|}
\end{equation} and 
% \begin{equation*}
% \begin{aligned}
$E[U_{R}(\mathbb{G}_{n})] = \binom{n}{r}^{-1} X_R(K_{n})P_{h_n}(R) = \frac{r!}{|\mathrm{Aut}(R)|} P_{h_n}(R).$
% \end{aligned}
% \end{equation*}
These give \eqref{eq:exp_graphon} directly.

% \begin{equation*}
% \begin{aligned}
% E\big[U_{R}(\mathbb{G}_{n})\big]  \stackrel{\text{}}{=} \binom{n}{r}^{-1} X_R(K_{n})P_{h_n}(R).
% \end{aligned}
% \end{equation*} 

\item To show \eqref{eq:cov_graphon}, we start with
\begin{equation}
\label{eq:varref}
\begin{aligned}
&\Cov\big[U_{R}(\mathbb{G}_{n}),U_{R'}(\mathbb{G}_{n})\big] = E\big[U_{R}(\mathbb{G}_{n})U_{R'}(\mathbb{G}_{n})\big] -E\big[U_{R}(\mathbb{G}_{n})\big]E\big[U_{R'}(\mathbb{G}_{n})\big]
\\ \stackrel{\text{\eqref{eq:ref_exp_U_graphon}}}{=} & \binom{n}{r}^{-1}\binom{n}{r'}^{-1}E\big[X_R(\mathbb{G}_{n})X_{R'}(\mathbb{G}_{n})\big]-\Big\{E\big[U_{R}(\mathbb{G}_{n})\big]E\big[U_{R'}(\mathbb{G}_{n})\big]\Big\}
\\ \stackrel{\text{\eqref{eq:linerityMC}}}{=}&\binom{n}{r}^{-1}\binom{n}{r'}^{-1}E\big[\sum_{q=0}^{\min\{r,r'\}}\sum_{S\in \mathcal{S}_{R, R'}^{(q)}}c_SX_{S}(\mathbb{G}_{n})\big] - E\big[U_{R}(\mathbb{G}_{n})\big]E\big[U_{R'}(\mathbb{G}_{n})\big].
\end{aligned}
\end{equation}

The result in \cite{bhattacharyya2015subsampling} (see Section~\ref{subsec:addresMC})implies that
\begin{equation*}
{\small
\begin{aligned}
\Cov\big[U_{R}(\mathbb{G}_{n}), U_{R'}(\mathbb{G}_{n})\big] =& \binom{n}{r}^{-1}\binom{n}{r'}^{-1} \sum_{q=1}^{\min\{r,r'\}}\sum_{S\in \mathcal{S}_{R, R'}^{(q)}}c_SE\big[X_{S}(\mathbb{G}_{n})\big] - \left[1-\dfrac{\binom{n-r}{r'}}{\binom{n}{r'}}\right]  \cdot E\big[U_{R}(\mathbb{G}_{n})\big]E\big[U_{R'}(\mathbb{G}_{n})\big].
\end{aligned}}
\end{equation*}

Combining with \eqref{eq:varref}, we obtain 
\begin{equation}
\label{eq:ref_from_bickel}
\begin{aligned}
   E\big[U_{R}(\mathbb{G}_{n})\big]E\big[U_{R'}(\mathbb{G}_{n})\big] = \dfrac{\binom{n}{r'}}{\binom{n-r}{r'}} \binom{n}{r}^{-1}\binom{n}{r'}^{-1} \sum_{S\in \mathcal{S}_{R, R'}^{(0)}}c_SE\big[X_{S}(\mathbb{G}_{n})\big].
\end{aligned}
\end{equation}

Finally, we have
\begin{equation*}
\begin{aligned}
     \Cov\big[U_{R}(\mathbb{G}_{n}),U_{R'}(\mathbb{G}_{n})\big] 
     \stackrel{\text{\eqref{eq:varref}}}{=}&\binom{n}{r}^{-1}\binom{n}{r'}^{-1}\sum_{q=0}^{\min\{r,r'\}}\sum_{S\in \mathcal{S}_{R, R'}^{(q)}}c_SE\big[X_{S}(\mathbb{G}_{n})\big]\\&-  E\big[U_{R}(\mathbb{G}_{n})\big]E\big[U_{R'}(\mathbb{G}_{n})\big]
    \\ \stackrel{\text{\eqref{eq:ref_from_bickel}}}{=} & \binom{n}{r}^{-1}\binom{n}{r'}^{-1}\sum_{q=1}^{\min\{r,r'\}}\sum_{S\in \mathcal{S}_{R, R'}^{(q)}}c_SE\big[X_{S}(\mathbb{G}_{n})\big]  \\&+   \binom{n}{r}^{-1}\binom{n}{r'}^{-1}\sum_{S\in \mathcal{S}_{R, R'}^{(0)}}c_SE\big[X_{S}(\mathbb{G}_{n})\big] \\&- \dfrac{\binom{n}{r'}}{\binom{n-r}{r'}} \binom{n}{r}^{-1}\binom{n}{r'}^{-1} \sum_{S\in \mathcal{S}_{R, R'}^{(0)}}c_SE\big[X_{S}(\mathbb{G}_{n})\big] \\
       = & \binom{n}{r}^{-1}\binom{n}{r'}^{-1}\sum_{q=1}^{\min\{r,r'\}}\sum_{S\in \mathcal{S}_{R, R'}^{(q)}}c_SE\big[X_{S}(\mathbb{G}_{n})\big] \\&- \Big(\dfrac{\binom{n}{r'}}{\binom{n-r}{r'}}-1\Big)\binom{n}{r}^{-1}\binom{n}{r'}^{-1} \sum_{S\in \mathcal{S}_{R, R'}^{(0)}}c_SE\big[X_{S}(\mathbb{G}_{n})\big],
\end{aligned}
\end{equation*}
which gives \eqref{eq:cov_graphon}.
\end{enumerate}
\end{proof}

\begin{lemma}\label{lem:lemma-ass}
\begin{equation}
\label{eq:ref_exp_U_graphon}
\begin{aligned}
E\big[U_{R}(\mathbb{G}_{n})\big]  =  \binom{n}{r}^{-1} E\big[X_R(\mathbb{G}_{n})\big]= \binom{n}{r}^{-1} X_R(K_{n})P_{h_n}(R),
\end{aligned}
\end{equation}
where $K_{n}$ denotes a complete graph of size $n$, and $ P_{h_n}(R)$ is defined in \eqref{eq:PR} . 
\begin{equation}
\label{eq:bb07}
    X_R(K_{n}) = \binom{n}{r}X_R(K_r),
\end{equation}
\begin{equation}
\label{eq:bha27}
     X_R(K_r) = r!/|\mathrm{Aut}(R)|.
\end{equation}

\end{lemma}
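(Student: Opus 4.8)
The plan is to establish the three displayed identities essentially in reverse order, since \eqref{eq:bha27} feeds into \eqref{eq:bb07}, which in turn feeds into \eqref{eq:ref_exp_U_graphon}. Each identity is purely combinatorial or a direct consequence of the graphon model, so the argument is mostly bookkeeping; the only genuinely subtle point is an exchangeability observation in the last step.

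First I would prove \eqref{eq:bha27} using the homomorphism-counting identity of Lemma~\ref{countinjhom}, namely $X_R(K_r) = \mathrm{inj}(R,K_r)/|\mathrm{Aut}(R)|$. The key observation is that $K_r$ and $R$ both have exactly $r$ vertices and every pair of distinct vertices in $K_r$ is adjacent; hence every injective map $\phi:V(R)\to V(K_r)$ is automatically a graph homomorphism, because the condition $(v_i,v_j)\in\eE(R)\Rightarrow[\phi(v_i),\phi(v_j)]\in\eE(K_r)$ holds trivially when all edges are present. An injection between two $r$-element sets is a bijection, so $\mathrm{inj}(R,K_r)=r!$, which yields \eqref{eq:bha27}. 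Next I would prove \eqref{eq:bb07} by a vertex-set decomposition: since $R$ is connected on $r$ vertices, any copy $S\subset K_n$ with $S\cong R$ occupies exactly $r$ vertices, and once an $r$-element vertex set is fixed, the copies of $R$ supported on it are precisely the copies of $R$ in the complete graph $K_r$ on those vertices. Summing over the $\binom{n}{r}$ vertex sets gives $X_R(K_n)=\binom{n}{r}X_R(K_r)$.

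Finally, for \eqref{eq:ref_exp_U_graphon}, the first equality is immediate from $U_R=\binom{n}{r}^{-1}X_R$ and linearity of expectation. For the second equality I would expand the motif count as a sum of indicators over all copies of $R$ in $K_n$,
\[
X_R(\mathbb{G}_{n})=\sum_{\substack{S\subset K_n\\ S\cong R}}\ \prod_{(v_i,v_j)\in\eE(S)}\mathbb{1}\{(v_i,v_j)\in\eE(\mathbb{G}_{n})\},
\]
then condition on the latent positions $\{\xi_i\}$. By the definition of the model, edges are drawn independently with conditional probability $h_n(\xi_i,\xi_j)$, so the conditional expectation of each product equals $\prod_{(v_i,v_j)\in\eE(S)} h_n(\xi_i,\xi_j)$; integrating against the iid uniform $\xi$'s gives exactly $P_{h_n}(R)$ as defined in \eqref{eq:PR}.

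The step I would be most careful about is that this integral takes the same value $P_{h_n}(R)$ for \emph{every} copy $S$, regardless of which $r$ vertices it uses. This is the main (if modest) obstacle: it follows because the $\xi_i$ are exchangeable, so relabelling the integration variables shows the value depends only on the edge set of $S$ up to isomorphism, i.e.\ only on $R$. Granting this, each of the $X_R(K_n)$ copies contributes the identical amount $P_{h_n}(R)$, and summing yields $E[X_R(\mathbb{G}_{n})]=X_R(K_n)P_{h_n}(R)$, which completes \eqref{eq:ref_exp_U_graphon} and hence the lemma.
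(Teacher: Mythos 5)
Your proof is correct, but it takes a genuinely different route from the paper: the paper does not actually prove this lemma at all, it simply cites the three identities from the literature --- \eqref{eq:ref_exp_U_graphon} from Equation (1) of \cite{maugis2020testing}, \eqref{eq:bb07} from \cite{bollobas2007metrics}, and \eqref{eq:bha27} from Equation (2.7) of \cite{bhattacharya2022fluctuations}. You instead derive everything from first principles, using only the injective-homomorphism identity (Lemma~\ref{countinjhom}) and the definition of the sparse graphon model. Each of your three steps checks out: for \eqref{eq:bha27}, every injection $V(R)\to V(K_r)$ is indeed automatically a homomorphism and is a bijection, giving $\mathrm{inj}(R,K_r)=r!$; for \eqref{eq:bb07}, the partition of copies of $R$ in $K_n$ by their $r$-element vertex supports is valid (connectedness of $R$ is not even needed here, only that $|V(S)|=r$ for any $S\cong R$); and for \eqref{eq:ref_exp_U_graphon}, the indicator expansion over copies $S\subset K_n$, conditional independence of edges given the latent $\xi_i$'s, and the exchangeability/relabelling argument showing the integral depends only on the isomorphism class of $S$ are all sound --- the last point being exactly the step that needed explicit care, and you handled it. What your approach buys is a self-contained lemma whose correctness can be verified within the paper itself; what the paper's approach buys is brevity, at the cost of requiring the reader to chase three external references whose notation and setup (e.g., induced versus non-induced counts, dense versus sparse normalization) must be matched against the present one.
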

The \eqref{eq:ref_exp_U_graphon} is proved in \cite{maugis2020testing} (See their Equation (1)),  \eqref{eq:bb07} is used in \cite{bollobas2007metrics}, and \eqref{eq:bha27} is proved in  \cite{bhattacharya2022fluctuations} (see their Equation (2.7)).

\subsection{Proof of Proposition \ref{prop: lim_var_graphon}}
% \begin{proposition}
%    \label{prop: lim_var_graphon}
%    For any motif $R$, 
% \begin{equation}
% \label{eq:lim_exp_graphon}
% \begin{aligned}
%       \mE\big[\rho^{-\mathfrak{r}}_{n}U_{R}(\mathbb{G}_{n})\big] =  \frac{r!}{|\mathrm{Aut}(R)|}P_{w}(R).
% \end{aligned}
% \end{equation}
   
% Furthermore, let $R$ and $R'$ be two motifs with $r \leqslant r'$. If $n\rho^{r/2}_n \rightarrow \infty$, then we have
%      \begin{equation}
%         \label{eq: lim_var_graphon}
%        \begin{aligned}
%       &\lim\limits_{n \to \infty}\Cov\Big[\sqrt{n}\rho^{-\mathfrak{r}}_{n}U_{R}(\mathbb{G}_{n}), \sqrt{n}\rho^{-\mathfrak{r}'}_{n}U_{R'}(\mathbb{G}_{n})\Big]\\
%  =&\sum_{S\in\mathcal{S}_{R,R'}^{(1)}}\frac{c_Sr!r'!}{|\text{Aut}(S)|}P_{w}(S) - \sum_{S\in\mathcal{S}_{R,R'}^{(0)}} \frac{c_Sr!r'!rr'}{|\text{Aut}(S)|}P_{w}(S).
% \end{aligned}
%      \end{equation}
% \end{proposition}
 
\begin{proof}
Proof of \eqref{eq:lim_exp_graphon}:
 \begin{equation*}
\begin{aligned}
      E\big[\rho^{-\mathfrak{r}}_{n}U_{R}(\mathbb{G}_{n})\big]& =  \rho^{-\mathfrak{r}}_{n}E\big[U_{R}(\mathbb{G}_{n})\big] \stackrel{\text{\eqref{eq:exp_graphon}}}{=}  \frac{\rho^{-\mathfrak{r}}_{n}r!}{|\mathrm{Aut}(R)|} P_{h_n}(R) \\
      & \stackrel{\text{\eqref{eq:PR}}}{=} \frac{\rho^{-\mathfrak{r}}_{n}r!}{|\mathrm{Aut}(R)|} \int_{[0,1]^r} \prod_{(v_i,v_j) \in \eE(R)} h_n\left(\xi_i, \xi_j\right) \prod_{v_i \in V(R)} d \xi_i \\
      & = \frac{r!\rho^{-\mathfrak{r}}_{n}\rho^{\mathfrak{r}}_n}{|\mathrm{Aut}(R)|}\int_{[0,1]^r} \prod_{(v_i,v_j) \in \eE(R)} w\left(\xi_i, \xi_j\right)\mathbb{1}_{\{\rho_nw(\xi_i,\xi_j) \leqslant 1\}} \prod_{v_i \in V(R)} d \xi_i\\
      & = \frac{r!}{|\mathrm{Aut}(R)|}\int_{[0,1]^r} \prod_{(v_i,v_j) \in \eE(R)} w\left(\xi_i, \xi_j\right) \prod_{v_i \in V(R)} d \xi_i \stackrel{\text{\eqref{eq:PR}}}{=}  \frac{r!}{|\mathrm{Aut}(R)|}P_{w}(R).
\end{aligned}
\end{equation*}  
Proof of \eqref{eq: lim_var_graphon}: we decompose the covariance as
\begin{equation*}
\begin{aligned}
     &\Cov\Big[\sqrt{n}\rho^{-\mathfrak{r}}_{n}U_{R}(\mathbb{G}_{n}), \sqrt{n}\rho^{-\mathfrak{r}'}_{n}U_{R'}(\mathbb{G}_{n})\Big] = n\rho^{-(\mathfrak{r} + \mathfrak{r}')}_{n}\Cov\Big[U_{R}(\mathbb{G}_{n}), U_{R'}(\mathbb{G}_{n})\Big]
       \\   \stackrel{\text{\eqref{eq:cov_graphon}}}{=} &~ n\rho^{-(\mathfrak{r} + \mathfrak{r}')}_{n}\binom{n}{r}^{-1}\binom{n}{r'}^{-1}\sum_{q=1}^{\min\{r,r'\}}\sum_{S\in \mathcal{S}_{R, R'}^{(q)}}c_SE\big[X_{S}(\mathbb{G}_{n})\big]
       \\&- n\rho^{-(\mathfrak{r} + \mathfrak{r}')}_{n}\Big[\dfrac{\binom{n}{r'}}{\binom{n-r}{r'}}-1\Big]\binom{n}{r}^{-1}\binom{n}{r'}^{-1} \sum_{S\in \mathcal{S}_{R, R'}^{(0)}}c_SE\big[X_{S}(\mathbb{G}_{n})\big]
        \\:=& ~ \mathrm{I}  - \mathrm{II} .
\end{aligned}
\end{equation*}

Let $|V(S)| = s$ and $|E(S)| = \mathfrak{s}$, we have
\begin{equation*}
\begin{aligned}
     \mathrm{I}&=n\rho^{-(\mathfrak{r} + \mathfrak{r}')}_{n}\binom{n}{r}^{-1}\binom{n}{r'}^{-1}\sum_{q=1}^{\min\{r,r'\}}\sum_{S\in \mathcal{S}_{R, R'}^{(q)}}c_SE\big[X_{S}(\mathbb{G}_{n})\big]
    \\&= \sum_{q=1}^{\min\{r,r'\}}\sum_{S\in \mathcal{S}_{R, R'}^{(q)}}c_S\rho^{-(\mathfrak{r} + \mathfrak{r}')}_{n}\dfrac{nr!(n-r)!}{n!}\dfrac{r'!(n-r')!}{n!}\binom{n}{s}E\big[U_{S}(\mathbb{G}_{n})\big]
       \\ &\stackrel{\text{\eqref{eq:lim_exp_graphon}}}{=}\sum_{q=1}^{\min\{r,r'\}}\sum_{S\in \mathcal{S}_{R, R'}^{(q)}}c_S\rho^{\mathfrak{s}-(\mathfrak{r} + \mathfrak{r}')}_{n}\dfrac{nr!(n-r)!}{n!}\dfrac{r'!(n-r')!}{n!}\dfrac{n!}{s!(n-s)!}
       \frac{s!}{|\text{Aut}(S)|}P_{w}(S)
        \\ & =\sum_{q=1}^{\min\{r,r'\}}\sum_{S\in \mathcal{S}_{R, R'}^{(q)}}\rho^{\mathfrak{s}-(\mathfrak{r} + \mathfrak{r}')}_{n}\dfrac{(n-r')!}{(n-1)\cdots(n-r+1)(n-s)!}\frac{c_Sr!r'!}{|\text{Aut}(S)|}P_{w}(S).
\end{aligned}
\end{equation*}

The quantities $r$, $r'$, $c_S$, $|\text{Aut}(S)|$, and $P_{w}(S)$ are invariant of $n$. The quantities $\rho^{\mathfrak{s}-(\mathfrak{r} + \mathfrak{r}')}_{n}$ and $(n-r')!/[(n-1)\cdots(n-r+1)(n-s)!]$  change with $n$. Now, we consider these two quantities based on the number of merged nodes $q$. 
\begin{itemize}
\item When $q = 1$, we have $\mathfrak{s}=\mathfrak{r} + \mathfrak{r}'$ and $s = r + r' - 1$. The following quantity
\begin{equation*}
    \dfrac{(n-r')!}{(n-1)\cdots(n-r+1)(n-s)!} = \dfrac{(n-r')(n-r'-1)\cdots(n-r-r'+2)}{(n-1)\cdots(n-r+1)}
\end{equation*}
has $r-1$ items including $n$ in both numerator and denominator. Thus,
\begin{equation*}
    \rho^{\mathfrak{s}-(\mathfrak{r} + \mathfrak{r}')}_{n}\dfrac{(n-r')!}{(n-1)\cdots(n-r+1)(n-s)!}  =  1 + o(1).
\end{equation*}

\item When $q = 2$, we have $\mathfrak{s} = \mathfrak{r} + \mathfrak{r}' - 1$ because one edge is merged. The following quantity
\begin{equation*}
    \dfrac{(n-r')!}{(n-1)\cdots(n-r+1)(n-s)!} = \dfrac{(n-r')(n-r'-1)\cdots(n-r-r'+3)}{(n-1)\cdots(n-r+1)}
\end{equation*}
has $r-2$ items with $n$ in numerator, and $r-1$ items with $n$ in denominator. As $n\rho_n \to \infty$,
\begin{equation*}
    \rho^{\mathfrak{s}-(\mathfrak{r} + \mathfrak{r}')}_{n}\dfrac{(n-r')!}{(n-1)\cdots(n-r+1)(n-s)!} = O(\frac{1}{n\rho_n}) = o(1).
\end{equation*}

\item When $2 < q < \min\{r,r'\}$, at most $q(q-1)/2$ edges are merged.  The following quantity
\begin{equation*}
    \dfrac{(n-r')!}{(n-1)\cdots(n-r+1)(n-s)!} = \dfrac{(n-r')(n-r'-1)\cdots(n-r-r'+(q+1))}{(n-1)\cdots(n-r+1)}
\end{equation*}
has $r-q$ items with $n$ in numerator and $r-1$ items with $n$ in denominator. Since $n^{(q-1)}\rho^{(q(q-1)/2)}_n = (n\rho^{q/2}_n)^{(q-1)} \rightarrow \infty$
\begin{equation*}
    \rho^{\mathfrak{s}-(\mathfrak{r} + \mathfrak{r}')}_{n}\dfrac{(n-r')!}{(n-1)\cdots(n-r+1)(n-s)!} = O(\frac{1}{(n\rho^{q/2}_n)^{(q-1)}}) = o(1).
\end{equation*}
\end{itemize} Therefore, $\mathrm{I}\to\sum_{S\in\mathcal{S}_{R,R'}^{(1)}}(c_Sr!r'!)(|\text{Aut}(S)|)^{-1}P_{w}(S)$ as $n\to \infty$.
 
Now we turn to Part $\mathrm{II}$. Since $s = r + r'$ and $\mathfrak{s}=\mathfrak{r} + \mathfrak{r}'$ when $q = 0$, we have
\begin{equation*}
\begin{aligned}
     \Big[\dfrac{n\binom{n}{r'}}{\binom{n-r}{r'}}-n\Big]\binom{n}{r}^{-1}\binom{n}{r'}^{-1}\binom{n}{s} =& \dfrac{r!r'!}{(r+r')!}\bigg\{b\Big[1 - \dfrac{(n-r)!(n-r')!}{n!(n-r-r'
       )!}\Big]\bigg\}\\=&
       \dfrac{r!r'!}{(r+r')!}\Big[\dfrac{n(n-1)\cdots(n-r+1)- (n-r')\cdots(n-r-r'+1)}{(n-1)\cdots(n-r+1)}\Big].
\end{aligned}
\end{equation*}

Consequently,
\begin{align*}
  n(n-1)\cdots(n-r+1) &= n^{r} - \dfrac{(r-1)r}{2}n^{r-1} + o(n^{r-1}), \\
   (n-r')\cdots(n-r-r'+1) &= n^{r} - \dfrac{(2r'+r-1)r}{2}n^{r-1} + o(n^{r-1}), \\
    (n-1)\cdots(n-r+1) &= n^{r-1} + o(n^{r-1}).
\end{align*}
% By L'Hospital's rule, 
It is easy to see that 
 \begin{equation}
\label{eq:lhop}
\begin{aligned}
     & \lim\limits_{n \to \infty } \Big(\dfrac{n\binom{n}{r'}}{\binom{n-r}{r'}}-n\Big)\binom{n}{r}^{-1}\binom{n}{r'}^{-1}\binom{n}{s}
        \\=& \lim\limits_{n \to \infty }  \dfrac{r!r'!}{(r+r')!}\Big[\dfrac{n(n-1)\cdots(n-r+1)- (n-r')\cdots(n-r-r'+1)}{(n-1)\cdots(n-r+1)}\Big]  \\=& \lim\limits_{n \to \infty } \dfrac{r!r'!}{(r+r')!}\dfrac{ rr' n^{r-1} + o( n^{r-1})}{ n^{r-1} + o(n^{r-1})} =  \dfrac{r!r'!}{(r+r')!}rr'.
\end{aligned}
\end{equation}
Hence, 
\begin{equation*}
\begin{aligned}
   \lim\limits_{n \to \infty}\mathrm{II}&=   \lim\limits_{n \to \infty}n\rho^{-(\mathfrak{r} + \mathfrak{r}')}_{n}\Big(\dfrac{\binom{n}{r'}}{\binom{n-r}{r'}}-1\Big)\binom{n}{r}^{-1}\binom{n}{r'}^{-1} \sum_{S\in \mathcal{S}_{R, R'}^{(0)}}c_SE\big[X_{S}(\mathbb{G}_{n})\big]
   \\&=   \lim\limits_{n \to \infty}n\rho^{\mathfrak{s}-(\mathfrak{r} + \mathfrak{r}')}_{n}\Big(\dfrac{\binom{n}{r'}}{\binom{n-r}{r'}}-1\Big)\binom{n}{r}^{-1}\binom{n}{r'}^{-1}\binom{n}{s}  \sum_{S\in \mathcal{S}_{R, R'}^{(0)}}c_SE\big[\rho^{-\mathfrak{s}}_{n}U_{S}(\mathbb{G}_{n})\big]
   \\&=   \lim\limits_{n \to \infty}n\Big(\dfrac{\binom{n}{r'}}{\binom{n-r}{r'}}-1\Big)\binom{n}{r}^{-1}\binom{n}{r'}^{-1}\binom{n}{s}  \sum_{S\in \mathcal{S}_{R, R'}^{(0)}}c_SE\big[\rho^{-\mathfrak{s}}_{n}U_{S}(\mathbb{G}_{n})\big]
   \\ &\stackrel{\text{\eqref{eq:lim_exp_graphon}}}{=}\lim\limits_{n \to \infty}n\Big(\dfrac{\binom{n}{r'}}{\binom{n-r}{r'}}-1\Big)\binom{n}{r}^{-1}\binom{n}{r'}^{-1}\binom{n}{s}  \sum_{S\in \mathcal{S}_{R, R'}^{(0)}}c_S \frac{s!}{|\text{Aut}(S)|}P_{w}(S)\\&\stackrel{\text{\eqref{eq:lhop}}}{=}\sum_{S\in\mathcal{S}_{R,R'}^{(0)}} \frac{c_Sr!r'!rr'}{|\text{Aut}(S)|}P_{w}(S).
\end{aligned}
\end{equation*}

Combining the above results, we have
\begin{equation*}
\begin{aligned}
     \lim\limits_{n \to \infty}\Cov\Big[\sqrt{n}\rho^{-\mathfrak{r}}_{n}U_{R}(\mathbb{G}_{n}), \sqrt{n}\rho^{-\mathfrak{r}'}_{n}U_{R'}(\mathbb{G}_{n})\Big] 
      = &\lim\limits_{n \to \infty}\mathrm{I} -  \lim\limits_{n \to \infty}\mathrm{II}\\
    =&\sum_{S\in\mathcal{S}_{R,R'}^{(1)}}\frac{c_Sr!r'!}{|\text{Aut}(S)|}P_{w}(S) - \sum_{S\in\mathcal{S}_{R,R'}^{(0)}} \frac{c_Sr!r'!rr'}{|\text{Aut}(S)|}P_{w}(S),
\end{aligned}
\end{equation*}
which gives \eqref{eq: lim_var_graphon}.

\end{proof}
%=============== Sec C========================+%

%=============== Sec D========================+%
\counterwithout{equation}{section}
\setcounter{equation}{0}
\makeatletter
\renewcommand{\theequation}{SE.\@arabic\c@equation}
\renewcommand{\thelemma}{SE.\@arabic\c@lemma} 
\makeatother

\section{Proofs for Section \ref{subsec:addresQuantify1}}
\label{sec:proof for q1}
\subsection{Proof of Lemma \ref{lem: usefuleqs}}

\begin{proof} 

The following equation from \cite{maugis2020testing} is used in this proof.
\begin{equation}
\label{eq:ref}
    \binom{n}{r}U_R(G) = X_R(G) = \sum_{R_c \in \{S: S \subset G, S \cong  R\}}1 = \big|\{S: S \subset G , S \cong  R\}\big|.
\end{equation}

We now prove the identities in Lemma \ref{lem: usefuleqs} one by one.
\begin{enumerate}[label=\roman*)]
    \item For any $R_c \in \{S: S \subset G, S \cong  R\}$, recall that $R_{c} \subset G_b^*$ if both $V(R_{c}) \subset V(G_b^*)$ and $\eE(R_{c} ) \subset \eE(G_b^*)$. Furthermore, since $R_c \subset G$ and $G_b^* \indsubset G$, $V(R_{c}) \subset V(G_b^*)$ implies  $\eE(R_{c} ) \subset \eE(G_b^*)$. Thus,
\begin{equation}
\label{eq:ref1}
    \mathbb{1}_{\{R_{c} \subset G_b^* \}} = 1 \quad \text{if and only if} \quad V(R_c) \subset V(G_b^* ).
\end{equation}
Now let us consider drawing $b$ nodes from $V(G)$ by first selecting all nodes in $V(R_{c})$, and then randomly drawing $b-r$ nodes without replacement from $V(G)\backslash V(R_{c})$. There are $\binom{n-r}{b-r}$ ways to draw these $b$ nodes. Thus,
\begin{equation}
    \label{eq:ref2}
    \sum\limits_{\mathcal{G}  \in \mathcal{S}(\mathbb{G}^{*}_{b})}\mathbb{1}_{\{R_{c} \subset \mathcal{G} \}} = \binom{n-r}{b-r}.
\end{equation}
Consequently,
\begin{equation*}
    \begin{aligned}
      \sum_{\mathcal{G}  \in \mathcal{S}(\mathbb{G}^{*}_{b})} X_R(\mathcal{G}) \stackrel{\text{\eqref{eq:ref}}}{=}&\sum\limits_{\mathcal{G}  \in \mathcal{S}(\mathbb{G}^{*}_{b})}\sum_{R_c \in \{S: S \subset \mathcal{G}, S \cong  R\}}1 =\sum\limits_{\mathcal{G}  \in \mathcal{S}(\mathbb{G}^{*}_{b})}\sum_{R_c \in \{S: S \subset G, S \cong  R\}}\mathbb{1}_{\{R_c \subset \mathcal{G} \}} 
      \\ = &\sum_{R_c \in \{S: S \subset G, S \cong  R\}}\sum\limits_{\mathcal{G} \in \mathcal{S}(\mathbb{G}^{*}_{b})}\mathbb{1}_{\{R_c \subset \mathcal{G}\}}
      \\ \stackrel{\text{ \eqref{eq:ref2}}}{=} &\sum_{R_c \in \{S: S \subset G, S \cong  R\}}\binom{n-r}{b-r}   \stackrel{\text{\eqref{eq:ref}}}{=} \binom{n-r}{b-r} X_R(G),
        \end{aligned}  
\end{equation*}
which gives \eqref{eq:iden1}.

\item For any $G^{v**}_{b,r} \in \mathcal{S}(\mathbb{G}^{v**}_{b,r})$, if  $S \subset G^{v**}_{b,r}$ and $|V(S)| = |V(G^{v**}_{b,r})| $, we have $V(S) = V(G^{v**}_{b,r})$. In addition, as $v \in V(G^{v**}_{b,r})$, we have 
\begin{equation}
    \label{eq:ref3}
    \{S : S \subset G^{v**}_{b,r}, S \cong R \} = \{S : S \subset G^{v**}_{b,r}, v\in V(S), S \cong R \}.
\end{equation}

Let $R_{c} \in \{S: S \subset G, S \cong  R\}$. Suppose that $R_{c} \subset G^{v*}_{b}$ and $v \in V(R_{c})$. As every $G^{v**}_{b,r}$ is an induced subgraph, we have
\begin{equation}
    \label{eq:ref4}
 \sum\limits_{\mathcal{G} \in \mathcal{S}(\mathbb{G}^{v**}_{b,r})}\mathbb{1}_{\{R_{c} \subset \mathcal{G}\}}= \sum\limits_{\mathcal{G} \in \mathcal{S}(\mathbb{G}^{v**}_{b,r})}\mathbb{1}_{\{V(R_{c}) = V(\mathcal{G})\}} = 1.
\end{equation}
Consequently,
\begin{equation*}
\begin{aligned} 
   &  \Big|\{S: S \subset G^{v*}_{b}, v \in V(S), S \cong  R\}\Big| \\= &\sum\limits_{R_c \in \{S: S \subset G^{v*}_{b}, v \in V(S), S \cong  R\}}1 
     \stackrel{\text{\eqref{eq:ref4}}}{=}  \sum\limits_{R_c \in \{S: S \subset G^{v*}_{b}, v \in V(S), S \cong  R\}}\Big(\sum\limits_{\mathcal{G} \in \mathcal{S}(\mathbb{G}^{v**}_{b,r})}\mathbb{1}_{\{R_{c} \subset \mathcal{G}\}}\Big)
     \\ = &\sum\limits_{\mathcal{G} \in \mathcal{S}(\mathbb{G}^{v**}_{b,r})}\Big(\sum\limits_{R_c \in \{S: S \subset G^{v*}_{b}, v \in V(S), S \cong  R\}}\mathbb{1}_{\{R_{c} \subset \mathcal{G}\}}\Big)
          \\ = &\sum\limits_{\mathcal{G} \in \mathcal{S}(\mathbb{G}^{v**}_{b,r})}\Big(\sum\limits_{R_c \in \{S: S \subset \mathcal{G}, v \in V(S), S \cong  R\}}1\Big)
     \\ \stackrel{\text{\eqref{eq:ref}}}{=} &\sum\limits_{\mathcal{G} \in \mathcal{S}(\mathbb{G}^{v**}_{b,r})}\Big|\{S : S \subset \mathcal{G}, v\in V(S), S \cong R \}\Big|
      \\\stackrel{\text{\eqref{eq:ref3}}}{=} &\sum\limits_{\mathcal{G} \in \mathcal{S}(\mathbb{G}^{v**}_{b,r})}\Big|\{S : S \subset \mathcal{G},S \cong R \}\Big| \stackrel{\text{\eqref{eq:ref}}}{=} \sum_{\mathcal{G}\in \mathcal{S}(\mathbb{G}^{v**}_{b,r})}X_R(\mathcal{G}),
\end{aligned}
\end{equation*}
leading to \eqref{eq:iden2}.

\item If $|V(S)| = r$ and $S \subset G^{v*}_{r}$, we have $V(S) = V(G^{v*}_{r})$. Thus,
$\{S : S \subset G^{v*}_{r}, S \cong R \} = \{S : S \subset G^{v*}_{r}, v\in V(S), S \cong R \}.$ Let $R_{c} \in \{S: S \subset G, S \cong  R\}$. Because $G^{v*}_{r} \indsubset G$, there exist only one $G^{v*}_{r} \in \mathcal{S}(\mathbb{G}^{v*}_{r})$ such that $V(R_{c}) = V(G^{v*}_{r})$. Also, $V(R_{c}) = V(G^{v*}_{r})$ implies $E(R_{c}) \subset E(G^{v*}_{r})$. Hence,
$\sum\limits_{\mathcal{G} \in \mathcal{S}(\mathbb{G}^{v*}_{r})}\mathbb{1}_{\{R_{c} \subset \mathcal{G}\}} = \sum\limits_{\mathcal{G} \in \mathcal{S}(\mathbb{G}^{v*}_{r})}\mathbb{1}_{\{V(R_{c}) \subset V(\mathcal{G})\}} = 1.$
Consequently,
\begin{equation*}
\begin{aligned}
     \Big|\{S: S \subset G, v \in V(S), S \cong  R\}\Big| &= \sum\limits_{R_c \in \{S: S \subset G, v \in V(S), S \cong  R\}}1 
     \\&= \sum\limits_{R_c \in \{S: S \subset G, v \in V(S), S \cong  R\}}\Big(\sum\limits_{\mathcal{G} \in \mathcal{S}(\mathbb{G}^{v*}_{r})}\mathbb{1}_{\{R_{c} \subset \mathcal{G}\}}\Big)
     \\& = \sum\limits_{\mathcal{G} \in \mathcal{S}(\mathbb{G}^{v*}_{r})}\Big(\sum\limits_{R_c \in \{S: S \subset G, v \in V(S), S \cong  R\}}\mathbb{1}_{\{R_{c} \subset \mathcal{G}\}}\Big)
     \\& = \sum\limits_{\mathcal{G} \in \mathcal{S}(\mathbb{G}^{v*}_{r})}\Big|\{S : S \subset \mathcal{G}, v\in V(S), S \cong R \}\Big|
      \\& = \sum\limits_{\mathcal{G} \in \mathcal{S}(\mathbb{G}^{v*}_{r})}\Big|\{S : S \subset \mathcal{G},S \cong R \}\Big| = \sum_{\mathcal{G}\in \mathcal{S}(\mathbb{G}^{v*}_{r})}X_R(\mathcal{G}),
\end{aligned}
\end{equation*}
which leads to \eqref{eq:iden3}
\item Let $R_{c} \in \{S: S \subset G, v \in V(S), S \cong  R\}$, we have 
\begin{equation}
\label{eq:ref5}
\begin{aligned}
    \sum\limits_{\mathcal{G} \in \mathcal{S}(\mathbb{G}_{b}^{*})}\mathbb{1}_{\{R_{c} \subset \mathcal{G}\}} &= \sum\limits_{\mathcal{G} \in \mathcal{S}(\mathbb{G}_{b}^{*}), v \in V(\mathcal{G} )}\mathbb{1}_{\{R_{c} \subset \mathcal{G}\}} + \sum\limits_{\mathcal{G} \in \mathcal{S}(\mathbb{G}_{b}^{*}),v \notin V(\mathcal{G}) }\mathbb{1}_{\{R_{c} \subset \mathcal{G}\}} \\
    &= \sum\limits_{\mathcal{G} \in \mathcal{S}(\mathbb{G}_{b}^{*}), v \in V(\mathcal{G} )}\mathbb{1}_{\{R_{c} \subset \mathcal{G}\}} + 0 \\=& \sum\limits_{\mathcal{G} \in \mathcal{S}(\mathbb{G}_{b}^{v*})}\mathbb{1}_{\{R_{c} \subset \mathcal{G}\}}.
\end{aligned}
\end{equation}
Consequently,
\begin{equation}
    \begin{aligned}
       \sum\limits_{G_{b}^{*} \in \mathcal{S}(\mathbb{G}_{b}^{v*})}\Big(\sum_{\mathcal{G}\in \mathcal{S}(\mathbb{G}_{b,r}^{v**})}  X_R(\mathcal{G})\Big) &\stackrel{\text{\eqref{eq:iden2}}}{=}     \sum\limits_{\mathcal{G} \in \mathcal{S}(\mathbb{G}_{b}^{v*})} \Big|\{S: S \subset \mathcal{G} , v \in V(S), S \cong  R\}\Big|
       \\&\hspace{0.05cm}= \sum\limits_{\mathcal{G} \in \mathcal{S}(\mathbb{G}_{b}^{v*})} \sum\limits_{R_{c} \in \{S: S \subset G, v \in V(S), S \cong  R\}}\mathbb{1}_{\{R_c \subset \mathcal{G}\}}
       \\&\hspace{0.05cm} =  \sum\limits_{R_{c} \in \{S: S \subset G, v \in V(S), S \cong  R\}}\sum\limits_{\mathcal{G} \in \mathcal{S}(\mathbb{G}_{b}^{v*})}\mathbb{1}_{\{R_c \subset \mathcal{G}\}}
       \\&\stackrel{\text{\eqref{eq:ref5}}}{=}   \sum\limits_{R_{c} \in \{S: S \subset G, v \in V(S), S \cong  R\}}\sum\limits_{\mathcal{G} \in \mathcal{S}(\mathbb{G}_{b}^{*})}\mathbb{1}_{\{R_c \subset \mathcal{G}\}}
         \\&\stackrel{\text{\eqref{eq:ref2}}}{=}   \sum\limits_{R_{c} \in \{S: S \subset G, v \in V(S), S \cong  R\}}\binom{n-r}{b-r}
         \\&\hspace{0.05cm} = \binom{n-r}{b-r}\Big|\{S: S \subset G, v \in V(S), S \cong R\}\Big|,
    \end{aligned}
\end{equation}
which gives \eqref{eq:iden4}.

\item For the last identity, we have
\begin{equation*}
    \begin{aligned}
        \sum_{i=1}^{n}\Big|\{S: S \subset G, v_i \in V(S), S \cong  R\}\Big| &\stackrel{\text{\eqref{eq:ref}}}{=}    \sum_{i=1}^{n}\sum\limits_{R_{c} \in \{S: S \subset G, v_i \in V(S), S \cong  R\}}1
        \\& =\sum_{i=1}^{n}\sum\limits_{R_{c} \in \{S: S \subset G, S \cong  R\}}\mathbb{1}_{\{v_i \in V(R_c)\}}
        \\& = \sum\limits_{R_{c} \in \{S: S \subset G, S \cong  R\}}\sum_{i=1}^{n}\mathbb{1}_{\{v_i \in V(R_c)\}}
        \\&= \sum\limits_{R_{c} \in \{S: S \subset G, S \cong  R\}}r \\&= r\Big|\{S: S \subset G, S \cong  R\}\Big| = rX_R(G),
    \end{aligned}
\end{equation*}
which gives \eqref{eq:iden5}.
\end{enumerate}
\end{proof}

\subsection{Proof of Lemma  \ref{lem: exp_cov_subsample}}
% \begin{lemma}
%         \label{lem: exp_cov_subsample}
%         For any motif $R$ and any graph $G$,
%         \begin{equation}
% \label{eq:exp_subsample}
%      \mE_*\left[U_{R}(\mathbb{G}^{*}_{b}) \right] = U_{R}(G).
% \end{equation}
% { \color{teal} \begin{equation}
% \label{eq:exp_subsample_conditional}
%      E[U_R(\mathbb{G}^{(*\mathbb{G}_n)}_b)]= E \Big\{\mE_*\big[U_{R}(\mathbb{G}^{(*\mathbb{G}_n = G)}_{b}) \big]\Big\}.
% \end{equation}}
% For any two motifs $R$ and $R'$, suppose that $r + r' < b$, then for a given graph $G$,
%     \begin{equation}
% \label{eq:cov_subsample}
% \begin{aligned}
%        \Cov_{*}\left[U_{R}(\mathbb{G}^{*}_{b}),U_{R'}(\mathbb{G}^{*}_{b}) \right]   =& \binom{b}{r}^{-1}\binom{b}{r'}^{-1} \sum_{q=0}^{\min\{r,r'\}}\sum_{S\in \mathcal{S}_{R, R'}^{(q)}}C_{S}\binom{b}{s}\binom{n}{s}^{-1}X_{S}(G)\\&- U_{R}(G)U_{R'}(G),
% \end{aligned}
% \end{equation}
% where $s = |V(S)| = r + r' - q$. Moreover,
% \begin{equation}
%     \label{eq:cov_subsample_conditional}
%     \begin{aligned}
%          \Cov[U_R(\mathbb{G}^{h_n}_b), U_{R'}(\mathbb{G}^{h_n}_b)] = &\Cov\Big\{E_*\Big[U_R(\mathbb{G}^*_b)\Big], E_*\Big[U_{R'}(\mathbb{G}^*_b)\Big]\Big\} \\& +   E\Big\{\Cov_{*}\Big[U_{R}(\mathbb{G}^{*}_{b}),U_{R'}(\mathbb{G}^{*}_{b})\Big]\Big\}.
%     \end{aligned}
% \end{equation}
%     \end{lemma}

\begin{proof}
We will prove the stated identities of Lemma~\ref{lem: exp_cov_subsample} one by one.

\begin{enumerate}[label=\roman*)]
    \item Following the results in \cite{maugis2020testing}, we have
         \begin{equation*}
    \begin{aligned}
      &E_*\Big[U_{R}(\mathbb{G}^{*}_{b})\Big] \stackrel{\text{\eqref{eq:ref}}}{=} E_*\Big[\binom{b}{r}^{-1}X_R(\mathbb{G}^{*}_{b})\Big] = \binom{b}{r}^{-1}E_*\Big[X_{R}(\mathbb{G}^{*}_{b})\Big] \\= & \binom{b}{r}^{-1} \binom{n}{b}^{-1} \sum_{\mathcal{G} \in \mathcal{S}(\mathbb{G}^*_b)} X_R(\mathcal{G})
       \stackrel{\text{\eqref{eq:iden1}}}{=} \binom{b}{r}^{-1} \binom{n}{b}^{-1}\binom{n-r}{b-r} X_R(G)\\= &\binom{n}{r}^{-1}X_R(G) 
       \stackrel{\text{\eqref{eq:ref}}}{=} U_{R}(G),
        \end{aligned}  
\end{equation*}
which gives \eqref{eq:exp_subsample}.

\item We start by showing that
\begin{equation}
  \label{eq:U_Gbn_Gn}
\begin{aligned}
       E\Big[U_{R}(\mathbb{G}_n)\Big]  \stackrel{\text{\eqref{eq:exp_graphon}}}{=} &\frac{r!}{|\mathrm{Aut}(R)|} P_{h_n}(R) \stackrel{\text{\eqref{eq:aut}}}{=}  \binom{b}{r}^{-1}X_R(K_{b})  P_{h_n}(R) \\ \stackrel{\text{\eqref{eq:ref_exp_U_graphon}}}{=} &E[U_R(\mathbb{G}^{h_n}_b)].
\end{aligned}
\end{equation}

Hence, $E\{E_*[U_{R}(\mathbb{G}^{*}_{b})]\}= E\{E_*[U_{R}(\mathbb{G}^{(*)}_{b})]\}= E[U_{R}(\mathbb{G}_n)] =E[U_R(\mathbb{G}^{h_n}_b)]$, where the second and third identifies are from \eqref{eq:exp_subsample} and \eqref{eq:U_Gbn_Gn}, respectively. This gives \eqref{eq:exp_subsample_conditional}.

\item Following \cite{bhattacharyya2015subsampling,maugis2020testing}, we have
\begin{equation*}
    \begin{aligned}
\Cov_{*}\Big[U_{R}(\mathbb{G}^{*}_{b}),U_{R'}(\mathbb{G}^{*}_{b})\Big]      &
\stackrel{\text{\eqref{eq:ref}}}{=}  \Cov_{*}\Big[\binom{b}{r}^{-1}X_R(\mathbb{G}^{*}_{b}), \binom{b}{r'}^{-1}X_{R'}( \mathbb{G}^{*}_{b})\Big] \\
        & =  \bigg\{\binom{b}{r}^{-1}\binom{b}{r'}^{-1}
E_*\Big[X_{R}(\mathbb{G}^{*}_{b})X_{R'}(\mathbb{G}^{*}_{b})\Big]\bigg\} -E_*\Big[U_{R}(\mathbb{G}^{*}_{b})\Big]E_*\Big[U_{R'}(\mathbb{G}^{*}_{b})\Big].
           \end{aligned}
         \end{equation*}
For the first term, we have
 \begin{align*}
&E_*\Big[X_R(\mathbb{G}^{*}_{b})X_{R'}(\mathbb{G}^{*}_{b})\Big]
         \stackrel{\text{\eqref{eq:linerityMC}}}{=}E_*\Big[
     \sum_{q=0}^{\min\{r,r'\}}\sum_{S\in \mathcal{S}_{R, R'}^{(q)}}c_{S}X_S(\mathbb{G}^{*}_{b})
        \Big]  \\
        =&
     \sum_{q=0}^{\min\{r,r'\}}\sum_{S\in \mathcal{S}_{R, R'}^{(q)}}c_{S}E_*[X_S(\mathbb{G}^{*}_{b})]=
     \sum_{q=0}^{\min\{r,r'\}}\sum_{S\in \mathcal{S}_{R, R'}^{(q)}}c_{S}\binom{b}{s}E_*\Big[U_{S}(\mathbb{G}^{*}_{b})\Big] \\ \stackrel{\text{\eqref{eq:exp_subsample}}}{=}&
     \sum_{q=0}^{\min\{r,r'\}}\sum_{S\in \mathcal{S}_{R, R'}^{(q)}}c_{S}\binom{b}{s}U_{S}(G).
    \end{align*}

For the second term, we have
$$E_*\big[U_R(\mathbb{G}^{*}_{b})\big]E_*\big[U_{R'}(\mathbb{G}^{*}_{b})\big]  \stackrel{\text{\eqref{eq:exp_subsample}}}{=} U_R(G)U_{R'}(G).$$ 

Thus,
\begin{equation*}
\begin{aligned}
              \Cov_{*}\Big[U_{R}(\mathbb{G}^{*}_{b}),U_{R'}(\mathbb{G}^{*}_{b})\Big]  = &\binom{b}{r}^{-1}\binom{b}{r'}^{-1} \sum_{q=0}^{\min\{r,r'\}}\sum_{S\in \mathcal{S}_{R, R'}^{(q)}}c_{S}\binom{b}{s}U_{S}(G) - U_{R}(G)U_{R'}(G),
\end{aligned}
         \end{equation*}
which gives \eqref{eq:cov_subsample}. As a special case,  \allowdisplaybreaks
\begin{equation*}
    \begin{aligned}
  \var_{*}\Big[U_{R}(\mathbb{G}^{*}_{b})\Big]    =  &\Big\{\binom{b}{r}^{-2}\sum_{q=0}^{r}\sum_{S\in \mathcal{S}_{R, R}^{(q}}c_S\binom{b}{2r-q}U_{S}(G)\Big\} -  [U_{R}(G)]^2
  \\ = & \binom{b}{r}^{-2}\sum_{q=0}^{r}\sum_{S\in \mathcal{S}_{R, R}^{(q}}c_S \dfrac{\binom{b}{2r-q}}{\binom{n}{2r-q}} X_S(G)  -  [U_{R}(G)]^2 
        \\ = & \binom{b}{r}^{-2}\sum_{q=0}^{r}\sum_{S\in \mathcal{S}_{R, R}^{(q}}c_S  \dfrac{\binom{n-2r+q}{b-2r+q}}{\binom{n}{b}} X_S(G) -  [U_{R}(G)]^2
\\ = &   \binom{b}{r}^{-2}\Big[\sum_{S\in \mathcal{S}_{R, R}^{(0)}}c_S  \dfrac{\binom{n-2r}{b-2r}}{\binom{n}{b}} X_S(G) +  \sum_{q=1}^{r}\sum_{S\in \mathcal{S}_{R, R}^{(q}}c_S  \dfrac{\binom{n-2r+q}{b-2r+q}}{\binom{n}{b}} X_S(G)\Big] -  [U_{R}(G)]^2,
           \end{aligned}
         \end{equation*}
matching the result of \cite{bhattacharyya2015supplement}.
\item It remains to   the total covariance in terms of network node subsampling. First, it holds that $\Cov\{E_*[U_R(\mathbb{G}^*_b)], E_*[U_{R'}(\mathbb{G}^*_b)]\}=\Cov[U_R(\mathbb{G}_n), U_{R'}(\mathbb{G}_n)]
=E[U_R(\mathbb{G}_n)U_{R'}(\mathbb{G}_n)]-E[U_R(\mathbb{G}_n)]E[U_{R'}(\mathbb{G}_n)]$, where the first equality follows \eqref{eq:exp_subsample}. 

Second, we have \allowdisplaybreaks 
\begin{align*}
     & E\Big\{\Cov_{*}\big[U_{R}(\mathbb{G}^{*}_{b}),U_{R'}(\mathbb{G}^{*}_{b})\big]\Big\}  \\
 \stackrel{\text{\eqref{eq:cov_subsample}}}{=}  & E\bigg\{\binom{b}{r}^{-1}\binom{b}{r'}^{-1} \sum_{q=0}^{\min\{r,r'\}}\sum_{S\in \mathcal{S}_{R, R'}^{(q)}}c_{S}\binom{b}{s}U_{S}(\mathbb{G}_n)- U_{R}(\mathbb{G}_n)U_{R'}(\mathbb{G}_n)\bigg\}
 \\ =&\binom{b}{r}^{-1}\binom{b}{r'}^{-1} \sum_{q=0}^{\min\{r,r'\}}\sum_{S\in \mathcal{S}_{R, R'}^{(q)}}c_{S}\binom{b}{s}E[U_{S}(\mathbb{G}_n)] -  E\big[U_R(\mathbb{G}_n)U_{R'}(\mathbb{G}_n) \big] .\end{align*}
Thus,  \allowdisplaybreaks 
\begin{align*}
    & \Cov\Big\{E_*\big[U_R(\mathbb{G}^*_b)\big], E_*\big[U_{R'}(\mathbb{G}^*_b)\big]\Big\}  +   E\Big\{\Cov_{*}\big[U_{R}(\mathbb{G}^{*}_{b}),U_{R'}(\mathbb{G}^{*}_{b})\big]\Big\}\\
    = & \binom{b}{r}^{-1}\binom{b}{r'}^{-1} \sum_{q=0}^{\min\{r,r'\}}\sum_{S\in \mathcal{S}_{R, R'}^{(q)}}c_{S}\binom{b}{s}E[U_{S}(\mathbb{G}_n)] -  E\Big[U_R(\mathbb{G}_n) \Big]E\Big[ U_{R'}(\mathbb{G}_n) \Big]
       \\  \stackrel{\text{\eqref{eq:U_Gbn_Gn}}}{=} & \binom{b}{r}^{-1}\binom{b}{r'}^{-1} \sum_{q=0}^{\min\{r,r'\}}\sum_{S\in \mathcal{S}_{R, R'}^{(q)}}c_{S}\binom{b}{s}E[U_{S}(\mathbb{G}^{h_n}_b)] -  E\Big[U_R(\mathbb{G}^{h_n}_b) \Big]E\Big[ U_{R'}(\mathbb{G}^{h_n}_b) \Big]\\
        \stackrel{\text{\eqref{eq:ref}}}{=} & \binom{b}{r}^{-1}\binom{b}{r'}^{-1} \sum_{q=0}^{\min\{r,r'\}}\sum_{S\in \mathcal{S}_{R, R'}^{(q)}}c_{S}E[X_{S}(\mathbb{G}^{h_n}_b)] -  E\Big[U_R(\mathbb{G}^{h_n}_b) \Big]E\Big[ U_{R'}(\mathbb{G}^{h_n}_b) \Big]
       \\
       = &E\Big[ \binom{b}{r}^{-1}\binom{b}{r'}^{-1} \sum_{q=0}^{\min\{r,r'\}}\sum_{S\in \mathcal{S}_{R, R'}^{(q)}}c_{S}X_{S}(\mathbb{G}^{h_n}_b)\Big] -  E\Big[U_R(\mathbb{G}^{h_n}_b) \Big]E\Big[ U_{R'}(\mathbb{G}^{h_n}_b) \Big]
            \\  \stackrel{\text{\eqref{eq:linerityMC}}}{=} & E\Big[ \binom{b}{r}^{-1}\binom{b}{r'}^{-1} X_R(\mathbb{G}^{h_n}_b) X_{R'}(\mathbb{G}^{h_n}_b) \Big] -  E\Big[U_R(\mathbb{G}^{h_n}_b) \Big]E\Big[ U_{R'}(\mathbb{G}^{h_n}_b) \Big]\\
            = & \Cov\Big[U_R(\mathbb{G}^{h_n}_b), U_{R'}(\mathbb{G}^{h_n}_b)\Big],
\end{align*}
which gives \eqref{eq:cov_subsample_conditional}.
\end{enumerate}
\end{proof}

\makeatletter
\renewcommand{\thelemma}{SE.\@arabic\c@lemma}
\makeatother
\subsection{Proof of Proposition \ref{prop:stats_prop_finite_U_network}}

Let $T$ denote a general finite population U-statistic. The following Hoeffding's decomposition represents $T$ as the sum of mutually uncorrelated U-statistics of increasing order:
\begin{equation}
\label{eq:hoef_fu}
T=\mE_*(T)+\sum_{1 \leqslant i \leqslant b} g_1\left(\mathbb{V}_i\right)+\sum_{1 \leqslant  i<j \leqslant b} g_2\left(\mathbb{V}_i, \mathbb{V}_j\right)+\cdots.
\end{equation}

\cite{bloznelis2001orthogonal,bloznelis2002edgeworth} showed that this decomposition is unique and orthogonal, which implies that $\{g_i\}_{i = 1}^b$ are centered and satisfy
\begin{equation}
\label{eq:kernal_g}
\mE_*[g_i(\mathbb{V}_1,\cdots,\mathbb{V}_i)\mid \mathbb{V}_1,\cdots, \mathbb{V}_{i-1}] = 0.
\end{equation}

Additionally, \cite{bloznelis2001orthogonal} (see their  Equation (2.3)) also showed that
\begin{equation}
\label{eq:g_H}
    g_1\left(\mathbb{V}_1\right)=\frac{n-1}{n-b} h_1\left(\mathbb{V}_1\right), 
\end{equation}
where $h_1\left(\mathbb{V}_1\right)=\mE_*[T- \mE_*(T) \mid \mathbb{V}_1]$.

Since the network $G$ can be treated as a population $G= \left\{\mathfrak{v}_1, \cdots, \mathfrak{v}_n\right\}$,  the subsampled network $\mathbb{G}_b^*$ is uniquely determined by a random sample $\left\{\mathbb{V}_1, \cdots, \mathbb{V}_b\right\}$. Thus, $U_{R}(\mathbb{G}^{*}_{b})$ is a statistic based on $\mathbb{G}_b^*$, and is invariant of its permutation. Thus, $U_{R}(\mathbb{G}^{*}_{b})$ is a finite population U-statistic by definition \ref{def:finiteU}.  We next present the following auxiliary lemma, whose proof is given in Section \ref{C1}.
\begin{lemma}
\label{cor:cov_subsample_v1}
For any motifs $R$ and $R'$,
\begin{equation}
\label{eq:cor_cov}
    \begin{aligned}
        \Cov_{\mathbb{V}_1*}\Big[\sum_{\mathcal{G}\in \mathcal{S}(\mathbb{G}^{\mathbb{V}_1*}_{r})}  X_R(\mathcal{G}),\sum_{\mathcal{G}\in \mathcal{S}(\mathbb{G}^{\mathbb{V}_1*}_{r'})}  X_{R'}(\mathcal{G})\Big] = \sum_{q=0}^{\min\{r,r'\}} \sum_{S\in \mathcal{S}_{R, R'}^{(q)}} c_S\dfrac{nq-rr'}{n^2}X_S(G)   
    \end{aligned}
\end{equation}
\end{lemma}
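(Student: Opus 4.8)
The plan is to collapse the stated quantity into an ordinary covariance over the single uniformly random first-stage node $\mathbb{V}_1$. By identity \eqref{eq:iden3}, for any fixed node $v$ the inner sum is deterministic: $\sum_{\mathcal{G}\in \mathcal{S}(\mathbb{G}^{v*}_{r})} X_R(\mathcal{G}) = \big|\{S: S \subset G, v \in V(S), S \cong R\}\big| =: f_R(v)$, and similarly for $R'$. Since the first-stage node $\mathbb{V}_1$ is drawn uniformly from the $n$ nodes of $G$, the left-hand side of \eqref{eq:cor_cov} is just the scalar covariance $\mE_{\mathbb{V}_1*}[f_R(\mathbb{V}_1) f_{R'}(\mathbb{V}_1)] - \mE_{\mathbb{V}_1*}[f_R(\mathbb{V}_1)]\,\mE_{\mathbb{V}_1*}[f_{R'}(\mathbb{V}_1)]$, and I would evaluate the two pieces separately.

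For the product of means I would apply identity \eqref{eq:iden5}, which gives $\mE_{\mathbb{V}_1*}[f_R(\mathbb{V}_1)] = n^{-1}\sum_{i=1}^n f_R(v_i) = r X_R(G)/n$ and likewise $\mE_{\mathbb{V}_1*}[f_{R'}(\mathbb{V}_1)] = r' X_{R'}(G)/n$. Their product is $r r' X_R(G) X_{R'}(G)/n^2$, which by the linearity identity \eqref{eq:linerityMC} becomes $\frac{r r'}{n^2}\sum_{q=0}^{\min\{r,r'\}}\sum_{S\in\mathcal{S}_{R,R'}^{(q)}} c_S X_S(G)$.

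The core of the argument is the mixed second moment. I would expand each factor as a sum of indicators over subgraphs containing $v_i$ and interchange the order of summation to obtain
\[
\sum_{i=1}^n f_R(v_i) f_{R'}(v_i) = \sum_{S_1 \subset G,\, S_1 \cong R}\ \sum_{S_2 \subset G,\, S_2 \cong R'} \big|V(S_1)\cap V(S_2)\big|.
\]
The key combinatorial fact is that for any such pair the number of common nodes $\big|V(S_1)\cap V(S_2)\big|$ is exactly the number of merged nodes $q$ of the union $S_1\cup S_2$, since $\big|V(S_1\cup S_2)\big| = r + r' - \big|V(S_1)\cap V(S_2)\big|$ while $|V(S)| = r + r' - q$ for $S \in \mathcal{S}_{R,R'}^{(q)}$. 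Grouping the ordered pairs by the isomorphism type of their union and using that each copy of $S$ in $G$ decomposes into exactly $c_S$ ordered pairs $(S_1,S_2)$ (definition \eqref{eq:defcs}), this sum equals $\sum_{q=0}^{\min\{r,r'\}}\sum_{S\in\mathcal{S}_{R,R'}^{(q)}} c_S\, q\, X_S(G)$, so that $\mE_{\mathbb{V}_1*}[f_R(\mathbb{V}_1) f_{R'}(\mathbb{V}_1)] = n^{-1}\sum_{q}\sum_{S} c_S q X_S(G)$.

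Subtracting the two contributions attaches to each term the coefficient $c_S\big(q/n - r r'/n^2\big) = c_S(nq - r r')/n^2$, which is exactly \eqref{eq:cor_cov} (the $q=0$ terms survive only through the product-of-means piece). The main obstacle, and the only genuinely nontrivial step, will be the bookkeeping in the third paragraph: one must recognize that the inner count of shared nodes coincides with the merge parameter $q$, and that the multiplicity $c_S$ is inherited verbatim from Lemma \ref{lem: linerityMC}, so that the weight $q$ is attached cleanly term by term. Everything else reduces to direct substitution of the identities in Lemma \ref{lem: usefuleqs} together with Lemma \ref{lem: linerityMC}.
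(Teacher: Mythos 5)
Your proof is correct, and it reaches the identity by a genuinely different route from the paper's. Both arguments share the same skeleton: via \eqref{eq:iden3} the anchored sum is the deterministic function $f_R(v)=\big|\{S: S \subset G,\, v \in V(S),\, S \cong R\}\big|$ of the uniformly random node $\mathbb{V}_1$, and the covariance splits into a mixed moment minus a product of means (your treatment of the means via \eqref{eq:iden5} coincides with the paper's up to trivial rearrangement). The difference is the mixed moment. The paper passes to vertex-deleted graphs: it rewrites $f_R(\mathbb{V}_1)=X_R(G)-X_R(G\setminus \mathbb{V}_1)$ (its \eqref{eq:motif_v1_ref}), reduces to $\Cov_{\mathbb{V}_1*}\big[X_R(G\setminus\mathbb{V}_1),X_{R'}(G\setminus\mathbb{V}_1)\big]$, applies the linearity identity \eqref{eq:linerityMC} inside each $G\setminus v_i$ as a black box, and sums $X_S(G\setminus v_i)$ over $i$ with \eqref{eq:iden5}; the merge parameter $q$ then emerges only at the end, by cancellation, as $r+r'-s$. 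You never delete vertices: you interchange sums to get $\sum_{i} f_R(v_i)f_{R'}(v_i)=\sum_{S_1\cong R}\sum_{S_2\cong R'}\big|V(S_1)\cap V(S_2)\big|$, note that the shared-node count of a pair is exactly the merge parameter $q$ of its union, and group pairs by the isomorphism type of the union using the multiplicity $c_S$ from \eqref{eq:defcs}, obtaining the weighted refinement $\sum_{S_1\cong R}\sum_{S_2\cong R'}\big|V(S_1)\cap V(S_2)\big| = \sum_{q}\sum_{S\in\mathcal{S}_{R,R'}^{(q)}}c_S\, q\, X_S(G)$ of Lemma \ref{lem: linerityMC}. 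What each buys: your route makes the provenance of the factor $q$ transparent (it literally counts shared nodes) and is self-contained at the level of pair counting, at the cost of re-running the double-counting argument that underlies Lemma \ref{lem: linerityMC} with weights rather than citing it; the paper's route reuses its stock identities verbatim but hides $q$ until the final algebraic step. Both attach the coefficient $c_S(nq-rr')/n^2$ term by term, so the proofs agree on every $\mathcal{S}_{R,R'}^{(q)}$ stratum, including your correct remark that the $q=0$ terms enter only through the product of means.
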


Now we start to prove Proposition \ref{prop:stats_prop_finite_U_network}.

\begin{proof}[of Proposition \ref{prop:stats_prop_finite_U_network}]
We start by proving the results in part \ref{theo:subsample_distribution_a}.
\begin{enumerate}[label={\ref{theo:subsample_distribution_a}.\roman*}]
    \item We first show  \eqref{eq:g1}. From \eqref{eq:g_H} we have  
  \begin{equation}
    \label{eq:lem5}
    \begin{aligned}
      h_{1,R}\left(\mathbb{V}_1\right)&=E_*\big[U_{R}(\mathbb{G}^{*}_{b})- E_* [U_{R}(\mathbb{G}^{*}_{b})] \mid \mathbb{V}_1\big] = E_*\big[U_{R}(\mathbb{G}^{*}_{b}) \mid \mathbb{V}_1\big] - E_*\big[U_{R}(\mathbb{G}^{*}_{b})\big], \\
g_{1,R}\left(\mathbb{V}_1\right)&=\frac{n-1}{n-b} h_{1,R}\left(\mathbb{V}_1\right).
    \end{aligned}
\end{equation}

We focus on $h_{1,R}\left(\mathbb{V}_1\right)$ first. Recall that $\mathbb{G}^{v*}_{b}$ denotes a random induced graph of $G$ with node $v$ and other $b-1$ random nodes drawn without replacement from $V(G)\setminus v$. Thus,
\begin{equation}
\label{countcondi}
    \begin{aligned}
       U_{R}(\mathbb{G}^{*}_{b})\mid (\mathbb{V}_1 = \mathfrak{v}_1) =&  U_{R}(\mathbb{G}^{v_1*}_{b})
        = \binom{b}{r}^{-1}X_R(\mathbb{G}^{v_1*}_{b}) 
       \\\stackrel{\text{\eqref{eq:ref}}}{=} &\binom{b}{r}^{-1} \Big|\{S: S \subset \mathbb{G}^{v_1*}_{b}, S \cong  R\}\Big|.
        \end{aligned} 
\end{equation}

Next, we partition $\{S: S \subset \mathbb{G}^{v_1*}_{b}, S \cong  R\}$ by $\{S: S \subset \mathbb{G}^{v_1*}_{b},v_1 \notin V(S), S \cong  R\}$ and $\{S: S \subset \mathbb{G}^{v_1*}_{b}, v_1 \in V(S), S \cong  R\}$, 
% Since
% \begin{equation*}
%      \{S: S \subset \mathbb{G}^{v_1*}_{b}, v_1 \notin V(S), S \cong  R\} \cap \{S: S \subset \mathbb{G}^{v_1*}_{b}, v_1 \in V(S), S \cong  R\} = \phi,
% \end{equation*}
which leads to the decomposition of $ U_{R}(\mathbb{G}^{*}_{b}) \mid (\mathbb{V}_1 = \mathfrak{v}_1)$ as
\begin{equation}
\label{eq:motif_v1_ref}
    \begin{aligned}
           U_{R}(\mathbb{G}^{*}_{b}) \mid \mathfrak{v}_1 =& \binom{b}{r}^{-1} \Big|\{S: S \subset \mathbb{G}^{v_1*}_{b}, S \cong  R\}\Big|
             \\=&\binom{b}{r}^{-1} \left\{\Big|\{S: S \subset \mathbb{G}^{v_1*}_{b},v_1 \notin V(S),, S \cong  R\}\Big| + \Big|\{S: S \subset \mathbb{G}^{v_1*}_{b}, v_1 \in V(S), S \cong  R\}\Big|\right\}
           \\=&\binom{b}{r}^{-1} \left\{\Big|\{S: S \subset \mathbb{G}^{v_1*}_{b}\setminus v_1, S \cong  R\}\Big| + \Big|\{S: S \subset \mathbb{G}^{v_1*}_{b}, v_1 \in V(S), S \cong  R\}\Big|\right\}
           \\ \stackrel{\text{\eqref{eq:ref}}}{=} &\binom{b}{r}^{-1}X_R(\mathbb{G}^{v_1*}_{b}\setminus v_1) +  \binom{b}{r}^{-1}\Big|\{S: S \subset \mathbb{G}^{v_1*}_{b}, v_1 \in V(S), S \cong  R\}\Big|
          \\ \stackrel{\text{\eqref{eq:iden2}}}{=} & \binom{b}{r}^{-1}X_R(\mathbb{G}^{v_1*}_{b}\setminus v_1)+  \binom{b}{r}^{-1}\sum_{\mathcal{G}\in \mathcal{S}(\mathbb{G}^{v_1**}_{b,r})} X_R(\mathcal{G})
          \\ := &~ \mathrm{I}  + \mathrm{II},
    \end{aligned}
\end{equation}
where $\mathbb{G}^{v_1*}_{b}\setminus v_1$ is a  a randomly induced subsampled graph based on $b -1$ nodes that are randomly drawn without replacement from $V(G) \setminus v_1$. Let $G' = G \setminus v_1$ be the network after removing node $v_1$ and all edges involving $v_1$ from $G$. Then $\mathbb{G}^{v_1*}_{b}\setminus v_1$ is essentially a randomly induced graph $\mathbb{G'}_{b-1}^*$. In addition, we use $E_{*\setminus v_1}$ to indicate probability calculations with respect to other $b-1$ random nodes without $v_1$.

\allowdisplaybreaks
In \eqref{eq:motif_v1_ref}, term I admits
\begin{equation*}
    \begin{aligned}
       E_{*\setminus v_1}\big[ \mathrm{I}\big]= &E_{*\setminus v_1}\Big[\binom{b}{r}^{-1}X_R(\mathbb{G}^{v_1*}_{b}\setminus v_1)\Big]
      =  \frac{b-r}{b}E_{*\setminus v_1}\Big[\binom{b-1}{r}^{-1}X_R(\mathbb{G'}_{b-1}^*)\Big]
     \\ =  &\frac{b-r}{b}E_{*\setminus v_1}\Big[U_{R}(\mathbb{G'}_{b-1}^*)\Big]
      \stackrel{\text{ \eqref{eq:exp_subsample}}}{=} \frac{b-r}{b}U_{R}(G') \\= &\frac{b-r}{b}U_{R}(G \setminus v_1) \stackrel{\text{ \eqref{eq:ref}}}{=} \frac{b-r}{b}\binom{n-1}{r}^{-1}X_{R}(G \setminus v_1)
      \\\stackrel{\text{ \eqref{eq:ref}}}{=} &\frac{b-r}{b}\binom{n-1}{r}^{-1}\Big(\Big|\{S: S \subset G, S \cong  R\}\Big| - \Big|\{S: S \subset G, v_1 \in V(S), S \cong  R\}\Big|\Big)\\
      = &\frac{b-r}{b}\binom{n-1}{r}^{-1}\Big(X_R(G ) -\Big|\{S: S \subset G, v_1 \in V(S), S \cong  R\}\Big|\Big)
      \\ \stackrel{\text{\eqref{eq:iden3}}}{=} &\frac{b-r}{b}\binom{n-1}{r}^{-1}\Big(X_R(G ) - \sum_{\mathcal{G}\in \mathcal{S}(\mathbb{G}^{v_1*}_{r})}  X_R(\mathcal{G})\Big).
    \end{aligned}
\end{equation*}
\allowdisplaybreaks

For term II, it holds that
\begin{equation*}
    \begin{aligned}
        E_{*\setminus v_1}\big(\mathrm{II}\big)= &E_{*\setminus v_1}\Big[ \binom{b}{r}^{-1}\sum_{\mathcal{G}\in \mathcal{S}(\mathbb{G}^{v_1**}_{b,r})} X_R(\mathcal{G})\Big]
        = \binom{b}{r}^{-1}\binom{n-1}{b-1}^{-1}\sum_{G_b^* \in \mathcal{S}(\mathbb{G}_{b}^{v_1*})}\Big(\sum_{\mathcal{G} \in \mathcal{S}(\mathbb{G}^{v_1**}_{b,r})} X_R(\mathcal{G})\Big) 
       \\\stackrel{\text{\eqref{eq:iden4}}}{=} &\binom{b}{r}^{-1}\binom{n-1}{b-1}^{-1}\binom{n-r}{b-r}\Big|\{S: S \subset G, v_1 \in V(S), S \cong R\}\Big|
       \\\stackrel{\text{\eqref{eq:iden3}}}{=} &\frac{r!(n-r)!}{b(n-1)!} \sum_{\mathcal{G} \in \mathcal{S}(\mathbb{G}^{v_1*}_{r})}  X_R( \mathcal{G}).
    \end{aligned}
\end{equation*}

\allowdisplaybreaks
Putting these two parts together, we have
\begin{equation*}
    \begin{aligned}
         &E_{*} \big[ U_{R}(\mathbb{G}^{*}_{b})\mid \mathfrak{v}_1\big] =E_{*\setminus v_1} \big[  U_{R}(\mathbb{G}^{*}_{b}) \mid \mathfrak{v}_1 \big] = E_{*\setminus v_1} \big[ \mathrm{I}  + \mathrm{II} \big] % =  E_{*\setminus v_1}\big[ I\big] +  E_{*\setminus v_1} [II]
          \\
          =&~ \frac{b-r}{b}\binom{n-1}{r}^{-1}\Big[X_R(G ) - \sum_{\mathcal{G}\in \mathcal{S}(\mathbb{G}^{v_1*}_{r})}  X_R(\mathcal{G})\Big] + \frac{r!(n-r!)}{b(n-1)!} \sum_{\mathcal{G}\in \mathcal{S}(\mathbb{G}^{v_1*}_{r})}  X_R(\mathcal{G})\\
          =&~ \frac{(b-r)n}{b(n-r)}\binom{n}{r}^{-1}X_R(G ) - \frac{(b-r)r!(n-r-1)}{b(n-1)!}\sum_{\mathcal{G}\in \mathcal{S}(\mathbb{G}^{v_1*}_{r})}  X_R(\mathcal{G}) \\&+  \frac{(n-r)r!(n-r-1)!}{b(n-1)!} \sum_{\mathcal{G}\in \mathcal{S}(\mathbb{G}^{v_1*}_{r})}  X_R(\mathcal{G})
          \\
          =&~\frac{(b-r)n}{b(n-r)}U_{R}(G )+\frac{r!(n-r-1)!(n-b)}{b(n-1)!}\sum_{\mathcal{G}\in \mathcal{S}(\mathbb{G}^{v_1*}_{r})}  X_R(\mathcal{G}).
    \end{aligned}
\end{equation*}

On the other hand, \eqref{eq:exp_subsample} indicates that
$E_*\big[U_{R}(\mathbb{G}^{*}_{b}) \big]  = U_{R}(G)$. Therefore,  
\begin{equation*}
    \begin{aligned}
h_{1,R}\left(\mathbb{V}_1\right)\stackrel{\text{\eqref{eq:lem5}}}{=}&E_*\big[U_{R}(\mathbb{G}^{*}_{b})- E_* [U_{R}(\mathbb{G}^{*}_{b})] \mid \mathbb{V}_1\big] = E_*\big[U_{R}(\mathbb{G}^{*}_{b})\mid \mathbb{V}_1\big] - E_*\big[U_{R}(\mathbb{G}^{*}_{b})\big]
        \\= &\frac{(b-r)n}{b(n-r)}U_{R}(G )+\frac{r!(n-r-1)!(n-b)}{b(n-1)!}\sum_{\mathcal{G}\in \mathcal{S}(\mathbb{G}^{\mathbb{V}_1*}_{r})}  X_R(\mathcal{G}) - U_{R}(G )
        \\= &\frac{r!(n-r-1)!(n-b)}{b(n-1)!}\sum_{\mathcal{G}\in \mathcal{S}(\mathbb{G}^{\mathbb{V}_1*}_{r})}  X_R(\mathcal{G}) - \frac{(n-b)r}{b(n-r)}U_{R}(G),
    \end{aligned}
\end{equation*}
 and 
$$g_{1,R}\left(\mathbb{V}_1\right)\stackrel{\text{\eqref{eq:lem5}}}{=}\frac{n-1}{n-b} h_{1,R}\left(\mathbb{V}_1\right) =\frac{r!(n-r-1)!}{b(n-2)!}\sum_{\mathcal{G}\in \mathcal{S}(\mathbb{G}^{\mathbb{V}_1*}_{r})}  X_R(\mathcal{G}) - \frac{(n-1)r}{b(n-r)}U_{R}(G).$$

The first term satisfies 
\begin{equation*}
    \begin{aligned}
        &\frac{r!(n-r-1)!}{b(n-2)!}\sum_{\mathcal{G}\in \mathcal{S}(\mathbb{G}^{\mathbb{V}_1*}_{r})}  X_{R}(\mathcal{G})
        \\\stackrel{\text{ \eqref{eq:iden3}}}{=} &~\frac{r!(n-r-1)!}{b(n-2)!}\Big[\big|\{S: S \subset G, \mathbb{V}_1 \in V(S), S \cong R\}\big|\Big]
        \\=&~\frac{r!(n-r-1)!}{b(n-2)!}\Big[\big|\{S: S \subset G,  S \cong R\}\big| - \big|\{S: S \subset G, \mathbb{V}_1 \notin V(S), S \cong R\}\big|\Big]
          \\=&~\frac{r!(n-r-1)!}{b(n-2)!}\Big[X_R(G) - X_R(G\setminus \mathbb{V}_1)\Big]
        \\=&~\frac{r!(n-r-1)!}{b(n-2)!}X_R(G) - \frac{r!(n-r-1)!}{b(n-2)!}X_R(G\setminus \mathbb{V}_1).
           \end{aligned}
\end{equation*}

Consequently,
\begin{equation}
\label{eq:assump3ref1}
    \begin{aligned}
        &g_{1,R}(\mathbb{V}_1) = \frac{r!(n-r-1)!}{b(n-2)!}\sum_{\mathcal{G}\in \mathcal{S}(\mathbb{G}^{\mathbb{V}_1*}_{r})}  X_{R}(\mathcal{G}) -  \frac{(n-1)r}{b(n-r)}U_{R}(G)
        \\ = &~\frac{r!(n-r-1)!}{b(n-2)!}X_R(G) - \frac{r!(n-r-1)!}{b(n-2)!}X_R(G\setminus \mathbb{V}_1)  -   \frac{(n-1)r}{b(n-r)}U_{R}(G)
        \\ = &~\frac{n(n-1)}{b(n-r)}U_{R}(G) -  \frac{(n-1)r}{b(n-r)}U_{R}(G)   - \frac{(n-1)r!(n-r-1)!}{b(n-1)!}X_R(G\setminus \mathbb{V}_1) 
          \\ = &~\frac{(n-1)}{b}[U_R(G) - U_R(G\setminus \mathbb{V}_1)],
       \end{aligned}
\end{equation}
which gives \eqref{eq:g1}. 

We now show $g_{1,R}\left(\mathbb{V}_1\right)$ has mean zero.  %Notice that  $\mathbb{V}_1$ takes its value in $\left\{\mathfrak{v}_1, \cdots, \mathfrak{v}_{n}\right\}$, and we use $E_{\mathbb{V}_1*}$ to indicate probability calculations with respect to random $\mathbb{V}_1$. 
\begin{equation}
    \label{eq:expg1}
    \begin{aligned}
E_{*}\big[g_{1,R}\left(\mathbb{V}_1\right)\big] =& E_{\mathbb{V}_1*}\big[g_{1,R}\left(\mathbb{V}_1\right)\big] \\=& E_{\mathbb{V}_1*}\big[\frac{r!(n-r-1)!}{b(n-2)!}\sum_{\mathcal{G}\in \mathcal{S}(\mathbb{G}^{\mathbb{V}_1*}_{r})}  X_R(\mathcal{G}) - \frac{(n-1)r}{b(n-r)}U_{R}(G)\big]\\
       =& \frac{r!(n-r-1)!}{b(n-2)!}E_{\mathbb{V}_1*}\big[\sum_{\mathcal{G}\in \mathcal{S}(\mathbb{G}^{\mathbb{V}_1*}_{r})}  X_R(\mathcal{G})\big] - \frac{(n-1)r}{b(n-r)}U_{R}(G)\\
       =& \frac{r!(n-r-1)!}{nb(n-2)!}\sum\limits_{i=1}^{n}\big[\sum_{\mathcal{G}\in \mathcal{S}(\mathbb{G}^{v_i*}_{r})}  X_R(\mathcal{G})\big] - \frac{(n-1)r}{b(n-r)}U_{R}(G)\\
       \stackrel{\text{\eqref{eq:iden5}}}{=}& \frac{r!(n-r-1)!}{nb(n-2)!}rX_R(G)- \frac{(n-1)r}{b(n-r)}U_{R}(G) \\=& \frac{r!(n-r-1)!}{nb(n-2)!}\frac{rn!}{r!(n-r)!}U_{R}(G) - \frac{(n-1)r}{b(n-r)}U_{R}(G)\\
       =& \frac{(n-1)r}{b(n-r)}U_{R}(G) - \frac{(n-1)r}{b(n-r)}U_{R}(G) = 0.
    \end{aligned}
\end{equation}

\item Now we proceed to prove \eqref{eq:varg1}. We use $\var_{\mathbb{V}_1*}$ and $\Cov_{\mathbb{V}_1*}$ to indicate probability calculations with respect to random $\mathbb{V}_1$. Because the randomness in  $\Cov_{*}[ g_{1,R}(\mathbb{V}_1) , g_{1,R'}(\mathbb{V}_1)]$ is from the random node $\mathbb{V}_1$,  we have
% \begin{equation}
% \label{eq:vartovarv1}
   $\Cov_{*}\big[ g_{1,R}(\mathbb{V}_1) , g_{1,R'}(\mathbb{V}_1)\big] = \Cov_{\mathbb{V}_1*}\big[ g_{1,R}(\mathbb{V}_1) , g_{1,R'}(\mathbb{V}_1)\big]$.
% \end{equation} 
Thanks to Lemma \ref{cor:cov_subsample_v1}, we have \allowdisplaybreaks
\begin{align}
\label{eq:cov_v1_g1rr}
   \nonumber  &  \Cov_{\mathbb{V}_1*}\Big[ g_{1,R}(\mathbb{V}_1) , g_{1,R'}(\mathbb{V}_1)\Big] \\ \stackrel{\text{\eqref{eq:g1}}}{=} &\Cov_{\mathbb{V}_1*}\Big[\frac{r!(n-r-1)!}{b(n-2)!}\sum_{\mathcal{G}\in \mathcal{S}(\mathbb{G}^{\mathbb{V}_1*}_{r})}  X_R(\mathcal{G}) , \frac{r'!(n-r'-1)!}{b(n-2)!}\sum_{\mathcal{G}\in \mathcal{S}(\mathbb{G}^{\mathbb{V}_1*}_{r'})}  X_{R'}(\mathcal{G})\Big]\\
      \nonumber    = & \frac{r!(n-r-1)!}{b(n-2)!} \frac{r'!(n-r'-1)!}{b(n-2)!}\Cov_{\mathbb{V}_1*}\Big[\sum_{\mathcal{G}\in \mathcal{S}(\mathbb{G}^{\mathbb{V}_1*}_{r})}  X_R(\mathcal{G}) ,\sum_{\mathcal{G}\in \mathcal{S}(\mathbb{G}^{\mathbb{V}_1*}_{r'})}  X_{R'}(\mathcal{G})\Big]\\
\stackrel{\text{\eqref{eq:cor_cov}}}{=} &\frac{r!(n-r-1)!}{b(n-2)!} \frac{r'!(n-r'-1)!}{b(n-2)!}\sum_{q=0}^{\min\{r,r'\}} \sum_{S\in \mathcal{S}_{R, R'}^{(q)}} c_S\dfrac{nq-rr'}{n^2}X_S(G).
    \end{align} 

Thus, \eqref{eq:varg1} follows. As a special case, $\var_{\mathbb{V}_1*}\left[g_{1,R}\left(\mathbb{V}_1\right)\right]  = \{[r !(n-r-1) !]/[b(n-2) !]\}^2 \Big[\sum_{q=0}^{r}\sum_{S\in S_{R, R}^{(k)}}c_S (nk-r^2)n^{-2}X_S(G)\Big]$.
% \begin{equation*}
%     \begin{aligned}
  
%                    \end{aligned}   
% \end{equation*}      

% \item The result in \eqref{eq:svarg1} is a generalization of \eqref{eq:sg1}
% \begin{equation*}
% \begin{aligned}
%       g_{1,R,R'}(\mathbb{V}_1) &=  g_{1,R}(\mathbb{V}_1) +  g_{1,R'}(\mathbb{V}_1).
% \end{aligned}
% \end{equation*}
% Consequently, we have
% \begin{equation*}
% \begin{aligned}
%         &var_{\mathbb{V}_1*}\big[ g_{1,R,R'}(\mathbb{V}_1)\big] =var_{\mathbb{V}_1*}\big[ g_{1,R}(\mathbb{V}_1) +  g_{1,R'}(\mathbb{V}_1)\big] \\ =&var_{\mathbb{V}_1*}\big[ g_{1,R}(\mathbb{V}_1)\big] +var_{\mathbb{V}_1*}\big[ g_{1,R'}(\mathbb{V}_1)\big] + 2\Cov_{\mathbb{V}_1*}\big[ g_{1,R}(\mathbb{V}_1) , g_{1,R'}(\mathbb{V}_1)\big],
% \end{aligned}
% \end{equation*}

\item Now we continue to prove \eqref{eq:var_sumg1}. Recall that the subscript in $\mathbb{V}_i*$ indicates the randomness from  random node $\mathbb{V}_i$, and the subscripts in $\mathbb{V}_i*$ and $\mathbb{V}_j*$ indicate that the randomness are from random nodes $\mathbb{V}_i$ and $\mathbb{V}_j$. Notice that $(n-1)rU_{R}(G)/b(n-r)$ and $(n-1)r'U_{R'}(G)/b(n-r')$ are two constants when $G$ is  given. We first decompose the covariance by \allowdisplaybreaks
\begin{align}
\label{eq:break_cov_sum_g}
   \nonumber 
    & \Cov_*\Big[\sum_{1\leqslant i \leqslant b} g_{1,R}(\mathbb{V}_i),
      \sum_{1\leqslant i \leqslant b} g_{1,R'}(\mathbb{V}_i)\Big]\\
 \nonumber   =&\sum_{i=1}^{b} \sum_{j=1}^{b}  \Cov_{\mathbb{V}_i,\mathbb{V}_j*}\Big[g_{1,R}(\mathbb{V}_i),
     g_{1,R'}(\mathbb{V}_j)\Big] \\
    \nonumber   \stackrel{\text{\eqref{eq:g1}}}{=} &\dfrac{r !(n-r-1)!}{b(n-2)!}\dfrac{r'!(n-r'-1)!}{b(n-2)!} \sum_{i=1}^{b} \sum_{j=1}^{b} \Cov_{\mathbb{V}_i,\mathbb{V}_j*}\Big[\sum_{\mathcal{G}\in \mathcal{S}(\mathbb{G}^{\mathbb{V}_i*}_{r})}  X_R(\mathcal{G}),\sum_{\mathcal{G}\in \mathcal{S}(\mathbb{G}^{\mathbb{V}_j*}_{r'})}  X_{R'}(\mathcal{G}) \Big]
  \\    \nonumber  = & \dfrac{r !(n-r-1)!}{b(n-2)!}\dfrac{r'!(n-r'-1)!}{b(n-2)!} \sum_{i=1}^{b} \bigg\{\Cov_{\mathbb{V}_i*}\Big[\sum_{\mathcal{G}\in \mathcal{S}(\mathbb{G}^{\mathbb{V}_i*}_{r})}  X_R(\mathcal{G}),\sum_{\mathcal{G}\in \mathcal{S}(\mathbb{G}^{\mathbb{V}_i*}_{r'})}  X_{R'}(\mathcal{G}) \Big] 
   \nonumber    \\
      \nonumber  &+ \sum_{j=1,j\neq i}^{b} \Cov_{\mathbb{V}_i,\mathbb{V}_j*}\Big[\sum_{\mathcal{G}\in \mathcal{S}(\mathbb{G}^{\mathbb{V}_i*}_{r})}  X_R(\mathcal{G}),\sum_{\mathcal{G}\in \mathcal{S}(\mathbb{G}^{\mathbb{V}_j*}_{r'})}  X_{R'}(\mathcal{G}) \Big]\bigg\} \\
    := & \dfrac{r !(n-r-1)!}{b(n-2)!}\dfrac{r'!(n-r'-1)!}{b(n-2)!} \sum_{i=1}^{b} \Big(\mathrm{I}(i) + \mathrm{II}(i)\Big).
  \end{align} 
 
Furthermore, \allowdisplaybreaks
\begin{equation}
\label{eq:cov_sum_g_vi}
    \begin{aligned}
       \mathrm{I}(i) &= \Cov_{\mathbb{V}_i*}\Big[\sum_{\mathcal{G}\in \mathcal{S}(\mathbb{G}^{\mathbb{V}_i*}_{r})}  X_R(\mathcal{G}),\sum_{\mathcal{G}\in \mathcal{S}(\mathbb{G}^{\mathbb{V}_i*}_{r'})}  X_{R'}(\mathcal{G}) \Big]
        \\&=  E_{\mathbb{V}_i*}\Big[\sum_{\mathcal{G}\in \mathcal{S}(\mathbb{G}^{\mathbb{V}_i*}_{r})}  X_R(\mathcal{G})\sum_{\mathcal{G}\in \mathcal{S}(\mathbb{G}^{\mathbb{V}_i*}_{r'})}  X_{R'}(\mathcal{G})\Big] - E_{\mathbb{V}_i*}\Big[\sum_{\mathcal{G}\in \mathcal{S}(\mathbb{G}^{\mathbb{V}_i*}_{r})}  X_R(\mathcal{G})\Big]E_{\mathbb{V}_i*}\Big[\sum_{\mathcal{G}\in \mathcal{S}(\mathbb{G}^{\mathbb{V}_i*}_{r'})}  X_{R'}(\mathcal{G})\Big]\\ 
     &= \dfrac{1}{n} \sum_{a=1}^{n} \Big[\sum_{\mathcal{G}\in \mathcal{S}(\mathbb{G}^{v_a*}_{r})}  X_R(\mathcal{G})\sum_{\mathcal{G}\in \mathcal{S}(\mathbb{G}^{v_a*}_{r'})}  X_{R'}(\mathcal{G})\Big] - \Big[\dfrac{1}{n} \sum_{a=1}^{n}\sum_{\mathcal{G}\in \mathcal{S}(\mathbb{G}^{v_a*}_{r})}  X_R(\mathcal{G})\Big]\Big[\dfrac{1}{n} \sum_{a=1}^{n}\sum_{\mathcal{G}\in \mathcal{S}(\mathbb{G}^{v_a*}_{r'})}  X_{R'}(\mathcal{G})\Big]\\ 
    &\hspace{-0.08cm}\stackrel{\text{\eqref{eq:iden5}}}{=}  \dfrac{1}{n} \sum_{a=1}^{n} \Big[\sum_{\mathcal{G}\in \mathcal{S}(\mathbb{G}^{v_a*}_{r})}  X_R(\mathcal{G})\sum_{\mathcal{G}\in \mathcal{S}(\mathbb{G}^{v_a*}_{r'})}  X_{R'}(\mathcal{G})\Big] - \frac{ rX_R(G)}{n}\frac{ r'X_{R'}(G)}{n}.
    \end{aligned}
\end{equation} For part II$(i)$, first we have, 
\begin{align}\allowdisplaybreaks
\label{eq:cov_sum_x_vi_vj}
\nonumber &\Cov_{\mathbb{V}_i,\mathbb{V}_j*}\Big[\sum_{\mathcal{G}\in \mathcal{S}(\mathbb{G}^{\mathbb{V}_i*}_{r})}  X_R(\mathcal{G}),\sum_{\mathcal{G}\in \mathcal{S}(\mathbb{G}^{\mathbb{V}_j*}_{r'})}  X_{R'}(\mathcal{G}) \Big]
  \\ \nonumber =&  E_{\mathbb{V}_i,\mathbb{V}_j*}\Big[\sum_{\mathcal{G}\in \mathcal{S}(\mathbb{G}^{\mathbb{V}_i*}_{r})}  X_R(\mathcal{G})\sum_{\mathcal{G}\in \mathcal{S}(\mathbb{G}^{\mathbb{V}_j*}_{r'})}  X_{R'}(\mathcal{G})\Big] - E_{\mathbb{V}_i*}\Big[\sum_{\mathcal{G}\in \mathcal{S}(\mathbb{G}^{\mathbb{V}_i*}_{r})}  X_R(\mathcal{G})\Big]E_{\mathbb{V}_j*}\Big[\sum_{\mathcal{G}\in \mathcal{S}(\mathbb{G}^{\mathbb{V}_j*}_{r'})}  X_{R'}(\mathcal{G})\Big]
    \\ \nonumber =&  E_{\mathbb{V}_i,\mathbb{V}_j*}\Big[\sum_{\mathcal{G}\in \mathcal{S}(\mathbb{G}^{\mathbb{V}_i*}_{r})}  X_R(\mathcal{G})\sum_{\mathcal{G}\in \mathcal{S}(\mathbb{G}^{\mathbb{V}_j*}_{r'})}  X_{R'}(\mathcal{G})\Big] -  \Big[\dfrac{1}{n} \sum_{k=1}^{n}\sum_{\mathcal{G}\in \mathcal{S}(\mathbb{G}^{v_k*}_{r})}  X_R(\mathcal{G})\Big]\Big[\dfrac{1}{n} \sum_{k=1}^{n}\sum_{\mathcal{G}\in \mathcal{S}(\mathbb{G}^{v_k*}_{r'})}  X_{R'}(\mathcal{G})\Big] 
    \\ \nonumber & \hspace{-0.43cm}\stackrel{\text{\eqref{eq:iden5}}}{=} \dfrac{1}{n(n-1)}\sum_{a = 1}^{n} \sum_{b = 1, b\neq a}^{n} \Big[\sum_{\mathcal{G}\in \mathcal{S}(\mathbb{G}^{v_a*}_{r})}  X_R(\mathcal{G})\sum_{\mathcal{G}\in \mathcal{S}(\mathbb{G}^{v_b*}_{r'})}  X_{R'}(\mathcal{G})\Big]  - \dfrac{rX_R(G) r'X_{R'}(G) }{n^2}\\ \nonumber 
=&  \dfrac{1}{n(n-1)}\sum_{a = 1}^{n} \sum_{\mathcal{G}\in \mathcal{S}(\mathbb{G}^{v_a*}_{r})}  X_R(\mathcal{G}) \Big[\sum_{b = 1, b\neq a}^{n}\sum_{\mathcal{G}\in \mathcal{S}(\mathbb{G}^{v_b*}_{r'})}  X_{R'}(\mathcal{G})\Big] - \dfrac{rX_R(G) r'X_{R'}(G) }{n^2}\\ \nonumber 
=&  \dfrac{1}{n(n-1)}\sum_{a = 1}^{n} \sum_{\mathcal{G}\in \mathcal{S}(\mathbb{G}^{v_a*}_{r})}  X_R(\mathcal{G}) \Big[\sum_{b = 1}^{n}\sum_{\mathcal{G}\in \mathcal{S}(\mathbb{G}^{v_b*}_{r'})}  X_{R'}(\mathcal{G}) - \sum_{\mathcal{G}\in \mathcal{S}(\mathbb{G}^{v_a*}_{r'})}  X_{R'}(\mathcal{G})\Big]  - \dfrac{rX_R(G) r'X_{R'}(G) }{n^2}\\ \nonumber 
=&\dfrac{1}{n(n-1)}\sum_{a = 1}^{n} \sum_{\mathcal{G}\in \mathcal{S}(\mathbb{G}^{v_a*}_{r})}  X_R(\mathcal{G}) \Big[ rX(R',G) - \sum_{\mathcal{G}\in \mathcal{S}(\mathbb{G}^{v_a*}_{r'})}  X_{R'}(\mathcal{G}) \Big]  - \dfrac{rX_R(G) r'X_{R'}(G) }{n^2}\\ \nonumber 
=&\dfrac{ rX_{R'}(G) }{n(n-1)}\sum_{a = 1}^{n} \sum_{\mathcal{G}\in \mathcal{S}(\mathbb{G}^{v_a*}_{r})}  X_R(\mathcal{G})- \dfrac{1}{n(n-1)}\sum_{a = 1}^{n} \sum_{\mathcal{G}\in \mathcal{S}(\mathbb{G}^{v_a*}_{r})}  X_R(\mathcal{G})\sum_{\mathcal{G}\in \mathcal{S}(\mathbb{G}^{v_a*}_{r'})}  X_{R'}(\mathcal{G})  - \dfrac{rX_R(G) r'X_{R'}(G) }{n^2}\\
&\hspace{-0.43cm}\stackrel{\text{\eqref{eq:iden5}}}{=}\dfrac{rX_R(G) r'X_{R'}(G) }{n^2(n-1)}- \dfrac{1}{n(n-1)}\sum_{a = 1}^{n} \sum_{\mathcal{G}\in \mathcal{S}(\mathbb{G}^{v_a*}_{r})}  X_R(\mathcal{G})\sum_{\mathcal{G}\in \mathcal{S}(\mathbb{G}^{v_a*}_{r'})}  X_{R'}(\mathcal{G}).
\end{align}

As a result, we have \allowdisplaybreaks
\begin{align*} 
          \mathrm{II}(i) &= \sum_{j=1,j\neq i}^{b} \Cov_{\mathbb{V}_i,\mathbb{V}_j*}\Big[\sum_{\mathcal{G}\in \mathcal{S}(\mathbb{G}^{\mathbb{V}_i*}_{r})}  X_R(\mathcal{G}),\sum_{\mathcal{G}\in \mathcal{S}(\mathbb{G}^{\mathbb{V}_j*}_{r'})}  X_{R'}(\mathcal{G}) \Big]\\&\hspace{-0.08cm}\stackrel{\text{\eqref{eq:cov_sum_x_vi_vj}}}{=}\sum_{j=1,j\neq i}^{b} \Big\{\dfrac{rX_R(G) r'X(R',G) }{n^2(n-1)}- \dfrac{1}{n(n-1)}\sum_{a = 1}^{n} \sum_{\mathcal{G}\in \mathcal{S}(\mathbb{G}^{v_a*}_{r})}  X_R(\mathcal{G})\sum_{\mathcal{G}\in \mathcal{S}(\mathbb{G}^{v_a*}_{r'})}  X_{R'}(\mathcal{G})\Big\} \\&=
          (b-1)\Big\{\dfrac{rX_R(G) r'X(R',G) }{n^2(n-1)}- \dfrac{1}{n(n-1)}\sum_{a = 1}^{n} \sum_{\mathcal{G}\in \mathcal{S}(\mathbb{G}^{v_a*}_{r})}  X_R(\mathcal{G})\sum_{\mathcal{G}\in \mathcal{S}(\mathbb{G}^{v_a*}_{r'})}  X_{R'}(\mathcal{G})\Big\}.
\end{align*}
By adding I$(i)$ and II$(i)$ following previous results, we have
\begin{align*}
     \mathrm{I}(i) + \mathrm{II}(i)
    &= \dfrac{n-b}{n-1} \Big\{\dfrac{1}{n}\sum_{a = 1}^{n} \sum_{\mathcal{G}\in \mathcal{S}(\mathbb{G}^{v_a*}_{r})}  X_R(\mathcal{G})\sum_{\mathcal{G}\in \mathcal{S}(\mathbb{G}^{v_a*}_{r'})}  X_{R'}(\mathcal{G}) - \dfrac{rX_R(G) r'X_{R'}(G) }{n^2}\Big\}\\
    &\stackrel{\text{\eqref{eq:cov_sum_g_vi}}}{=} \dfrac{n-b}{n-1} \Cov\Big[\sum_{\mathcal{G}\in \mathcal{S}(\mathbb{G}^{\mathbb{V}_i*}_{r})}  X_R(\mathcal{G}),\sum_{\mathcal{G}\in \mathcal{S}(\mathbb{G}^{\mathbb{V}_i*}_{r'})}  X_{R'}(\mathcal{G})\Big].
\end{align*}
\allowdisplaybreaks
Thus,
\begin{align}\label{eq:cov_sum_g1_more}
 & \Cov_*\Big[\sum_{1\leqslant i \leqslant b} g_{1,R}(\mathbb{V}_i),
      \sum_{1\leqslant i \leqslant b} g_{1,R'}(\mathbb{V}_i)\Big] \\
   \nonumber   \stackrel{\text{\eqref{eq:break_cov_sum_g}}}{=} & \dfrac{r !(n-r-1)!}{b(n-2)!}\dfrac{r'!(n-r'-1)!}{b(n-2)!} \sum_{i=1}^{b} \Big(\mathrm{I}(i) + \mathrm{II}(i)\Big)
      \\ \nonumber  = &
      \dfrac{n-b}{n-1}\dfrac{r !(n-r-1)!}{b(n-2)!}\dfrac{r'!(n-r'-1)!}{b(n-2)!} \sum_{i=1}^{b} \Cov\Big[\sum_{\mathcal{G}\in \mathcal{S}(\mathbb{G}^{\mathbb{V}_i*}_{r})}  X_R(\mathcal{G}),\sum_{\mathcal{G}\in \mathcal{S}(\mathbb{G}^{\mathbb{V}_i*}_{r'})}  X_{R'}(\mathcal{G}) \Big]\\ \nonumber  = &\dfrac{b(n-b)}{n-1}\dfrac{r !(n-r-1)!}{b(n-2)!}\dfrac{r'!(n-r'-1)!}{b(n-2)!} \Cov\Big[\sum_{\mathcal{G}\in \mathcal{S}(\mathbb{G}^{\mathbb{V}_1*}_{r})}  X_R(\mathcal{G}),\sum_{\mathcal{G}\in \mathcal{S}(\mathbb{G}^{\mathbb{V}_1*}_{r'})}  X_{R'}(\mathcal{G}) \Big]\\  \nonumber 
    \stackrel{\text{\eqref{eq:g1}}}{=} & \dfrac{b(n-b)}{n-1}\Cov_*\Big[g_{1,R}(\mathbb{V}_1),
    g_{1,R'}(\mathbb{V}_1)\Big], 
  \end{align}
which gives \eqref{eq:var_sumg1}.  
\item To show \eqref{eq:lim_var_g1}, we start by decomposing the variance: \allowdisplaybreaks
\begin{align*} 
   & \var_*\sum_{1 \leqslant i \leqslant b} g_{1,R}\left(\mathbb{V}_i\right)   \stackrel{\text{\eqref{eq:var_sumg1}}}{=}  \frac{b(n-b)}{(n-1)}\var_*\Big[g_{1,R}\left(\mathbb{V}_1\right)\Big]
   \\\stackrel{\text{\eqref{eq:varg1}}}{=} &\frac{b(n-b)}{(n-1)}\left(\frac{r !(n-r-1) !}{b(n-2) !}\right)^2 \Big[\sum_{q=0}^{r}\sum_{S\in S_{R,R}^{(q)}}c_S\dfrac{nq-r^2}{n^2}X_S(G)\Big]\\
     = &  \frac{(n-b)}{b(n-1)}\left(\frac{r !(n-r-1) !}{(n-2) !}\right)^2 \Big\{\Big[\sum_{q=0}^{r}\sum_{S\in S_{R,R}^{(q)}}c_S\dfrac{q}{n}X_S(G)\Big] - \Big[\sum_{q=0}^{r}\sum_{S\in S_{R,R}^{(q)}}c_S\dfrac{r^2}{n^2}X_S(G)\Big]\Big\}\\
     = &  \frac{(n-b)}{b(n-1)}\left(\frac{r !(n-r-1) !}{(n-2) !}\right)^2 \Big[\sum_{q=0}^{r}\sum_{S\in\mathcal{S}_{R,R}^{(q)}}c_S\dfrac{q}{n}X_S(G)\Big] \\&-  \frac{(n-b)}{b(n-1)}\left(\frac{r !(n-r-1) !}{(n-2) !}\right)^2\Big[\sum_{q=0}^{r}\sum_{S\in\mathcal{S}_{R,R}^{(q)}}c_S\dfrac{r^2}{n^2}X_S(G)\Big]
     \\ = & \hspace{0.3cm} \mathrm{I} - \mathrm{II}.
                   \end{align*}    
For term I, we have 
\begin{equation*}
    \begin{aligned}
    &\frac{(n-b)}{b(n-1)}\left(\frac{r !(n-r-1) !}{(n-2) !}\right)^2  \Big[\sum_{q=0}^{r}\sum_{S\in\mathcal{S}_{R,R}^{(q)}}c_S\dfrac{q}{n}X_S(G)\Big]\\
      =  & \frac{(n-b)}{b(n-1)}\left(\frac{r(n-1)}{(n-r)} \binom{n-1}{r-1}^{-1}\right)^2\Big[\sum_{q=0}^{r}\sum_{S\in \mathcal{S}_{R,R}^{(q)}}\dfrac{c_Sq}{n}
   \binom{n}{2r-q}U_{S}(G)
   \Big] \\
    =  & \frac{(n-b)}{b(n-1)}\left(\frac{r(n-1)}{(n-r)}\right)^2\Big[\sum_{q=0}^{r}\sum_{S\in \mathcal{S}_{R,R}^{(q)}}\dfrac{c_Sq}{n}
    \binom{n-1}{r-1}^{-2}\binom{n}{2r-q}U_{S}(G)
    \Big] \\
    = & \frac{(n-b)}{b(n-1)}\left(\frac{r(n-1)}{(n-r)}\right)^2 \Big[\sum_{q=1}^{r}\sum_{S\in \mathcal{S}_{R,R}^{(q)}} 
   \dfrac{c_S q}{n} \left(\frac{(r-1) !(n-r) !}{(n-1) !}\right)^2 \dfrac{n!}{(2r-q)!(n-2r+q)} U_{S}(G)\Big] \\
    = & \frac{(n-b)}{b(n-1)} \left(\frac{r(n-1)}{(n-r)}\right)^2 \Big[
        \sum_{q=1}^{r}\sum_{S\in \mathcal{S}_{R,R}^{(q)}} \frac{q[(r-1) !]^2 c_S}{(2 r-q) !} \frac{(n-r)(n-r-1) \cdots(n-2 r+q+1)}{(n-1)(n-2) \cdots(n-r+1)} U_{S}(G) \Big].
\end{aligned}
\end{equation*}

Notice that  $ (n-r)(n-r-1) \cdots(n-2 r+q+1)/(n-1)(n-2) \cdots(n-r+1)$ has $r-q$ items in the numerator and $r-1$ items in the denominator, and  $U_{S}(G) < 1$ for all $S$. Thus,
\begin{equation}
\label{eq:limI}
    \begin{aligned}
     &\lim\limits_{b,n\to\infty}   \frac{(n-b)}{b(n-1)}\Big[\frac{r !(n-r-1) !}{(n-2) !}\Big]^2  \Big[\sum_{q=0}^{r}\sum_{S\in\mathcal{S}_{R,R}^{(q)}}c_S\dfrac{q}{n}X_S(G)\Big]\\
     =& \lim\limits_{b,n\to\infty}  \frac{(n-b)}{b(n-1)} \Big[\frac{r(n-1)}{(n-r)}\Big]^2 
        \sum_{q=1}^{r}\sum_{S\in \mathcal{S}_{R,R}^{(q)}} \frac{q[(r-1) !]^2 c_S}{(2 r-q) !} \frac{(n-r)(n-r-1) \cdots(n-2 r+q+1)}{(n-1)(n-2) \cdots(n-r+1)} U_{S}(G) \\
        =&  \lim\limits_{b,n\to\infty} \frac{1}{b}\sum_{S \in \mathcal{S}_{R,R}^{(1)}} \frac{c_S(r !)^2 }{(2 r-1) !}  = 0.
    \end{aligned}
\end{equation}

For term II,
\begin{equation}
\label{eq:limII}
    \begin{aligned}
         &\frac{(n-b)}{b(n-1)}\left(\frac{r !(n-r-1) !}{(n-2) !}\right)^2\Big[\sum_{q=0}^{r}\sum_{S\in\mathcal{S}_{R,R}^{(q)}}c_S\dfrac{r^2}{n^2}X_S(G)\Big]\\
    = & \frac{(n-b)}{b(n-1)}\left(\frac{r !(n-r-1) !}{(n-2) !}\frac{r}{n} \right)^2\Big[\sum_{q=0}^{r}\sum_{S\in\mathcal{S}_{R,R}^{(q)}}c_SX_S(G)\Big] \stackrel{\text{\eqref{eq:linerityMC}}}{=} \left(\frac{r !(n-r-1) !}{b(n-2) !}\frac{r}{n} \right)^2X_R(G)X_R(G)\\
    =  & \frac{(n-b)}{b(n-1)}\left(\frac{r(n-1)}{(n-r)}\right)^2  \binom{n}{r}^{-2}X_R(G)X_R(G) =  \frac{(n-b)}{b(n-1)}\left(\frac{r(n-1)}{(n-r)}U_{R}(G)\right)^2\\
    \to &~ 0
    \end{aligned}
\end{equation}
as  $n,b \to \infty$.  Therefore, $\lim\limits_{b,n\to\infty}  \var_*\left[\sum_{1 \leqslant i \leqslant b}g_{1,R}\left(\mathbb{V}_1\right)\right] = \lim\limits_{b,n\to\infty}    \mathrm{I}  - \mathrm{II}  = 0$, which gives \eqref{eq:lim_var_g1}.
\end{enumerate}

Next, we continue to prove part \ref{theo:subsample_distribution_b} based on the results in part \ref{theo:subsample_distribution_a}.
\begin{enumerate}[label= \ref{theo:subsample_distribution_b}.\roman*]
    \item We can verify \eqref{eq:sg1} similarly  to \eqref{eq:g1}. As $U_{R}(\mathbb{G}^{*}_{b}) + U_{R'}(\mathbb{G}^{*}_{b})$ is a  symmetric finite population statistic, \eqref{eq:g_H} implies that 
    $$
    \begin{aligned}
    h_{1,R,R'}\left(\mathbb{V}_1\right)&=E_*\Big[U_{R}(\mathbb{G}^{*}_{b}) + U_{R'}(\mathbb{G}^{*}_{b})- E_* \big[U_{R}(\mathbb{G}^{*}_{b}) + U_{R'}(\mathbb{G}^{*}_{b})\big] \mid \mathbb{V}_1\Big]
    \end{aligned}
    $$ and $$\begin{aligned}g_{1,R,R'}\left(\mathbb{V}_1\right)&=\frac{n-1}{n-b} h_{1,R,R'}\left(\mathbb{V}_1\right).
    \end{aligned}
    $$ 
    By the linearity of conditional expectation, we have 
    \begin{align*}
   & g_{1,R,R'}(\mathbb{V}_1) \\
    \stackrel{\text{\eqref{eq:g_H}}}{=} &\frac{n-1}{n-b}h_{1,R,R'}\left(\mathbb{V}_1\right)\\ 
    =& \frac{n-1}{n-b}\left(E_*\Big\{U_{R}(\mathbb{G}^{*}_{b}) + U_{R'}(\mathbb{G}^{*}_{b})- E_* \big[U_{R}(\mathbb{G}^{*}_{b}) + U_{R'}(\mathbb{G}^{*}_{b})\big] \mid \mathbb{V}_1\Big\}\right)\\
    =& \frac{n-1}{n-b}\left(E_*\Big\{U_{R}(\mathbb{G}^{*}_{b})- E_* \big[U_{R}(\mathbb{G}^{*}_{b})\big] \mid \mathbb{V}_1\Big\} + E_*\Big\{ U_{R'}(\mathbb{G}^{*}_{b})- E_* \big[ U_{R'}(\mathbb{G}^{*}_{b})\big] \mid \mathbb{V}_1\Big\}\right)\\
        \stackrel{\text{\eqref{eq:lem5}}}{=}&\frac{n-1}{n-b}h_{1,R}(\mathbb{V}_1) + \frac{n-1}{n-b}h_{1,R'}(\mathbb{V}_1)\\
        \stackrel{\text{\eqref{eq:lem5}}}{=}&g_{1,R}(\mathbb{V}_1) + g_{1,R'}(\mathbb{V}_1).
    \end{align*}

\item  \eqref{eq:svar_sum_g1} can be verified according to  \eqref{eq:var_sumg1} as follows.
   \begin{equation*}
    \begin{aligned}
      & \var_*\Big[\sum_{1\leqslant i \leqslant b} g_{1,R, R'}(\mathbb{V}_i)\Big] = \var_*\Big[\sum_{1\leqslant i \leqslant b} g_{1,R}(\mathbb{V}_i) + \sum_{1\leqslant i \leqslant b}g_{1,R'}(\mathbb{V}_i)\Big]\\
    = &\var_*\Big[\sum_{1\leqslant i \leqslant b}g_{1,R}(\mathbb{V}_i)\Big] + 
     \var_*\Big[\sum_{1\leqslant i \leqslant b} g_{1,R'}(\mathbb{V}_i)\Big] +
      2\Cov_*\Big[\sum_{1\leqslant i \leqslant b} g_{1,R}(\mathbb{V}_i),
      \sum_{1\leqslant i \leqslant b} g_{1,R'}(\mathbb{V}_i)\Big]\\
     \stackrel{\text{\eqref{eq:var_sumg1}}}{=}&\dfrac{b(n-b)}{n-1}\Big\{\var_*\Big[g_{1,R}(\mathbb{V}_1)\Big]+\var_*\Big[g_{1,R'}(\mathbb{V}_1)\Big] \Big\} +
      2\Cov_*\Big[\sum_{1\leqslant i \leqslant b} g_{1,R}(\mathbb{V}_i),
      \sum_{1\leqslant i \leqslant b} g_{1,R'}(\mathbb{V}_i)\Big] \\
     \stackrel{\text{\eqref{eq:cov_sum_g1_more}}}{=}&\dfrac{b(n-b)}{n-1}\Big\{\var_*\Big[g_{1,R}(\mathbb{V}_1)\Big]+\var_*\Big[g_{1,R'}(\mathbb{V}_1)\Big] + 2\Cov_*\Big[g_{1,R}(\mathbb{V}_1),
    g_{1,R'}(\mathbb{V}_1)\Big]\Big\}\\
     =&  \frac{b(n-b)}{(n-1)}\var_*\Big[g_{1,R, R'}\left(\mathbb{V}_1\right)\Big].
    \end{aligned}
\end{equation*}
\item Now we turn to prove  \eqref{eq:lim_svar_sumg1}. First, we have 
\begin{equation*}
\begin{aligned}
      & \var_*\sum_{1 \leqslant i \leqslant b} g_{1,R,R'}\left(\mathbb{V}_i\right) \stackrel{\text{ \eqref{eq:var_sumg1}}}{=} \frac{b(n-b)}{(n-1)}\var_*\Big[g_{1,R,R'}\left(\mathbb{V}_1\right)\Big] 
        \\ =&\frac{b(n-b)}{(n-1)}\var_*\Big[g_{1,R}(\mathbb{V}_1)\Big]+ \frac{b(n-b)}{(n-1)}\var_*\Big[g_{1,R'}(\mathbb{V}_1)\Big] 
        + \frac{2b(n-b)}{(n-1)}\Cov_*\Big[g_{1,R}(\mathbb{V}_1),
    g_{1,R'}(\mathbb{V}_1)\Big] \\\stackrel{\text{\eqref{eq:var_sumg1}}}{=}
       & \var_*\sum_{1 \leqslant i \leqslant b} g_{1,R}\left(\mathbb{V}_i\right)+ \var_*\sum_{1 \leqslant i \leqslant b} g_{1,R'}\left(\mathbb{V}_i\right)  + \frac{2b(n-b)}{(n-1)}\Cov_*\Big[g_{1,R}(\mathbb{V}_1),
    g_{1,R'}(\mathbb{V}_1)\Big]\\
        = &\mathrm{I} +\mathrm{II}  +\mathrm{III} .
\end{aligned}
\end{equation*}

For terms I and II, we have 
\begin{equation}
\label{eq:limIandII}
\begin{aligned}
    \lim\limits_{b,n\to\infty} \mathrm{I} &\stackrel{\text{ \eqref{eq:limI}}}{=} 0 \\
    \lim\limits_{b,n\to\infty} \mathrm{II} & \stackrel{\text{ \eqref{eq:limII}}}{=} 0.
\end{aligned}
\end{equation}
 
We next focus on the behavior of term III: \allowdisplaybreaks
\begin{align*} 
  \mathrm{III} =&\Cov_*\Big[g_{1,R}(\mathbb{V}_1),
    g_{1,R'}(\mathbb{V}_1)\Big] =   \Cov_{\mathbb{V}_1*}\Big[g_{1,R}(\mathbb{V}_1),
    g_{1,R'}(\mathbb{V}_1)\Big] \\= &\frac{r!(n-r-1)!}{b(n-2)!} \frac{r'!(n-r'-1)!}{b(n-2)!}\Cov\Big[\sum_{\mathcal{G}\in \mathcal{S}(\mathbb{G}^{\mathbb{V}_1*}_{r})}  X_R(\mathcal{G}) ,\sum_{\mathcal{G}\in \mathcal{S}(\mathbb{G}^{\mathbb{V}_1*}_{r'})}  X_{R'}(\mathcal{G})\Big]\\
        \stackrel{\text{\eqref{eq:cov_v1_g1rr}}}{=} &\frac{r!(n-r-1)!}{b(n-2)!} \frac{r'!(n-r'-1)!}{b(n-2)!}\sum_{q=0}^{\min\{r,r'\}} \sum_{S\in \mathcal{S}_{R,R'}^{(q)}} c_S\dfrac{nq-rr'}{n^2}X_S(G) \\
      = & \dfrac{r !(n-r-1)!}{b(n-2)!}\dfrac{r'!(n-r'-1)!}{b(n-2)!} \Big\{\Big[\sum_{q = 0}^{\min\{r,r'\}}\sum_{S \in \mathcal{S}_{R,R'}^{(q)}} c_S \frac{q}{n} X_S(G)\Big] - \frac{rr'}{n^2}X_R(G)X_{R'}(G)\Big\} \\
     = &\sum_{q = 0}^{\min\{r,r'\}}\sum_{S \in \mathcal{S}_{R,R'}^{(q)}}  c_S \dfrac{r !(n-r-1)!}{b(n-2)!}\dfrac{r'!(n-r'-1)!}{b(n-2)!}\frac{q}{n} X_S(G)
     \\& - \frac{rr'(n-1)(n-1)}{b^2(n-r)(n - r')}U_{R}(G)U_{R'}(G)
\end{align*} 

As $U_{R}(G)U_{R'}(G)\le 1$, we have $$\lim\limits_{b,n\to\infty}\frac{rr'(n-1)(n-1)}{b^2(n-r)(n - r')}U_{R}(G)U_{R'}(G) = 0.$$ On the other hand, 
\begin{equation*}
\begin{aligned}
    &\sum_{q = 0}^{\min\{r,r'\}}\sum_{S \in \mathcal{S}_{R,R'}^{(q)}}c_S \dfrac{r !(n-r-1)!}{b(n-2)!}\dfrac{r'!(n-r'-1)!}{b(n-2)!}\frac{q}{n} X_S(G)\\
     = &\sum_{q = 0}^{\min\{r,r'\}}\sum_{S \in \mathcal{S}_{R,R'}^{(q)}}c_S \dfrac{r !(n-r-1)!}{b(n-2)!}\dfrac{r'!(n-r'-1)!}{b(n-2)!}\frac{q}{n} \binom{n}{r+r'-q}U_{S}(G)\\
    = &\sum_{q = 1}^{\min\{r,r'\}}\sum_{S \in \mathcal{S}_{R ,R'}^{(q)}}c_S \dfrac{r !(n-r-1)!}{b(n-2)!}\dfrac{r'!(n-r'-1)!}{b(n-2)!}\frac{q}{n}\dfrac{n!}{(r+r'-q) !(n-r-r'+q)!}U_{S}( G)\\
    = &\sum_{q = 1}^{\min\{r,r'\}}\sum_{S \in \mathcal{S}_{R, R'}^{(q)}}c_S \dfrac{qr !r' !}{b^2(r+r'-q) !} \dfrac{(n-1)(n-r'-1)\cdots(n-r'-r+q+1)}{(n-2)(n-3)\cdots(n-r)}U_{S}(G)
\end{aligned}
\end{equation*}

Notice that $(n-1)(n-r'-1)\cdots(n-r'-r+q+1)/(n-2)(n-3)\cdots(n-r)$ has $r -q$ items in the numerator and $r -1$ items in the denominator, and $U_{S}(G) \leqslant 1$, we have 
\begin{equation*}
\begin{aligned}
      &\sum_{q = 0}^{\min\{r,r'\}}\sum_{S \in \mathcal{S}_{R R'}^{(q)}}c_S \dfrac{r !(n-r-1)!}{b(n-2)!}\dfrac{r'!(n-r'-1)!}{b(n-2)!}\frac{q}{n} X_S(G) \\ 
         \stackrel{\text{$n \to \infty$}}{\rightarrow}  &\sum_{S \in \mathcal{S}_{R R'}^{(1)}}c_S \dfrac{r !r' !}{b^2(r+r'-1) !} U_{S}(G)    \stackrel{\text{$b \to \infty$}}{\rightarrow} 0.
\end{aligned}
\end{equation*}

 Therefore, 
\begin{equation*}
\begin{aligned}
       & \lim\limits_{b,n\to\infty}  var_*\Big[\sum_{1 \leqslant i \leqslant b} g_{1,R,R'}\left(\mathbb{V}_i\right)\Big]= \lim\limits_{b,n\to\infty} \mathrm{I}  + \mathrm{II} + \mathrm{III}
        \stackrel{\text{\eqref{eq:limIandII}}}{=}\lim\limits_{b,n\to\infty}\mathrm{III}= 0, 
\end{aligned}
\end{equation*}   
leading to \eqref{eq:lim_svar_sumg1}.
\end{enumerate}
\end{proof}

\subsection{Proof of Lemma \ref{cor:cov_subsample_v1}}
\label{C1}
% \begin{proof}[of Corollary \ref{cor:cov_subsample_v1}]
% \hfill
\begin{proof}
    To begin with, we have 
\begin{equation*}
    \begin{aligned}
     &\Cov_{\mathbb{V}_1*}\Big[\sum_{\mathcal{G}\in \mathcal{S}(\mathbb{G}^{\mathbb{V}_1*}_{r})}  X_R(\mathcal{G}),\sum_{\mathcal{G}\in \mathcal{S}(\mathbb{G}^{\mathbb{V}_1*}_{r'})}  X_{R'}(\mathcal{G})\Big] \\
 \stackrel{\text{\eqref{eq:iden3}}}{=}& \Cov_{\mathbb{V}_1*}\Big[\big|\{S: S \subset G, \mathbb{V}_1 \in V(S), S \cong  R\}\big|,\big|\{S: S \subset G, \mathbb{V}_1 \in V(S), S \cong  R'\}\big|\Big] \\
    \stackrel{\text{\eqref{eq:motif_v1_ref}}}{=} & \Cov_{\mathbb{V}_1*}\Big[X_R\left(G\right)-X_R\left(G \setminus \mathbb{V}_1\right), X_{R'}\left(G\right)-X_{R'}\left(G \setminus \mathbb{V}_1\right)\Big] \\
    = &\Cov_{\mathbb{V}_1*}\Big[X_R\left(G \setminus \mathbb{V}_1\right), X_{R'}\left( G \setminus \mathbb{V}_1\right)\Big] \\
=&E_{\mathbb{V}_1*}\Big[X_R\left(G \setminus \mathbb{V}_1\right)X_{R'}\left(G \setminus \mathbb{V}_1\right)\Big]- E_{\mathbb{V}_1*}\Big[X_R\left(G \setminus \mathbb{V}_1\right)\Big]  E\Big[X_{R'}\left(G \setminus \mathbb{V}_1\right)\Big].
\end{aligned}
\end{equation*}

For the first term, 
\begin{equation*}
   \begin{aligned}
& E_{\mathbb{V}_1*}\Big[X_R\left(G \setminus \mathbb{V}_1\right)X_{R'}\left(G \setminus \mathbb{V}_1\right)\Big]=\frac{1}{n} \sum_{i=1}^n\Big[X_R\left(G \setminus v_i\right)X_{R'}\left(G \setminus v_i\right)\Big] 
\\ \stackrel{\text{\eqref{eq:linerityMC}}}{=}  &\frac{1}{n} \sum_{i=1}^n\Big[\sum_{S \in \mathcal{S}_{R, R'}} c_S X_S\left(G \setminus v_i\right)\Big] =\frac{1}{n} \sum_{S \in \mathcal{S}_{R, R'}} c_S \sum_{i=1}^n X_S\left(G \setminus v_i\right) \\
   =&  \frac{1}{n} \sum_{S \in  \mathcal{S}_{R, R'}} c_S \sum_{i=1}^{n}\Big[X_S(G)-\big|\{H:   H\subset G, v_i \in V(H), H \cong  S\}\big|\Big]\\
   = & \frac{1}{n} \sum_{S \in  \mathcal{S}_{R, R'}} c_S \sum_{i=1}^{n}X_S(G)
   -\frac{1}{n} \sum_{S \in  \mathcal{H}_{R, R'}} c_S \sum_{i=1}^{n}\big|\{H: H \subset G, v_i \in V(H), H \cong  S\}\big|\\
 \stackrel{\text{\eqref{eq:iden5}}}{=}  & \sum_{S \in  \mathcal{S}_{R, R'}} c_SX_S(G) - \sum_{S \in  \mathcal{S}_{R, R'}} c_S\frac{s}{n}X_S(G).
\end{aligned} 
\end{equation*}
Regarding the second component, we have
\begin{equation*}
    \begin{aligned}
       &E_{\mathbb{V}_1*}\Big[X_R\left(G \setminus \mathbb{V}_1\right)\Big]  E_{\mathbb{V}_1*}\Big[X_{R'}\left(G \setminus \mathbb{V}_1\right)\Big]  = \frac{1}{n} \sum_{i=1}^n X_R\left(G \setminus v_i\right) \frac{1}{n} \sum_{i=1}^n X_{R'}\left(G \setminus v_i\right) \\
     \stackrel{\text{\eqref{eq:motif_v1_ref}}}{=}&\frac{1}{n}\Big[\sum_{i=1}^n \left(X_R\left(G\right)-\Big|\{S: S \subset G, v_i \in V(S), S \cong  R\}\Big| \right)\Big] \cdot\\&\frac{1}{n}\Big[\sum_{i=1}^n \left(X_{R'}\left(G\right)-\Big|\{S: S \subset G, v_i \in V(S), S \cong  R'\}\Big| \right)\Big]  \\
   % =& \frac{1}{n}\Big[\sum_{i=1}^n X_R\left(G\right)-\sum_{i=1}^n \Big|\{S: S \subset G, v_i \in V(S), S \cong  R\}\Big|\Big] \cdot \\
   % &\frac{1}{n}\Big[\sum_{i=1}^n X_{R'}\left(G\right)-\sum_{i=1}^n \Big|\{S: S \subset G, v_i \in V(S), S \cong  R'\}\Big| \Big]  \\  
 \stackrel{\text{\eqref{eq:iden5}}}{=} & \frac{1}{n}\Big[nX_R\left(G\right)- rX_R\left(G\right) \Big] \frac{1}{n}\Big[nX_{R'}\left(G\right)- r'X_{R'}\left(G\right)\Big] \\=& \frac{\left(n-r\right)}{n}\frac{\left(n-r'\right)}{n} X_R\left(G\right) X_{R'}\left(G\right) = \left(1-\frac{r + r'}{n}+\frac{rr'}{n^2}\right)X_R\left(G\right) X_{R'}\left(G\right) 
    \\\stackrel{\text{\eqref{eq:linerityMC}}}{=} &\left(1-\frac{r + r'}{n}+\frac{rr'}{n^2}\right)\sum_{S \in {\mathcal{S}_{R, R'}}} c_S X_S(G).
    \end{aligned}
\end{equation*}

Thus,
\begin{equation*}
    \begin{aligned}
     &\hspace{0.4cm} \Cov_{\mathbb{V}_1*}\Big[\sum_{\mathcal{G}\in \mathcal{S}(\mathbb{G}^{\mathbb{V}_1*}_{r})}  X_R(\mathcal{G}),\sum_{\mathcal{G}\in \mathcal{S}(\mathbb{G}^{\mathbb{V}_1*}_{r'})}  X_{R'}(\mathcal{G})\Big] \\&\hspace{-0.4cm} =\hspace{0.4cm} 
     E_{\mathbb{V}_1*}\Big[X_{R}\left(G \setminus \mathbb{V}_1\right)X_{R'}\left(G \setminus \mathbb{V}_1\right)\Big]- E_{\mathbb{V}_1*}\Big[X_{R}\left(G \setminus \mathbb{V}_1\right)\Big]  E_{\mathbb{V}_1*}\Big[X_{R'}\left(G \setminus \mathbb{V}_1\right)\Big]\\
     &\hspace{-0.45cm}\stackrel{\text{\eqref{eq:linerityMC}}}{=}\sum_{S \in  \mathcal{S}_{R, R'}} c_SX_S(G) - \sum_{S \in  \mathcal{S}_{R, R'}} c_S\frac{s}{n}X_S(G)-\left(1-\frac{r + r'}{n}+\frac{rr'}{n^2}\right)\sum_{S \in {\mathcal{S}_{R, R'}}} c_S X_S(G)\\
      & \hspace{-0.45cm}= \hspace{0.2cm}\sum_{S \in \mathcal{S}_{R, R'}}\left(\frac{r+r'-s}{n}-\frac{r  r'}{n^2}\right) c_S X _S(G)
 = \hspace{0.2cm} \sum_{q = 0}^{\min\{r,r'\}}\sum_{S \in \mathcal{S}_{R, R'}^{(q)}} c_S \frac{nq-r r'}{n^2} X_S(G).
\end{aligned}
\end{equation*} 
\end{proof}

% \end{proof}

\makeatletter
\renewcommand{\thelemma}{SF.\@arabic\c@lemma}
\makeatother
 
\section{Proofs for Results in Section \ref{subsec:asym_dis_UR}}
\label{sec:proof for asymdis}

\subsection{Proof of Theorem \ref{theo:subsample_distribution}}

We start by introducing the setup of finite population asymptotic described in \cite{bloznelis2001orthogonal}: Suppose that there exist a sequence of finite populations $\{\mathcal{V}^{(n)}\}$, where $\mathcal{V}^{(n)} = \{\mathfrak{v}_1 \cdots \mathfrak{v}_{n}\}$ with $n \to \infty$. Consequently, $\{T_n\}$ is a sequence of finite population  U-statistics, where $T_n = t_n(\mathbb{V}_{1}, \cdots,\mathbb{V}_{b_n})$ is based on samples $\{\mathbb{V}_{1}, \cdots,\mathbb{V}_{b_n}\}$ drawn without replacement from $\mathcal{V}^{(n)}$, with $b_n \to \infty$ as $n  \to \infty$.

We first present a few auxiliary lemmas. Proofs for Lemmas \ref{coro:lim_var_subsample} and \ref{lem: lim_netmoment_graphcon} are deferred to Sections \ref{D1} and \ref{D2}, respectively.
\begin{lemma}(Proposition 3 in \cite{bloznelis2001orthogonal})
    \label{lem: normal_UR_subsample}
The Hoeffding's decomposition of $T_n$ is
  $$T_n = E_* [T_n] + \sum\limits_{1 \leqslant i \leqslant b_n} T_{1,n}(\mathbb{V}_i) +  \Delta(T_n),$$
 where $\sum\limits_{1 \leqslant i \leqslant b_n} T_{1,n}(\mathbb{V}_i)$ is the linear term, and $\Delta(T_n)$ is the remainder. Suppose that
\begin{enumerate}[label= \arabic*)]
 \item  \label{eq:ori_normal1} $E_*\Delta^2(T_n) = o(1)$.
  \item \label{eq:ori_normal2} There exist constants $c_1,c_2 > 0$  such that $0 < c_1 \leqslant \var_*(T_n)  \leqslant c_2< \infty$,
  \item \label{eq:ori_normal3} For every $\epsilon > 0$, $\lim\limits_{n\to\infty} b_n E_*[T^2_{1,n}\left(\mathbb{V}_1\right)\mathbb{1}_{\left\{T^2_{1,n}\left(\mathbb{V}_1\right)> \epsilon\right\}}] = 0$.
\end{enumerate}
Then $(T_n - E_*T_n )/(\var_*(T_n))$ is asymptotically standard normal. 
Note that the subscript $*$ is not used in \cite{bloznelis2001orthogonal}, and we add it here to distinguish the sourse of randomness.
\end{lemma}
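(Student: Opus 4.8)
The plan is to reduce the asymptotic normality of $T_n$ to a central limit theorem for its linear Hoeffding term, exploiting the orthogonality of the decomposition. Since the Hoeffding components are mutually uncorrelated and centered,
\begin{equation*}
\var_*(T_n) = \var_*\Big[\sum_{1\leqslant i \leqslant b_n} T_{1,n}(\mathbb{V}_i)\Big] + \var_*[\Delta(T_n)],
\end{equation*}
and because $\Delta(T_n)$ is centered, condition~\ref{eq:ori_normal1} gives $\var_*[\Delta(T_n)] = E_*\Delta^2(T_n) = o(1)$. Combined with the lower bound $c_1$ in condition~\ref{eq:ori_normal2}, this forces the variance of the linear term to remain bounded away from zero, so the linear part $L_n := \sum_{1\leqslant i \leqslant b_n} T_{1,n}(\mathbb{V}_i)$ is the asymptotically dominant, non-degenerate component.

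First I would establish the CLT for $L_n$. Because $\mathbb{V}_1,\cdots,\mathbb{V}_{b_n}$ are drawn without replacement from the finite population $\mathcal{V}^{(n)}$, the summands $T_{1,n}(\mathbb{V}_i)$ are exchangeable but \emph{not} independent, so the classical Lindeberg--Feller theorem does not apply directly. Instead I would invoke the finite-population (combinatorial) central limit theorem of H\'ajek for sums under simple random sampling without replacement. Its Lindeberg-type sufficient condition is exactly the hypothesis recorded in condition~\ref{eq:ori_normal3}, namely that the truncated second moment $b_n E_*[T_{1,n}^2(\mathbb{V}_1)\mathbb{1}\{T_{1,n}^2(\mathbb{V}_1) > \epsilon\}]$ vanishes for every $\epsilon>0$. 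This yields $L_n/\sqrt{\var_*(L_n)} \to \mathcal{N}(0,1)$ in distribution.

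Finally I would assemble the pieces through Slutsky's theorem. Writing
\begin{equation*}
\frac{T_n - E_*(T_n)}{\sqrt{\var_*(T_n)}} = \frac{L_n}{\sqrt{\var_*(T_n)}} + \frac{\Delta(T_n)}{\sqrt{\var_*(T_n)}},
\end{equation*}
the second term converges to zero in probability, since by Markov's inequality its second moment is $E_*\Delta^2(T_n)/\var_*(T_n) \leqslant o(1)/c_1 \to 0$ using conditions~\ref{eq:ori_normal1} and~\ref{eq:ori_normal2}. For the first term, the ratio $\var_*(L_n)/\var_*(T_n) \to 1$ because $\var_*(T_n) = \var_*(L_n) + o(1)$ with $\var_*(T_n)$ bounded below, so $L_n/\sqrt{\var_*(T_n)}$ inherits the standard normal limit of $L_n/\sqrt{\var_*(L_n)}$. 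Combining the two terms gives the claimed asymptotic standard normality (the normalization being by the standard deviation $\sqrt{\var_*(T_n)}$).

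The main obstacle is the CLT for the linear term: the dependence induced by sampling without replacement precludes a direct appeal to an i.i.d.\ limit theorem, so the argument must rest on a finite-population limit theorem whose regularity requirement is precisely what condition~\ref{eq:ori_normal3} encodes; once that combinatorial CLT is available, the reduction via orthogonality and Slutsky is routine. As this statement is quoted as Proposition~3 of \cite{bloznelis2001orthogonal}, in the paper it is cited rather than reproved, and the sketch above indicates the argument underlying that reference.
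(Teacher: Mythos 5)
The paper never proves this lemma itself: it is imported verbatim as Proposition 3 of \cite{bloznelis2001orthogonal}, so there is no in-paper argument to compare against. Your sketch correctly reconstructs the argument underlying that reference --- orthogonality of the Hoeffding components to split $\var_*(T_n)=\var_*(L_n)+E_*\Delta^2(T_n)$, the H\'ajek/Erd\H{o}s--R\'enyi finite-population CLT for the linear part (whose Lindeberg hypothesis is what condition 3 delivers once conditions 1 and 2 pin $\var_*(T_n)$, hence $\var_*(L_n)$ and $b_nE_*[T_{1,n}^2(\mathbb{V}_1)]$, between positive constants), and Slutsky to absorb the remainder --- and it is sound, including your implicit correction that the normalization must be by $\sqrt{\var_*(T_n)}$ rather than by the variance as typeset in the statement.
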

\begin{lemma}[\cite{bloznelis2002edgeworth}]
 \label{lem: l2og}
 Let \begin{equation}
\label{eq:hoef_fu}
 \begin{aligned}
   T= &E_* T+\sum_{1 \leqslant i \leqslant b} g_1\left(\mathbb{V}_i\right)+\sum_{1 \leqslant  i<j \leqslant b} g_2\left(\mathbb{V}_i, \mathbb{V}_j\right)+\cdots\\
   =& E_* T+ S_1 + S_2 + S_3 +\cdots 
 \end{aligned}
\end{equation}
be  the Hoeffding's decomposition for a general finite population U-statistic $T$. Then we have
 \begin{equation}
 \label{eq:ref_b2002}
     E_*[S_aS_b] = 0, \hspace{0.5cm} \text{ if $a \neq b$.}
     \end{equation}

\end{lemma}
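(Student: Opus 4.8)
The plan is to prove the orthogonality componentwise. Writing the Hoeffding decomposition \eqref{eq:hoef_fu} as $T = E_*T + \sum_{p\ge 1} S_p$ with $S_p = \sum_{1\le i_1<\cdots<i_p\le b} g_p(\mathbb{V}_{i_1},\ldots,\mathbb{V}_{i_p})$, it suffices to fix two indices $p\ne q$, say $p<q$, and show that every cross term in
$$E_*[S_pS_q] = \sum_{|I|=p}\sum_{|J|=q} E_*\!\left[g_p(\mathbb{V}_I)\,g_q(\mathbb{V}_J)\right]$$
vanishes, where $I,J$ range over subsets of the sampled indices $\{1,\ldots,b\}$ and $g_p(\mathbb{V}_I)$ abbreviates $g_p$ evaluated at $(\mathbb{V}_i)_{i\in I}$. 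The driving mechanism is the degeneracy \eqref{eq:kernal_g}: together with the symmetry of $g_q$ in its arguments, it says that averaging $g_q$ over any one of its arguments (holding the other $q-1$ fixed) kills it. Since $|J|=q>p=|I|$, the set $J\setminus I$ is nonempty, so I can single out an index $k\in J\setminus I$ to integrate against this property while $g_p(\mathbb{V}_I)$, which does not involve $\mathbb{V}_k$, is frozen.

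Concretely, I would condition on the sampled units indexed by $(I\cup J)\setminus\{k\}$. Because $I\subseteq (I\cup J)\setminus\{k\}$ (as $k\notin I$), the factor $g_p(\mathbb{V}_I)$ is measurable with respect to this $\sigma$-field, and the tower property gives
$$E_*\!\left[g_p(\mathbb{V}_I)\,g_q(\mathbb{V}_J)\right] = E_*\!\Big[g_p(\mathbb{V}_I)\,E_*\big[g_q(\mathbb{V}_J)\,\big|\,\mathbb{V}_i,\ i\in (I\cup J)\setminus\{k\}\big]\Big].$$
Thus the whole lemma reduces to the complete-degeneracy claim that this inner conditional expectation is $0$ whenever the conditioning set omits an argument of $g_q$ (here it omits $k\in J$). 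Equivalently --- and this is the most transparent way to organize the argument --- one recognizes $S_p$ and $S_q$ as the images of $T$ under the finite-population ANOVA projections onto the mutually orthogonal ``degree-exactly-$p$'' and ``degree-exactly-$q$'' subspaces of $L^2$ of the sampling space, so that $E_*[S_pS_q]=\langle P_pT,P_qT\rangle=0$ is built into the construction.

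The main obstacle is precisely this inner step, which is immediate for i.i.d.\ sampling but genuinely delicate under sampling without replacement. Conditioning on the extra units $I\setminus J$ changes the conditional law of $\mathbb{V}_k$ --- it becomes uniform over the population units not yet drawn, rather than over all units off $\{\mathbb{V}_i:i\in J\setminus\{k\}\}$ --- so the bare degeneracy \eqref{eq:kernal_g} does not apply term by term, and a naive induction on the number of extra conditioning units stalls. What is needed is the stronger complete degeneracy of the Bloznelis--G\"otze kernels, namely $E_*[g_q(\mathbb{V}_J)\mid \mathbb{V}_i,\ i\in A]=0$ for every $A$ with $J\not\subseteq A$; this follows from their symmetric orthogonal-projection definition of the $g_p$'s and the combinatorial (Johnson-scheme) identities of SRSWOR, and is exactly the content imported from \cite{bloznelis2001orthogonal,bloznelis2002edgeworth}. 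Granting it, the displayed cross term vanishes for every admissible $(I,J)$, and summing yields $E_*[S_pS_q]=0$, which is \eqref{eq:ref_b2002}.
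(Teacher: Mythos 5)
Your high-level architecture (expand $E_*[S_pS_q]$ into cross terms, pick $k\in J\setminus I$, apply the tower property) is reasonable, and you correctly sense that sampling without replacement is where the difficulty sits. But the property you ultimately rest the proof on --- the ``complete degeneracy'' $E_*[g_q(\mathbb{V}_J)\mid \mathbb{V}_i,\ i\in A]=0$ whenever $J\not\subseteq A$ --- is simply false for finite-population kernels, so the proof fails at exactly the step you flag as the crux. Counterexample, already with $q=1$: since $\sum_{i=1}^n g_1(\mathfrak{v}_i)=n\,E_*[g_1(\mathbb{V}_1)]=0$, conditioning on $\mathbb{V}_1=\mathfrak{v}$ leaves $\mathbb{V}_2$ uniform over the other $n-1$ population units, so
\begin{equation*}
E_*[g_1(\mathbb{V}_2)\mid \mathbb{V}_1=\mathfrak{v}] \;=\; \frac{1}{n-1}\sum_{\mathfrak{u}\neq\mathfrak{v}} g_1(\mathfrak{u}) \;=\; -\frac{g_1(\mathfrak{v})}{n-1}\;\neq\;0,
\end{equation*}
even though $J=\{2\}\not\subseteq A=\{1\}$. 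Equivalently, your property would force $\Cov_*[g_{1,R}(\mathbb{V}_1),g_{1,R}(\mathbb{V}_2)]=0$, contradicting the paper's own identity \eqref{eq:var_sumg1}: the finite-population factor $b(n-b)/(n-1)<b$ there encodes precisely the negative correlation $-\var_*[g_{1,R}(\mathbb{V}_1)]/(n-1)$ between distinct summands of $S_1$. This also explains why no Johnson-scheme identity can rescue the claim: orthogonality in this setting holds only \emph{across different orders} $a\neq b$, while same-order, different-index cross terms are genuinely nonzero; any mechanism (like yours) that would kill both is too strong to be true. Your alternative ``ANOVA projection'' framing is the right high-level picture, but as stated it is circular --- asserting $S_p=P_pT$ with mutually orthogonal projections assumes exactly the orthogonality of the finite-population subspaces that is to be proved.

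For the record, the paper itself gives no proof: Lemma \ref{lem: l2og} is imported verbatim from \cite{bloznelis2002edgeworth}, where the orthogonality is established as part of the projection construction, so a citation-only answer would have matched the paper. If you want a self-contained repair of your argument, the correct replacement for complete degeneracy is an induction on $|I\setminus J|$ for $p<q$. Base case $I\subseteq J$: pick $k\in J\setminus I$ (nonempty since $q>p$) and condition on $\mathbb{V}_{J\setminus\{k\}}$; by exchangeability the conditional law of $\mathbb{V}_k$ is exactly the averaging in \eqref{eq:kernal_g}, so the term vanishes. Inductive step $I\not\subseteq J$: condition on $\mathbb{V}_{(I\cup J)\setminus\{k\}}$ and use the population-sum form of \eqref{eq:kernal_g}, namely $\sum_{\mathfrak{u}\notin\{x_j:\,j\in J\setminus\{k\}\}} g_q(x_{J\setminus\{k\}},\mathfrak{u})=0$, to rewrite the inner conditional expectation as $-\bigl(n-|I\cup J|+1\bigr)^{-1}\sum_{i\in I\setminus J} g_q(\mathbb{V}_{J\setminus\{k\}},\mathbb{V}_i)$. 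Each resulting cross term pairs $g_p(\mathbb{V}_I)$ with a $g_q$ whose index set $J'=(J\setminus\{k\})\cup\{i\}$ satisfies $|I\setminus J'|=|I\setminus J|-1$ and still $|J'\setminus I|\geq q-p>0$, so the induction closes and summing over $I,J$ gives \eqref{eq:ref_b2002}. That reduction-plus-induction, not term-by-term degeneracy, is the finite-population substitute for the i.i.d.\ argument you were implicitly running.
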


\begin{lemma}[Theorem 1 of \cite{bickel2011method}]
\label{lem: samplingdistribution_bickel}
Let $\iint w^{2}(u,v)dudv < \infty$. 
\begin{enumerate}[label= \alph*)]
    \item If $(n-1)\rho_n \rightarrow \infty$,
    \begin{equation}
\label{eq:bickelsamplingref1}
\begin{aligned}
  \frac{\wh{\rho}_{\mathbb{G}_{n}}}{\rho_n}& \to 1~ \text{in probability},\\
\text{$\sqrt{n}\Big(\frac{\wh{\rho}_{\mathbb{G}_{n}}}{\rho_n} - 1\Big)$} &\to \mathcal{N}(0, \sigma^2)~ \text{in distribution},
\end{aligned}
\end{equation}
for some $ \sigma^2 > 0$.
\item \label{item:b} For any motif $R$, assume that $\iint w^{2\mathfrak{r}}(u,v)dudv < \infty$, also $\rho_n = \omega(n^{-1})$ if $R$ is acyclic and $\rho_n = \omega(n^{-2/r})$ otherwise. Then
\begin{equation}
\label{eq:greensamplingref4}
  \sqrt{n}\Big[\rho_n^{-\mathfrak{r}}U_{R}(\mathbb{G}_{n})-  \rho_n^{-\mathfrak{r}}\mE[U_{R}(\mathbb{G}_{n})]\Big] \to \mathcal{N}\big(0, \sigma^2_R\big) ~ \text{in distribution}
\end{equation}
\begin{equation}
    \label{eq:bickelsamplingref2}
    \begin{aligned}
          &\wh{\rho}^{-\mathfrak{r}}_{\mathbb{G}_{n}}U_{R}(\mathbb{G}_{n}) \to \rho_n^{-\mathfrak{r}}\mE[U_{R}(\mathbb{G}_{n})]
          ~\text{in probability}\\
          &\sqrt{n}\Big[{\rho}^{-\mathfrak{r}}_{_{n}}U_{R}(\mathbb{G}_{n})-  \rho_n^{-\mathfrak{r}}\mE[U_{R}(\mathbb{G}_{n})]\Big]   \to \mathcal{N}\big(0, \sigma^2_R\big)~ \text{in distribution},
    \end{aligned}
\end{equation}
where $\sigma^2_R$ is defined in Assumption \ref{ass:non_degenerate}.

\item More generally, for $m$ motifs $R_1,\cdots,R_m$ with sizes $\max\{r_1,\cdots,r_m\} \leqslant r $,
\begin{equation}
\label{eq:bickelsamplingref3}
\begin{aligned}
     &\sqrt{n}\Big\{\big[{\rho}^{-\mathfrak{r}_1}_{{n}}U_{R_1}(\mathbb{G}_{n}),\cdots,{\rho}^{-\mathfrak{r}_m}_{{n}}U_{R_m}(\mathbb{G}_{n})\big] -  \big[\rho_n^{-\mathfrak{r}_1}\mE[U_{R_1}(\mathbb{G}_{n})],\cdots,\rho_n^{-\mathfrak{r}_m}\mE[U_{R_m}(\mathbb{G}_{n})]\big]\Big\} \\
     \to&~  \mathcal{N}\big(0, \Sigma_{[R_m]}\big)~ \text{in distribution}, 
\end{aligned}
\end{equation}
where $\Sigma_{[R_m]}$ is the asymptotic covariance matrix.
\end{enumerate}
\end{lemma}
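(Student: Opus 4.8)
The plan is to treat parts (a) and (b) as the univariate groundwork already established in \cite{bickel2011method} and to concentrate on deducing the joint statement \eqref{eq:bickelsamplingref3} from them. The engine behind all three parts is the Hoeffding (Hájek) decomposition of a network moment: conditionally on the latent positions $\{\xi_i\}$, $U_R(\mathbb G_n)$ is a U-statistic whose projection onto the first-order (single-node) subspace is a sum of independent summands, while the higher-order components contribute only to a remainder. After the sparsity rescaling $\rho_n^{-\mathfrak r}$, the edge-density conditions in part (b) (acyclic versus cyclic) are exactly what keep the linear term non-degenerate and force the remainder to have strictly smaller variance; a triangular-array central limit theorem then gives \eqref{eq:greensamplingref4}, and Assumption \ref{ass:non_degenerate} supplies $\sigma_R^2>0$. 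The passage from $\rho_n$ to $\widehat\rho_{\mathbb G_n}$ in \eqref{eq:bickelsamplingref2} is handled by part (a): since $\widehat\rho_{\mathbb G_n}/\rho_n = 1 + O_p(n^{-1/2})$ by \eqref{eq:bickelsamplingref1}, a first-order Taylor expansion of $(\widehat\rho_{\mathbb G_n}/\rho_n)^{-\mathfrak r}$ together with Slutsky's theorem reduces the $\widehat\rho$-normalized statistic to the $\rho_n$-normalized one up to lower-order terms.

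For the joint statement I would invoke the Cramér--Wold device. Fix an arbitrary $\mathbf a = (a_1,\dots,a_m)\in\mathbb R^m$ and set $Z_n = \sqrt n \sum_{j=1}^m a_j\big[\widehat\rho_{\mathbb G_n}^{-\mathfrak r_j}U_{R_j}(\mathbb G_n) - \rho_n^{-\mathfrak r_j}\mE\,U_{R_j}(\mathbb G_n)\big]$; it suffices to show $Z_n\to\mathcal N(0,\mathbf a^\top\Sigma(\textbf R)\mathbf a)$ for every $\mathbf a$. I would first strip off the estimated density as above, writing each $\widehat\rho_{\mathbb G_n}^{-\mathfrak r_j} = \rho_n^{-\mathfrak r_j}(\widehat\rho_{\mathbb G_n}/\rho_n)^{-\mathfrak r_j}$ and Taylor-expanding, so that $Z_n$ equals, up to $o_p(1)$, a fixed linear combination of the true-density-normalized quantities $\sqrt n[\rho_n^{-\mathfrak r_j}U_{R_j}-\rho_n^{-\mathfrak r_j}\mE U_{R_j}]$ together with $\sqrt n(\widehat\rho_{\mathbb G_n}/\rho_n - 1)$, the latter being itself a rescaled moment of the single-edge motif. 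Consequently $Z_n$ is, to leading order, a single weighted subgraph functional of the same type treated in part (b).

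The central limit theorem for this weighted functional follows the same projection argument. Its Hájek projection is $n^{-1/2}\sum_i \Psi_n(\xi_i)$ for an independent triangular array $\Psi_n(\xi_i) = \sum_j a_j\,\psi_{n,j}(\xi_i)$ assembled from the per-motif first-order kernels, and the remainder is asymptotically negligible because the Hoeffding components are mutually orthogonal (cf.\ Lemma \ref{lem: l2og}) and each individual motif's remainder is already $o_p(1)$ by part (b). A Lindeberg check for $\Psi_n$ then yields asymptotic normality. The limiting variance is read off as $\mathbf a^\top\Sigma(\textbf R)\mathbf a$, whose off-diagonal entries are precisely the pairwise covariance limits furnished by Proposition \ref{prop: lim_var_graphon}, and Assumption \ref{ass:non_degenerate} guarantees that $\Sigma(\textbf R)$ is positive definite so that no direction degenerates. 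Since $\mathbf a$ is arbitrary, Cramér--Wold delivers \eqref{eq:bickelsamplingref3}.

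The main obstacle lies in the remainder analysis, and it has two facets. First, the motifs carry different normalizations $\rho_n^{-\mathfrak r_j}$, so the node kernels $\psi_{n,j}$ live at different scales in $\rho_n$; one must verify that their weighted superposition still satisfies the Lindeberg condition uniformly as $n\to\infty$ under the sparsity budget of Assumption \ref{ass:rho_n_h_n} and the edge-density hypotheses of part (b), which is where the acyclic/cyclic distinction genuinely enters. Second, substituting $\widehat\rho_{\mathbb G_n}$ for $\rho_n$ creates real correlation between the density fluctuation and the motif fluctuations, so one must track the \emph{joint} limit of $\big(\sqrt n(\widehat\rho_{\mathbb G_n}/\rho_n - 1),\,\{\sqrt n\rho_n^{-\mathfrak r_j}U_{R_j}(\mathbb G_n)\}_j\big)$ carefully to confirm that the covariance of $Z_n$ collapses to exactly $\mathbf a^\top\Sigma(\textbf R)\mathbf a$ rather than to some perturbed form.
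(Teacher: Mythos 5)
The paper never actually proves this lemma: it is imported wholesale as Theorem~1 of \cite{bickel2011method}, and the paper's ``proof'' consists of that citation alone. So there is no internal argument to compare you against; the right benchmark is the cited source, and your route --- a Hoeffding/H\'ajek projection onto single-node terms conditional on the latent positions, a triangular-array CLT with Lindeberg check for the linear part, a Taylor expansion of $(\widehat\rho_{\mathbb{G}_n}/\rho_n)^{-\mathfrak{r}}$ plus Slutsky to pass between the $\rho_n$- and $\widehat\rho$-normalizations, and Cram\'er--Wold for the multivariate claim --- is essentially the strategy used there. As a plan, it is sound.

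The one place your sketch needs sharpening is your second ``obstacle,'' which is more than a bookkeeping issue: substituting $\widehat\rho_{\mathbb{G}_n}$ for $\rho_n$ genuinely \emph{changes} the limiting covariance, it does not merely perturb it. Your own expansion gives $\sqrt{n}\,\big[\widehat\rho_{\mathbb{G}_n}^{-\mathfrak{r}_j}U_{R_j}(\mathbb{G}_n)-\rho_n^{-\mathfrak{r}_j}\mE U_{R_j}(\mathbb{G}_n)\big] = \sqrt{n}\,\rho_n^{-\mathfrak{r}_j}\big[U_{R_j}(\mathbb{G}_n)-\mE U_{R_j}(\mathbb{G}_n)\big] - \mathfrak{r}_j\mu_j\,\sqrt{n}\,\big(\widehat\rho_{\mathbb{G}_n}/\rho_n-1\big)+o_p(1)$, where $\mu_j=\lim_n \rho_n^{-\mathfrak{r}_j}\mE U_{R_j}(\mathbb{G}_n)$; hence $\Sigma(\mathbf{R})$ in \eqref{eq:bickelsamplingref3} is a deterministic linear transform of the joint covariance of the \emph{extended} vector (the $m$ motifs together with the edge density), and is not the matrix appearing in Assumption~\ref{ass:non_degenerate}. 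The edge motif makes this vivid: $\widehat\rho_{\mathbb{G}_n}$ and $U_R(\mathbb{G}_n)$ for $R$ a single edge are proportional by construction, so the $\widehat\rho$-normalized moment is a constant with exactly zero fluctuation, while \eqref{eq:greensamplingref4} gives a nondegenerate limit --- the same $\sigma_R^2$ cannot serve both \eqref{eq:greensamplingref4} and \eqref{eq:bickelsamplingref2}. (That looseness sits in the lemma as stated in the paper, not only in your sketch.) Your reduction does the right thing by carrying the edge coordinate inside the linear combination before applying the projection CLT; just make sure the final variance identification is stated for this transformed matrix, rather than read off Proposition~\ref{prop: lim_var_graphon} applied to the motifs alone.
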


\begin{lemma}
\label{lem: lim_netmoment_graphcon}
        For any motif $R$, under Assumptions \ref{ass:rho_n_h_n}, with probability one,
\begin{equation}
\label{eq:lim_rUR_subsample}
\lim\limits_{n \to \infty} \rho^{-\mathfrak{r}}_nU_R(\mathbb{G}_{n}) = \frac{r!}{|\mathrm{Aut}(R)|}P_w(R).
\end{equation}
\end{lemma}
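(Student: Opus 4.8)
The plan is to deduce the almost-sure limit from a concentration-plus-Borel--Cantelli argument, after first noting that the target constant is simply the (exact) mean. Indeed, by \eqref{eq:lim_exp_graphon} in Proposition~\ref{prop: lim_var_graphon} we have $\mathbb{E}\big[\rho_n^{-\mathfrak{r}}U_R(\mathbb{G}_n)\big]=\frac{r!}{|\mathrm{Aut}(R)|}P_w(R)$ for \emph{every} $n$ (the truncation indicator equals $1$ under the standing convention $\rho_n w\le 1$). Hence it suffices to show that for each fixed $\epsilon>0$ the deviation probabilities $\mathbb{P}\big(|\rho_n^{-\mathfrak{r}}U_R(\mathbb{G}_n)-\frac{r!}{|\mathrm{Aut}(R)|}P_w(R)|>\epsilon\big)$ are summable in $n$; Borel--Cantelli, applied along $\epsilon=1/k$ and intersected over $k$, then yields \eqref{eq:lim_rUR_subsample}. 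I will obtain summable tails by splitting the randomness into the latent positions $\xi=(\xi_1,\dots,\xi_n)$ and the conditionally independent edge indicators, and applying the bounded-differences (McDiarmid) inequality at each stage.

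For the edge stage, condition on $\xi$, so that $X_R(\mathbb{G}_n)$ is a function of the independent edge variables $\{A_{ij}\}$. Flipping a single edge changes $X_R$ by at most the number of copies of $R$ through that edge, which is bounded deterministically by $C_R n^{r-2}$ (inject $R$ into $K_n$ with one of its $\mathfrak{r}$ edges pinned to $(i,j)$). With $\binom{n}{2}$ edges, McDiarmid gives, writing $t=\tfrac{\epsilon}{2}\rho_n^{\mathfrak{r}}\binom{n}{r}$,
\[
\mathbb{P}\Big(\big|\rho_n^{-\mathfrak{r}}U_R(\mathbb{G}_n)-\rho_n^{-\mathfrak{r}}\binom{n}{r}^{-1}\mathbb{E}[X_R\mid\xi]\big|>\tfrac{\epsilon}{2}\;\Big|\;\xi\Big)\le 2\exp\big(-c\,\epsilon^2 n^2\rho_n^{2\mathfrak{r}}\big),
\]
uniformly in $\xi$, and hence unconditionally after taking expectations.

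For the position stage, set $\psi(\xi):=\rho_n^{-\mathfrak{r}}\binom{n}{r}^{-1}\mathbb{E}[X_R\mid\xi]$, a function of the independent uniforms $\xi_i$ whose mean is $\frac{r!}{|\mathrm{Aut}(R)|}P_w(R)$ by \eqref{eq:lim_exp_graphon}. Since $\mathbb{E}[X_R\mid\xi]=\sum_{\phi}\prod_{(a,b)\in\eE(R)}h_n(\xi_{\phi(a)},\xi_{\phi(b)})$, changing one coordinate $\xi_i$ affects only the $O(n^{r-1})$ injective maps $\phi$ using vertex $i$, and each product of edge-weights is at most $1$ because $h_n\le 1$; thus the increment of $\psi$ is at most $C_R'\rho_n^{-\mathfrak{r}}/n$. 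McDiarmid then gives $\mathbb{P}\big(|\psi(\xi)-\frac{r!}{|\mathrm{Aut}(R)|}P_w(R)|>\epsilon/2\big)\le 2\exp\big(-c'\epsilon^2 n\rho_n^{2\mathfrak{r}}\big)$. Combining the two stages by the triangle inequality bounds the full deviation probability by $4\exp\big(-c''\epsilon^2 n\rho_n^{2\mathfrak{r}}\big)$.

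The crux is summability of these bounds, and this is exactly where Assumption~\ref{ass:rho_n_h_n} is used: from $n\rho_n^{4\mathfrak{r}}\ge c_1\log n$ one gets $\rho_n^{2\mathfrak{r}}\ge\sqrt{c_1\log n/n}$, so $n\rho_n^{2\mathfrak{r}}\ge\sqrt{c_1\,n\log n}$, which dominates $\log n$ and forces $\sum_n\exp\big(-c''\epsilon^2 n\rho_n^{2\mathfrak{r}}\big)<\infty$. Borel--Cantelli then shows that for each $\epsilon=1/k$ the event $\{|\rho_n^{-\mathfrak{r}}U_R(\mathbb{G}_n)-\frac{r!}{|\mathrm{Aut}(R)|}P_w(R)|>1/k\}$ occurs only finitely often almost surely, and intersecting over $k$ yields \eqref{eq:lim_rUR_subsample}. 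I expect the main obstacle to be choosing the worst-case bounded-difference constants tightly enough that the exponents are controlled by the given sparsity rate: the edge-flip bound $C_Rn^{r-2}$ and the weight bound $h_n\le1$ are precisely what make the binding exponent scale like $n\rho_n^{2\mathfrak{r}}$, matched to the assumed lower bound on $n\rho_n^{4\mathfrak{r}}$; any cruder constant would break summability.
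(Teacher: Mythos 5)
Your proposal is correct, and it reaches the result by a genuinely different concentration mechanism than the paper. The paper adapts Lov\'asz's vertex-exposure martingale for (injective) homomorphism densities: it reveals nodes one at a time, bounds the martingale increments by $r/n$, applies Azuma's inequality to get $\mathrm{pr}\big(|t(R,\mathbb{G}_n)-P_{h_n}(R)|>\delta\big)\le 2\exp(-\delta^2 n/2r^2)$, rescales via $X_R=\mathrm{inj}(R,\cdot)/|\mathrm{Aut}(R)|$, and finishes with Borel--Cantelli — both sources of randomness (latent positions and edges) are absorbed into a single filtration. You instead split the randomness: conditionally on $\xi$ you apply McDiarmid to the independent edge indicators (getting the stronger exponent $n^2\rho_n^{2\mathfrak{r}}$, since flipping one edge moves $X_R$ by only $O(n^{r-2})$), and then apply McDiarmid again to the i.i.d.\ uniforms acting on $\psi(\xi)=\rho_n^{-\mathfrak{r}}\binom{n}{r}^{-1}\mathbb{E}[X_R\mid\xi]$ (exponent $n\rho_n^{2\mathfrak{r}}$, the binding stage, matching the paper's rate). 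The endgame — exact identification of the mean via \eqref{eq:lim_exp_graphon}, summable tails, Borel--Cantelli along $\epsilon=1/k$ — is identical. Your route buys two things: it is self-contained (no need to import the martingale construction from the graph-limit literature), and your summability step is actually more careful than the paper's. The paper invokes convergence of the series from ``$n\rho_n^{2\mathfrak{r}}>c_1\log n$,'' which by itself yields only polynomial decay $n^{-c\epsilon^2 c_1}$ and would not be summable for small fixed $\epsilon$; your observation that Assumption~\ref{ass:rho_n_h_n}'s $4\mathfrak{r}$ exponent gives $n\rho_n^{2\mathfrak{r}}\ge\sqrt{c_1\,n\log n}$ makes the tails super-polynomially small for \emph{every} fixed $\epsilon$, which is exactly what closes this gap. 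What the paper's approach buys in exchange is brevity, since the hard concentration work is delegated to a cited theorem.
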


%%%% Proof starts%%%%%
Now we proceed to prove Theorem~\ref{theo:subsample_distribution}.

\begin{proof}[of Theorem~\ref{theo:subsample_distribution}] 
 We start with part \ref{theo:subsample_distribution_a}, and will prove the results one by one. For notational simplicity,  we write $G = G^{(n)}$, $b = b_n$. 
\begin{enumerate}[label= \ref{theo:subsample_distribution_a}.\roman*]
    \item  Since 
  $U_{R}(\mathbb{G}^{*}_{b})$ is a finite population U-statistic, $\sqrt{b}\rho_n^{-\mathfrak{r}} U_{R}(\mathbb{G}^{*}_{b})$ is also a finite population U-statistic. \eqref{eq:hoeffding_U_R} can be obtained by
\begin{equation}
\label{eq:hoeffding_pUR}
\begin{aligned}
\sqrt{b}\rho_{n}^{-\mathfrak{r}}  U_{R}(\mathbb{G}^{*}_{b}) \stackrel{\text{\eqref{eq:hoeffding_UR}}}{=} &\sqrt{b}\rho_{n}^{-\mathfrak{r}} \Big\{E_* [ U_{R}(\mathbb{G}^{*}_{b})]+\sum_{1 \leqslant i \leqslant b} g_{1,R}\left(\mathbb{V}_i\right)+\sum_{1 \leqslant  i<j \leqslant b} g_{2,R}\left(\mathbb{V}_i, \mathbb{V}_j\right)  +\cdots\Big\}\\
=& \sqrt{b}\rho_{n}^{-\mathfrak{r}} \Big\{E_* [ U_{R}(\mathbb{G}^{*}_{b})]+\sum_{1 \leqslant i \leqslant b} g_{1,R}\left(\mathbb{V}_i\right)\Big\} + \Delta[\sqrt{b} \rho_{n}^{-\mathfrak{r}}U_{R}(\mathbb{G}^{*}_{b})]\\
=&E_* [\sqrt{b}\rho_{n}^{-\mathfrak{r}}  U_{R}(\mathbb{G}^{*}_{b})]+\sum_{1 \leqslant i \leqslant b} \sqrt{b}\rho_{n}^{-\mathfrak{r}} g_{1,R}\left(\mathbb{V}_i\right) + \Delta[\sqrt{b} \rho_{n}^{-\mathfrak{r}}U_{R}(\mathbb{G}^{*}_{b})]\\
\stackrel{\text{\eqref{eq:exp_subsample}}}{=} & \sqrt{b}\rho_{n}^{-\mathfrak{r}}  U_{R}(G)+\sum_{1 \leqslant i \leqslant b} \sqrt{b}\rho_{n}^{-\mathfrak{r}} g_{1,R}\left(\mathbb{V}_i\right) + \Delta[\sqrt{b} \rho_{n}^{-\mathfrak{r}}U_{R}(\mathbb{G}^{*}_{b})].
\end{aligned}
\end{equation}
\item Proof of \eqref{eq:assump1}: To begin with, we have 
\begin{equation}\label{tref2}
\begin{aligned}
   E_*\Big(\Big\{\Delta[\sqrt{b} \rho_{n}^{-\mathfrak{r}}U_{R}(\mathbb{G}_{b}^{*})]\Big\} \Big\{\sum_{1 \leqslant i \leqslant b} \sqrt{b}\rho_{n}^{-\mathfrak{r}} g_{1,R}\left(\mathbb{V}_i\right)\Big\} \Big) \stackrel{\text{\eqref{eq:ref_b2002}}}{=} 0. 
\end{aligned}
\end{equation}
Consequently,  
\begin{equation}
\label{addref9}
    \begin{aligned}
       & E_* \Big[\Delta^2[\sqrt{b} \rho_{n}^{-\mathfrak{r}}U_{R}(\mathbb{G}^{*}_{b})] \Big] 
       \stackrel{\text{\eqref{eq:hoeffding_U_R}}}{=}E_*\Big[  \sqrt{b}\rho_{n}^{-\mathfrak{r}}  U_{R}(\mathbb{G}^{*}_{b}) -E_* [\sqrt{b}\rho_{n}^{-\mathfrak{r}}  U_{R}(\mathbb{G}^{*}_{b})] - \sum_{1 \leqslant i \leqslant b} \sqrt{b}\rho_{n}^{-\mathfrak{r}} g_{1,R}\left(\mathbb{V}_i\right)\Big]^2 \\
    =&E_*\Big[  \sqrt{b}\rho_{n}^{-\mathfrak{r}}  U_{R}(\mathbb{G}^{*}_{b}) -E_* [\sqrt{b}\rho_{n}^{-\mathfrak{r}}  U_{R}(\mathbb{G}^{*}_{b})]\Big]^2 +E_*\Big[  \sum_{1 \leqslant i \leqslant b} \sqrt{b}\rho_{n}^{-\mathfrak{r}} g_{1,R}\left(\mathbb{V}_i\right)\Big]^2 \\
    &- 2E_*\Big[ \Big( \sqrt{b}\rho_{n}^{-\mathfrak{r}}  U_{R}(\mathbb{G}^{*}_{b}) -E_* [\sqrt{b}\rho_{n}^{-\mathfrak{r}}  U_{R}(\mathbb{G}^{*}_{b})] \Big)\Big( \sum_{1 \leqslant i \leqslant b} \sqrt{b}\rho_{n}^{-\mathfrak{r}} g_{1,R}\left(\mathbb{V}_i\right)\Big)\Big] \\
    \stackrel{\text{\eqref{eq:hoeffding_U_R}}}{=}&\var_*\Big[ \sqrt{b}\rho_{n}^{-\mathfrak{r}} U_{R}(\mathbb{G}^{*}_{b})\Big]+E_*\Big[  \sum_{1 \leqslant i \leqslant b} \sqrt{b}\rho_{n}^{-\mathfrak{r}} g_{1,R}\left(\mathbb{V}_i\right)\Big]^2 \\
    &- 2E_*\Big[ \Big( \Delta[\sqrt{b} \rho_{n}^{-\mathfrak{r}}U_{R}(\mathbb{G}^{*}_{b})] + \sum_{1 \leqslant i \leqslant b} \sqrt{b}\rho_{n}^{-\mathfrak{r}} g_{1,R}\left(\mathbb{V}_i\right)\Big)\Big( \sum_{1 \leqslant i \leqslant b} \sqrt{b}\rho_{n}^{-\mathfrak{r}} g_{1,R}\left(\mathbb{V}_i\right)\Big)\Big] \\
    \stackrel{\text{\eqref{tref2}}}{=}&\var_*\Big[ \sqrt{b}\rho_{n}^{-\mathfrak{r}} U_{R}(\mathbb{G}^{*}_{b})\Big]-E_*\Big[  \sum_{1 \leqslant i \leqslant b} \sqrt{b}\rho_{n}^{-\mathfrak{r}} g_{1,R}\left(\mathbb{V}_i\right)\Big]^2 \\\stackrel{\text{\eqref{eq:expg1}}}{=}&\var_*\Big[ \sqrt{b}\rho_{n}^{-\mathfrak{r}} U_{R}(\mathbb{G}^{*}_{b})\Big] -\var_*\Big[   \sum_{1 \leqslant i \leqslant b} \sqrt{b}\rho_{n}^{-\mathfrak{r}} g_{1,R}\left(\mathbb{V}_i\right)\Big] = \mathrm{I} - \mathrm{II}.
    \end{aligned}
\end{equation}

Term I is the variance of the network moment. For the convenience of later analysis, we study the more general covariance term here for any two motifs $R$ and $R'$.
    \begin{equation}
\label{eq:more_cov_subsample}
 \begin{aligned}
     &\Cov_*\Big[\sqrt{b}\rho^{-\mathfrak{r}}_{n}U_{R}(\mathbb{G}^{*}_{b}),\sqrt{b}\rho^{-\mathfrak{r}'}_{n} U_{R'}(\mathbb{G}^{*}_{b})\Big] = b\rho^{-(\mathfrak{r}+\mathfrak{r}')}_{n}\Cov_*\Big[U_{R}(\mathbb{G}^{*}_{b}), U_{R'}(\mathbb{G}^{*}_{b})\Big]
     \\  \stackrel{\text{\eqref{eq:cov_subsample}}}{=}&b\rho^{-(\mathfrak{r}+\mathfrak{r}')}_{n}\Big\{\binom{b}{r}^{-1}\binom{b}{r'}^{-1} \sum_{q=0}^{\min\{r,r'\}}\sum_{S\in \mathcal{S}_{R, R'}^{(q)}}c_S\binom{b}{s}U_{S}(G)- U_{R}(G)U_{R'}(G)\Big\} 
     \\= &b\rho^{-(\mathfrak{r}+\mathfrak{r}')}_{n}\Big\{\binom{b}{r}^{-1}\binom{b}{r'}^{-1} \sum_{q=0}^{\min\{r,r'\}}\sum_{S\in \mathcal{S}_{R, R'}^{(q)}}c_S\binom{b}{s}U_{S}(G) - \binom{n}{r}^{-1}\binom{n}{r'}^{-1}X_R(G)X_{R'}(G)\Big\} 
     \\   \stackrel{\text{\eqref{eq:linerityMC}}}{=} & b\rho^{-(\mathfrak{r}+\mathfrak{r}')}_{n}\Big\{\binom{b}{r}^{-1}\binom{b}{r'}^{-1} \sum_{q=0}^{\min\{r,r'\}}\sum_{S\in \mathcal{S}_{R, R'}^{(q)}}c_S\binom{b}{s}U_{S}(G) 
   \\&-\binom{n}{r}^{-1}\binom{n}{r'}^{-1}\sum_{q=0}^{\min\{r,r'\}} \sum_{S\in\mathcal{S}_{R,R'}^{(q)}}c_S \binom{n}{s}U_{S}(G)\Big\}
   % &\hspace{-0.5cm}= \rho^{-(\mathfrak{r}+\mathfrak{r}')}_{n}\sum_{q=0}^{r} \sum_{S\in \mathcal{S}_{R, R'}^{(q)}}c_S \dfrac{r!r'!}{s!}\dfrac{(b-r)!(b-r')!}{(b-1)!(b-s)!}U_{S}(G)\\&- \rho^{-(\mathfrak{r}+\mathfrak{r}')}_{n}\sum_{q=0}^{r} \sum_{S\in \mathcal{S}_{R, R'}^{(q)}}c_S \dfrac{r!r'!}{s!}\dfrac{b(n-r)!(n-r')!}{n!(n-s)!}U_S (G)
   \\= &\sum_{q=0}^{\min\{r,r'\}} \sum_{S\in \mathcal{S}_{R, R'}^{(q)}} \dfrac{c_Sr!r'!}{s!\rho^{\mathfrak{r}+\mathfrak{r}'}_{n}}\Big(\dfrac{(b-r)!(b-r')!}{(b-1)!(b-s)!} - \dfrac{b(n-r)!(n-r')!}{n!(n-s)!}\Big) U_{S}(G)\\
    &\hspace{-0.5cm}=\sum_{q=0}^{\min\{r,r'\}} \sum_{S\in \mathcal{S}_{R, R'}^{(q)}} \dfrac{c_Sr!r'!}{s!\rho^{\mathfrak{r}+\mathfrak{r}'}_{n}}\Big(\dfrac{(b-r)!(b-r')!n!(n-s)! - b!(b-s)!(n-r)!(n-r')!] }{(b-1)!(b-s)!n!(n-s)!} \Big) U_{S}(G).
\end{aligned}   
\end{equation}

As a special case of $R=R'$, we have 
\begin{equation}
    \label{eq:var_b_pn_UR}
    \begin{aligned}
         \mathrm{I} = &\var_*\Big[\sqrt{b}\rho^{-\mathfrak{r}}_{n}U_{R}(\mathbb{G}^{*}_{b})\Big] \\=&\sum_{q=0}^{r} \sum_{S\in S_{R,R}^{(q)}} \dfrac{c_Sr!r!}{s!\rho^{2\mathfrak{r}}_{n}}\Big(\dfrac{(b-r)!(b-r)!n!(n-s)! - b!(b-s)!(n-r)!(n-r)!] }{(b-1)!(b-s)!n!(n-s)!} \Big) U_{S}(G).
    \end{aligned}
\end{equation}

\allowdisplaybreaks
For term II,  based on Proposition \ref{prop:stats_prop_finite_U_network}, we have
\begin{align*}
    &\var_* \Big[\sqrt{b}\rho_{n}^{-\mathfrak{r}}\sum g_{1,R}\left(\mathbb{V}_i\right) \Big] = b\rho_{n}^{-2\mathfrak{r}}\var_* \Big[\sum g_{1,R}\left(\mathbb{V}_i\right) \Big] 
 \stackrel{\text{\eqref{eq:var_sumg1}}}{=} \rho_{n}^{-2\mathfrak{r}}b \frac{b(n-b)}{(n-1)}\var_* \Big[g_{1,R}\left(\mathbb{V}_1\right) \Big]  \\\stackrel{\text{\eqref{eq:varg1}}}{=} &\rho_{n}^{-2\mathfrak{r}}\dfrac{b^2(n-b)}{(n-1)}  \Big[\dfrac{r!(n-r-1)!}{b(n-2)!} \Big]^{2} \sum_{q=0}^{r} \sum_{S\in\mathcal{S}_{R,R}^{(q)}} c_S \dfrac{nq-r^2}{n^2} X_S(G) \\
  = & \rho_{n}^{-2\mathfrak{r}}\sum_{q=0}^{r} \sum_{S\in \mathcal{S}_{R,R}^{(q)}} c_S \dfrac{b^2(n-b)}{(n-1)} \dfrac{r!(n-r-1)!r!(n-r-1)!}{b^2(n-2)!(n-2)!}\dfrac{(nq-r^2)}{n^2}\dfrac{n!}{(2r-q)!(n-2r+q)!} U_{S}(G) \\
= & \rho_{n}^{-2\mathfrak{r}}\sum_{q=0}^{r} \sum_{S\in\mathcal{S}_{R,R}^{(q)}}\dfrac{ c_Sr!r!}{(2r-q)!}  \Big[\dfrac{n-b}{n-1}\dfrac{n(n-1)}{n^2}\dfrac{(n-r-1)\cdots(n-2r+q+1)}{(n-2)\cdots(n-r)}(nq-r^2) \Big] U_{S}(G) \\
  = &\rho_{n}^{-2\mathfrak{r}}\sum_{q=0}^{r} \sum_{S\in\mathcal{S}_{R,R}^{(q)}} \dfrac{c_Sr!r!}{(2r-q)!}  \Big[\dfrac{(n-b)(n-r-1)\cdots(n-2r+q+1)(nq-r^2)}{n(n-2)\cdots(n-r)} \Big] U_{S}(G)\\
  = &\rho_{n}^{-2\mathfrak{r}}\sum_{q=0}^{r} \sum_{S\in\mathcal{S}_{R,R}^{(q)}} \dfrac{c_Sr!r!}{s!}  \Big[\dfrac{(n-b)(n-r-1)\cdots(n-s+1)(nq-r^2)}{n(n-2)\cdots(n-r)} \Big] U_{S}(G).
\end{align*}

Therefore, by combining term I and term II, we have
\begin{equation}
\label{eq: addref9_1}
    \begin{aligned}
       &E_* \big\{\Delta^2[\sqrt{b} \rho_{n}^{-\mathfrak{r}}U_{R}(\mathbb{G}^{*}_{b})] \big\} \stackrel{\text{\eqref{addref9}}}{=} \mathrm{I} - \mathrm{II} = \sum_{q=0}^{r} \sum_{S\in\mathcal{S}_{R,R}^{(q)}} a_S y_{n,S} z_{n,S},
    \end{aligned}
\end{equation}
where 
\begin{equation}
    \begin{aligned}
        & a_S = \dfrac{c_Sr!r!}{s}, \\
        &  y_{n,S} = \Big[\dfrac{(b-r)!(b-r)!n!(n-s)! - b!(b-s)!(n-r)!(n-r)!] }{(b-1)!(b-s)!n!(n-s)!} \\&\hspace{1cm}-\dfrac{(n-b)(n-r-1)\cdots(n-s+1)(nq-r^2)}{n(n-2)\cdots(n-r)} \Big],\\
        &  z_{n,S} = \rho_{n}^{-2\mathfrak{r}}U_{S}(G).
    \end{aligned}
\end{equation}

For a given $S$, $a_s$ is a fixed quantity. Now we focus on the limiting behavior of $y_{n,S}z_{n,S}$ for different numbers of the merged nodes $q$.

\begin{itemize}
    \item When $q = 0$, we have $\mathfrak{s} = 2\mathfrak{r}$ and $s = 2r$.  We start with the first part of $y_{n,S}$.
     \begin{align*}
        &\dfrac{(b-r)!(b-r)!n!(n-s)! - b!(b-s)!(n-r)!(n-r)! }{(b-1)!(b-s)!n!(n-s)!}\\
      =  &  \dfrac{[n\cdots (n-r+1)(b-r)\cdots (b-2r+1)] -[b\cdots (b-r+1) (n-r)\cdots (n-2r+1)]}{(b-1)\cdots (b-r+1)n\cdots (n-r+1)}\\
      =  & \dfrac{\mathrm{IV} - \mathrm{V}}{\mathrm{VI}}.
    \end{align*}

Now we study each component as follows:
    \begin{align*}
   \mathrm{IV}= & b^{r}n^r - (r+r+1+\cdots r+r-1)b^{r-1}n^r -[1+2+\cdots +(r-1)]b^{r}n^{r-1}\\
   &+O(b^{r-2}n^r) + O(b^{r}n^{r-2}) + O(b^{r-1}n^{r-1}) +  o(b^{r-1}n^{r-1})
     \\=& b^{r}n^r - \dfrac{(r+2r-1)r}{2}b^{r-1}n^r -  \dfrac{(r-1)r}{2}b^{r}n^{r-1}
     \\  &+O(b^{r-2}n^r) + O(b^{r}n^{r-2}) + O(b^{r-1}n^{r-1}) +  o(b^{r-1}n^{r-1}).\\
      \mathrm{V}= & b^{r}n^r - (r+r+1+\cdots r+r-1)n^{r-1}b^r -[1+2+\cdots +(r-1)]n^{r}b^{r-1}\\
   &+O(b^{r-2}n^r) + O(b^{r}n^{r-2}) + O(b^{r-1}n^{r-1}) +  o(b^{r-1}n^{r-1})
     \\=& b^{r}n^r - \dfrac{(r+2r-1)r}{2}n^{r-1}b^r -  \dfrac{(r-1)r}{2}n^{r}b^{r-1}
      \\&+O(b^{r-2}n^r) + O(b^{r}n^{r-2}) + O(b^{r-1}n^{r-1}) +  o(b^{r-1}n^{r-1}).\\
\end{align*}
Consequently,
  \begin{equation*}
\begin{aligned}
        \mathrm{IV} - \mathrm{V} = &b^{r}n^r - \dfrac{(r+2r-1)r}{2}b^{r-1}n^r -  \dfrac{(r-1)r}{2}b^{r}n^{r-1}
     \\&+O(b^{r-2}n^r) + O(b^{r}n^{r-2}) + O(b^{r-1}n^{r-1})\\
     & - \big[b^{r}n^r - \dfrac{(r+2r-1)r}{2}n^{r-1}b^r -  \dfrac{(r-1)r}{2}n^{r}b^{r-1}
     \\&+O(b^{r-2}n^r) + O(b^{r}n^{r-2}) + O(b^{r-1}n^{r-1})\big]\\
     = &\dfrac{-(r+2r-1)r}{2}n^{r-1}b^{r-1}(n-b) + \dfrac{(r-1)r}{2}n^{r-1}b^{r-1}(n-b)\\
     & + O(b^{r-2}n^r) + O(n^{r-2}b^r)  + O(b^{r-1}n^{r-1}) \\
     = & (-r^2)n^{r-1}b^{r-1}(n-b) + O(b^{r-2}n^r) + O(n^{r-2}b^r)  + O(b^{r-1}n^{r-1}).
\end{aligned}
\end{equation*}

Thus,
\begin{equation}
\label{eq:limref}
    \begin{aligned}
   \mathrm{VI} = & b^{r-1}n^r + o(b^{r-1}n^r).\\
     \dfrac{  \mathrm{IV} - \mathrm{V}}{  \mathrm{VI}} = & (-r^2)\frac{n-b}{n} + o(1).
    \end{aligned}
\end{equation}

For the second part of $y_{n,S}$, we have
\begin{align*}
& \Big[\dfrac{(n-b)(n-r-1)\cdots(n-s+1)(nq-r^2)}{n(n-2)\cdots(n-r)} \Big]\\
= &\Big[\dfrac{(n-b)(n-r-1)(n-r-2)\cdots(n-2r+1)(0-r^2)}{n(n-2)(n-3)\cdots(n-r)} \Big] \\
= &\Big[(1 - \dfrac{b}{n})(1 - \dfrac{r+1}{n-2})(1 - \dfrac{r+1}{n-3})\cdots(1 - \dfrac{r+1}{n-r})(-r^2) \Big]\\
= & -r^2 \Big[ \dfrac{n-b}{n} + o(1) \Big].
\end{align*}

Therefore, we have 
\begin{equation*}
    y_{n,S} =  (-r^2)\frac{n-b}{n} + o(1) - (-r^2)\frac{n-b}{n} + o(1) = o(1).
\end{equation*}

To understand $z_{n,S}$, recall that $G \sim \mathbb{G_{n}}$.   Lemma \ref{lem: lim_netmoment_graphcon} implies that under Assumption \ref{ass:rho_n_h_n}, with probability one,
\begin{equation*}
    \lim\limits_{n \to \infty} \rho_{n}^{-\mathfrak{s}}U_{S}(\mathbb{G}_{n}) = \frac{s!}{|\mathrm{Aut}(S)|}P_w(S).
\end{equation*}
Thus, we assert that $y_{n,S}z_{n,S} \to 0$ with probability one. 
 
 \item When $q = 1$, we have $\mathfrak{s} = 2\mathfrak{r}$ and $s = 2r - 1$. As before, we study the first part of $y_{n,S}$.
 \begin{align*}
        &\dfrac{[(b-r)!(b-r)!n!(n-s)! - b!(b-s)!(n-r)!(n-r)!] }{(b-1)!(b-s)!n!(n-s)!}\\
      =  & \dfrac{[(b-r)!(b-r)!n!(n-2r+1)! - b!(b-2r+1)!(n-r)!(n-r)!] }{(b-1)!(b-2r+1)!n!(n-2r+1)!} = 1 - \dfrac{b}{n} + o(1).
    \end{align*}

For the second part of $y_{n,S}$, we have
\begin{align*}
& \Big[\dfrac{(n-b)(n-r-1)\cdots(n-s+1)(nq-r^2)}{n(n-2)\cdots(n-r)} \Big]\\
= & \Big[\dfrac{(n-b)(n-r-1)(n-r-2)\cdots(n-2r+2)(n-r^2)}{n(n-2)(n-3)\cdots(n-r)} \Big]  \\
= & 1 - \frac{b}{n} + o(1).
\end{align*}
Therefore, when  $q = 1$, we have $y_{n,S} =o(1)$.  Thus, we also have $y_{n,S}z_{n,S} \to 0$ with probability one.
 \item When $q > 1$, the first part of $y_{n,S}$ is
\begin{align*}
        \dfrac{[(b-r)!(b-r)!n!(n-s)! - b!(b-s)!(n-r)!(n-r)!] }{(b-1)!(b-s)!n!(n-s)!}= O\big[\frac{1}{b^{(q-1)}}\big].
    \end{align*}

The second part of $y_{n,S}$ is
\begin{align*}
& \dfrac{(n-b)(n-r-1)\cdots(n-s+1)(nq-r^2)}{n(n-2)\cdots(n-r)} \\
= &\dfrac{(n-b)(n-r-1)(n-r-2)\cdots(n-2r+q+1)(nq-r^2)}{n(n-2)(n-3)\cdots(n-r)} \\
= & O\big[\frac{(n-b)}{n^{q}}\big].
\end{align*}

Therefore, $y_{n,s} = o(1)$. We have $y_{n,S}z_{n,S} \to 0$ for every $q \geq 2$, .

\end{itemize}

Finally, because $r$ is a constant and $\mathcal{S}_{R,R}$ is a fixed set given $R$. For any random network sequence $\{G^{(n)}\}$,  with probability one, 
\begin{equation*}
   \lim\limits_{n \to \infty}E_* \big\{ \Delta^2[\sqrt{b} \rho_{n}^{-\mathfrak{r}}U_{R}(\mathbb{G}^{*}_{b})]\big\} = \lim\limits_{n \to \infty}\sum_{q=0}^{r} \sum_{S\in\mathcal{S}_{R,R}^{(q)}} a_S y_{n,s} z_{n,s} = 0.
\end{equation*}
\item Now we want to show  \eqref{eq:assump2}, which is related to non-degeneration, and is termed as the non-lattice assumption in \cite{zhang2022edgeworth}. From Lemma \ref{coro:lim_var_subsample}, under Assumptions \ref{ass:rho_n_h_n} and \ref{ass:b}, with probability one,
\begin{equation*}
\begin{aligned}
 &\lim\limits_{ b\to \infty}\rho^{-2\mathfrak{r}}_{n}\var_*\big[\sqrt{b}U_{R}(\mathbb{G}^{*}_{b})\big] = \big(1 - c_2\big)\lim\limits_{b \to \infty}\rho^{-2\mathfrak{r}}_{b}\var\big[\sqrt{b}U_{R}(\mathbb{G}_{b})\big].\\
\end{aligned}
\end{equation*}

Since $c_2 < 1$ is a constant, \eqref{eq:assump2} holds by Assumption \ref{ass:non_degenerate}.

\item Next, we want to show \eqref{eq:assump3}, the Lindeberg-Feller typed condition. To verify this, We want to show that 
 $$b\rho_{n}^{-2\mathfrak{r}}g^2_{1,R}(\mathbb{V}_1) = o(1).$$
 Let's begin by considering the following expression:
\begin{equation}
\begin{aligned}
\label{eq:b_rho_g1square}
b\rho_{n}^{-2\mathfrak{r}}g^2_{1,R}(\mathbb{V}_1)  &\stackrel{\text{ \eqref{eq:g1}}}{=}  \frac{\rho_{n}^{-2\mathfrak{r}}}{b}\bigg\{  \frac{r(n-1)}{(n-r)}\binom{n-1}{r-1}^{-1}\sum_{\mathcal{G}\in \mathcal{S}(\mathbb{G}^{V_1*}_{r})}  X_R(\mathcal{G}) - \frac{(n-1)r}{(n-r)}U_{R}(G)\bigg\}^2.
\end{aligned}
\end{equation}

Recall that $K_r$ denotes a complete graph of $r$ nodes. Clearly, for any $\mathcal{G}\in \mathcal{S}(\mathbb{G}^{V_1*}_{r})$, $$X_R(\mathcal{G}) \leqslant X_R(K_{r}).$$ Equation (2.7) in \cite{bhattacharya2022fluctuations} shows that $X_R(K_{r}) = r!/|\mathrm{Aut}(R)|$. Therefore,
\begin{equation}
\label{tref6}
    \begin{aligned}
        &\frac{r!(n-r-1)!}{(n-2)!}\sum_{\mathcal{G}\in \mathcal{S}(\mathbb{G}^{V_1*}_{r})}  X_R(\mathcal{G}) =  \frac{r(n-r)}{(n-1)}\binom{n-1}{r-1}^{-1}\sum_{\mathcal{G}\in \mathcal{S}(\mathbb{G}^{V_1*}_{r})}  X_R(\mathcal{G})\\
        \leqslant  & \frac{r(n-r)}{(n-1)}\binom{n-1}{r-1}^{-1}\sum_{\mathcal{G}\in \mathcal{S}(\mathbb{G}^{V_1*}_{r})}  \left(\dfrac{r!}{|\mathrm{Aut}(R)|}\right)\mathbb{1}_{\{R \subset \mathcal{G}\}}\\
        =  & \frac{r(n-r)}{(n-1)}\left(\dfrac{r!}{|\mathrm{Aut}(R)|}\right)\binom{n-1}{r-1}^{-1}\sum_{\mathcal{G}\in \mathcal{S}(\mathbb{G}^{V_1*}_{r})}  \mathbb{1}_{\{R \subset \mathcal{G}\}}
        \leqslant \frac{r(n-r)}{(n-1)}\left(\dfrac{r!}{|\mathrm{Aut}(R)|}\right).
    \end{aligned}
\end{equation}
By \eqref{eq:b_rho_g1square}, 
\begin{equation*}
\begin{aligned}
    b\rho_{n}^{-2\mathfrak{r}}g^2_{1,R}(\mathbb{V}_1)  %&\stackrel{\text{ \eqref{g1}}}{=}  \frac{\rho_{n}^{-2\mathfrak{r}}}{b}\bigg\{  \frac{r(n-1)}{(n-r)}\binom{n-1}{r-1}^{-1}\sum_{\mathcal{G}\in \mathcal{S}(\mathbb{G}^{V_1*}_{r})}  X_R(\mathcal{G}) - \frac{(n-1)r}{(n-r)}U_{R}(G)\bigg\}^2\\
   &  \stackrel{\text{ \eqref{eq:b_rho_g1square}}}{\leqslant }  \frac{\rho_{n}^{-2\mathfrak{r}}}{b} \bigg\{\Big[ \frac{r(n-1)}{(n-r)}\binom{n-1}{r-1}^{-1}\sum_{\mathcal{G}\in \mathcal{S}(\mathbb{G}^{V_1*}_{r})}  X_R(\mathcal{G})\Big]^2 + \left[\frac{(n-1)r}{(n-r)}U_{R}( G)\right]^2\bigg\}\\
   & \stackrel{\text{\eqref{tref6}}}{\leqslant} \frac{\rho_{n}^{-2\mathfrak{r}}}{b} \bigg\{\Big[\frac{r(n-r)}{(n-1)}\left(\dfrac{r!}{|\mathrm{Aut}(R)|}\right)\Big]^2 + \left[\frac{(n-1)r}{(n-r)}U_{R}( G)\right]^2\bigg\}.
\end{aligned}
\end{equation*}

Given that both $r$ and$|\mathrm{Aut}(R)|$ are constants, and considering $U_{R}(G) \leqslant 1$, it follows that if $\rho^{-2\mathfrak{r}}_{n} /b\to 0$ , then for any given $\epsilon > 0$, there exists a $K > 0$ such that when $k > K$, 
\begin{equation}
    \label{eq:addref16}
    \mathbb{1}_{\left\{b\rho_{n}^{-2\mathfrak{r}}g^2_{1,R}(\mathbb{V}_1)> \epsilon\right\}} = 0.
\end{equation}

The above arguments indicate
$$ \lim\limits_{n \to \infty}  bE_* 
    \big[b\rho_{n}^{-2\mathfrak{r}} g^2_{1,R}(\mathbb{V}_1)\mathbb{1}_{\{b\rho_{n}^{-2\mathfrak{r}} g^2_{1,R}(\mathbb{V}_1)> \epsilon\}}\big] = 0.$$

\item To prove \eqref{eq:unit_subsample_distribution_with_truepn}, recall that  \eqref{lem: exp_cov_subsample}  implies $E_*[U_{R}(\mathbb{G}^{*}_{b})] =  U_{R}(G)$. When \eqref{eq:assump1}, \eqref{eq:assump2}, and \eqref{eq:assump3} hold, Lemma \ref{lem: normal_UR_subsample} implies that 
\begin{equation*}
    \begin{aligned}
        &\dfrac{\sqrt{b}\big[\rho_{n}^{-\mathfrak{r}}U_{R}(\mathbb{G}^{*}_{b}) -\rho_{n}^{-\mathfrak{r}}U_{R}(G^{(n)})\big] }{\var_*(\sqrt{b}\rho_{n}^{-\mathfrak{r}}U_{R}(\mathbb{G}^{*}_{b})) }
   %     & \dfrac{\sqrt{b_n}\big[\rho_{n}^{-\mathfrak{r}}U_{R}(\mathbb{G}^{*}_{b_n}) -\rho_{n}^{-\mathfrak{r}}U_{R}(G^{(n)})\big] }{\var_{FJ*}(\sqrt{b_n}\rho_{n}^{-\mathfrak{r}}U_{R}(\mathbb{G}^{*}_{b_n}))}
    \end{aligned}
\end{equation*}
is  asymptotically standard normal. 
\end{enumerate}

Now we proceed to prove Part \ref{theo:subsample_distribution_b}. Give the $m$ motifs, $R_1, \cdots, R_m$ be  $m$, consider the following linear combination 
$$ \Theta^{(a_1,\cdots,a_m)}_{R_1, \ldots, R_m} = a_1\sqrt{b}\rho^{-\mathfrak{r}_1}_{n}U_{R_1}(\mathbb{G}^{*}_{b}) + \cdots + a_m\sqrt{b}\rho^{-\mathfrak{r}_m}_{n}U_{R_m}(\mathbb{G}^{*}_{b})$$ 
where $a_1,\cdots,a_m$ are constants. For simplicity, denote $\Theta = \Theta^{(a_1,\cdots,a_m)}_{R_1, \ldots, R_m}$, which is a symmetric finite population statistic with Hoeffding's decomposition 
$$\Theta = E_* (\Theta )+\sum_{1 \leqslant i \leqslant b} g_{1, \Theta}\left(\mathbb{V}_i\right)+ \Delta(\Theta).$$
We want to show that every linear combination $\Theta$ is asymptotically normal. Following Lemma \ref{lem: normal_UR_subsample}, we need to verify the following conditions:
\begin{equation}
\label{eq:massump1}
     \lim\limits_{b \to \infty} E_*\Delta^2(  \Theta) = 0,  \hspace{0.2cm} \text{where $\Delta( \Theta) =  \Theta -E_* ( \Theta ) -\sum_{1 \leqslant i \leqslant b} g_{1, \Theta}\left(\mathbb{V}_i\right)$.}
\end{equation}
\begin{equation}
\label{eq:massump2}
    0 < c_1 \leqslant \lim_{b \to \infty}\var_*(\Theta) \leqslant c_2 < \infty, \hspace{1cm} \text{for some $c_1, c_2 > 0$.}
\end{equation}
\begin{equation}
\label{eq:massump3}
 \text{For every $\epsilon > 0$} \hspace{1cm} \lim_{b \to \infty}  bE_* 
    \big[g^2_{1, \Theta}\left(\mathbb{V}_1\right)\mathbb{1}_{\left\{g^2_{1, \Theta}\left(\mathbb{V}_1\right) > \epsilon \right\}}] = 0.
\end{equation}
\begin{enumerate}[label= \ref{theo:subsample_distribution_b}.\roman*]
  \item To show \eqref{eq:massump1},  we first consider a special case that we have only a pair of motifs $R$ and $R'$ whose linear combination is defined by two coefficients $\alpha$ and $\beta$. In this case, 
\begin{equation}
    \label{eq:theta_RR'}
  \Theta= \Theta^{(\alpha,\beta)}_{R,R'} = \alpha\sqrt{b}\rho^{-\mathfrak{r}}_{n}U_{R}(\mathbb{G}^{*}_{b}) + \beta\sqrt{b}\rho^{-\mathfrak{r'}}_{n}U_{R'}(\mathbb{G}^{*}_{b}).
\end{equation}

The Hoeffding's decomposition of $  \Theta^{(\alpha,\beta)}_{R,R'}$  can be expressed as follows
\begin{equation}
\label{eq:hoeffing_theta_RR'}
    \begin{aligned}
      \Theta^{(\alpha,\beta)}_{R,R'}= E_* \big( \Theta^{(\alpha,\beta)}_{R,R'}\big)+\sum_{1 \leqslant i \leqslant b} g_{1,     \Theta^{(\alpha,\beta)}_{R,R'}}\left(\mathbb{V}_i\right)+\sum_{1 \leqslant  i<j \leqslant b} g_{2,     \Theta^{(\alpha,\beta)}_{R,R'}}\left(\mathbb{V}_i, \mathbb{V}_j\right)+\cdots.
    \end{aligned}
\end{equation}

The Proposition \ref{prop:stats_prop_finite_U_network} implies 
\begin{equation}
    \label{eq:exp_two_R}
   E_* ( \Theta^{(\alpha,\beta)}_{R,R'}) = \alpha\sqrt{b}\rho^{-\mathfrak{r}}_{n}E_* \big[U_{R}(\mathbb{G}^{*}_{b})\big] + \beta \sqrt{b}\rho^{-\mathfrak{r'}}_{n}E_* \big[U_{R'}(\mathbb{G}^{*}_{b})\big],
\end{equation}

and 
\begin{equation}
    \label{eq:sum_g_two_R}
   \sum_{1 \leqslant i \leqslant b} g_{1, \Theta^{(\alpha,\beta)}_{R,R'}}\left(\mathbb{V}_i\right) \stackrel{\text{\eqref{eq:sg1}}}{=}  \alpha\sqrt{b}\rho^{-\mathfrak{r}}_{n}\sum_{1 \leqslant i \leqslant b} g_{1,R}\left(\mathbb{V}_i\right)+ \beta \sqrt{b}\rho^{-\mathfrak{r'}}_{n}\sum_{1 \leqslant i \leqslant b} g_{1,R'}\left(\mathbb{V}_i\right),
\end{equation}
where $g_{1,R}\left(\mathbb{V}_i\right)$ and $g_{1,R'}\left(\mathbb{V}_i\right)$ are defined in \eqref{eq:g1}.  Note that 
\begin{equation}
\label{eq:delta_two_RR'}
    \begin{aligned}
     \Delta[  \Theta^{(\alpha,\beta)}_{R,R'}] &=  \Theta^{(\alpha,\beta)}_{R,R'} -E_* \big[ \Theta^{(\alpha,\beta)}_{R,R'}\big] -\sum_{1 \leqslant i \leqslant b} g_{1, \Theta^{(\alpha,\beta)}_{R,R'}}\left(\mathbb{V}_i\right)   \\
     &\stackrel{\text{\eqref{eq:theta_RR'}}}{=}\alpha\sqrt{b}\rho^{-\mathfrak{r}}_{n}U_{R}(\mathbb{G}^{*}_{b}) + \beta\sqrt{b}\rho^{-\mathfrak{r'}}_{n}U_{R'}(\mathbb{G}^{*}_{b}) -E_* \big[ \Theta^{(\alpha,\beta)}_{R,R'}\big] -\sum_{1 \leqslant i \leqslant b} g_{1, \Theta^{(\alpha,\beta)}_{R,R'}}\left(\mathbb{V}_i\right)   \\
& \stackrel{\text{\eqref{eq:exp_two_R}}}{=}\alpha\sqrt{b}\rho^{-\mathfrak{r}}_{n}U_{R}(\mathbb{G}^{*}_{b}) + \beta\sqrt{b}\rho^{-\mathfrak{r'}}_{n}U_{R'}(\mathbb{G}^{*}_{b}) \\&\hspace{0.5cm}- 
\alpha\sqrt{b}\rho^{-\mathfrak{r}}_{n}E_* \big[U_{R}(\mathbb{G}^{*}_{b})\big] - \beta \sqrt{b}\rho^{-\mathfrak{r'}}_{n}E_* \big[U_{R'}(\mathbb{G}^{*}_{b})\big]-\sum_{1 \leqslant i \leqslant b} g_{1, \Theta^{(\alpha,\beta)}_{R,R'}}\left(\mathbb{V}_i\right)\\
& \stackrel{\text{\eqref{eq:sum_g_two_R}}}{=}\alpha\sqrt{b}\rho^{-\mathfrak{r}}_{n}U_{R}(\mathbb{G}^{*}_{b}) + \beta\sqrt{b}\rho^{-\mathfrak{r'}}_{n}U_{R'}(\mathbb{G}^{*}_{b})\\&\hspace{0.5cm} - 
\alpha\sqrt{b}\rho^{-\mathfrak{r}}_{n}E_* \big[U_{R}(\mathbb{G}^{*}_{b})\big] - \beta \sqrt{b}\rho^{-\mathfrak{r'}}_{n}E_* \big[U_{R'}(\mathbb{G}^{*}_{b})\big]\\&\hspace{0.5cm}-
\alpha\sqrt{b}\rho^{-\mathfrak{r}}_{n}\sum_{1 \leqslant i \leqslant b} g_{1,R}\left(\mathbb{V}_i\right)- \beta \sqrt{b}\rho^{-\mathfrak{r'}}_{n}\sum_{1 \leqslant i \leqslant b} g_{1,R'}\left(\mathbb{V}_i\right)\\& = \alpha\sqrt{b}\rho^{-\mathfrak{r}}_{n}U_{R}(\mathbb{G}^{*}_{b})- 
\alpha\sqrt{b}\rho^{-\mathfrak{r}}_{n}E_* \big[U_{R}(\mathbb{G}^{*}_{b})\big] -
\alpha\sqrt{b}\rho^{-\mathfrak{r}}_{n}\sum_{1 \leqslant i \leqslant b} g_{1,R}\left(\mathbb{V}_i\right)\\&\hspace{0.5cm}+ \beta\sqrt{b}\rho^{-\mathfrak{r'}}_{n}U_{R'}(\mathbb{G}^{*}_{b}) - \beta \sqrt{b}\rho^{-\mathfrak{r'}}_{n}E_* \big[U_{R'}(\mathbb{G}^{*}_{b})\big]- \beta \sqrt{b}\rho^{-\mathfrak{r'}}_{n}\sum_{1 \leqslant i \leqslant b} g_{1,R'}\left(\mathbb{V}_i\right)\\
& \stackrel{\text{\eqref{eq:hoeffding_pUR}}}{=}\alpha \Delta(\sqrt{b}\rho^{-\mathfrak{r}}_{n}U_{R}(\mathbb{G}^{*}_{b})) + \beta \Delta(\sqrt{b}\rho^{-\mathfrak{r'}}_{n}U_{R'}(\mathbb{G}^{*}_{b})).
    \end{aligned}
\end{equation}

Consequently,
\begin{equation*}
\begin{aligned}      &\hspace{0.6cm}E_*\Big[\Delta^2(\Theta^{(\alpha,\beta)}_{R,R'})\Big] \stackrel{\text{\eqref{eq:delta_two_RR'}}}{\leq} 2E_*\Big[\alpha \Delta(\sqrt{b}\rho^{-\mathfrak{r}}_{n}U_{R}(\mathbb{G}^{*}_{b}))\Big]^2  +  2E_*\Big[\Delta(\sqrt{b}\rho^{-\mathfrak{r'}}_{n}U_{R'}(\mathbb{G}^{*}_{b}))\Big]^2.
\end{aligned}
\end{equation*}

From \eqref{eq:assump1}, under Assumptions  \ref{ass:rho_n_h_n}-\ref{ass:non_degenerate}, we have $$
        \lim\limits_{b \to \infty }  E_*\Big[\Delta^2(\Theta^{(\alpha,\beta)}_{R,R'})\Big] \leqslant \lim\limits_{b \to \infty }  2E_*\Big[\alpha \Delta(\sqrt{b}\rho^{-\mathfrak{r}}_{n}U_{R}(\mathbb{G}^{*}_{b}))\Big]^2  +  \lim\limits_{b \to \infty } 2E_*\Big[\beta \Delta(\sqrt{b}\rho^{-\mathfrak{r'}}_{n}U_{R'}(\mathbb{G}^{*}_{b}))\Big]^2  =0.$$

More generally, for $m$ motifs, we have
\begin{equation}
\label{eq:linear_combine_UR}
   E_* \big[ \Theta\big] = a_1E_* \big[\sqrt{b}\rho^{-\mathfrak{r}_1}_{n}U_{R_1}(\mathbb{G}^{*}_{b})\big] + \cdots + a_mE_* \big[\sqrt{b}\rho^{-\mathfrak{r}_m}_{n}U_{R_m}(\mathbb{G}^{*}_{b})\big].
\end{equation}

On the other hand, we have
\begin{equation}
\label{eq:linear_combine_sum_g}
   \sum_{1 \leqslant i \leqslant b} g_{1, \Theta}\left(\mathbb{V}_i\right)\stackrel{\text{\eqref{eq:sum_g_two_R}}}{=}  a_1\sqrt{b}\rho^{-\mathfrak{r}_1}_{n}\sum_{1 \leqslant i \leqslant b} g_{1,R_1}\left(\mathbb{V}_i\right) + \cdots + a_m \sqrt{b}\rho^{-\mathfrak{r}_m}_{n}\sum_{1 \leqslant i \leqslant b} g_{1,R_m}\left(\mathbb{V}_i\right).
\end{equation}

Similar to the derivation of \eqref{eq:delta_two_RR'}, we  have:
\begin{equation*}
    \Delta(\Theta) = a_1\Delta[\sqrt{b} \rho_{n}^{-\mathfrak{r}_1}U_{R_1}(\mathbb{G}_{b}^{*})] + \cdots a_m\Delta[\sqrt{b} \rho_{n}^{-\mathfrak{r}_m}U_{R_m}(\mathbb{G}_{b}^{*})].
\end{equation*}

Thus, the accuracy of approximation of the linear part could be bounded as:
\begin{equation}
\begin{aligned}
    \lim\limits_{b \to \infty} E_*\Delta^2( \Theta) \leqslant &m \Big\{ 
  a^2_1\lim\limits_{b \to \infty} E_*\Delta^2[\sqrt{b} \rho_{n}^{-\mathfrak{r}_1}U_{R_1}(\mathbb{G}_{b}^{*})] + \\& \cdots +  a^2_m\lim\limits_{b \to \infty} E_*\Delta^2[\sqrt{b} \rho_{n}^{-\mathfrak{r}_m}U_{R_m}(\mathbb{G}_{b}^{*})]  \Big\}
  \stackrel{\text{\eqref{eq:assump1}}}{=} 0.
\end{aligned}
\end{equation}
\item Now we prove  \eqref{eq:massump2}. Again, we first consider the case of a pair of motifs $R,R'$ to illustrate the procedure.
\begin{equation}
    \begin{aligned}
        \var_*\big[  \Theta^{(\alpha,\beta)}_{R,R'}\big] 
         = & \var_*\big[  \alpha\sqrt{b}\rho^{-\mathfrak{r}}_{n}U_{R}(\mathbb{G}^{*}_{b}) + \beta\sqrt{b}\rho^{-\mathfrak{r'}}_{n}U_{R'}(\mathbb{G}^{*}_{b})\big] \\
          =& \alpha^2\var_*\big[ \sqrt{b}\rho^{-\mathfrak{r}}_{n}U_{R}(\mathbb{G}^{*}_{b})\big]+\beta^2\var_*\big[\sqrt{b}\rho^{-\mathfrak{r'}}_{n}U_{R'}(\mathbb{G}^{*}_{b})\big]\\&+2 \alpha\beta\operatorname{Cov}_*\big[\sqrt{b}\rho^{-\mathfrak{r}}_{n}U_{R}(\mathbb{G}^{*}_{b}), \sqrt{b}\rho^{-\mathfrak{r'}}_{n}U_{R'}(\mathbb{G}^{*}_{b})\big].
    \end{aligned}
\end{equation}

By Lemma~\ref{coro:lim_var_subsample},  under Assumptions \ref{ass:rho_n_h_n} and \ref{ass:b}, with probability one,
\begin{equation}
 \lim\limits_{b \to \infty} \var_*\big[  \Theta^{(\alpha,\beta)}_{R,R'}\big]  = (1-c_2)\lim_{b \to \infty}\var\Big[\alpha\sqrt{b}\rho^{-\mathfrak{r}}_{b}U_{R}(\mathbb{G}_{b})+\beta\sqrt{b}\rho^{-\mathfrak{r'}}_{b}U_{R'}(\mathbb{G}_{b})\Big].
\end{equation}

Therefore, for any sequence of networks, condition in \eqref{eq:massump2} holds with probability one if
\begin{equation}
\label{assumpb2ref1}
    0 < c_1 \leqslant \lim_{b \to \infty}\var\Big[\alpha\sqrt{b}\rho^{-\mathfrak{r}}_{b}U_{R}(\mathbb{G}_{b})+\beta\sqrt{b}\rho^{-\mathfrak{r'}}_{b}U_{R'}(\mathbb{G}_{b})\Big]  \leqslant c_2 < \infty.
\end{equation}

Now we define:
\begin{align*}
\tilde{\sigma}^2_R &= \lim_{b \to \infty}\var\left[ \sqrt{b}\rho^{-\mathfrak{r}}_{b}U_{R}(\mathbb{G}_{b})\right], \\
\tilde{\sigma}^2_{R'} &= \lim_{b \to \infty} \var\left[ \sqrt{b}\rho^{-\mathfrak{r'}}_{b}U_{R'}(\mathbb{G}_{b})\right], \\
\tilde{\sigma}_{R,R'} &= \lim_{b \to \infty} \Cov\left[ \sqrt{b}\rho^{-\mathfrak{r}}_{b}U_{R}(\mathbb{G}_{b}),  \sqrt{b}\rho^{-\mathfrak{r'}}_{b}U_{R'}(\mathbb{G}_{b})\right],
\end{align*}
and recall that Proposition~\ref{prop: lim_var_graphon} implies that 
$\tilde{\sigma}^2_R$, $\tilde{\sigma}^2_{R'}$, and $\tilde{\sigma}^2_{R,R'}$ are the constants if $b\rho_b^{\max{\{r,r'\}}/2} \to \infty$. Thus, we have
\begin{equation*}
    \begin{aligned}
       \lim_{b \to \infty}\var\Big[\alpha\sqrt{b}\rho^{-\mathfrak{r}}_{b}U_{R}(\mathbb{G}_{b})+\beta\sqrt{b}\rho^{-\mathfrak{r'}}_{b}U_{R'}(\mathbb{G}_{b})\Big] &=  \tilde{\sigma}^2_R\alpha^2 +  2\tilde{\sigma}^2_{R,R'}\alpha\beta + \tilde{\sigma}^2_{R'}\beta^2\\
        &= \beta^2\left[\tilde{\sigma}^2_{R}\frac{\alpha^2}{\beta^2} + 2\tilde{\sigma}_{R,R'}\frac{\alpha}{\beta} +\tilde{\sigma}^2_{R'}\right].
    \end{aligned}
\end{equation*}

Lemma~\ref{lem: samplingdistribution_bickel} implies that $(\sqrt{b}\rho^{-\mathfrak{r}}_{b}U_{R}(\mathbb{G}_{b}),\sqrt{b}\rho^{-\mathfrak{r'}}_{b}U_{R'}(\mathbb{G}_{b}))$ converges in distribution to a bivariate Gaussian distribution and Assumption~\ref{ass:non_degenerate} implies that: $\tilde{\sigma}_{R,R'} < \sqrt{\tilde{\sigma}_{R}\tilde{\sigma}_{R'}}$. 

Therefore, $\beta^2\left[\tilde{\sigma}^2_{R}\frac{\alpha^2}{\beta^2} + 2\tilde{\sigma}_{R,R'}\frac{\alpha}{\beta} +\tilde{\sigma}^2_{R'}\right]$ has no real root, leading to
\begin{equation*}
    \lim_{\substack{b \to \infty}}\var\Big[\alpha\sqrt{b}\rho^{-\mathfrak{r}}_{b}U_{R}(\mathbb{G}_{b})+\beta\sqrt{b}\rho^{-\mathfrak{r'}}_{b}U_{R'}(\mathbb{G}_{b})\Big] > 0,
\end{equation*}
which is satisfied by letting  $c_1 = 1/2\lim_{\substack{b \to \infty}}\var\big[\alpha\sqrt{b}\rho^{-\mathfrak{r}}_{b}U_{R}(\mathbb{G}_{b})+\beta\sqrt{b}\rho^{-\mathfrak{r'}}_{b}U_{R'}(\mathbb{G}_{b})\big]$. 
% then $\lim_{\substack{b \to \infty}}\var\big[\alpha\sqrt{b}\rho^{-\mathfrak{r}}_{b}U_{R}(\mathbb{G}_{b})+\beta\sqrt{b}\rho^{-\mathfrak{r'}}_{b}U_{R'}(\mathbb{G}_{b})\big] > c_1 > 0$. 
On the other hand, we set the upper bound 
$c_2 = \max\{4\alpha^2\tilde{\sigma}^2_{R}, 4\beta^2\tilde{\sigma}^2_{R'}\}$ based on 
\begin{equation*}
 \begin{aligned}
    & \lim_{\substack{b \to \infty}}\var\Big[\alpha\sqrt{b}\rho^{-\mathfrak{r}}_{b}U_{R}(\mathbb{G}_{b})+\beta\sqrt{b}\rho^{-\mathfrak{r'}}_{b}U_{R'}(\mathbb{G}_{b})\Big] \\
   \leqslant &4\max\Big\{ \lim_{\substack{b \to \infty }} \alpha^2\var\left[ \sqrt{b}\rho^{-\mathfrak{r}}_{b}U_{R}(\mathbb{G}_{b})\right], \lim_{\substack{b \to \infty}} \beta^2\var\left[ \sqrt{b}\rho^{-\mathfrak{r'}}_{b}U_{R'}(\mathbb{G}_{b})\right]\Big\}.
\end{aligned}   
\end{equation*}

Thus, \eqref{eq:massump2} is verified.
\medskip

More generally, for $m$ motifs, Lemma~\ref{coro:lim_var_subsample} indicates
\begin{equation}
  \lim\limits_{b \to \infty } \var_*\big[\Theta\big]  = \lim_{b \to \infty}\var\Big[a_1\sqrt{b}\rho^{-\mathfrak{r}_1}_{b}U_{R_1}(\mathbb{G}_{b})+ \cdots +a_m\sqrt{b}\rho^{-\mathfrak{r}_m}_{b}U_{R_m}(\mathbb{G}_{b})\Big]
\end{equation}
with probability one. From Lemma~\ref{lem: samplingdistribution_bickel}, we have $$\Big(a_1\sqrt{b}\rho^{-\mathfrak{r}_1}_{b}U_{R_1}(\mathbb{G}_{b}), \cdots ,a_m\sqrt{b}\rho^{-\mathfrak{r}_m}_{b}U_{R_m}(\mathbb{G}_{b})\Big) \stackrel{d}{\longrightarrow} \mathcal{N}\Big(0, \Sigma_{[R_m]}\Big).$$

Let $q$ be a $1 \times m$ vector with all elements equal to one, with the positive definiteness of $\Sigma_{[R_m]}$, we have:
$$\lim_{b \to \infty}\var\Big[a_1\sqrt{b}\rho^{-\mathfrak{r}_1}_{b}U_{R_1}(\mathbb{G}_{b})+ \cdots +a_m\sqrt{b}\rho^{-\mathfrak{r}_m}_{b}U_{R_m}(\mathbb{G}_{b})\Big] = q\Sigma_{[R_m]}q^{\top} > 0.$$ 

Hence, the non-lattice condition in Equation \eqref{eq:massump2} holds by setting
$$c_1 = \frac{1}{2}\lim_{b \to \infty}\var\Big[a_1\sqrt{b}\rho^{-\mathfrak{r}_1}_{b}U_{R_1}(\mathbb{G}_{b})+ \cdots +a_m\sqrt{b}\rho^{-\mathfrak{r}_m}_{b}U_{R_m}(\mathbb{G}_{b})\Big]$$ and 
$c_2 = \max\{m a^2_1\tilde{\sigma}^2_{R_1}, \cdots, ma^2_m\tilde{\sigma}^2_{R_m}\}.$

\item To show \eqref{eq:massump3}, first consider two motifs $R$ and $R'$, so that 
\begin{equation*}
    \begin{aligned}
       g_{1, \Theta^{(\alpha,\beta)}_{R,R'}}\left(\mathbb{V}_1\right) \stackrel{\text{\eqref{eq:linear_combine_sum_g}}}{=}  \alpha\sqrt{b}\rho^{-\mathfrak{r}}_{n}g_{1,R}\left(\mathbb{V}_1\right)+ \beta \sqrt{b}\rho^{-\mathfrak{r'}}_{n}g_{1,R'}\left(\mathbb{V}_1\right).
    \end{aligned}
\end{equation*}
We have, with probability one
$$
      \lim_{b \to \infty} g^2_{1, \Theta^{(\alpha,\beta)}_{R,R'}}\left(\mathbb{V}_1\right) \leqslant  2\lim_{b \to \infty} \alpha^2b\rho^{-2\mathfrak{r}}_{n}g^2_{1,R}\left(\mathbb{V}_1\right)+ 2\lim_{b \to \infty} \beta^2 b\rho^{-2\mathfrak{r'}}_{n}g^2_{1,R'}\left(\mathbb{V}_1\right) \stackrel{\text{\eqref{eq:addref16}}}{=} 0.
$$

Therefore, for any $\epsilon > 0$, there exists a sufficiently large $N$ and $B$ such that for any $n > N$ and $b > B$, with probability one 
$\mathbb{1}_{\{g^2_{1,  \Theta^{(\alpha,\beta)}_{R,R'}} > \epsilon \}} = 0.$ Hence,  with probability one,
\begin{equation}
   \lim_{b \to \infty}  bE_* 
    \big[g^2_{1,  \Theta^{(\alpha,\beta)}_{R,R'}}\left(\mathbb{V}_1\right)]\mathbb{1}_{\big\{g^2_{1,  \Theta^{(\alpha,\beta)}_{R,R'}}\left(\mathbb{V}_1\right) > \epsilon \big\}} = 0.
\end{equation}

For $m$ motifs, condition in Equation \eqref{eq:massump3} also holds as
$$ \lim_{b \to \infty} g^2_{1, \Theta}\left(\mathbb{V}_1\right) \stackrel{\text{\eqref{eq:linear_combine_sum_g}}}{\leqslant}  m\lim_{b \to \infty} a_1^2b\rho^{-2\mathfrak{r}}_{n}g^2_{1,R}\left(\mathbb{V}_1\right)+ \cdots + m\lim_{b \to \infty} a^2_mb\rho^{-2\mathfrak{r'}}_{n}g^2_{1,R'}\left(\mathbb{V}_1\right) \stackrel{\text{\eqref{eq:addref16}}}{=} 0.$$ Lemma \ref{lem: normal_UR_subsample} thus indicates the asymptotic normality of
$a_1\sqrt{b}\rho^{-\mathfrak{r}_1}_{n}U_{R_1}(\mathbb{G}^{*}_{b}) + \cdots + a_m\sqrt{b}\rho^{-\mathfrak{r}_m}_{n}U_{R_m}(\mathbb{G}^{*}_{b})$.
\end{enumerate}

Finally, note that the above argument is for any arbitrary linear combination, by Corollary 4.6.9 of \cite{casella2024statistical}, we have
\begin{equation*}
\begin{aligned}
    &\sqrt{b}\Big\{\big[\rho^{-\mathfrak{r}_1}_{n}U_{R_1}(\mathbb{G}^{*}_{b}), \cdots, \rho^{-\mathfrak{r}_m}_{n}U_{R_m}(\mathbb{G}^{*}_{b})\big]- \big[\rho^{-\mathfrak{r}_1}_{n}U_{R_1}(G), \cdots, \rho^{-\mathfrak{r}_m}_{n}U_{R_m}(G)\big]\Big\} \\ \rightarrow &\mathcal{N}\big[0, \Sigma_{*[R_m]}\big] ~\text{in distribution}
\end{aligned}
\end{equation*}
where $\Sigma_{*[R_m]}$ is the corresponding asymptotic covariance matrix.

\end{proof}

\subsection{Proof of Lemma \ref{coro:lim_var_subsample}}
\label{D1}

\begin{proof}

The following result has been developed in \eqref{eq:more_cov_subsample}.
    \begin{equation*}
 \begin{aligned}
     &\Cov_*\big[\sqrt{b}\rho^{-\mathfrak{r}}_{n}U_{R}(\mathbb{G}^{*}_{b}),\sqrt{b}\rho^{-\mathfrak{r}'}_{n} U_{R'}(\mathbb{G}^{*}_{b})\big] \\\stackrel{\text{\eqref{eq:more_cov_subsample}}}{=}  &\sum_{q=0}^{\min\{r,r'\}} \sum_{S\in \mathcal{S}_{R, R'}^{(q)}} \dfrac{c_Sr!r'!}{s!\rho^{\mathfrak{r}+\mathfrak{r}'}_{n}}\Big[\dfrac{(b-r)!(b-r')!n!(n-s)! - b!(b-s)!(n-r)!(n-r')! }{(b-1)!(b-s)!n!(n-s)!} \Big] U_{S}(G).
\end{aligned}   
\end{equation*}

Notice $c_S$, $r!$, $r'!$ are fixed quantities, while we have to study the limiting behaviors for other quantities under different values of $q$. 

\begin{itemize}
    \item When $q = 0$,  we have $\mathfrak{s} = \mathfrak{r} + \mathfrak{r}'$ and $s = r + r'$.  Thus,
      \begin{align*}
      & \dfrac{(b-r)!(b-r')!n!(n-s)! - b!(b-s)!(n-r)!(n-r')! }{(b-1)!(b-s)!n!(n-s)!}\\
     =   &\dfrac{(b-r)!(b-r')!n!(n-r - r')! - b!(b-r - r')!(n-r)!(n-r')! }{(b-1)!(b-r - r')!n!(n-r - r')!}\\
      =  & \dfrac{n\cdots (n-r+1)(b-r')\cdots (b-r-r'+1) -b(b-1)\cdots (b-r+1) (n-r')\cdots (n-r-r'+1)}{(b-1)\cdots (b-r+1)n(n-1)\cdots (n-r+1)}\\
      = & \dfrac{  \mathrm{IV} - \mathrm{V}}{  \mathrm{VI}}.
    \end{align*}
    
Similar to \eqref{eq:limref}, Now we study each component as follows:
    \begin{align*}
   \mathrm{IV}= & b^{r}n^r - (r'+r'+1+\cdots r'+r-1)b^{r-1}n^r -[1+2+\cdots +(r-1)]b^{r}n^{r-1}\\
   &+O(b^{r-2}n^r) + O(b^{r}n^{r-2}) + O(b^{r-1}n^{r-1})
     \\=& b^{r}n^r - \dfrac{(r+2r'-1)r}{2}b^{r-1}n^r -  \dfrac{(r-1)r}{2}b^{r}n^{r-1}
     \\  &+O(b^{r-2}n^r) + O(b^{r}n^{r-2}) + O(b^{r-1}n^{r-1}).\\
      \mathrm{V}= & b^{r}n^r - (r+r+1+\cdots r+r-1)n^{r-1}b^r -[1+2+\cdots +(r-1)]n^{r}b^{r-1}\\
   &+O(b^{r-2}n^r) + O(b^{r}n^{r-2}) + O(b^{r-1}n^{r-1})
     \\=& b^{r}n^r - \dfrac{(r + 2r'-1)r}{2}n^{r-1}b^r -  \dfrac{(r-1)r}{2}n^{r}b^{r-1}
      \\&+O(b^{r-2}n^r) + O(b^{r}n^{r-2}) + O(b^{r-1}n^{r-1}).\\
\end{align*}
Consequently,
  \begin{equation*}
\begin{aligned}
        \mathrm{IV} - \mathrm{V} = &b^{r}n^r - \dfrac{(r+2r-1)r}{2}b^{r-1}n^r -  \dfrac{(r-1)r}{2}b^{r}n^{r-1}
     \\&+O(b^{r-2}n^r) + O(b^{r}n^{r-2}) + O(b^{r-1}n^{r-1})\\
     & - \big[b^{r}n^r - \dfrac{(r+2r'-1)r}{2}n^{r-1}b^r -  \dfrac{(r-1)r}{2}n^{r}b^{r-1}
     \\&+O(b^{r-2}n^r) + O(b^{r}n^{r-2}) + O(b^{r-1}n^{r-1})\big]\\
     = &\dfrac{-(r+2r'-1)r}{2}n^{r-1}b^{r-1}(n-b) + \dfrac{(r-1)r}{2}n^{r-1}b^{r-1}(n-b)\\
     & + O(b^{r-2}n^r) + O(n^{r-2}b^r)  + O(b^{r-1}n^{r-1}) \\
     = & (-rr')n^{r-1}b^{r-1}(n-b) + O(b^{r-2}n^r) + O(n^{r-2}b^r)  + O(b^{r-1}n^{r-1}).
\end{aligned}
\end{equation*}

Thus, we have
\begin{equation*}
    \begin{aligned}
     \dfrac{  \mathrm{IV} - \mathrm{V}}{  \mathrm{VI}} = & (-rr')\frac{n-b}{n} + o(1),
    \end{aligned}
\end{equation*}

and by Assumption \ref{ass:b},
\begin{equation}
    \begin{aligned}
      \lim\limits_{b \to \infty}  \dfrac{  \mathrm{IV} - \mathrm{V}}{  \mathrm{VI}} = & (-rr')(1 - c_2).
    \end{aligned}
\end{equation}

On the other hand,  since $\rho^{\mathfrak{r}+\mathfrak{r}'}_{n} = \rho^{\mathfrak{s}}_{n} \leqslant \rho^{2\mathfrak{r}_1}_{n} $, by Lemma \ref{lem: lim_netmoment_graphcon}.
\begin{equation*}
\begin{aligned}
        \lim\limits_{b \to \infty } \sum_{S\in\mathcal{S}_{R,R'}^{(0)}}\dfrac{c_Sr!r'!}{s!\rho^{\mathfrak{r}+\mathfrak{r}'}_{n}}\Big(\dfrac{  \mathrm{IV} - \mathrm{V}}{  \mathrm{VI}}\Big) U_{S}(G) & \stackrel{\text{\eqref{eq:limref}}}{=} (1 - c_2) \lim\limits_{b \to \infty} \sum_{S\in\mathcal{S}_{R,R'}^{(0)}}\dfrac{-c_Sr!r'!rr'}{s!} \rho^{-\mathfrak{s}}_{n}U_{S}(G)\\ &\stackrel{\text{\eqref{eq:lim_rUR_subsample}}}{=} (1 - c_2)\sum_{S\in\mathcal{S}_{R,R'}^{(0)}} \dfrac{-c_Sr!r'!rr'}{|\text{Aut}(S)|}P_w(S).\\
\end{aligned}
\end{equation*}

\item When $q = 1$,  we have $\mathfrak{s} = \mathfrak{r} + \mathfrak{r}'$ and $s = r + r'-1$.  Thus,
\begin{equation}
\label{eq:lemmasd1}
    \begin{aligned}
        &\dfrac{(b-r)!(b-r')!n!(n-s)! - b!(b-s)!(n-r)!(n-r')! }{(b-1)!(b-s)!n!(n-s)!}\\
     =   &\dfrac{(b-r)!(b-r')!n!(n-r - r'+1)! - b!(b-r - r'+1)!(n-r)!(n-r')! }{(b-1)!(b-r - r'+1)!n!(n-r - r'+1)!}\\
      =  &  1 - \frac{b}{n} + o(1).
    \end{aligned}
\end{equation}

 Since $\rho^{\mathfrak{r}+\mathfrak{r}'}_{n} = \rho^{\mathfrak{s}}_{n} \leqslant \rho^{2\mathfrak{r}_1}_{n}$,
by Assumption \ref{ass:b},
\begin{equation*}
\begin{aligned}
          & \lim\limits_{b \to \infty } \sum_{S\in\mathcal{S}_{R,R'}^{(1)}}\dfrac{c_Sr!r'!}{s!\rho^{\mathfrak{s}}_{n}}\Big[\dfrac{(b-r)!(b-r')!n!(n-s)! - b!(b-s)!(n-r)!(n-r')! }{(b-1)!(b-s)!n!(n-s)!} \Big]  U_{S}(G) \\ \stackrel{\text{\eqref{eq:lemmasd1}}}{=} &(1 -c_2) \lim\limits_{b \to \infty } \sum_{S\in\mathcal{S}_{R,R'}^{(1)}}\dfrac{c_Sr!r'!}{s!\rho^{\mathfrak{s}}_{n}}U_{S}(G) \stackrel{\text{\eqref{eq:lim_rUR_subsample}}}{=} (1 -c_2)\sum_{S\in\mathcal{S}_{R,R'}^{(1)}}c_S \dfrac{r!r'!}{|\text{Aut}(S)|}P_w(S).
\end{aligned}
\end{equation*}

\item When $q >1$, we have 
\begin{align*}
        \dfrac{[(b-r)!(b-r)!n!(n-s)! - b!(b-s)!(n-r)!(n-r)!] }{(b-1)!(b-s)!n!(n-s)!}= O(\frac{1}{b^{(q-1)}}).
    \end{align*}

 In addition, since $\rho_n^{(\mathfrak{s}-\mathfrak{r}-\mathfrak{r}')}$ is at most $O(\rho_n^{-(q-1)q/2})$ and  $\rho^{q/2}_nb \to \infty$, we have
\begin{equation*}
\begin{aligned}
       &\lim\limits_{\substack{b \to \infty}} \sum_{S\in \mathcal{S}_{R, R'}^{(q)}} \dfrac{c_Sr!r'!}{s!\rho^{\mathfrak{r}+\mathfrak{r}'}_{n}}\Big[\dfrac{(b-r)!(b-r')!n!(n-s)! - b!(b-s)!(n-r)!(n-r')! }{(b-1)!(b-s)!n!(n-s)!} \Big] U_{S}(G)\\
     = & \lim\limits_{\substack{b \to \infty}} \sum_{S\in \mathcal{S}_{R, R'}^{(q)}} \dfrac{c_Sr!r'!}{s!\rho^{\mathfrak{r}+\mathfrak{r}'-\mathfrak{s}}_{n}} O(\frac{1}{b^{(q-1)}}) \rho_n^{-\mathfrak{s}}U_{S}(G) = 
    \lim\limits_{\substack{b \to \infty}} \sum_{S\in \mathcal{S}_{R, R'}^{(q)}} \dfrac{c_Sr!r'!}{s!} O(\frac{1}{b\rho^{q/2}_n})^{(q-1)} \rho_n^{-\mathfrak{s}}U_{S}(G)
       \stackrel{\text{\eqref{eq:lim_rUR_subsample}}}
     =0.
\end{aligned}
\end{equation*}
\end{itemize}

Therefore, with probability one,
\begin{equation*}
\begin{aligned}
&\lim\limits_{ b\to \infty}\rho^{-(\mathfrak{r}+\mathfrak{r}')}_{n}\Cov_*\big[\sqrt{b}U_{R}(\mathbb{G}^{*}_{b}),\sqrt{b}U_{R'}(\mathbb{G}^{*}_{b})  \big]\\ 
    \\ &=\big( 1 - c_2\big) \Big[ \sum_{S\in\mathcal{S}_{R,R'}^{(1)}}c_S \dfrac{r!r'!}{|\text{Aut}(S)|}P_w(S) - \sum_{S\in\mathcal{S}_{R,R'}^{(0)}} \dfrac{c_Sr!r'!rr'}{|\text{Aut}(S)|}P_w(S) \Big] 
    \\& = \big(1 - c_2\big)\lim\limits_{b \to \infty}\rho^{-(\mathfrak{r}+\mathfrak{r}')}_{b}\Cov\big[\sqrt{b}U_{R}(\mathbb{G}_{b}),\sqrt{b}U_{R'}(\mathbb{G}_{b})\big].\\
\end{aligned}
\end{equation*}

As a special, we have
\begin{equation}
\lim\limits_{b \to \infty }\var_*\big[\sqrt{b}\rho^{-\mathfrak{r}}_{n}U_{R}(\mathbb{G}^{*}_{b}) \big] =  \big(1 - c_2\big)\lim\limits_{b \to \infty}\var\big[\sqrt{b}\rho^{-\mathfrak{r}}_{b}U_{R}(\mathbb{G}_{b})\big]
\end{equation}
with probability one.

\end{proof}

\subsection{Proof of Lemma \ref{lem: lim_netmoment_graphcon}}
\label{D2}
\begin{proof}
    
The current proof is adapted from the techniques of \cite{lovasz2006limits} and \cite{zhao2023graph}.  We start with a sequence of graphs $\{\mathbb{G}^{(i)}\}_{i=1}^n$. The first graph  $\mathbb{G}^{(1)}$ is only a node $v_1$ with latent position $\xi_1 \sim \mathrm{Unif}[0,1]$. The second graph $\mathbb{G}^{(2)}$ contains two nodes $v_1, v_2$: $v_1$ is already associated with the latent position $\xi_1$ in $\mathbb{G}^{(1)}$  and we sample $\xi_2 \sim \mathrm{Unif}[0,1]$. The probability of an edge between $v_1, v_2$ is $h_n(\xi_1,\xi_2)$. In this way,  $\{\mathbb{G}^{(i)}\}_{i=1}^n$ is generated by incrementally adding one node and the corresponding edges at a time, and previously selected nodes and edges are not revisited. Furthermore, we have $\mathbb{G}^{(i)} \sim \mathbb{G}^{h_n}_{i}$. 
\medskip

Let $\phi$: $V(R) \rightarrow V(G)$ be an injective mapping, and let $A_{\phi}$ denote the event that $\phi$ is a homomorphism from $R$ to graph $\mathbb{G}^{(n)}$. We define the sequence $\{A_i\}_{i = 1}^n$ as 
\begin{equation}
    A_i =(n)^{-1}_r\sum\limits_{\phi}pr(A_{\phi} \mid G^{(i)}).
\end{equation}

Based on the definition of $A_i$, we have 
\begin{align}
A_n &= (n)^{-1}_r\sum_{\phi}pr(A_{\phi} \mid \mathbb{G}^{(n)}) = t(R,\mathbb{G}^{(n)}), \\
A_0 &= (n)^{-1}_r\sum_{\phi}pr(A_{\phi}) = \int_{[0,1]^r} \prod_{(v_i,v_j) \in \eE(R)} h_n\left(\xi_i, \xi_j\right) \prod_{v_i \in V(R)} d \xi_i\stackrel{\text{\eqref{eq:PR}}}{=}  P_{h_n}(R).
\end{align}

\cite{lovasz2006limits} showed that $\{A_n\}$ is a martingale and $|A_{i} - A_{i-1}| \leqslant r/n$  in their Theorem 2.5. 
Then by invoking Azuma's inequality, they showed that, for every $\delta > 0$,
$$\pr\Big(\big|t(R,\mathbb{G}_n) - P_{h_n}(R)\big| > \delta\Big) \leqslant 2\exp\left(-\delta^2n/2r^2\right).$$
On the other hand, the Proposition 1 of \cite{amini2012counting} implies that
$$X_R(G) = \text{inj}(R,G)/|\mathrm{Aut}(R)|.$$

Therefore, we have
\begin{equation}
    \begin{aligned}
        \rho^{-\mathfrak{r}}_nU_R(G) = \rho^{-\mathfrak{r}}_n\binom{n}{r}^{-1}X_R(G)=\rho^{-\mathfrak{r}}_n\binom{n}{r}^{-1}\frac{\text{inj}(R,G)}{|\mathrm{Aut}(R)|} =\rho^{-\mathfrak{r}}_n\frac{r!t(R,G)}{|\mathrm{Aut}(R)|}
    \end{aligned}
\end{equation}
and 
$$\pr\Big(\big| \rho^{-\mathfrak{r}}_nU_R(\mathbb{G}_n) -\frac{\rho^{-\mathfrak{r}}_nr!}{|\mathrm{Aut}(R)|}P_{h_n}(R)\big| > \frac{r!\rho^{-\mathfrak{r}}_n\delta}{|\mathrm{Aut}(R)|}\Big) \leqslant 2\exp\left(-\frac{\delta^2n}{2r^2}\right).$$ For every $0<\epsilon<1$, let $\delta = \rho^{\mathfrak{r}}_n\epsilon|\mathrm{Aut}(R)|/r!$, then by \eqref{eq:lim_exp_graphon} we have
$$\pr\Big(\big| \rho^{-\mathfrak{r}}_nU_R(\mathbb{G}_n) - E[ \rho^{-\mathfrak{r}}_nU_R(\mathbb{G}_n)]\big| > \epsilon \Big) \leqslant 2\exp\left(-\frac{\epsilon^2|\mathrm{Aut}(R)|^2\rho^{2\mathfrak{r}}_nn}{2(rr!)^2}\right).$$

When  $\rho_n w(u,v) \leq 1$ for all $u,v$, the quantity $\rho^{-\mathfrak{r}}_nP_{h_n}(R) = P_w(R)$ does not depend on $n$. Thus, if there exist some $c_1 > 1$, such that  $n\rho^{2\mathfrak{r}}_n > c_1\log n$, then the sum of the following series  converges:
$$\sum\limits_n 2\exp\left(-\frac{\epsilon^2|\mathrm{Aut}(R)|^2\rho^{2\mathfrak{r}}_nn}{2(rr!)^2}\right).$$
 By Borel-Cantelli lemma, we have $$ \rho^{-\mathfrak{r}}_nU_R(\mathbb{G}_n) \stackrel{\text{a.s}}{\longrightarrow} \frac{\rho^{-\mathfrak{r}}_nr!}{|\mathrm{Aut}(R)|}P_{h_n}(R)=\frac{r!}{|Aut(R)}P_{w}(R).$$ 

As a special case, when motif $R$ is an edge, $\wh{\rho}_{G} = U_R(G)$. Thus, Lemma \ref{lem: lim_netmoment_graphcon} implies that with probability one:
\begin{equation*}
 \lim\limits_{n \to \infty} \rho^{-\mathfrak{r}}_n\wh{\rho}_{G} = \frac{r!}{|\mathrm{Aut}(R)|}P_w(R).
\end{equation*} Since $\mathfrak{r} = 1$, $r = 2$, $|\mathrm{Aut}(R)| = 2$ \citep{rodriguez2014automorphism} and $P_w(R) = 1$ \citep{bickel2011method}, with probability one, we have 
\begin{equation}
    \label{eq: edgeas} 
    \lim\limits_{n \to \infty} \rho^{-1}_n\wh{\rho}_{G}  = 1.
\end{equation}
\end{proof}

\makeatletter
\renewcommand{\thelemma}{SG.\@arabic\c@lemma}
\makeatother
\newpage
\section{Proof of of Theorem \ref{theo:consistent}}
\label{subsec:consis}

\begin{proof}[of Theorem \ref{theo:consistent}] 
% We first focus on a single motif $R$, and consider two scenarios: non-degeneration and degeneration. 
We first focus on a single motif $R$, and we want to show that, with probability one:
\begin{equation} 
\label{eq:uni_consistent}
\begin{aligned}
       & \sup\limits_{t \in \mathbb{R}}\big| J^R_{*,n,b}(t) - J^R_{b,(1-\frac{b}{n})}(t)
       \big| \rightarrow 0.
\end{aligned}
\end{equation}

It is easy to see that \eqref{eq:uni_consistent} can be upper bounded as
       \begin{equation}
       \label{eq:ksupper}
    \begin{aligned}
       \sup\limits_{t \in \mathbb{R}}\big| J^R_{*,n,b}(t) - J^R_{b,(1-\frac{b}{n})}(t)
       \big|   \leqslant & \sup\limits_{t \in \mathbb{R}}\big|  J^R_{b,(1-\frac{b}{n})}(t) - J^R_{b,c}(t) \big| + \sup\limits_{t \in \mathbb{R}}\big| J^R_{b,c}(t) -\Phi\big(\frac{t}{\sigma_{c,R}}\big)\big| \\
      &  + \sup\limits_{t \in \mathbb{R}}\big|\Phi\big(\frac{t}{\sigma_{c,R}}\big) -\Phi\big(\frac{t}{\sigma_{*R}}\big)\big|+ \sup\limits_{t \in \mathbb{R}}\big|J^R_{*,n,b}(t) - \Phi\big(\frac{t}{\sigma_{*R}}\big)\big|,
    \end{aligned}
\end{equation}
where $c = 1- c_2$, $\sigma^2_{c,R} = c  \sigma^2_R$ with $\sigma^2_R = \lim\limits_{b \to \infty}\var\big[\sqrt{b}\rho^{-\mathfrak{r}}_{b}U_{R}(\mathbb{G}_{b})\big]$, $\sigma^2_{*R} = \lim\limits_{b \to \infty}\var_*\big[\sqrt{b}\rho^{-\mathfrak{r}}_{b}U_{R}(\mathbb{G}^{*}_{b})\big]$, and $\Phi\big(\frac{t}{\sigma_{c,R}}\big)$ denotes the univariate normal CDF of $\mathcal{N}(0, c\sigma^2_R)$.

Now we are in the position to show that all four components on the right-hand side of \eqref{eq:ksupper} go to zero. By Slutsky's theorem, we have
\begin{equation*}
     \sup\limits_{t \in \mathbb{R}}\big|  J^R_{b,(1-\frac{b}{n})}(t) - J^R_{b,c}(t) \big| \to 0.
\end{equation*}
Lemma~\ref{lem: samplingdistribution_bickel} implies that:
\begin{equation*}
   \sqrt{b}\big\{{\rho}^{-\mathfrak{r}}_{\mathbb{G}_{b}}U_R(\mathbb{G}_{b})-\mE\big[\rho^{-\mathfrak{r}}_{b}U_R(\mathbb{G}_{b})\big]\big\} \rightarrow \mathcal{N}(0, \sigma^2_R)~\text{in distribution},
\end{equation*}
which gives:
\begin{equation*}
\label{eq:weak}
   \sqrt{bc}\big\{{\rho}^{-\mathfrak{r}}_{\mathbb{G}_{b}}U_R(\mathbb{G}_{b})-\mE\big[\rho^{-\mathfrak{r}}_{b}U_R(\mathbb{G}_{b})\big]\big\} \rightarrow \mathcal{N}(0, c\sigma^2_R)~\text{in distribution}.
\end{equation*}
Since $\sigma_R$ is fixed, the continuity of $\Phi$ leads to
\begin{equation*}
 \label{eq:firstterm}
    \sup\limits_{t \in \mathbb{R}}\big|  J^R_{b, c}(t)  -\Phi\big(\frac{t}{\sigma_{c,R}}\big)\big| \to 0.
\end{equation*}

\noindent For the third term, recall that $G \sim \mathbb{G}_n$. Lemma \ref{coro:lim_var_subsample} implies that $\sigma^2_{*R}$ equals to  $c\sigma^2_R$ almost surely. Then, with continuity, we have
\begin{equation*}
    \label{eq:secondterm}
    \sup\limits_{t \in \mathbb{R}}\big|\Phi\big(\frac{t}{\sigma_{*R}}\big) - \Phi\big(\frac{t}{\sigma_{c,R}}\big)\big| \rightarrow 0.
\end{equation*}

\noindent Next, since Assumptions \ref{ass:rho_n_h_n}, \ref{ass:b}, and \ref{ass:non_degenerate} are satisfied, then by  \eqref{eq:unit_subsample_distribution_with_truepn}, with probability one:
        \begin{equation*}
\sqrt{b}\big[\rho_{n}^{-\mathfrak{r}}U_{R}(\mathbb{G}^{*}_{b}) -\rho_{n}^{-\mathfrak{r}}U_{R}(G)\big] \rightarrow  \mathcal{N}(0,\sigma^2_{*R})~\text{in distribution}.
\end{equation*}

In addition, \eqref{eq: edgeas} implies that $ \lim\limits_{n \to \infty} \rho^{-1}_n\wh{\rho}_{G}  = 1$  with probability one. Consequently, by Slutsky's theorem,  with probability one: 
     \begin{equation*}
\sqrt{b}\big[\wh{\rho}^{-\mathfrak{r}}_{G}U_{R}(\mathbb{G}^{*}_{b}) -\wh{\rho}^{-\mathfrak{r}}_{G}U_{R}(G)\big]  \rightarrow \mathcal{N}\big(0,\sigma^2_{*R}\big)~\text{in distribution},
\end{equation*}
which further implies with probability one: 
\begin{equation*}
    \sup\limits_{t \in \mathbb{R}}\big|J^R_{*,n,b}(t) - \Phi\big(\frac{t}{\sigma_{*R}}\big)\big| \to 0.
\end{equation*}
Therefore, with probability one: 
\begin{equation*}
    \sup\limits_{t \in \mathbb{R}}\big|J^R_{*,n,b}(t) - J^R_{b,c}(t)\big| \rightarrow 0.
\end{equation*}

Now we turn to consider $m$ motifs $ \{R_1,\cdots,R_m\}$. For simplicity, let $[t_m] = \{t_1,\cdots,t_m\}$ and $[R_m] = \{R_1,\cdots,R_m\}$. Let $\Sigma_{c[R_m]} = c\cdot \Sigma_{[R_m]}$.
Under Assumption \ref{ass:non_degenerate}, similar as before, we break the Kolmogorov-Smirnov distance into three parts:
\begin{equation*}
    \begin{aligned}
     &\sup\limits_{[t_m] \in \mathbb{R}^m}\big|J^{[R_m]}_{*,n,b}([t_m]) - J^{[R_m]}_{b,c}([t_m])
       \big|  
      \leqslant \sup\limits_{[t_m] \in \mathbb{R}^m}\big| J^{[R_m]}_{b,c}([t_m]) - \Phi_{\Sigma_{c[R_m]}}\big(\frac{t_1}{\sigma_{c,R_1}},\cdots,\frac{t_m}{\sigma_{c,R_m}}\big)\big| \\ +& \sup\limits_{[t_m] \in \mathbb{R}^m}\big|\Phi_{\Sigma_{c[R_m]}}\big(\frac{t_1}{\sigma_{c,R_1}},\cdots,\frac{t_m}{\sigma_{c,R_m}}\big) -\Phi_{\Sigma_{*[R_m]}}\big(\frac{t_1}{\sigma_{*R_1}},\cdots,\frac{t_m}{\sigma_{*R_m}}\big)\big|\\ +&
      \sup\limits_{[t_m] \in \mathbb{R}^m}\big|  J^{[R_m]}_{*,n,b}([t_m]) - \Phi_{\Sigma_{*[R_m]}}\big(\frac{t_1}{\sigma_{*R_1}},\cdots,\frac{t_m}{\sigma_{*R_m}}\big)\big|,
    \end{aligned}
\end{equation*}
where for each $i \in [m]$, $\sigma^2_{c,R_i} = c  \sigma^2_{R_i}$ with $\sigma^2_{R_i} = \lim\limits_{n \to \infty}\var\big[\sqrt{n}\rho^{-\mathfrak{r}}_{n}U_{R_i}(\mathbb{G}_{n})\big]$, and $$\sigma^2_{*R_i} = \lim\limits_{b \to \infty}\var_*\big[\sqrt{b}\rho^{-\mathfrak{r}}_{n}U_{R_i}(\mathbb{G}^{*}_{b})\big].$$

As before, we now want to show that all three components go to zero. By Lemma~\ref{lem: samplingdistribution_bickel}:
\begin{equation*}
\begin{aligned}
     &\sqrt{b}\Big\{\big[{\rho}^{-\mathfrak{r}_1}_{\mathbb{G}_{b}}U_{R_1}(\mathbb{G}_{b}),\cdots,{\rho}^{-\mathfrak{r}_m}_{\mathbb{G}_{b}}U_{R_m}(\mathbb{G}_{b})\big] -  \big[\rho_b^{-\mathfrak{r}_1}\mE[U_{R_1}(\mathbb{G}_{b})],\cdots,\rho_b^{-\mathfrak{r}_m}\mE[U_{R_m}(\mathbb{G}_{b})]\big]\Big\} \\
     \to&~  \mathcal{N}[0, \Sigma_{[R_m]}]~ \text{in distribution}.
\end{aligned}
\end{equation*}

Since $c > 0 $ is a constant, by Slutsky's theorem:
\begin{equation*}
\begin{aligned}
     &\sqrt{bc}\Big\{\big[{\rho}^{-\mathfrak{r}_1}_{\mathbb{G}_{b}}U_{R_1}(\mathbb{G}_{b}),\cdots,{\rho}^{-\mathfrak{r}_m}_{\mathbb{G}_{b}}U_{R_m}(\mathbb{G}_{b})\big] -  \big[\rho_b^{-\mathfrak{r}_1}\mE[U_{R_1}(\mathbb{G}_{b})],\cdots,\rho_b^{-\mathfrak{r}_m}\mE[U_{R_m}(\mathbb{G}_{b})]\big]\Big\} \\
     \to&~  \mathcal{N}[0, \Sigma_{c[R_m]}]~ \text{in distribution}.
\end{aligned}
\end{equation*}
Thus, 
\begin{equation}
\label{eq:cem1}
     \sup\limits_{[t_m] \in \mathbb{R}^m}\big| J^{[R_m]}_{b,c}([t_m]) - \Phi_{\Sigma_{c[R_m]}}\big(\frac{t_1}{\sigma_{c,R_1}},\cdots,\frac{t_m}{\sigma_{c,R_m}}\big)\big| \rightarrow 0.
\end{equation} The second term goes to zero as Lemma \ref{coro:lim_var_subsample}  implies that $\Sigma_{*[R_m]}$ converge to $\Sigma_{c[R_m]}$ almost surely.

Under Assumptions \ref{ass:rho_n_h_n}-\ref{ass:non_degenerate}, by Theorem \ref{theo:subsample_distribution}, for any sequence $\{G^{(n)}\}$, with probability one:
\begin{equation*}
\begin{aligned}
    &\sqrt{b}\Big\{\big[\rho^{-\mathfrak{r}_1}_{n}U_{R_1}(\mathbb{G}^{*}_{b}), \cdots, \rho^{-\mathfrak{r}_m}_{n}U_{R_m}(\mathbb{G}^{*}_{b})\big]- \big[\rho^{-\mathfrak{r}_1}_{n}U_{R_1}(G), \cdots, \rho^{-\mathfrak{r}_m}_{n}U_{R_m}(G)\big]\Big\} \\ \rightarrow &\mathcal{N}\big[0, \Sigma_{*[R_m]}\big] ~\text{in distribution},
\end{aligned}
\end{equation*} Since $ \lim_{n \to \infty} \rho^{-\mathfrak{r}}_n\wh{\rho}_{G}  = 1$ with probability one,
\begin{equation*}
\begin{aligned}
    &\sqrt{b}\Big\{\big[\wh{\rho}^{-\mathfrak{r}_1}_{G}U_{R_1}(\mathbb{G}^{*}_{b}), \cdots,\wh{\rho}^{-\mathfrak{r}_m}_{G}U_{R_m}(\mathbb{G}^{*}_{b})\big]- \big[\wh{\rho}^{-\mathfrak{r}_1}_{G}U_{R_1}(G), \cdots, \wh{\rho}^{-\mathfrak{r}_m}_{G}U_{R_m}(G)\big]\Big\} \\ \rightarrow &\mathcal{N}\big[0, \Sigma_{*[R_m]}\big] ~\text{in distribution with probability one},
\end{aligned}
\end{equation*}
which further implies that with probability one:
\begin{equation}
\label{eq:cem2}
    \sup\limits_{[t_m] \in \mathbb{R}^m}\big|  J^{[R_m]}_{*,n,b}([t_m]) - \Phi_{\Sigma_{*[R_m]}}\big(\frac{t_1}{\sigma_{*R_1}},\cdots,\frac{t_m}{\sigma_{*R_m}}\big)\big| \to 0.
\end{equation} Therefore, with probability one: 
\begin{equation*}
    \sup\limits_{[t_m] \in \mathbb{R}^m}\big|J^{[R_m]}_{*,n,b}([t_m]) - J^{[R_m]}_{b,c}([t_m])
       \big|   \rightarrow 0,
\end{equation*}
which implies 
\begin{equation*}
    \sup\limits_{[t_m] \in \mathbb{R}^m}\big|J^{[R_m]}_{*,n,b}([t_m]) - J^{[R_m]}_{b,(1-\frac{b}{n})}([t_m])
       \big|   \rightarrow 0,
\end{equation*}
as $c = 1 - c_2$.

% \item The proof of degenerate case is a natural extension of the previous result, replacing the random variables with random vectors, let $Y$ be a zero vector with CDF $F_Y$ and then one can show both 
% \begin{equation*}
%     \sup\limits_{[t_m] \in \mathbb{R}^m}\big|J^{[R_m]}_{*,n,b}([t_m]) - F_Y([t_m])
%        \big|   \rightarrow 0,
% \end{equation*}
% \begin{equation*}
%     \sup\limits_{[t_m] \in \mathbb{R}^m}\big|J^{[R_m]}_{b,c}([t_m]) - F_Y([t_m])
%        \big|   \rightarrow 0,
% \end{equation*}
% and then the convergence follows directly as before.
     \end{proof}

\makeatletter
\renewcommand{\thelemma}{SH.\@arabic\c@lemma}
\makeatother
\section{Proof of the empirical consistency}
\label{subsec:consis1}

As before, let $c = 1 - c_2$, $[t_m] = \{t_1,\cdots,t_m\}$ and $[R_m] = \{R_1,\cdots,R_m\}$. 
The following lemmas are used for the proof.
\begin{lemma}[Theorem 1 in \cite{lunde2023subsampling}]
\label{lem: lunde}
Suppose there exists a CDF $J([t_m]) $, such that for  all continuity points of $J(\cdot)$,
\begin{equation*}
\begin{aligned} 
    &|  J^{[R_m]}_{b,c}([t_m]) - J([t_m]) | \to 0,\\
    &|J^{[R_m]}_{*,n,b}([t_m])  - J([t_m]) | \to 0.
\end{aligned}
\end{equation*} Then 
\begin{equation*}
      \widehat{J}^{[R_m]}_{*,n,b}([t_m])  \to J([t_m]) ~\text{in probability}.
\end{equation*}
\end{lemma}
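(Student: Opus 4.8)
Although this is quoted as Theorem~1 of \cite{lunde2023subsampling}, the plan to establish it follows the standard template for consistency of subsampling estimators, splitting the error of $\widehat{J}^{[R_m]}_{*,n,b}$ into a Monte Carlo fluctuation and a subsampling bias. Fix a continuity point $[t_m]$ of $J$. First I would write the triangle inequality
$$\bigl|\widehat{J}^{[R_m]}_{*,n,b}([t_m]) - J([t_m])\bigr| \le \bigl|\widehat{J}^{[R_m]}_{*,n,b}([t_m]) - J^{[R_m]}_{*,n,b}([t_m])\bigr| + \bigl|J^{[R_m]}_{*,n,b}([t_m]) - J([t_m])\bigr|,$$
so that it remains only to show each term on the right vanishes.

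The second term is immediate from the hypotheses: by assumption $J^{[R_m]}_{*,n,b}([t_m]) \to J([t_m])$ at every continuity point, while the companion assumption $J^{[R_m]}_{b,c}([t_m]) \to J([t_m])$ serves to identify $J$ as the common limit of the graphon-sampling and subsampling laws, so that the conclusion is genuinely a consistency statement for estimating the sampling distribution $J^{[R_m]}_{b,c}$. Consequently the bias term tends to $0$.

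The first term carries the real content. The key observation is that, conditionally on the observed network $G$, the $N$ indicators $\mathbb{1}\{\sqrt{b}[\wh{\rho}^{-\mathfrak{r}_1}_{G}U_{R_1}(G^{*(i)}_b) - \wh{\rho}^{-\mathfrak{r}_1}_{G}U_{R_1}(G)] \le t_1, \cdots\}$, $i=1,\dots,N$, that define $\widehat{J}^{[R_m]}_{*,n,b}([t_m])$ are i.i.d.\ Bernoulli variables with mean exactly $J^{[R_m]}_{*,n,b}([t_m]) = \pr_*\{\cdots\}$, because each induced subgraph $G^{*(i)}_b$ is an independent uniform draw of $\mathbb{G}^*_b$. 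Hence $\widehat{J}^{[R_m]}_{*,n,b}([t_m])$ is conditionally unbiased for $J^{[R_m]}_{*,n,b}([t_m])$ with conditional variance at most $1/(4N)$, and Chebyshev's inequality gives $\pr_*(|\widehat{J}^{[R_m]}_{*,n,b}([t_m]) - J^{[R_m]}_{*,n,b}([t_m])| > \epsilon \mid G) \le 1/(4N\epsilon^2)$ for every $\epsilon>0$. This bound is uniform in $G$, so taking expectation over the law of $G$ removes the conditioning and shows the Monte Carlo term converges to $0$ in unconditional probability as $N\to\infty$. Adding the two pieces yields $\widehat{J}^{[R_m]}_{*,n,b}([t_m]) \to J([t_m])$ in probability; note that the multivariate index $[t_m]$ plays no special role, since for a fixed threshold vector the summands are still scalar Bernoulli variables.

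The main obstacle is one of bookkeeping rather than of depth: the Bernoulli parameter $J^{[R_m]}_{*,n,b}([t_m])$ is itself a random function of $G$, so one must obtain the conditional variance bound \emph{uniformly} in $G$ before integrating $G$ out via iterated expectation; only then do the two randomness sources decouple cleanly. A secondary point arises if one wants the uniform (Kolmogorov--Smirnov) version rather than the pointwise one: there I would invoke continuity of $J$ together with a P\'olya-type argument to upgrade pointwise convergence on a dense set of continuity points to convergence uniform in $[t_m]$.
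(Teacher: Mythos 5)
Your proof is correct, but note that the paper never proves this lemma itself --- it is imported as Theorem~1 of \cite{lunde2023subsampling} and invoked as a black box in the proof of Lemma~\ref{theo:consistent_empir} --- so the real comparison is with the cited source. There the subsampling estimator is the average over \emph{all} $\binom{n}{b}$ node subsets, so no Monte Carlo term exists; concentration of that average around its conditional mean is instead obtained from Hoeffding's inequality for U-statistics, which is precisely what forces the condition $b=o(n)$ in \cite{lunde2023subsampling}. Your argument exploits the different structure of $\widehat{J}^{[R_m]}_{*,n,b}$ in this paper: Algorithm~\ref{algo:subsampling} draws $N$ \emph{independent} uniform subsamples, so your observation that, conditionally on $G$, the indicators are i.i.d.\ Bernoulli with mean exactly $J^{[R_m]}_{*,n,b}([t_m])$ is correct, and the Chebyshev bound $1/(4N\epsilon^2)$, being uniform in $G$, integrates out cleanly by iterated expectation. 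This route is more elementary and imposes no restriction on $b/n$ (consistent with Assumption~\ref{ass:b}, which allows $b/n\to c_2>0$), whereas the U-statistic route needs $b/n\to 0$; what your version costs is an assumption the lemma statement suppresses, namely that the number of replicates $N\to\infty$ --- with $N$ fixed the Monte Carlo fluctuation does not vanish and the conclusion fails, so you should state this explicitly. Two of your side remarks are also accurate: the hypothesis on $J^{[R_m]}_{b,c}$ is not actually used for the stated pointwise conclusion (it only identifies $J$ as the common limit of the graphon-sampling and subsampling laws), and the P\'olya-type upgrade from pointwise to uniform convergence is exactly how the paper passes from this lemma to the sup-norm statement in Lemma~\ref{theo:consistent_empir}.
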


\begin{proof}[of Lemma \ref{theo:consistent_empir}]
To begin with, let $J([t_m]) = \Phi_{\Sigma_{c[R_m]}}\big(t_1\sigma_{c,R_1}^{-1},\ldots, t_m \sigma_{c,R_m}^{-1}\big)$ and  
$$J_*([t_m]) = \Phi_{\Sigma_{*[R_m]}}\big(t_1\sigma_{*R_1}^{-1},\ldots,t_m\sigma_{*R_m}^{-1}\big).$$  
Under  Assumptions \ref{ass:rho_n_h_n}, \ref{ass:b},  and \ref{ass:non_degenerate}, \eqref{eq:cem1} and \eqref{eq:cem2} imply  $|  J^{[R_m]}_{b,c}([t_m]) - J([t_m]) | \to 0$ and $|J^{[R_m]}_{*,n,b}([t_m])  - J_*([t_m]) |   \to 0$, respectively.  

Moreover, Lemma \ref{coro:lim_var_subsample}  implies that $\Sigma_{*[R_m]}$ converge to $\Sigma_{c[R_m]}$ almost surely. Thus, $J_*([t_m])$ converge to $J([t_m])$ almost surely. Therefore, by Lemma \ref{lem: lunde}, with probability one (measure on the random network sequence): 
\begin{equation*}
      \widehat{J}^{[R_m]}_{*,n,b}([t_m])  \to J([t_m]) ~\text{in probability}.
\end{equation*}

Consequently, for all continuity points of $J(\cdot)$,
\begin{equation*}
     \sup\limits_{[t_m] \in \mathbb{R}^m}\big|   \widehat{J}^{[R_m]}_{*,n,b}([t_m])  -  J([t_m]) \big| \rightarrow 0.
\end{equation*}

Finally, based on \eqref{eq:cem1} and \eqref{eq:cem2}, we arrived at
\begin{equation*}
    \begin{aligned}
     &\sup\limits_{[t_m] \in \mathbb{R}^m}\big|  \widehat{J}^{[R_m]}_{*,n,b}([t_m]) 
       - J^{[R_m]}_{*,n,b}([t_m])\big|  
      \leqslant \sup\limits_{[t_m] \in \mathbb{R}^m}\big|   \widehat{J}^{[R_m]}_{*,n,b}([t_m])  - \Phi_{\Sigma_{c[R_m]}}\big(\frac{t_1}{\sigma_{c,R_1}},\cdots,\frac{t_m}{\sigma_{c,R_m}}\big)\big| \\ +& \sup\limits_{[t_m] \in \mathbb{R}^m}\big|\Phi_{\Sigma_{c[R_m]}}\big(\frac{t_1}{\sigma_{c,R_1}},\cdots,\frac{t_m}{\sigma_{c,R_m}}\big) -\Phi_{\Sigma_{*[R_m]}}\big(\frac{t_1}{\sigma_{*R_1}},\cdots,\frac{t_m}{\sigma_{*R_m}}\big)\big|\\ +&
      \sup\limits_{[t_m] \in \mathbb{R}^m}\Big|  J^{[R_m]}_{*,n,b}([t_m]) - \Phi_{\Sigma_{*[R_m]}}\big(\frac{t_1}{\sigma_{*R_1}},\cdots,\frac{t_m}{\sigma_{*R_m}}\big)\big| \to 0.
    \end{aligned}
\end{equation*}
\end{proof}

\setcounter{equation}{0}
\setcounter{figure}{0}
\setcounter{table}{0}

\renewcommand{\theequation}{SI.\arabic{equation}}
\renewcommand{\thefigure}{SI.\arabic{figure}}
\renewcommand{\thetable}{SI.\arabic{table}}
 \section{Proof of Theorem~\ref{them:sparsification_validity}}\label{sec:proof-sparsification}

 \begin{proof}

The proof proceeds in three steps: (1) establishing the conditional generative
model of the sparsified graph; (2) applying the multivariate Delta method to
prove that the asymptotic variance is invariant to the density parameter; and
(3) establishing unconditional convergence via characteristic functions. For
notational simplicity, we always assume $\rho_n, \rho_{b} < 1$ and
$w(\xi_1, \xi_2) \le 1$.

Set $q = b/2$. Recall that in Algorithm~\ref{algo:sparsification_stat},
$\mathbb{G}'$ is partitioned into disjoint subgraphs $\mathbb{G}^{\prime 1}$
and $\mathbb{G}^{\prime 2}$ of size $q$. We have the estimated density
$\widehat{\rho}_{\mathbb{G}^{\prime 1}}$ and the sparsification probability
$\hat{p} = \min(1, \rho^\dagger / \widehat{\rho}_{\mathbb{G}^{\prime 1})}$.
The edges of the second split $\mathbb{G}^{\prime 2}$ are independent of
$\mathbb{G}^{\prime 1}$ conditioned on the true model parameters.

Let $\theta = \hat{p} \rho_{b}$ be the effective sparsity parameter.
Conditioned on $\mathbb{G}^{\prime 1}$, $\mathbb{G}^{\prime 2}$ follows the
sparse graphon model $\theta \cdot w$. By
Lemma~\ref{lem: samplingdistribution_bickel}(a), the density estimator is
consistent, $\widehat{\rho}_{\mathbb{G}^{\prime 1}} / \rho_{b} \to 1$ in probability. 
Consequently, the random parameter $\theta$ converges in probability to the
target density:
\begin{equation}
    \theta \to  \rho^\dagger.
\end{equation}

We now analyze the asymptotic distribution of the statistic
$\bar{\Psi}_{\rho^\dagger}(\mathbb{G}^{\prime 1}, \mathbb{G}^{\prime 2})$
conditioning on the effective density $\theta$. Let
${X}_q \in \mathbb{R}^{m+1}$ denote the vector of the estimated density
and the raw network moments:
\begin{equation}
    {X}_q = \left( \widehat{\rho}_{\mathbb{G}^{\prime 2}},\,
    U_{R_1}(\mathbb{G}^{\prime 2}), \ldots, U_{R_m}(\mathbb{G}^{\prime 2})
    \right)^\top.
\end{equation}
By Lemma~\ref{lem: Exp_Var_graphon}, we have
\begin{equation}\label{eq:mean-value-vec}
{\mu}(\theta) = \mE[{X}_q | \theta]
= \left( \theta,\;
  \theta^{\mathfrak{r}_1}\frac{r_1!}{|\mathrm{Aut}(R_1)|}P_{w}(R_1),\;
  \ldots,\;
  \theta^{\mathfrak{r}_m}\frac{r_m!}{|\mathrm{Aut}(R_m)|}P_{w}(R_m)
  \right)^\top.
\end{equation}
Let ${S}_\theta$ be the $(m+1) \times (m+1)$ diagonal scaling matrix:
\begin{equation}
    {S}_\theta = \mathrm{diag}\!\left(
    \theta^{-1}, \theta^{-\mathfrak{r}_1}, \ldots, \theta^{-\mathfrak{r}_m}
    \right).
\end{equation}
Based on Proposition~\ref{prop: lim_var_graphon} and
Lemma~\ref{lem: samplingdistribution_bickel}(c), for any deterministic sequence
of parameters $\theta$ satisfying Assumption~\ref{ass:rho_n_h_n}, the scaled
deviations converge to a non-degenerate limit:
\begin{equation}\label{eq:scaled_clt_rigorous}
    {Z}_q(\theta) \coloneqq \sqrt{q}\,{S}_\theta
    \left({X}_q - {\mu}(\theta) \right)
   \to  \mathcal{N}({0}, {C}_w)
\end{equation} in distribution, 
where ${C}_w$ is the asymptotic covariance matrix of the normalized network moments. Crucially, as established in
Proposition~\ref{prop: lim_var_graphon}, the matrix ${C}_w$ is defined
solely by the integral properties of the graphon $w$ and is invariant to the
sparsity parameter $\theta$.

We apply the Delta method to $\Psi: \mathbb{R}^{m+1} \to \mathbb{R}^m$ defined
by $\Psi_j({x}) = x_j x_0^{-\mathfrak{r}_j}$ for $j = 1, \ldots, m$,
for the asymptotic distribution of
$\sqrt{q}(\Psi({X}_q) - \Psi({\mu}(\theta)))$. When clear
from context, we overload notation and write
\[
    {\Psi}(\mathbb{G}^{\prime 2}) = \Psi({X}_q).
\]
By the first-order Taylor expansion:
\begin{equation}
    \sqrt{q}\left( \Psi({X}_q) - \Psi({\mu}(\theta)) \right)
    = J({\mu}(\theta))
      \left[ \sqrt{q}({X}_q - {\mu}(\theta)) \right]
      + o_P(1),
\end{equation}
where $J({\mu}(\theta))$ is the Jacobian matrix evaluated at the
mean. To use the non-degenerate limit ${Z}_q(\theta)$ from
\eqref{eq:scaled_clt_rigorous}, we rewrite the dominating term as
\begin{equation}\label{eq:delta_expansion}
   \left[ J({\mu}(\theta))\,{S}_\theta^{-1} \right]
   \underbrace{\left[ \sqrt{q}\,{S}_\theta
     ({X}_q - {\mu}(\theta)) \right]}_{{Z}_q(\theta)}.
\end{equation}
We now compute the Jacobian $J$ evaluated at ${\mu}$. Recalling
\eqref{eq:mean-value-vec}, for the $j$-th row of the Jacobian:
\begin{align}
    \frac{\partial \Psi_j}{\partial x_0}\bigg|_{{\mu}}
    &= -\mathfrak{r}_j \bigl(\theta^{\mathfrak{r}_j}P_w(R_j)\bigr)
       \theta^{-\mathfrak{r}_j - 1}
     = -\mathfrak{r}_j P_w(R_j)\,\theta^{-1}, \\
    \frac{\partial \Psi_j}{\partial x_k}\bigg|_{{\mu}}
    &= \mathbb{1}_{\{k=j\}}\,\theta^{-\mathfrak{r}_j}.
\end{align}
Multiplying this row by ${S}_\theta^{-1}$:
\begin{itemize}
    \item The first column becomes:
          $(-\mathfrak{r}_j P_w(R_j)\,\theta^{-1}) \times \theta
           = -\mathfrak{r}_j P_w(R_j)$.
    \item The $j$-th column becomes:
          $\theta^{-\mathfrak{r}_j} \times \theta^{\mathfrak{r}_j} = 1$.
\end{itemize}
Note that $\theta$ cancels out exactly. Thus the factorization
$J({\mu})\,{S}_\theta^{-1} = \widetilde{J}_w$, where
$\widetilde{J}_w \in \mathbb{R}^{m \times (m+1)}$, is a constant matrix
depending only on $P_w(R_j)$ and $\mathfrak{r}_j$ but invariant to the density:
\begin{equation}
    \widetilde{J}_w = \begin{bmatrix}
    -\mathfrak{r}_1 P_w(R_1) & 1      & 0      & \dots  & 0 \\
    \vdots                   & \vdots & \vdots & \ddots & \vdots \\
    -\mathfrak{r}_m P_w(R_m) & 0      & 0      & \dots  & 1
    \end{bmatrix}.
\end{equation}
Since ${Z}_q(\theta) \xrightarrow{d} \mathcal{N}({0}, {C}_w)$
and $\widetilde{J}_w$ is constant, by the Delta method,
\begin{equation}
    \sqrt{q}\left( \Psi(\mathbb{G}^{\prime 2}) - {\eta}_w \right)
\to \mathcal{N}\!\left({0},\, \Sigma_w\right)
\end{equation} in distribution, 
where
\[\eta_w = \left(
      \frac{r_1!}{|\mathrm{Aut}(R_1)|}P_{w}(R_1),\;
      \ldots,\;
      \frac{r_m!}{|\mathrm{Aut}(R_m)|}P_{w}(R_m)
    \right)^\top
    = {\mE}\!\left[\Psi(\rho^{\dagger}, U_{[R_m]}(\mathbb{G}'))\right],
\] 
and $\Sigma_w \coloneqq \widetilde{J}_w {C}_w \widetilde{J}_w^\top$ is
composed entirely of terms invariant to $\theta$.

Next, we extend the result to unconditional convergence. Let
${W}_q = \sqrt{q}(\Psi(\mathbb{G}^{\prime 2}) - {P}_w)$.
Define $\phi_q({t})$ to be the unconditional characteristic function of
${W}_q$:
\begin{equation}
    \phi_q({t}) = {\mE}_{\theta}\!\left[
    {\mE}\!\left[ e^{i {t}^\top {W}_q} \mid \theta \right]
    \right]
    \eqqcolon {\mE}_{\theta}\!\left[ \varphi_q({t}| \theta) \right].
\end{equation}
We have established that for any \emph{deterministic} sequence of network
densities $\theta$ converging to $\rho^\dagger$ and satisfying
Assumption~\ref{ass:rho_n_h_n}, the conditional characteristic function
converges to the limit
$\Phi({t}) = e^{-\frac{1}{2} {t}^\top \Sigma_w {t}}$.

Now we handle the randomness of $\theta$. Since
$\theta \to \rho^\dagger$ in probability, by the Subsequence Principle, for any
subsequence there exists a further sub-subsequence along which $\theta$
converges almost surely. Along this sub-subsequence, the realization of
$\theta$ acts as a deterministic sequence, so
$\varphi_q({t} | \theta) \to \Phi({t})$ pointwise, and the
limit is unique and invariant to the choice of sub-subsequence. That is, for
any subsequence of $\varphi_q({t} | \theta)$ we can find a
sub-subsequence converging almost surely to $\Phi({t})$. Therefore,
$\varphi_q({t} | \theta)$ converges in probability to the constant
$\Phi({t})$.

Finally, since characteristic functions are uniformly bounded
($|\varphi_q| \le 1$), by the Dominated Convergence Theorem,
\begin{equation}
    \lim_{q \to \infty} \phi_q({t})
    = {\mE}\!\left[ \lim_{q \to \infty} \varphi_q({t} \mid \theta)
      \right]
    = \Phi({t}).
\end{equation}
This establishes the unconditional convergence that
${W}_q \to \mathcal{N}({0}, \Sigma_w)$ in distribution. 

Finally, we verify the subsampling counterpart. The proof of
Theorem~\ref{theo:consistent}, in particular
Theorem~\ref{theo:subsample_distribution}, already established the asymptotic
normality of
$\sqrt{b/2}\,\bar{\Psi}_{\rho^\dagger}(G^{*(i1)}_{b/2}, \mathbb{G}^{*(i2)}_{b/2})$,
conditioning on $\mathbb{G}^{*(i1)}_{b/2}$ and $\mathbb{G}$. The Delta method
argument carries over identically to the subsampling part. The only remaining
thing to check is the scaling. Recall $c' = 1 - \lim_{n\to\infty} b/n$.
\begin{itemize}
    \item The observed statistic
          $\sqrt{c'q}\,\Psi(\mathbb{G}^{\prime 2})$
          has covariance converging to $c'\Sigma_w$.
    \item By Theorem~\ref{theo:consistent} and
          Lemma~\ref{coro:lim_var_subsample}, the covariance of the subsampling
          counterpart is $c'$ times the infinite-population variance $\Sigma_w$.
\end{itemize}
Therefore, the covariances of
$\sqrt{b/2}\bigl(\bar{\Psi}_{\rho^\dagger}(G^{*(i1)}_{b/2},
\mathbb{G}^{*(i2)}_{b/2}) - {\eta}_w\bigr)$ (conditioning on $\mathbb{G}$)
and
$\sqrt{c'b/2}\bigl(\bar{\Psi}_{\rho^\dagger}(\mathbb{G}^{\prime 1},
\mathbb{G}^{\prime 2}) - {\eta}_w\bigr)$
match asymptotically. Since $c'\Sigma_w$ is non-degenerate
(Assumption~\ref{ass:non_degenerate}), the claimed convergence of the CDFs
follows.
\end{proof}

% \newpage

\setcounter{equation}{0}
\setcounter{figure}{0}
\setcounter{table}{0}

\renewcommand{\theequation}{SJ.\arabic{equation}}
\renewcommand{\thefigure}{SJ.\arabic{figure}}
\renewcommand{\thetable}{SJ.\arabic{table}}

\section{Empirical Evidence for the Necessity of Sparsification}\label{sec:necessary-sparsification}

To demonstrate that the sparsification step is indispensable for valid inference,
we conduct an additional simulation experiment in which the small network $G'$ is
compared directly with subsamples from the large network $G$, without any density
matching. Both networks are generated from the same graphon under $H_0$, but with
differing densities. Table~\ref{tab:simu4_TypeI_v2} reports the resulting type~I
error rates.

The results reveal a clear and systematic pattern. When the subsampled density
from $G$ is substantially lower than the density of $G'$, the rejection rates
are severely inflated, reaching as high as $0.483$. Conversely, when the density
of $G'$ is lower than the subsampled density, the tests become overly conservative,
with rejection rates as low as $0.005$. In both cases, the distortion is
bidirectional and depends on the direction of the density mismatch. In contrast,
our method with the sparsification step maintains type~I error rates close to the
nominal level $0.05$ across all settings, confirming that density matching is
necessary for valid two-sample inference.

\begin{table}[htbp]
\centering
\resizebox{0.7\textwidth}{!}{%
\begin{tabular}{lcccc}
\multicolumn{5}{c}{\textbf{Setting 1: $\rho_n = 0.25 \cdot n^{-0.1}$}} \\  
& \multicolumn{2}{c}{$\rho_b = 0.25 \cdot b^{-0.1}$}
& \multicolumn{2}{c}{$\rho_b = 0.1 \cdot b^{-0.1}$} \\ 
& Maha & Cauchy & Maha & Cauchy \\  
Ours (with matching) & 0.042 & 0.049 & 0.057 & 0.050 \\ 
Without matching     & {\bf 0.005} & {\bf 0.012}
                     &  {\bf 0.483} & {\bf 0.285}  \vspace{0.2cm} \\  
                   
\multicolumn{5}{c}{\textbf{Setting 2: $\rho_n = 0.11$}} \\ 
& \multicolumn{2}{c}{$\rho_b = 0.22$}
& \multicolumn{2}{c}{$\rho_b = 0.036$} \\  
& Maha & Cauchy & Maha & Cauchy \\  
Ours (with matching) & 0.044 & 0.052 & 0.048 & 0.052 \\  
Without matching     & {\bf 0.024} & {\bf 0.032}
                     & {\bf 0.324} & {\bf 0.156} \\  
\end{tabular}
}
\caption{Type~I error rates for the two-sample test with and without the
density-matching (sparsification) step, under $H_0$. The two settings
correspond to the $H_0$ configurations in Tables~\ref{tab:simu3gLgS}
and~\ref{tab:simu3PCG} in the main paper, respectively. Bold entries indicate invalid type~I
error control with respect to the nominal level.}
\label{tab:simu4_TypeI_v2}
\end{table}
% \newpage

\makeatletter
\renewcommand{\thelemma}{SI.\@arabic\c@lemma}
\makeatother
\section{Additional simulation results}
\label{subsec:addsimu}

In Figures~\ref{fig:simub05p01} and \ref{fig:simub05p02}, we show the counterpart of results in Figures~\ref{fig:simub06p01} and \ref{fig:simub06p02}, but with different subsampling size $b = \lceil 2n^{1/2} \rceil$. The patterns in this setting are consistent with Figures~\ref{fig:simub06p01} and \ref{fig:simub06p02} in the main paper. 

\begin{figure}[ht]
\centering
\begin{subfigure}[t]{0.22\textwidth}
\centering
\includegraphics[width=\textwidth]{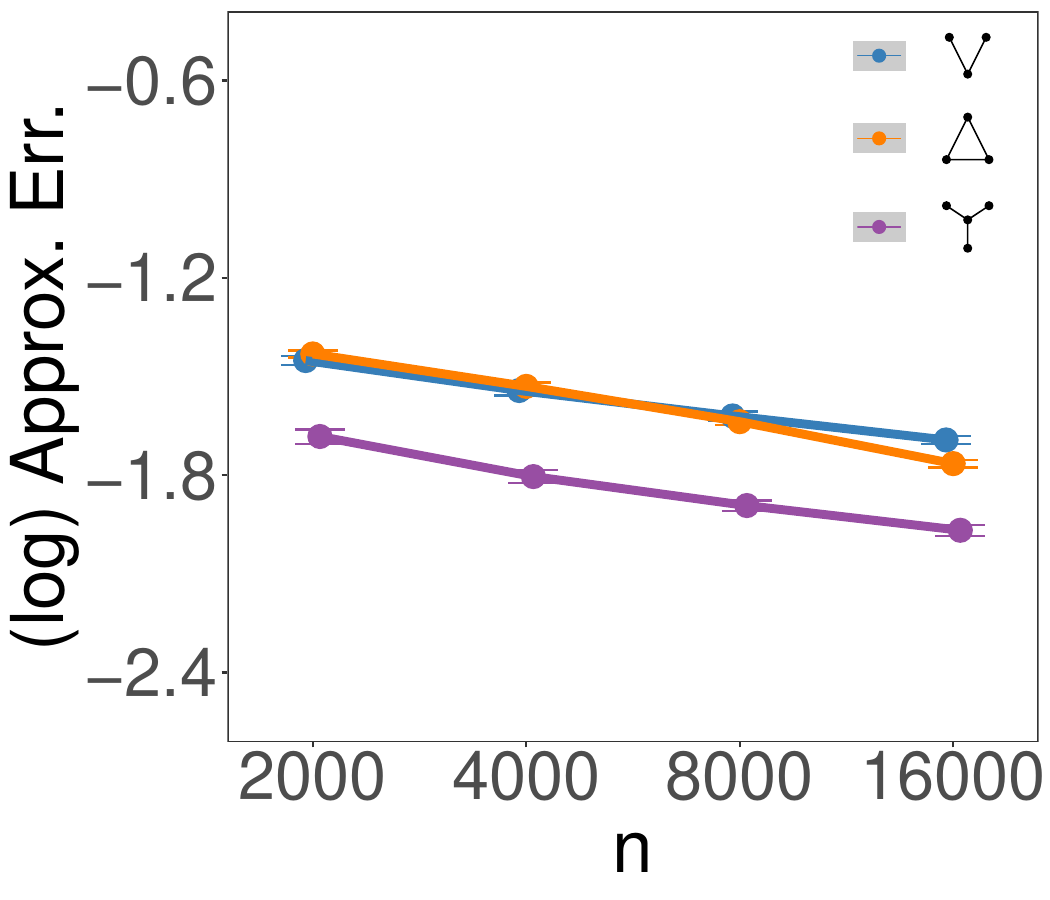}
\captionsetup{justification=centering}
\caption{\scriptsize Graphon 1: marginal cumulative distribution functions}
\label{subfig:uni_g2_b05p01}
\end{subfigure}
\hfill
\begin{subfigure}[t]{0.22\textwidth}
\centering
\includegraphics[width=\textwidth]{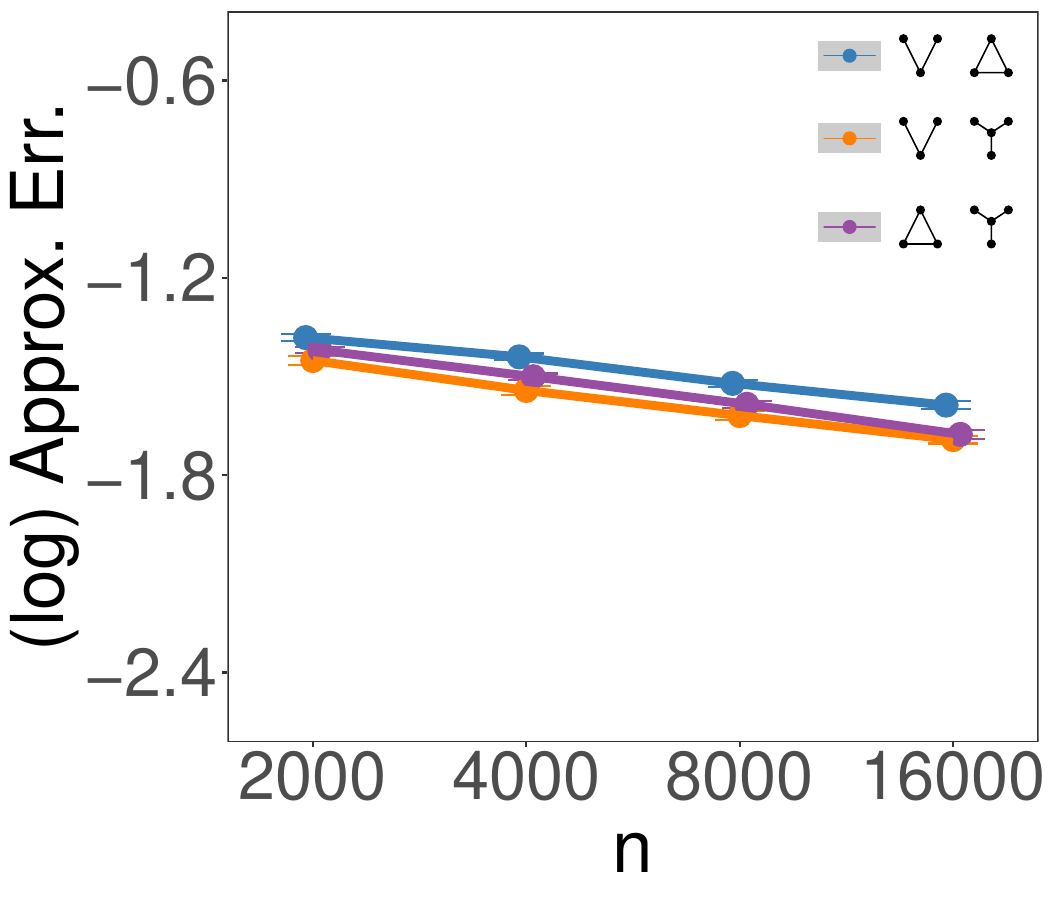}
\captionsetup{justification=centering}
\caption{\scriptsize Graphon 1: joint cumulative distribution functions}
\label{subfig:bi_g2_b05p01}
\end{subfigure}
\hfill
\begin{subfigure}[t]{0.22\textwidth}
\centering
\includegraphics[width=\textwidth]{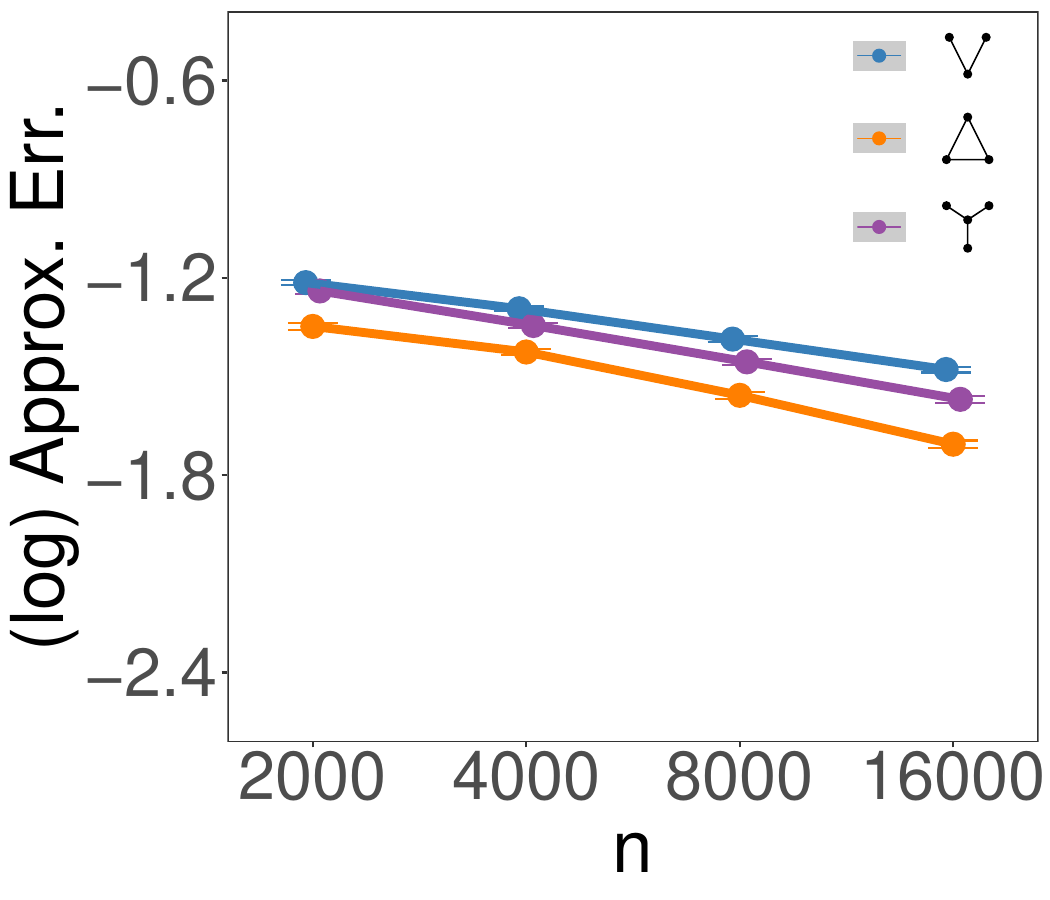}
\captionsetup{justification=centering}
\caption{\scriptsize Graphon 2: marginal cumulative distribution functions}
\label{subfig:uni_g1_b05p01}
\end{subfigure}
\hfill
\begin{subfigure}[t]{0.22\textwidth}
\centering
\includegraphics[width=\textwidth]{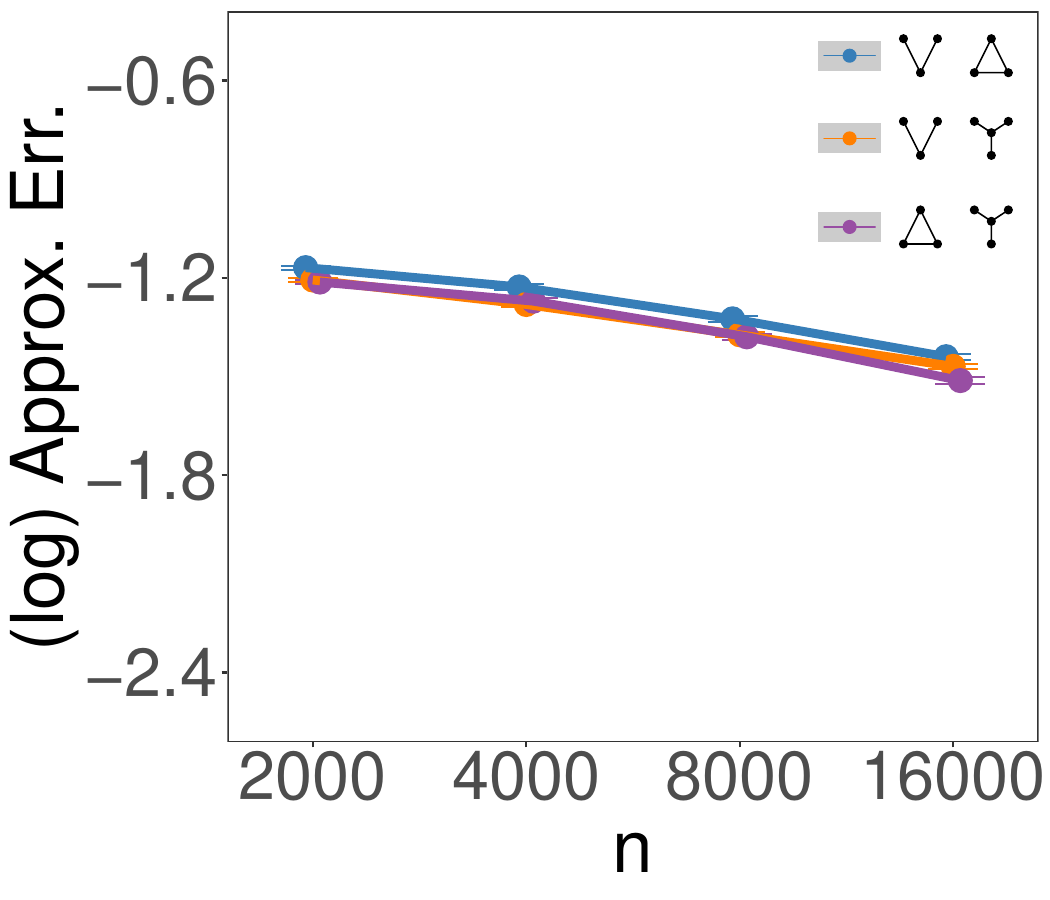}
\captionsetup{justification=centering}
\caption{\scriptsize Graphon 2: joint cumulative distribution functions}
\label{subfig:bi_g1_b05p01}
\end{subfigure}
\caption{Empirical approximation errors of the cumulative distribution functions under $b = \lceil 2n^{1/2} \rceil$ and $\rho_n = 0.25n^{-0.1}$.}
\label{fig:simub05p01}
\end{figure}

\begin{figure}[ht]
\centering
\begin{subfigure}[t]{0.22\textwidth}
\centering
\includegraphics[width=\textwidth]{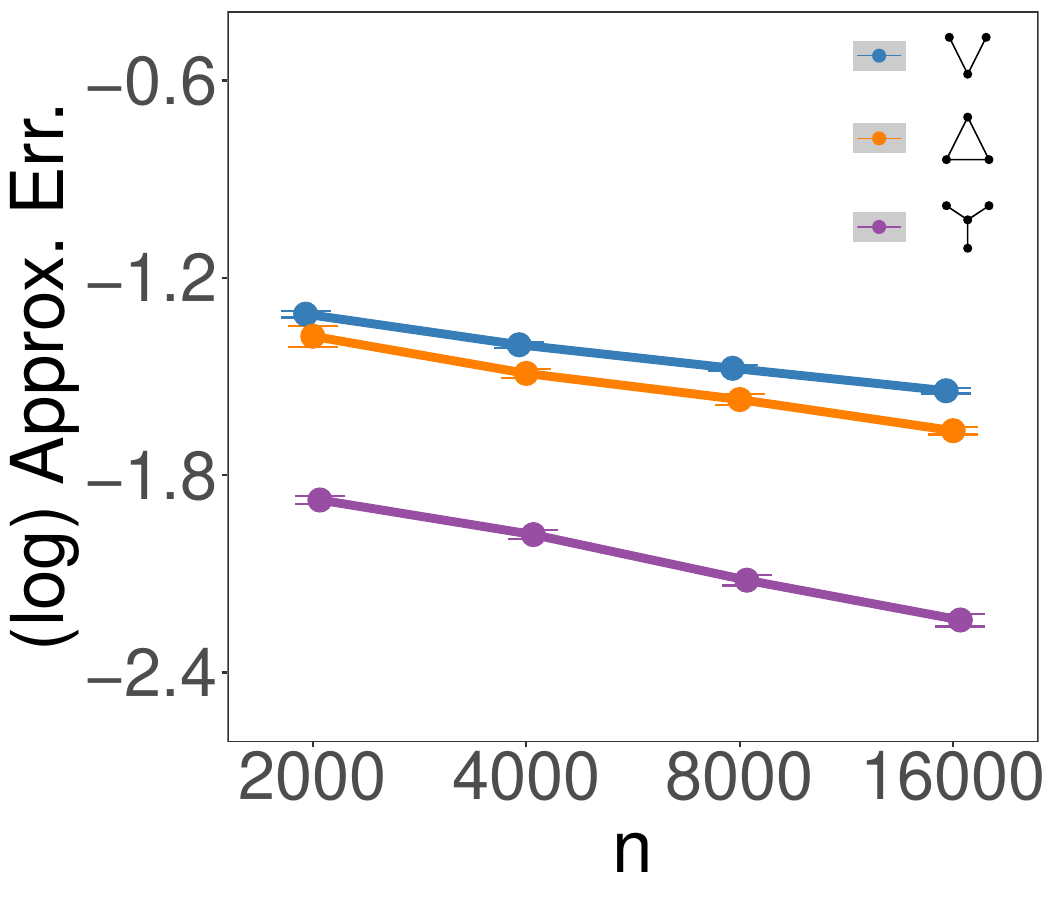}
\captionsetup{justification=centering}
\caption{\scriptsize Graphon 1: marginal cumulative distribution functions}
\label{subfig:uni_g2_b05p02}
\end{subfigure}
\hfill
\begin{subfigure}[t]{0.22\textwidth}
\centering
\includegraphics[width=\textwidth]{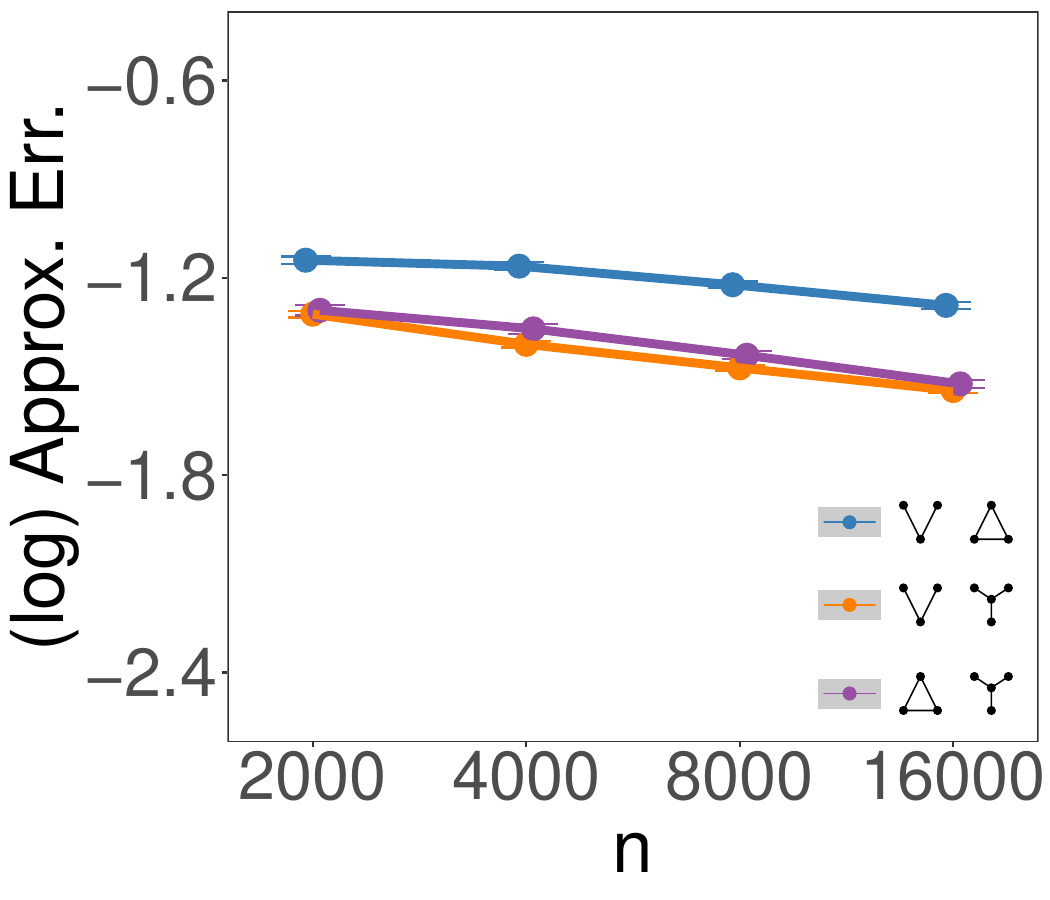}
\captionsetup{justification=centering}
\caption{\scriptsize Graphon 1: joint cumulative distribution functions}
\label{subfig:bi_g2_b05p02}
\end{subfigure}
\hfill
\begin{subfigure}[t]{0.22\textwidth}
\centering
\includegraphics[width=\textwidth]{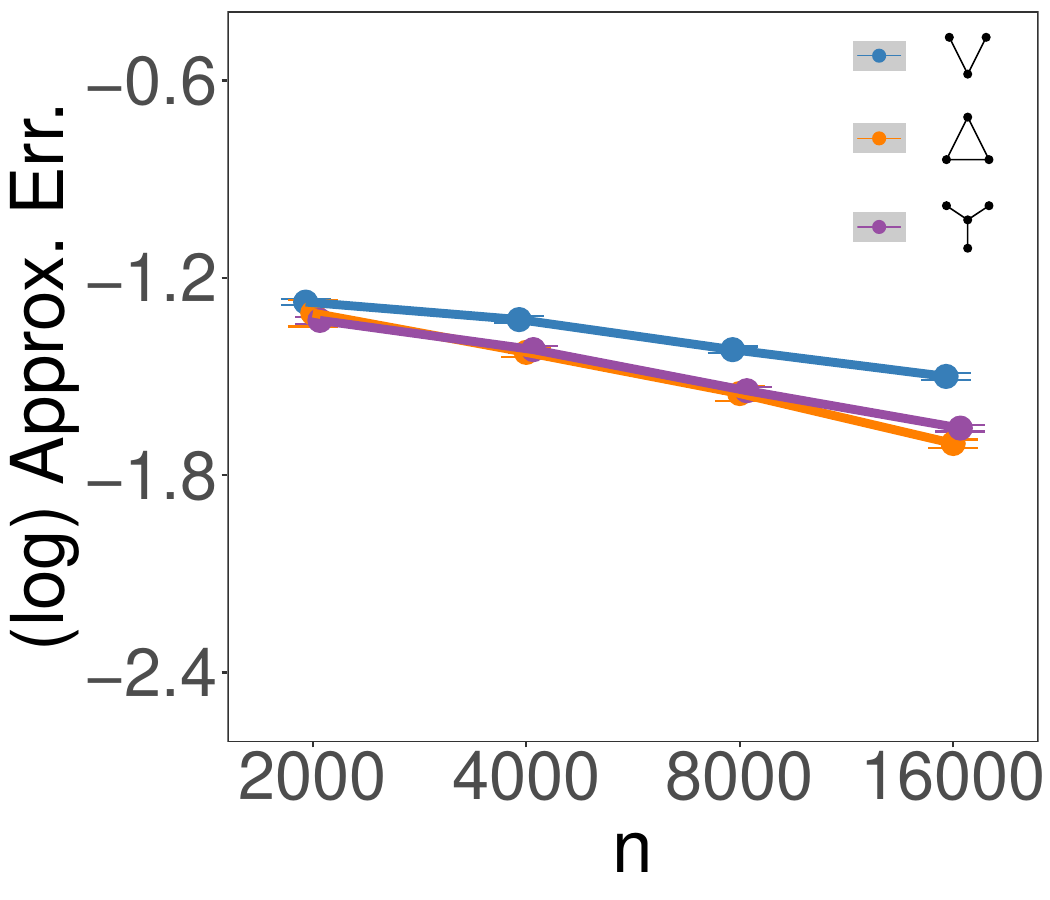}
\captionsetup{justification=centering}
\caption{\scriptsize Graphon 2: marginal cumulative distribution functions}
\label{subfig:uni_g1_b05p02}
\end{subfigure}
\hfill
\begin{subfigure}[t]{0.22\textwidth}
\centering
\includegraphics[width=\textwidth]{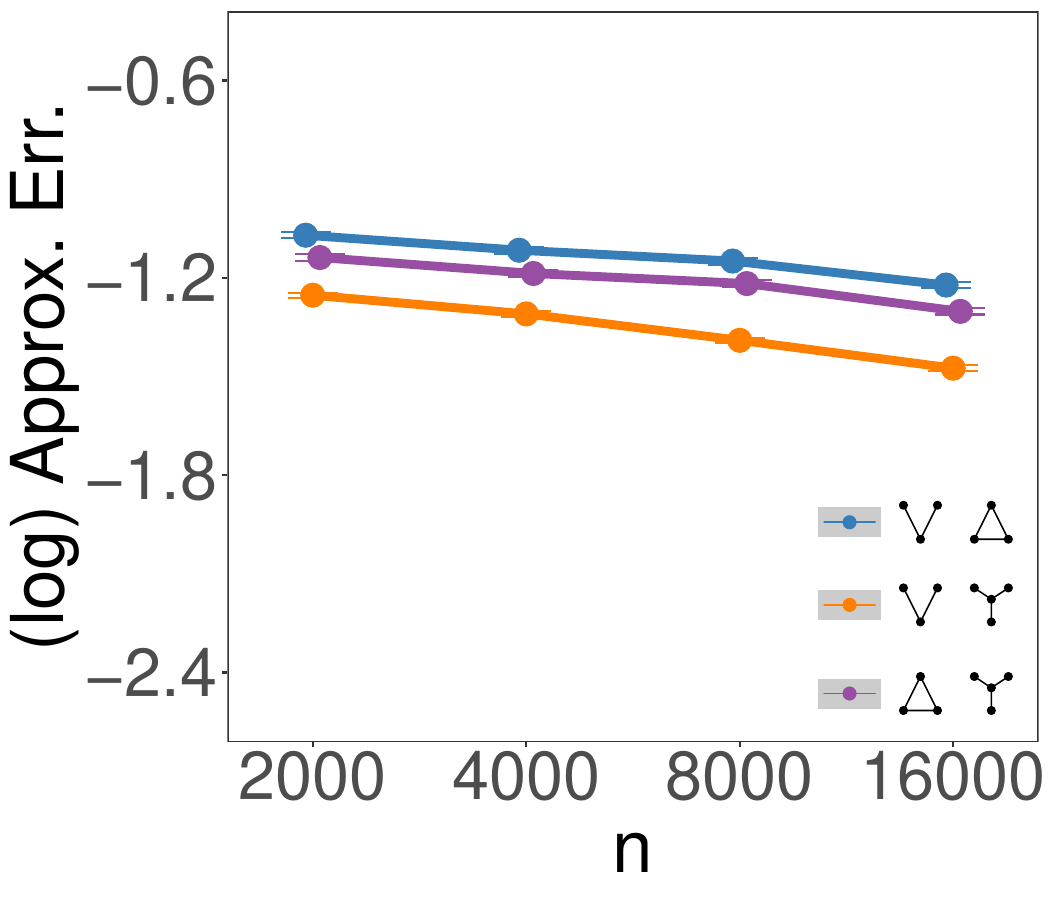}
\captionsetup{justification=centering}
\caption{\scriptsize Graphon 2: joint cumulative distribution functions}
\label{subfig:bi_g1_b05p02}
\end{subfigure}
\caption{Empirical approximation errors of the cumulative distribution functions under $b = \lceil 2n^{1/2} \rceil$ and $\rho_n = 0.25n^{-0.25}$.}
\label{fig:simub05p02}
\end{figure}

%\subsection{Additional simulation results for Section \ref{subsec:EvaluationSparsity}}

Figure~\ref{fig:simub06p03} displays the results in an over-sparse regime with $\rho_n = 0.25n^{-0.5}$. The networks become overly sparse in this scenario, so that the signal-to-noise ratio no longer suffices to support the subsampling inference. This phenomenon is indicated in Assumption~\ref{ass:rho_n_h_n} and has also been observed for other resampling inference methods on network moments \citep{green2022bootstrapping,levin2019bootstrapping,zhang2022edgeworth,lunde2023subsampling}.

\begin{figure}[ht]
\centering
\begin{subfigure}[t]{0.22\textwidth}
\centering
\includegraphics[width=\textwidth]{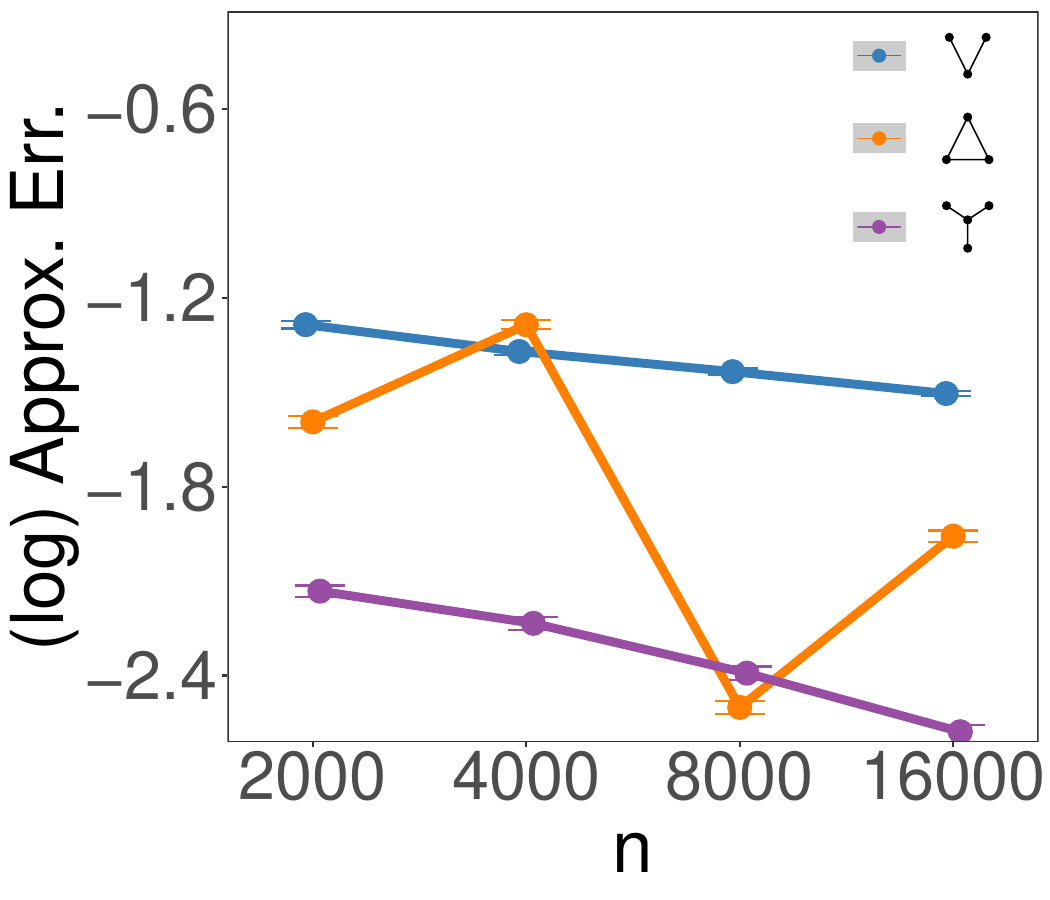}
\captionsetup{justification=centering}
\caption{\scriptsize Graphon 1: marginal cumulative distribution functions}
\label{subfig:uni_g2_b06p03}
\end{subfigure}
\hfill
\begin{subfigure}[t]{0.22\textwidth}
\centering
\includegraphics[width=\textwidth]{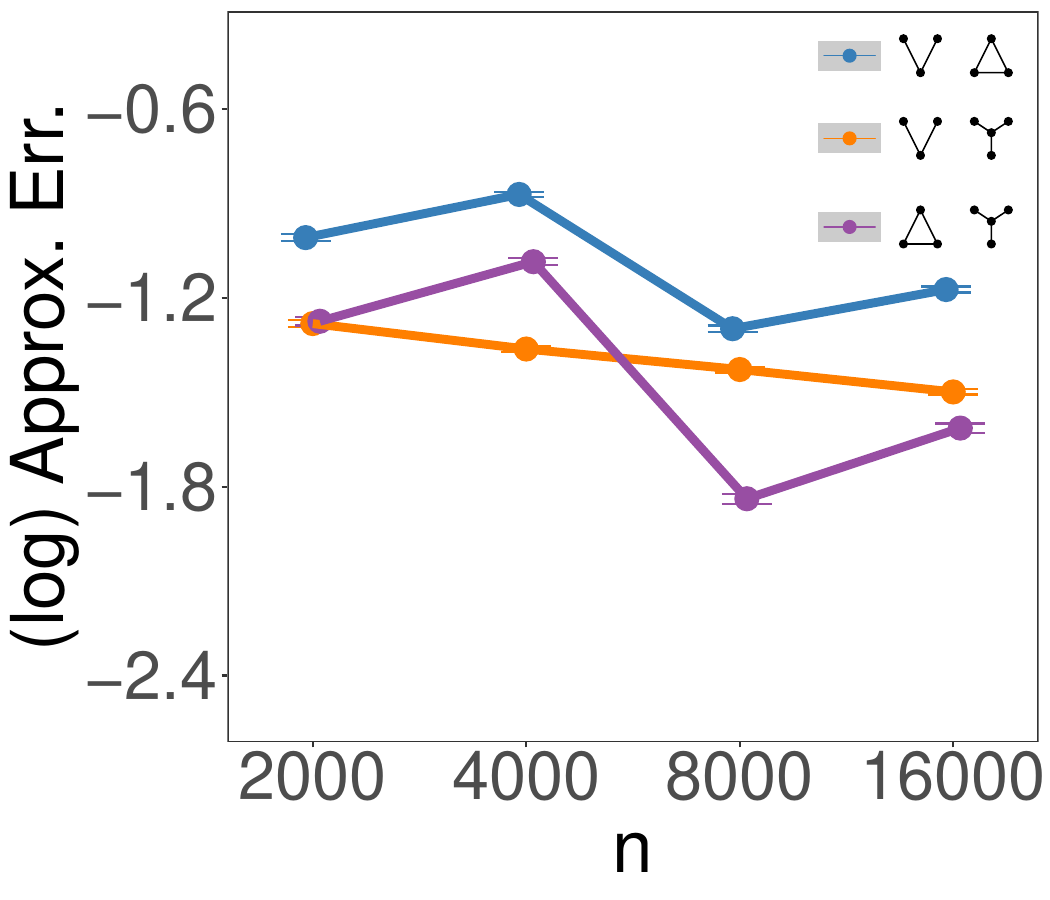}
\captionsetup{justification=centering}
\caption{\scriptsize Graphon 1: joint cumulative distribution functions}
\label{subfig:bi_g2_b06p03}
\end{subfigure}
\hfill
\begin{subfigure}[t]{0.22\textwidth}
\centering
\includegraphics[width=\textwidth]{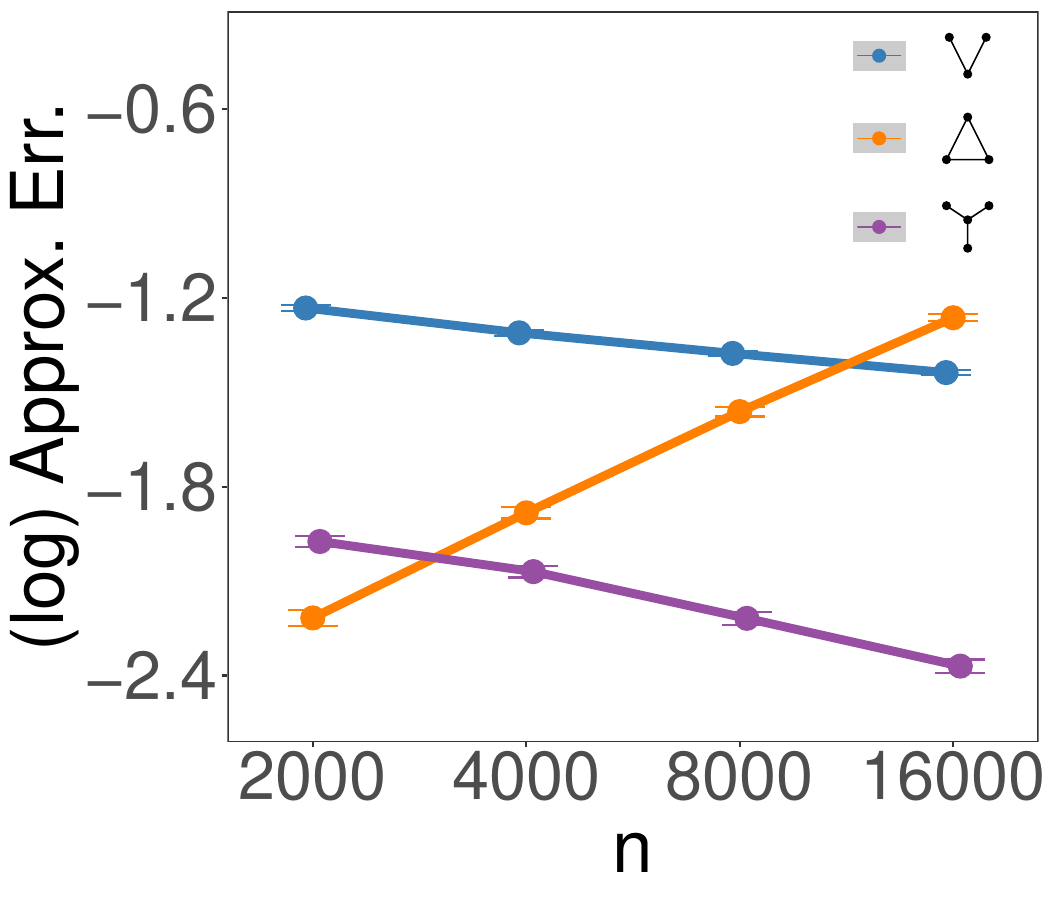}
\captionsetup{justification=centering}
\caption{\scriptsize Graphon 2: marginal cumulative distribution functions}
\label{subfig:uni_g1_b06p03}
\end{subfigure}
\hfill
\begin{subfigure}[t]{0.22\textwidth}
\centering
\includegraphics[width=\textwidth]{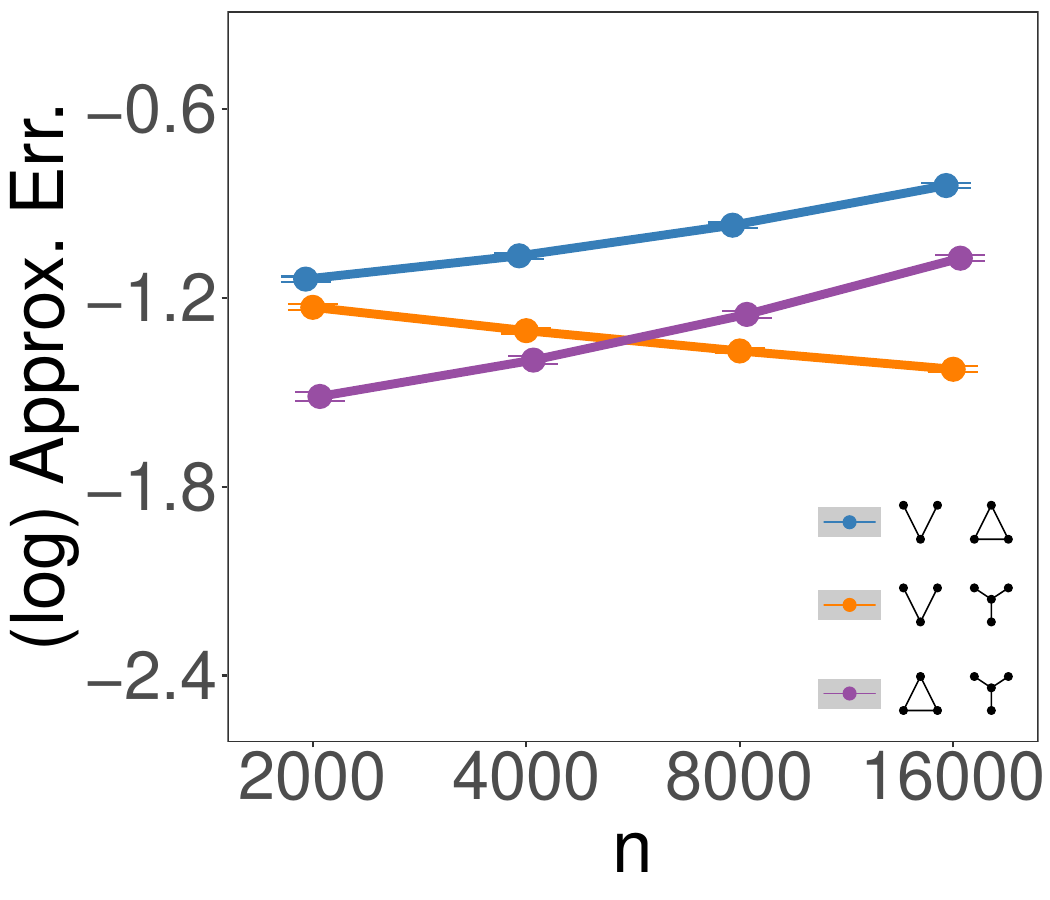}
\captionsetup{justification=centering}
\caption{\scriptsize Graphon 2: joint cumulative distribution functions}
\label{subfig:bi_g1_b06p03}
\end{subfigure}
\caption{Empirical approximation errors of the cumulative distribution functions under $b = \lceil n^{2/3} \rceil$ and $\rho_n = 0.25n^{-0.5}$.}
\label{fig:simub06p03}
\end{figure}

\newpage
\bibliographystyle{abbrvnat}
\bibliography{reference/paper-ref}

\end{document}